\newcommand\bigcheck[1]{#1 \raise1ex\hbox{$\hspace{-1ex}{}^\vee$}}
\newcommand\sucheck[1]{#1 \raise0.5ex\hbox{$\hspace{-1ex}{}^\vee$}}
\newcommand{\wcheck}{\!\!\!\!\!\!\widecheck{}\,\,\,\,\,\,}
\newcommand{\wwcheck}{\!\!\!\!\!\!\!\!\!\!\!\!\!\!\!\!\!\!\widecheck{}\,\,\,\,\,\,\,\,\,\,\,\,\,\,\,\,\,\,}
\newtheorem{theorem}{Theorem}[section]
\newtheorem{lemma}[theorem]{Lemma}
\newtheorem{corollary}[theorem]{Corollary}
\newtheorem{proposition}[theorem]{Proposition}
\newtheorem*{lemma*}{Lemma}
\theoremstyle{definition}
\newtheorem{definition}[theorem]{Definition}
\theoremstyle{remark}
\newtheorem{remark}[theorem]{Remark}
\newtheorem{example}[theorem]{Example}
\newcommand{\mc}[1]{{\mathcal #1}}
\newcommand{\mf}[1]{{\mathfrak #1}}
\newcommand{\mb}[1]{{\mathbb #1}}
\newcommand\tint{{\textstyle\int}}
\newcommand{\id}{{1 \mskip -5mu {\rm I}}}
\renewcommand{\tilde}{\widetilde}
\newcommand{\ad}{\mathop{\rm ad }}
\newcommand{\Mat}{\mathop{\rm Mat }}
\newcommand{\ord}{\mathop{\rm ord }}
\renewcommand{\ker}{\mathop{\rm Ker }}
\newcommand{\im}{\mathop{\rm Im }}
\newcommand{\Span}{\mathop{\rm Span }}
\newcommand{\ass}[1]{\stackrel{#1}{\longleftrightarrow}}
\newcommand{\dord}{\mathop{\rm dord }}
\definecolor{light}{gray}{.9}
\begin{document}


\title{Non-local Poisson structures and applications to the theory 
of integrable systems}

\author{
Alberto De Sole
\thanks{Dipartimento di Matematica, Universit\`a di Roma ``La Sapienza'',
00185 Roma, Italy ~~
desole@mat.uniroma1.it ~~~~
Supported in part by Department of Mathematics, M.I.T.},~~
Victor G. Kac
\thanks{Department of Mathematics, M.I.T.,
Cambridge, MA 02139, USA.~~
kac@math.mit.edu~~~
Supported in part by an NSF grant, and the Simons Fellowship
}~~
}

\date{}

\maketitle

\vspace{4pt}

\begin{center}
\emph{Dedicated to Minoru Wakimoto on his 70-th birthday.}
\end{center}

\vspace{2pt}


\begin{abstract}
\noindent 
We develop a rigorous theory of non-local Poisson structures,
built on the notion of a non-local Poisson vertex algebra.
As an application, we find conditions that guarantee applicability of the Lenard-Magri scheme
of integrability to a pair of compatible non-local Poisson structures.
We apply this scheme to several such pairs,
proving thereby integrability of various evolution equations, as well as hyperbolic equations.
\end{abstract}

\medskip

\noindent
\emph{Keywords and phrases:}
non-local Poisson vertex algebra,
non-local Poisson structure,
rational matrix pseudodifferential operators,
Lenard-Magri scheme of integrability,
bi-Hamiltonian integrable hierarchies.

\medskip

\noindent
\emph{Mathematics Subject Classification (2010):}
37K10 (Primary) 35Q53, 17B80, 17B69, 37K30, 17B63 (Secondary)

\vfill\eject

\tableofcontents

\vfill\eject


\section{Introduction}
\label{sec:intro}

Local Poisson brackets play a fundamental role in the theory of integrable systems.
Recall that a local Poisson bracket is defined by (see e.g. \cite{TF86}):
\begin{equation}\label{intro:eq1.1}
\{u_i(x),u_j(y)\}=H_{ij}\big(u(y),u'(y),\dots;\partial_y\big)\delta(x-y)\,,
\end{equation}
where $u=(u_1,\dots,u_\ell)$ is a vector valued function on a 1-dimensional manifold $M$,
$\delta(x-y)$ is the $\delta$-function: $\tint_Mf(y)\delta(x-y)dy=f(x)$,
and $H(\partial)=\big(H_{ij}(\partial)\big)_{i,j=1}^\ell$
is an $\ell\times\ell$ matrix differential operator,
whose coefficients are functions in $u,u',\dots,u^{(k)}$.
One requires, in addition, that \eqref{intro:eq1.1} ``satisfies the Lie algebra axioms''.

One of the ways to formulate the latter condition is as follows.
Let $\mc V$ be an algebra of differential polynomials in $u_1,\dots,u_\ell$,
i.e. the algebra of polynomials in $u_i^{(n)},\,i\in I=\{1,\dots,\ell\},\,n\in\mb Z_+$,
with $u_i^{(0)}=u_i$ and the derivation $\partial$, defined by $\partial u_i^{(n)}=u_i^{(n+1)}$,
or its algebra of differential functions extension.
The bracket \eqref{intro:eq1.1} extends, by the Leibniz rule and bilinearity,
to arbitrary $f,g\in\mc V$:
\begin{equation}\label{intro:eq1.2}
\{f(x),g(y)\}=
\sum_{i,j\in I}\sum_{m,n\in\mb Z_+}
\frac{\partial f(x)}{\partial u_i^{(m)}} \frac{\partial g(y)}{\partial u_j^{(n)}}
\partial_x^m \partial_y^n\{u_i(x),u_j(y)\}\,.
\end{equation}
Applying integration by parts, we get the following bracket on $\mc V/\partial\mc V$:
\begin{equation}\label{intro:eq1.3}
\{\tint f,\tint g\}=
\int \frac{\delta g}{\delta u}\cdot H(\partial)\frac{\delta f}{\delta u}\,,
\end{equation}
where $\tint$ is the canonical quotient map $\mc V\to\mc V/\partial\mc V$
and $\frac{\delta f}{\delta u}$ is the vector of variational derivatives
$\frac{\delta f}{\delta u_i}=\sum_{n\in\mb Z_+}(-\partial)^n\frac{\partial f}{\partial u_i^{(n)}}$.
Then one requires that the bracket \eqref{intro:eq1.3}
satisfies the Lie algebra axioms.
(The skewsymmetry of this bracket is equivalent to the skewadjointness of $H(\partial)$,
but the Jacobi identity is a complicated system of non-linear PDE on its coefficients.)
In this case the matrix differential operator $H(\partial)$ is called a \emph{Poisson structure}.
(Sometimes in literature, including our previous papers, this is called
a Hamiltonian structure, or a Hamiltonian operator,
but the name Poisson structure seems to be more appropriate.)

Given an element $\tint h\in\mc V/\partial\mc V$, called a \emph{Hamiltonian functional},
the \emph{Hamiltonian equation} associated to $H(\partial)$ is the following evolution equation:
\begin{equation}\label{intro:eq1.4}
\frac{du}{dt}=H(\partial)\frac{\delta h}{\delta u}\,.
\end{equation}
For example, taking $H(\partial)=\partial$ and $h=\frac12(u^3+cuu'')$,
we obtain the KdV equation: $\frac{du}{dt}=3uu'+cu'''$.

Equation \eqref{intro:eq1.4} is called \emph{integrable} if $\tint h$
is contained in an infinite dimensional abelian subalgebra $A$ of the Lie algebra $\mc V/\partial\mc V$
with bracket \eqref{intro:eq1.3}.
Picking a basis $\{\tint h_n\}_{n\in\mb Z_+}$ of $A$,
we obtain a hierarchy of compatible integrable Hamiltonian equations:
$$
\frac{du}{dt_n}=H(\partial)\frac{\delta h_n}{\delta u}\,\,,\,\,\,\,n\in\mb Z_+\,.
$$

An alternative approach, proposed in \cite{BDSK09},
is to apply the Fourier transform $F(x,y)\mapsto\tint_Mdxe^{\lambda(x-y)}F(x,y)$
to both sides of \eqref{intro:eq1.2}
to obtain the following ``Master formula'' \cite{DSK06}:
\begin{equation}\label{intro:eq1.5}
\{f_\lambda g\}=
\sum_{i,j\in I}\sum_{m,n\in\mb Z_+}
\frac{\partial g}{\partial u_j^{(n)}}
(\lambda+\partial)^n
H_{ji}(\lambda+\partial)
(-\lambda-\partial)^m
\frac{\partial f}{\partial u_i^{(m)}}\,. 
\end{equation}
For an arbitrary $\ell\times\ell$ matrix differential operator $H(\partial)$ 
this $\lambda$-\emph{bracket}
is polynomial in $\lambda$, i.e. it takes values in $\mc V[\partial]$,
satisfies the \emph{left} and \emph{right Leibniz rules}:
\begin{equation}\label{intro:eq1.6}
\{f_\lambda gh\}=
g\{f_\lambda h\}+h\{f_\lambda g\}
\,\,,\,\,\,\,
\{fg_\lambda h\}=
\{f_{\lambda+\partial} g\}_\to h+\{f_{\lambda+\partial} h\}_\to g\,,
\end{equation}
where the arrow means that $\lambda+\partial$ should be moved to the right,
and the  \emph{sesquilinearity} axioms:
\begin{equation}\label{intro:eq1.7}
\{\partial f_\lambda g\}=-\lambda\{f_\lambda g\}
\,\,,\,\,\,\,
\{f_\lambda\partial g\}=(\lambda+\partial)\{f_\lambda g\}\,.
\end{equation}
It is proved in \cite{BDSK09} that the requirement that \eqref{intro:eq1.3} satisfies the Lie algebra axioms
is equivalent to the following two properties of \eqref{intro:eq1.5}:
\begin{equation}\label{intro:eq1.8}
\{g_\lambda f\}=-\{f_{-\lambda-\partial} g\}\,,
\end{equation}
\begin{equation}\label{intro:eq1.9}
\{f_\lambda \{g_\mu h\}\}=\{g_\mu\{f_\lambda h\}\}
+\{\{f_\lambda g\}_{\lambda+\mu}h\}\,.
\end{equation}
A differential algebra $\mc V$,
endowed with a polynomial $\lambda$-bracket, satisfying axioms \eqref{intro:eq1.6}--\eqref{intro:eq1.9},
is called a \emph{Poisson vertex algebra} (PVA).

It was demonstrated in \cite{BDSK09} that the PVA approach greatly simplifies the theory
of integrable Hamiltonian PDE, based on local Poisson brackets.
For example, equation \eqref{intro:eq1.4} becomes, in terms of the $\lambda$-bracket associated to $H$:
$$
\frac{du}{dt}=\{h_\lambda u\}\big|_{\lambda=0}\,,
$$
and the Lie bracket \eqref{intro:eq1.3} becomes
$$
\{\tint f,\tint g\}=\tint \{f_\lambda g\}\big|_{\lambda=0}\,.
$$

However the majority of important integrable equations, including the non-linear Schroedinger equation,
is Hamiltonian with respect to a non-local Poisson bracket.
It has been an open problem to develop a rigorous theory of such brackets.
The purpose of the present paper is to demonstrate that the adequate (and in fact indispensable)
tool for understanding non-local Poisson brackets is the theory of ``non-local'' PVA.

We define a \emph{non-local} $\lambda$-\emph{bracket} on the differential algebra $\mc V$
to take its values in $\mc V((\lambda^{-1}))$,
formal Laurent series in $\lambda^{-1}$ with coefficients in $\mc V$,
and to satisfy properties \eqref{intro:eq1.6} and \eqref{intro:eq1.7}.
The main example is the $\lambda$-bracket given by the Master Formula \eqref{intro:eq1.5},
where $H(\partial)$ is a matrix pseudodifferential operator.
The only problem with this definition is the interpretation of the operator $\frac1{\lambda+\partial}$;
this is defined by the geometric progression
$$
\frac1{\lambda+\partial}=\sum_{n\in\mb Z_+}(-1)^n\lambda^{-n-1}\partial^n\,.
$$

Property \eqref{intro:eq1.8} of the $\lambda$-bracket is interpreted in the same way,
but the interpretation of property \eqref{intro:eq1.9} is more subtle.
Indeed, in general, we have $\{f_\lambda\{g_\mu h\}\}\in\mc V((\lambda^{-1}))((\mu^{-1}))$,
but $\{g_\mu\{f_\lambda h\}\}\in\mc V((\mu^{-1}))((\lambda^{-1}))$,
and $\{\{f_\lambda g\}_{\lambda+\mu} h\}\in\mc V(((\lambda+\mu)^{-1}))((\lambda^{-1}))$,
so that all three terms of \eqref{intro:eq1.9} lie in different spaces.
Our key idea is to consider the space
$$
\mc V_{\lambda,\mu}=\mc V[[\lambda^{-1},\mu^{-1},(\lambda+\mu)^{-1}]][\lambda,\mu]\,,
$$
which is canonically embedded in all three of the above spaces.
We say that a $\lambda$-bracket is \emph{admissible} if 
$$
\{f_\lambda\{g_\mu h\}\}\in\mc V_{\lambda,\mu}
\,\,\,\,
\text{ for all } f,g,h\in\mc V\,.
$$
It is immediate to see that then the other two terms of \eqref{intro:eq1.9} 
lie in $\mc V_{\lambda,\mu}$ as well,
hence \eqref{intro:eq1.9} is an identity in $\mc V_{\lambda,\mu}$.

We call the differential algebra $\mc V$, endowed with a non-local $\lambda$-bracket,
a \emph{non-local PVA}, if it satisfies \eqref{intro:eq1.8},
is admissible, and satisfies \eqref{intro:eq1.9}.

For an arbitrary pseudodifferential operator $H(\partial)$ the $\lambda$-bracket \eqref{intro:eq1.5}
is not admissible, but it is admissible for any \emph{rational} pseudodifferential operator,
i.e. such that the entries of the matrix $H(\partial)$ are
contained in the subalgebra generated by differential operators and their inverses.
We show that, as in the local case (see \cite{BDSK09}),
this $\lambda$-bracket satisfies conditions \eqref{intro:eq1.8} and \eqref{intro:eq1.9}
if and only if \eqref{intro:eq1.8} holds for any pair $u_i,u_j$,
and \eqref{intro:eq1.9} holds for any triple $u_i,u_j,u_k$.
Also, \eqref{intro:eq1.8} is equivalent to skewadjointness of $H(\partial)$.

The simplest example of a non-local PVA corresponds 
to the skewadjoint operator $H(\partial)=\partial^{-1}$.
Then 
$$
\{u_\lambda u\}=\lambda^{-1}\,,
$$
and equation \eqref{intro:eq1.9} trivially holds for the triple $u,u,u$.
Note that \eqref{intro:eq1.1} in this case reads:
$\{u(x),u(y)\}=\partial_y^{-1}\delta(x-y)$,
which is quite difficult to work with
(cf. \cite{MN01}).

The next example corresponds to Sokolov's operator \cite{Sok84}
$H(\partial)=u'\partial^{-1}\circ u'$.
The corresponding $\lambda$-bracket is
$$
\{u_\lambda u\}=u'\frac1{\lambda+\partial}u'\,.
$$
The verification of \eqref{intro:eq1.9} for the triple $u,u,u$ is straightforward.

We say that 
a rational pseudodifferential operator $H(\partial)$ is a \emph{Poisson structure} on $\mc V$
if the $\lambda$-bracket \eqref{intro:eq1.5} endows $\mc V$ with a structure of a non-local PVA
(in other words $H(\partial)$ should be skewadjoint and \eqref{intro:eq1.9} should hold
for any triple $u_i,u_j,u_k$).

Fix a ``minimal fractional decomposition'' $H=AB^{-1}$.
This means that $A,B$ are differential operators over $\mc V$,
such that $\ker A\cap\ker B=0$
in any algebra of differential functions extension of $\mc V$.
It is shown in \cite{CDSK12b} that such a decomposition always exists and that the above property
is equivalent to the property that any common right factor of $A$ and $B$
is invertible over the field of fractions $\mc K$ of $\mc V$.
Then the basic notions of the theory of integrable systems are defined as follows.
A \emph{Hamiltonian functional} (for $H=AB^{-1}$)
is an element $\tint h\in\mc V/\partial\mc V$ such that $\frac{\delta\tint h}{\delta u}=B(\partial)F$
for some $F\in\mc K^\ell$.
Then the element $P=A(\partial)F$ is called an associated \emph{Hamiltonian vector field},
and we write $\tint h\ass{H}P$ or $P\ass{H}\tint h$.
Denote by $\mc F(H)\subset\mc V/\partial\mc V$ the subspace of all Hamiltonian functionals,
and by $\mc H(H)\subset\mc V^\ell$ the subspace of all Hamiltonian vector fields
(they are independent of the choice of the minimal fractional decomposition for $H$):
$$
\mc F(H)=\Big(\frac{\delta}{\delta u}\Big)^{-1}\Big(\im B\Big)\subset\mc V/\partial\mc V
\,\,,\,\,\,\,
\mc H(H)=A\Big(B^{-1}\Big(\im\frac{\delta}{\delta u}\Big)\Big)\subset\mc V^\ell\,.
$$
Then it is easy to show that $\mc F(H)$ is a Lie algebra with respect to the well-defined
bracket \eqref{intro:eq1.3}, and $\mc H(H)$ is a subalgebra of the Lie algebra $\mc V^\ell$
with bracket $[P,Q]=D_Q(\partial)P-D_P(\partial)Q$, where $D_P(\partial)$ is the Frechet derivative.

A \emph{Hamiltonian equation}, corresponding to the Poisson structure $H$
and a Hamiltonian functional $\tint h\in\mc F(H)$,
with an associated Hamiltonian vector field $P\in\mc H(H)$,
is the following evolution equation:
\begin{equation}\label{intro:eq1.10}
\frac{du}{dt}=P\,.
\end{equation}
Note that \eqref{intro:eq1.10} coincides with \eqref{intro:eq1.4} in the local case.
The Hamiltonian equation \eqref{intro:eq1.10} is called \emph{integrable} if there exist
linearly independent infinite sequences
$\tint h_n\in\mc F(H)$ and $P_n\in\mc H(H)$, $n\in\mb Z_+$,
such that $\tint h_0=\tint h$, $P_0=P$,
$P_n$ is associated to $\tint h_n$,
and $\{\tint h_m,\tint h_n\}=0,\,[P_m,P_n]=0$
for all $m,n\in\mb Z_+$.
In this case we have a hierarchy of compatible integrable equations
$$
\frac{du}{dt_n}=P_n
\,\,,\,\,\,\, n\in\mb Z_+\,.
$$
(The $P_n$'s are called the generalized symmetries of equation \eqref{intro:eq1.10}
and the $h_n$'s are its conserved densities.)

Having given rigorous definitions of the basic notions of the theory of Hamiltonian equations
with non-local Poisson structures,
we proceed to establish some basic results of the theory.

The first result is Theorem \ref{20111021:thm},
which states that if $H$ and $K$ are compatible non-local Poisson structures and $K$ is invertible
(as a pseudodifferential operator), then the sequence
of rational pseudodifferential operators
$H^{[0]}=K, H^{[n]}=(H K^{-1})^{n-1} H,\,n\geq1$,
is a compatible family of non-local Poisson structures.
(As usual \cite{Mag78,Mag80} a collection of non-local Poisson structures
is called compatible if any their finite linear combination is again a non-local Poisson structure.)
This result was first stated in \cite{Mag80}
and its partial proof was given in \cite{FF81}
(of course, without having rigorous definitions).

Next, we give a rigorous definition of a non-local symplectic structure and prove 
(the ``well-known'' fact) that, if $S$ is invertible (as a pseudodifferential operator),
then it is a non-local symplectic structure if and only if $S^{-1}$ is a non-local Poisson structure
(Theorem \ref{20111012:thm}).
Since we completely described (local) symplectic structures in \cite{BDSK09},
this result provides a large collection of non-local Poisson structures.
We also establish a connection between Dirac structures (see \cite{Dor93} and \cite{BDSK09})
with non-local Poisson structures
(Theorems \ref{20111020:thm} and \ref{20120126:prop2}).

After that we discuss the Lenard-Margi scheme of integrability
for a pair of compatible non-local Poisson structures $H$, $K$,
similar to that discussed in \cite{Mag78,Mag80,Dor93,BDSK09} in the local case,
and give sufficient conditions when this scheme works
(Theorem \ref{20130123:thm} and Corollary \ref{20130123:cor}).
This means that there exists an infinite sequence of Hamiltonian functionals 
$\tint h_n,\,n\in\mb Z_+$,
and Hamiltonian vector fields $P_n,\,n\in\mb Z_+$,
such that we have
\begin{equation}\label{1.1}
\tint 0\ass{H}P_0\ass{K}\tint h_0\ass{H} P_1\ass{K}\tint h_1\ass{H}\dots\,,
\end{equation}
and the spans of the $\tint h_n$'s and of the $P_n$'s are infinite dimensional.
%
This produces integrable Hamiltonian equations $u_{t_n}=P_n$.

Let us also mention that the Lenard-Magri scheme in the weakly non-local case
(in the sense of \cite{MN01}) was studied in \cite{Wan09}.


Now we compare briefly our approach to integrability with other algebraic approaches.
Probably the earliest approach is the Lax pair presentation (see the book \cite{Dic03}).
The main difficulty of this approach is to establish linear independence of the 
integrals of motion.
Another popular approach
is based on a recursion operator (see books \cite{Olv93} and \cite{Bla98}),
which is applied to a conserved density
or a generalized symmetry to produce a new one.
Unfortunately, since $R$ is non-local (even for the KdV)
this approach is not rigorous and often leads to wrong conclusions
(as demonstrated, for example, in \cite{SW01}).
A more recent approach, due to Dorfman \cite{Dor93}
is based on the notion of a Dirac structure.
This theory, along with its further developments in \cite{BDSK09}
and the present paper, is a basis of our non-local bi-Hamiltonian approach.
In fact, our approach overcomes the main difficulty, that of constructing 
a Dirac structure in Dorfman's approach,
and that of proving linear independence of integrals of motion
in the Lax pair approach.
Also, its advantage as compared to the recursion operator approach is that it is rigorous.


The applications of this theory 
to concrete examples are studied in Sections \ref{secb:3}, \ref{secb:4} and \ref{secb:5}.
In Section \ref{secb:3} we consider three compatible scalar non-local Poisson structures:
$$
L_1=\partial\,,\,\, 
L_2=\partial^{-1}\,,\,\, 
L_3=u'\partial^{-1}\circ u'
\,(\text{ Sokolov } \cite{Sok84})\,, 
$$
and take for a compatible pair $(H,K)$ two arbitrary linear combinations of these three structures:
$H=\sum_ia_iL_i$, $K=\sum_ib_iL_i$.
We study in detail for which values of the coefficients $a_i$ and $b_i$
the corresponding Lenard-Magri scheme is integrable.

Furthermore we study when the infinite sequence \eqref{1.1} can be extended to the left.
The most interesting cases are those when the sequence is ``blocked''
at some step $P_{-n},\,n>0$, to the left.
This leads to some interesting integrable hyperbolic equations.
As a result, we prove integrability of two such equations
\begin{equation}\label{20121020:eq8-intro}
u_{tx} =
e^{u}-\alpha e^{-u}
+\epsilon(e^{u}-\alpha e^{-u})_{xx}
\,,
\end{equation}
where $\alpha$ and $\epsilon$ are $0$ or $1$, and
\begin{equation}\label{20121020:eq10-intro}
u_{tx} =
u+(u^3)_{xx}
\,.
\end{equation}
Of course, in the case when $\epsilon=0$, equation \eqref{20121020:eq8-intro} is the Liouville
(respectively sinh-Gordon) equation if $\alpha=0$ (resp. $\alpha=1$).
For $\epsilon=1$ equation \eqref{20121020:eq8-intro} 
was studied in \cite{Fok95}.
Equation \eqref{20121020:eq10-intro}, studied in \cite{SW02},
is called the short pulse equation.
Its integrability was proved in \cite{SS04}

In Section \ref{secb:4}
we study, in a similar way, two linear combinations of the compatible non-local Poisson structures
$$
L_1=
u'\partial^{-1}\circ u' 
\,\,,\,\,\,\,
L_2=
\partial^{-1}\circ u'\partial^{-1}\circ u'\partial^{-1}
\,\,(\text{ Dorfman } \cite{Dor93})
\,.
$$
As a result we (re)prove integrability of the Schwarz KdV (also called the degenerate
Krichever-Novikov) equation
$$
u_t=u_{xxx}-\frac32\frac{u_{xx}^2}{u_x}\,,
$$
and also, moving to the left, establish integrability of the following equation
$$
\bigg(\frac1{u_x}
\Big(\frac{u_{tx}}{u_x}\Big)_x\bigg)_x
=
\frac1{u_x}\Big(
\frac{c_0+c_1u+c_2u^2}{u_x}
\Big)_x
\,,
$$
where $c_0,c_1,c_2$ are arbitrary constants. 

Finally, in Section \ref{secb:5}
we study, in a similar way, three two-component non-local Poisson structures
that are used in the study of the non-linear Schroedinger equation (NLS), 
see \cite{Mag80,TF86,Dor93,BDSK09}.
As a result, we establish integrability of the following generalization 
of NLS:
$$
i\psi_t=\psi_{xx}+\alpha\psi|\psi|^2+i\beta(\psi|\psi|^2)_x\,,
$$
where $\alpha$ and $\beta$ are arbitrary constants
(NLS corresponds to $\beta=0$).
This equation has been studied in the papers \cite{CLL79} and \cite{WKI79}
(see also \cite{KN78} and \cite{CC87}).

In conclusion of the introduction we would like to comment on our definition of integrability.
The existence of infinitely many linearly independent 
integrals of motion in involution $\tint h_n$,
and of infinitely many linearly independent commuting higher symmetries $P_n$,
is only a necessary condition of integrability.
In Section \ref{sec:7.1b} we introduce the notion of \emph{complete integrability}
which, in our opinion, is the right necessary and sufficient condition of integrability.
This condition requires that the orthocomplement to the
span $\Xi$ of the variational derivatives 
of the conserved densities $\xi_n=\frac{\delta h_n}{\delta u},\,n\in\mb Z_+$,
lies in the span $\Pi$ of the commuting generalized symmetries $P_n,\,n\in\mb Z_+$,
and the orthocomplement to $\Pi$ lies in $\Xi$.
This definition is a straightforward generalization of Liouville integrability
of finite dimensional Hamiltonian systems.
We intend to study this notion in a forthcoming publication.


Throughout the paper, unless otherwise specified,
all vector spaces are considered over a field $\mb F$ of characteristic zero.

We wish to thank Pavel Etingof and Andrea Maffei for (always) useful discussions.
We are greatly indebted to Alexander Mikhailov and Vladimir Sokolov 
for very useful correspondence and discussions.
We also wish to thank Takayuki Tsuchida for pointing out,
right after the paper appeared in the arXiv,
various references where some of the equations that we consider were
previously studied.
The present paper was partially written during the first author's several visits to MIT,
the second author's several visits to the Center for Mathematics and Theoretical Physics (CMTP)
in Rome,
and both authors' visits to IHP and IHES, which we thank for their warm hospitality.


\section{Rational matrix pseudodifferential operators}
\label{sec:2}

\subsection{The space $V_{\lambda,\mu}$}
\label{sec:2.1}

Throughout the paper we shall use the following standard notation.
Given a vector space $V$, we denote by $V[\lambda]$ the space of polynomials in $\lambda$ with coefficients in $V$,
by $V[[\lambda^{-1}]]$ the space of formal power series in $\lambda^{-1}$ with coefficients in $V$,
and by $V((\lambda^{-1}))=V[[\lambda^{-1}]][\lambda]$ the space of formal Laurent series in $\lambda^{-1}$ 
with coefficients in $V$.

We have the obvious identifications $V[\lambda,\mu]=V[\lambda][\mu]=V[\mu][\lambda]$
and $V[[\lambda^{-1},\mu^{-1}]]=V[[\lambda^{-1}]][[\mu^{-1}]]=V[[\mu^{-1}]][[\lambda^{-1}]]$.
However the space $V((\lambda^{-1}))((\mu^{-1})$ does not coincide 
with $V((\mu^{-1}))((\lambda^{-1}))$.
Both spaces contain naturally the subspace $V[[\lambda^{-1},\mu^{-1}]][\lambda,\mu]$.
In fact, this subspace is their intersection in the ambient space $V[[\lambda^{\pm1},\mu^{\pm1}]]$
of all infinite series of the form $\sum_{m,n\in\mb Z}a_{m,n}\lambda^m\mu^n$.

The most important for this paper will be the space
$$
V_{\lambda,\mu}:=V[[\lambda^{-1},\mu^{-1},(\lambda+\mu)^{-1}]][\lambda,\mu]\,,
$$
namely, the quotient of the $\mb F[\lambda,\mu,\nu]$-module
$V[[\lambda^{-1},\mu^{-1},\nu^{-1}]][\lambda,\mu,\nu]$
by the submodule 
$(\nu-\lambda-\mu)V[[\lambda^{-1},\mu^{-1},\nu^{-1}]][\lambda,\mu,\nu]$.
By definition, the space $V_{\lambda,\mu}$ consists of elements which can be written (NOT uniquely) in the form
\begin{equation}\label{20111006:eq1}
A=\sum_{m=-\infty}^M\sum_{n=-\infty}^N\sum_{p=-\infty}^P a_{m,n,p}\lambda^m\mu^n(\lambda+\mu)^p\,,
\end{equation}
for some $M,N,P\in\mb Z$ (in fact, we can always choose $P\leq 0$), and $a_{m,n,p}\in V$.

In the space $V[[\lambda^{-1},\mu^{-1},\nu^{-1}]][\lambda,\mu,\nu]$
we have a natural notion of degree, by letting $\deg(\lambda)=\deg(\mu)=\deg(\nu)=1$.
Every element $A\in V[[\lambda^{-1},\mu^{-1},\nu^{-1}]][\lambda,\mu,\nu]$
decomposes as a sum
$A=\sum_{d=-\infty}^NA^{(d)}$
(possibly infinite), where $A^{(d)}$ is a finite linear combination of monomials of degree $d$.
Since $\nu-\lambda-\mu$ is homogenous (of degree 1),
this induces a well-defined notion of degree on the quotient space $V_{\lambda,\mu}$,
and we denote by $V_{\lambda,\mu}^d$, for $d\in\mb Z$,
the span of elements of degree $d$ in $V_{\lambda,\mu}$.
If $A\in\mc V_{\lambda,\mu}$ has the form \eqref{20111006:eq1}, then it decomposes as $A=\sum_{d=-\infty}^{M+N+P}A^{(d)}$,
where $A^{(d)}\in V_{\lambda,\mu}^d$ is given by
$$
A^{(d)}=\sum_{\substack{m\leq M,n\leq N,p\leq P \\ (m+n+p=d)}} a_{m,n,p}\lambda^m\mu^n(\lambda+\mu)^p\,.
$$
The coefficients $a_{m,n,p}\in V$ are still not uniquely defined,
but now the sum in $A^{(d)}$ is finite (since $d-2K\leq m,n,p\leq K:=\max(M,N,P)$).
Hence, we have the following equality
$$
V^d_{\lambda,\mu}=V[\lambda^{\pm1},\mu^{\pm1},(\lambda+\mu)^{-1}]^d\,,
$$
where, as before, the superscript $d$ denotes the subspace consisting of polynomials
in $\lambda^{\pm1},\mu^{\pm1},(\lambda+\mu)^{-1}$, of degree $d$.
\begin{lemma}\label{20110919:lem1}
The following is a basis of the space $V^d_{\lambda,\mu}$ over $V$:
$$
\lambda^{d-i}\mu^i,\,i\in\mb Z
\,\,\,\,;\,\,\,\,\,\,\,\,
\lambda^{d+i}(\lambda+\mu)^{-i},\,i\in\mb Z_{>0}=\{1,2,\dots\}\,,
$$
in the sense that any element of the space $V^d_{\lambda,\mu}$
can be written uniquely as a finite linear combination 
of the above elements with coefficients in $V$.
\end{lemma}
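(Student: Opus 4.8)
The plan is to reduce the statement to the classical partial fraction decomposition of a rational function of one variable whose poles are confined to two points. Since $V$ is a vector space over $\mb F$, and hence free, there is a canonical identification $V^d_{\lambda,\mu}=V\otimes_{\mb F}W^d$, where $W^d:=\mb F[\lambda^{\pm1},\mu^{\pm1},(\lambda+\mu)^{-1}]^d$. An $\mb F$-basis of $W^d$ therefore produces a $V$-basis of $V^d_{\lambda,\mu}$ in the required sense of unique finite $V$-linear combinations, so it suffices to treat $V=\mb F$. I would then dehomogenize: working in the field $\mb F(\lambda,\mu)$ and setting $t=\mu/\lambda$, division by $\lambda^d$ gives an $\mb F$-linear isomorphism $W^d\,\xrightarrow{\sim}\,R:=\mb F[t,t^{-1},(1+t)^{-1}]$, under which $\lambda^{d-i}\mu^i\mapsto t^i$ $(i\in\mb Z)$ and $\lambda^{d+i}(\lambda+\mu)^{-i}\mapsto(1+t)^{-i}$ $(i\in\mb Z_{>0})$. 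Thus the lemma becomes the assertion that
$$
\{\,t^i:i\in\mb Z\,\}\cup\{\,(1+t)^{-i}:i\in\mb Z_{>0}\,\}
$$
is an $\mb F$-basis of $R$.

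Next I would recognize $R$ as the ring of rational functions in $t$ with poles only at $t=0$ and $t=-1$. The proposed set is exactly the standard partial fraction basis for such functions: the monomials $t^i$ with $i\geq0$ account for the polynomial part, those with $i<0$ for the principal part at $t=0$, and the $(1+t)^{-i}$ for the principal part at $t=-1$. Spanning then follows from the existence of partial fraction decompositions, using only that $t$ and $t+1$ are coprime. Concretely, any element of $R$ has the form $P(t)/(t^a(1+t)^b)$ with $P\in\mb F[t]$, and one reduces it to the two families above by Euclidean division together with the identity $t=(1+t)-1$, which gives
$$
t^n(1+t)^{-j}=t^{n-1}(1+t)^{-(j-1)}-t^{n-1}(1+t)^{-j}
$$
and lowers the pole order at $t=-1$, supplemented by partial fractions at $t=0$ for negative powers of $t$.

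Finally, for linear independence I would argue by pole structure. Given a finite relation $\sum_i c_i t^i+\sum_{j>0}d_j(1+t)^{-j}=0$ with $c_i,d_j\in\mb F$, the second sum is the only part carrying a pole at $t=-1$; comparing the principal parts at $t=-1$ forces every $d_j=0$, after which the linear independence of the Laurent monomials $t^i$ forces every $c_i=0$. I expect the only delicate point — the main obstacle — to be the bookkeeping in the spanning step, namely checking that the iterative reduction of a general monomial $t^n(1+t)^{-j}$ terminates in the two claimed families; this is exactly what the displayed identity and the two-point partial fraction decomposition guarantee, so no genuine difficulty remains.
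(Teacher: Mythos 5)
Your proof is correct and follows essentially the same route as the paper's: reduce to degree $d=0$ via the substitution $t=\mu/\lambda$, identify the space with rational functions in $t$ having poles only at $0$ and $-1$, and invoke the uniqueness of partial fraction decomposition. You merely spell out the reduction to $V=\mb F$ and the spanning/independence steps that the paper leaves implicit.
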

\begin{proof}
First, it suffices to prove the claim for $d=0$.
In this case, letting $t=\mu/\lambda$, the elements of $V^0_{\lambda,\mu}$
are rational functions in $t$ with poles at 0 and -1.
But any such rational functions can be uniquely written,
by partial fractions decomposition, as a linear combination of $t^i$, with $i\in\mb Z$,
and of $(1+t)^{-i}$, with $i\in\mb Z_{>0}$.
\end{proof}
\begin{remark}
One has natural embeddings of $V_{\lambda,\mu}$
in all the vector spaces 
$V((\lambda^{-1}))((\mu^{-1}))$, $V((\mu^{-1}))((\lambda^{-1}))$,
$V((\lambda^{-1}))(((\lambda+\mu)^{-1}))$, $V((\mu^{-1}))(((\lambda+\mu)^{-1}))$,
$V(((\lambda+\mu)^{-1}))((\lambda^{-1}))$, $V(((\lambda+\mu)^{-1}))((\mu^{-1}))$,
defined by expanding one of the variables $\lambda,\mu$ or $\nu=\lambda+\mu$
in terms of the other two.
For example, we have the embedding
\begin{equation}\label{20110919:eq1}
\iota_{\mu,\lambda}:\,V_{\lambda,\mu}\hookrightarrow V((\lambda^{-1}))((\mu^{-1}))\,,
\end{equation}
obtained by expanding all negative powers of $\lambda+\mu$ in the region $|\mu|>|\lambda|$:
\begin{equation}\label{20110919:eq1b}
\iota_{\mu,\lambda}(\lambda+\mu)^{-n-1}
=
\sum_{k=0}^\infty\binom{-n-1}k \lambda^k\mu^{-n-k-1}\,.
\end{equation}
Similarly in all other cases.
Note that, even though $V_{\lambda,\mu}$ is naturally embedded in both spaces
$V((\lambda^{-1}))((\mu^{-1}))$ and $V((\mu^{-1}))((\lambda^{-1}))$,
it is not contained in their intersection $V[[\lambda^{-1},\mu^{-1}]][\lambda,\mu]$.
\end{remark}

\begin{lemma}\label{20111006:lem}
If $V$ is an algebra, then $V_{\lambda,\mu}$ is also an algebra, with the obvious product.
Namely, if $A(\lambda,\mu),B(\lambda,\mu)\in V_{\lambda,\mu}$, then $A(\lambda,\mu)B(\lambda,\mu)\in V_{\lambda,\mu}$.
More generally, 
if $S,T:\,V\to V$ are endomorphisms of $V$ (viewed as a vector space), then
$$
A(\lambda+S,\mu+T)B(\lambda,\mu)\in V_{\lambda,\mu}\,,
$$
where we expand the negative powers of $\lambda+S$ and $\mu+T$ in non-negative powers of $S$ and $T$, 
acting on the coefficients of $B$.
\end{lemma}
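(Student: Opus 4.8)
The plan is to prove both assertions by reducing everything to the module structure already set up in the excerpt. Recall that $V_{\lambda,\mu}$ is defined as the quotient of the $\mathbb F[\lambda,\mu,\nu]$-module $V[[\lambda^{-1},\mu^{-1},\nu^{-1}]][\lambda,\mu,\nu]$ by the submodule $(\nu-\lambda-\mu)V[[\lambda^{-1},\mu^{-1},\nu^{-1}]][\lambda,\mu,\nu]$, where $\nu$ plays the role of $\lambda+\mu$. The cleanest approach is to work upstairs in $V[[\lambda^{-1},\mu^{-1},\nu^{-1}]][\lambda,\mu,\nu]$ with three honestly independent variables, prove the closure statement there, and then push down to the quotient, since the quotient map is an algebra homomorphism once we know the top space is an algebra.

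First I would establish the product assertion. By Lemma \ref{20110919:lem1}, a homogeneous element of $V_{\lambda,\mu}^d$ is a finite $V$-linear combination of the basis monomials $\lambda^{d-i}\mu^i$ and $\lambda^{d+i}(\lambda+\mu)^{-i}$, so it suffices to check that the product of two basis monomials again lies in $V_{\lambda,\mu}$; bilinearity and the degree decomposition $A=\sum_{d\le N}A^{(d)}$ then handle the general case, the point being that the product of something of top degree $\le M_1+N_1+P_1$ with something of top degree $\le M_2+N_2+P_2$ has bounded total top degree, so no infinite descent in degree is introduced. Concretely, writing each factor as a finite sum of terms $v\,\lambda^a\mu^b(\lambda+\mu)^c$ with $a,b\in\mathbb Z$ and $c\le 0$, the product of two such terms is again of exactly this shape (with the $V$-coefficients multiplied, using that $V$ is an algebra), hence is manifestly an element of the form \eqref{20111006:eq1}. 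This is essentially a bookkeeping argument and I do not expect it to be an obstacle.

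The substantive step is the more general claim involving the shifted arguments $A(\lambda+S,\mu+T)B(\lambda,\mu)$. Here the key is to make precise what $A(\lambda+S,\mu+T)$ means and to verify it lands in $V_{\lambda,\mu}$. Writing $A$ as a (possibly infinite, but degree-bounded above) sum of monomials $v\,\lambda^a\mu^b(\lambda+\mu)^c$, I would interpret $A(\lambda+S,\mu+T)$ by replacing $\lambda\mapsto\lambda+S$, $\mu\mapsto\mu+T$, $\lambda+\mu\mapsto\lambda+\mu+S+T$, and then expanding each negative power $(\lambda+S)^a$, $(\mu+T)^b$, $(\lambda+\mu+S+T)^c$ as a geometric series in non-negative powers of $S$, $T$ acting on the coefficients of $B$. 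The operators $S,T$ act only on $V$, not on the formal variables, so each such expansion is of the form $\sum_{k\ge 0}\binom{a}{k}\lambda^{a-k}S^k$ and similarly for the others. The main thing to verify is that after applying all three expansions to $A$ and multiplying by $B$, the total coefficient of each monomial $\lambda^m\mu^n(\lambda+\mu)^p$ in $V$ is a \emph{finite} sum, so that the result is a well-defined element of the ambient space and in fact of the form \eqref{20111006:eq1}.

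The hard part will be precisely this finiteness/convergence check, because the three simultaneous geometric expansions could a priori produce infinitely many contributions to a fixed monomial. I would control this by the degree grading: since $\deg\lambda=\deg\mu=\deg\nu=1$ while $S,T$ carry degree $0$ (they act on coefficients), each application of $S^k$ or $T^k$ lowers the degree in the formal variables by exactly $k$. For a fixed target monomial $\lambda^m\mu^n(\lambda+\mu)^p$ of a given degree, only finitely many choices of the expansion exponents $k$ from the $S$- and $T$-series can match the bounded-above degree coming from $A$ together with the degrees available in $B$, so each coefficient is a finite $V$-combination. Once finiteness is secured and the shape \eqref{20111006:eq1} is confirmed upstairs, passing to the quotient by $(\nu-\lambda-\mu)$ gives the assertion in $V_{\lambda,\mu}$, and the ordinary product is the special case $S=T=0$, completing both parts of the lemma.
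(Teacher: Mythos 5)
Your proposal is correct and follows essentially the same route as the paper: expand $A$ and $B$ as sums of monomials $v\,\lambda^a\mu^b(\lambda+\mu)^c$, interpret the shift by binomially expanding $(\lambda+S)^a$, $(\mu+T)^b$, $(\lambda+\mu+S+T)^c$ in non-negative powers of $S$, $T$, $S+T$ acting on the coefficients of $B$, and then verify that each coefficient of the result is a finite sum. The only cosmetic difference is that you run the finiteness check through the total-degree grading, whereas the paper bounds each exponent $(\bar m,\bar n,\bar p)$ separately via $i=m+m'-\bar m\geq 0$ together with the upper bounds $m\leq M$, $m'\leq M'$ (and likewise for the other two indices); both arguments work.
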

\begin{proof}
We expand $A$ and $B$ as in \eqref{20111006:eq1}:
$$
\begin{array}{l}
\displaystyle{
A(\lambda,\mu)=\sum_{m=-\infty}^M\sum_{n=-\infty}^N\sum_{p=-\infty}^P a_{m,n,p}\lambda^m\mu^n(\lambda+\mu)^p\,,
} \\
\displaystyle{
B(\lambda,\mu)=\sum_{m'=-\infty}^{M'}\sum_{n'=-\infty}^{N'}\sum_{p'=-\infty}^{P'} b_{m',n',p'}\lambda^{m'}\mu^{n'}(\lambda+\mu)^{p'}\,.
}
\end{array}
$$
Using the binomial expansion, we then get
$$
A(\lambda+S,\mu+T)B(\lambda,\mu)
=
\sum_{\bar{m}=-\infty}^{M+M'}\sum_{\bar{n}=-\infty}^{N+N'}\sum_{\bar{p}=-\infty}^{P+P'} c_{\bar{m},\bar{n},\bar{p}}
\lambda^{\bar{m}}\mu^{\bar{n}}(\lambda+\mu)^{\bar{p}}\,,
$$
where
$$
\begin{array}{r}
\displaystyle{
c_{\bar{m},\bar{n},\bar{p}}=
\sum_{\substack{m\leq M,m'\leq M',i\geq0 \\ (m+m'-i=\bar{m})}}
\,
\sum_{\substack{n\leq N,n'\leq N',j\geq0 \\ (n+n'-j=\bar{n})}}
\sum_{\substack{p\leq P,p'\leq P',k\geq0 \\ (p+p'-k=\bar{p})}}
} \\
\displaystyle{
\binom{m}{i} \binom{n}{j} \binom{p}{k}
a_{m,n,p} \big(S^{i} T^{j} (S+T)^{k} b_{m',n',p'}\big)\,.
}
\end{array}
$$
To conclude, we just observe that each sum in the RHS is finite,
since, for example, in the first sum we have $i=m+m'-\bar{m}$, $\bar{m}-M'\leq m\leq M$ and $\bar{m}-M\leq m'\leq M'$.
\end{proof}
\begin{lemma}\label{20120131:lem1}
Let $V$ be a vector space and let $U\subset V$ be a subspace.
Then we have:
$$
\begin{array}{l}
\big\{A\in V_{\lambda,\mu}\,\big|\,\iota_{\mu,\lambda}A\in U((\lambda^{-1}))((\mu^{-1}))\big\}
\\
=
\big\{A\in V_{\lambda,\mu}\,\big|\,\iota_{\lambda,\mu}A\in U((\mu^{-1}))((\lambda^{-1}))\big\}
\\
=
\big\{A\in V_{\lambda,\mu}\,\big|\,\iota_{\lambda+\mu,\lambda}A
\in U((\lambda^{-1}))(((\lambda+\mu)^{-1}))\big\}
\\
=
\big\{A\in V_{\lambda,\mu}\,\big|\,\iota_{\lambda+\mu,\mu}A
\in U((\mu^{-1}))(((\lambda+\mu)^{-1}))\big\}
\\
=
\big\{A\in V_{\lambda,\mu}\,\big|\,\iota_{\lambda,\lambda+\mu}A
\in U(((\lambda+\mu)^{-1}))((\lambda^{-1}))\big\}
\\
=
\big\{A\in V_{\lambda,\mu}\,\big|\,\iota_{\mu,\lambda+\mu}A
\in U(((\lambda+\mu)^{-1}))((\mu^{-1}))\big\}
= U_{\lambda,\mu}\,.
\end{array}
$$
\end{lemma}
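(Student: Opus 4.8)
The plan is to deduce the entire statement from a single structural observation and one short diagram chase, rather than by manipulating the six expansions by hand. The observation is that each of the coefficient-valued constructions $W\mapsto W_{\lambda,\mu}$ and $W\mapsto W((\cdot^{-1}))((\cdot^{-1}))$ is an \emph{exact} functor on $\mb F$-vector spaces: each is built coefficientwise, as products and direct sums of copies of $W$ indexed by the relevant monomials, and such constructions are exact over a field. Moreover each expansion map $\iota_{x,y}$ is a \emph{natural} and \emph{injective} transformation between two such functors. Naturality means that for any $\mb F$-linear map $\phi\colon W\to W'$, writing $\phi_*$ for the induced coefficientwise maps, one has $\iota^{W'}_{x,y}\circ\phi_*=\phi_*\circ\iota^{W}_{x,y}$; this is immediate from the fact that the defining expansion \eqref{20110919:eq1b}, and its analogues for the other five orderings, involves only the \emph{scalar} binomial coefficients $\binom{-n-1}{k}\in\mb F$ and never touches the $W$-valued coefficients. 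Injectivity of each $\iota_{x,y}$, over an arbitrary coefficient space, is exactly the content of the Remark following Lemma \ref{20110919:lem1}.

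First I would record the two kernel identifications obtained by applying exactness to the short exact sequence $0\to U\to V\xrightarrow{\pi}V/U\to 0$. On one hand, $U_{\lambda,\mu}=\ker\big(\pi_*\colon V_{\lambda,\mu}\to(V/U)_{\lambda,\mu}\big)$; this is where Lemma \ref{20110919:lem1} enters, since degree by degree $V^d_{\lambda,\mu}$ is \emph{free} over $V$ on the displayed basis, so an element lies in $\ker\pi_*$ iff each of its now-unique basis coefficients lies in $U=\ker\pi$, which is precisely the condition defining $U_{\lambda,\mu}$. On the other hand, for each ordered pair $(x,y)$ the same exactness gives that $U((y^{-1}))((x^{-1}))$ is the kernel of $\pi_*$ on $V((y^{-1}))((x^{-1}))$, i.e.\ a series has all coefficients in $U$ iff it dies under $\pi_*$.

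With these in hand, each of the six set-equalities becomes a one-line chase. Fix an ordered pair $(x,y)$ and let $A\in V_{\lambda,\mu}$. Then $\iota_{x,y}A$ has all its coefficients in $U$, i.e.\ $\iota^{V}_{x,y}A\in U((y^{-1}))((x^{-1}))$, iff $\pi_*\big(\iota^{V}_{x,y}A\big)=0$; by naturality this expression equals $\iota^{V/U}_{x,y}(\pi_*A)$; and since $\iota^{V/U}_{x,y}$ is injective (the Remark, applied to the vector space $V/U$), this vanishes iff $\pi_*A=0$, i.e.\ iff $A\in U_{\lambda,\mu}$. As this argument is uniform in the six pairs $(x,y)$, all six sets coincide with $U_{\lambda,\mu}$, which is the assertion.

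The only genuinely substantive ingredients, and hence where I would concentrate the care, are the two inputs feeding the chase: the naturality of $\iota_{x,y}$ in the coefficient space, which is clear from the scalar form of the expansion, and—more importantly—its injectivity over the \emph{arbitrary} vector space $V/U$, which I take from the Remark, whose justification rests in turn on the partial-fraction uniqueness underlying Lemma \ref{20110919:lem1}. Everything else is formal: the exactness of the coefficientwise constructions and the identification of $U_{\lambda,\mu}$ with the relevant kernel. I would therefore make sure at the outset that the Remark's embedding statement is indeed available with an arbitrary $\mb F$-vector space of coefficients, since the whole argument is powered by invoking it for $V/U$ rather than for $V$ itself. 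The alternative, more hands-on route—writing $A$ in the basis of Lemma \ref{20110919:lem1}, expanding a single $\iota_{x,y}A$, and recovering each basis coefficient of $A$ as an $\mb F$-linear (finite-difference/interpolation) combination of the expansion coefficients—also works, but it is exactly this separation of the two finite poles inside one expansion that the functorial argument lets us avoid.
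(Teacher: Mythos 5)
Your proof is correct, but it takes a genuinely different route from the paper's. The paper argues by direct computation: it writes $A\in V^d_{\lambda,\mu}$ in the basis of Lemma \ref{20110919:lem1}, expands via $\iota_{\mu,\lambda}$, reads off the membership conditions on the coefficients $v_i,w_j$, and then proves that the infinite binomial matrix $\big((-1)^{i+j}\binom{i-1}{j-1}\big)$ has rank $N$ (via the Tartaglia--Pascal determinant manipulation) in order to force $w_j\in U$ and hence $v_i\in U$. Your functorial reduction replaces all of this by a single nontrivial input --- injectivity of $\iota_{x,y}$ over an \emph{arbitrary} coefficient space, applied to $V/U$ --- everything else (exactness of the coefficientwise constructions, the identification $U_{\lambda,\mu}=\ker\big(\pi_*\colon V_{\lambda,\mu}\to(V/U)_{\lambda,\mu}\big)$ coming from the freeness in Lemma \ref{20110919:lem1}, and naturality of the scalar expansions \eqref{20110919:eq1b}) being formal. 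What your approach buys is uniformity over the six orderings and the complete disappearance of the binomial-rank computation; what the paper's computation buys is self-containedness, since it proves the needed injectivity (the case $U=0$) and the general statement in one stroke.

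The one step you should not leave to a citation is precisely that injectivity. The Remark after Lemma \ref{20110919:lem1} \emph{asserts} that the maps $\iota_{x,y}$ are embeddings but does not prove it, and within the paper's logic the rigorous justification of that assertion is exactly the present lemma with $U=0$; quoting the Remark as a black box would therefore be circular. Fortunately injectivity has a short independent proof in the spirit you gesture at: by Lemma \ref{20110919:lem1}, $V^0_{\lambda,\mu}\cong V\otimes_{\mb F}R$, where $R\subset\mb F(t)$, $t=\mu/\lambda$, is the space of rational functions with poles only at $0$ and $-1$; composing with a linear functional on $V$ that does not annihilate all (finitely many) basis coefficients of a given nonzero $A$ reduces the claim to injectivity of the Laurent expansion $\mb F(t)\to\mb F((t^{-1}))$ (respectively the expansions at $t=0$ and $t=-1$ for the other orderings), which holds because $\mb F(t)$ is a field and the expansion is a nonzero ring homomorphism. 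With that supplied, your argument is complete.
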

\begin{proof}
We only need to prove that 
$\big\{A\in V_{\lambda,\mu}\,\big|\,\iota_{\mu,\lambda}A
\in U((\lambda^{-1}))((\mu^{-1}))\big\}\subset U_{\lambda,\mu}$.
Indeed, the opposite inclusion is obvious,
and the argument for the other equalities is the same.

Let $A\in V^d_{\lambda,\mu}$ be such that its expansion 
$\iota_{\mu,\lambda}A\in V((\lambda^{-1}))((\mu^{-1}))$ has coefficients in $U$.
We want to prove that $A$ lies in $U_{\lambda,\mu}$.
By Lemma \ref{20110919:lem1}, $A$ can be written uniquely as
$$
A=\sum_{i=-M}^Nv_i\lambda^{d+i}\mu^{-i}+\sum_{j=1}^N w_j\lambda^{d+j}(\lambda+\mu)^{-j}
\,\,,\,\,\,\,
\text{ with } v_i,w_j\in V
\,.
$$
Its expansion in $V((\lambda^{-1}))((\mu^{-1}))$ is
$$
\iota_{\mu,\lambda}A=
\sum_{i=-M}^Nv_i\lambda^{d+i}\mu^{-i}
+\sum_{j=1}^N\sum_{k=0}^\infty\binom{-j}{k}w_j\lambda^{d+j+k}\mu^{-j-k}\,.
$$
Since, by assumption, $\iota_{\mu,\lambda}A\in U((\lambda^{-1}))((\mu^{-1}))$,
we have
$$
\begin{array}{ll}
v_i\in U &\quad\text{ for }\quad -M\leq i\leq -1 \,,\\
\displaystyle{
v_i+\sum_{j=1}^i \binom{-j}{i-j}w_j
\in U 
} &\quad\text{ for }\quad 0\leq i\leq N  \,,\\
\displaystyle{
\sum_{j=1}^N \binom{-j}{i-j}w_j
\in U 
} &\quad\text{ for }\quad i>N \,.
\end{array}
$$
From the first condition above we have that $v_i$ lies in $U$ for $i<0$.
Using the third condition, we want to deduce that $w_j$ lies in $U$ for all $1\leq j\leq N$.
It then follows, from the second condition, that $v_i$ lies in $U$ for $i\geq0$ as well,
proving the claim.

For $i>N$ and $1\leq j\leq N$ we have $\binom{-j}{i-j}=(-1)^{i-j}\binom{i-1}{j-1}$.
Hence, we will be able to deduce that $w_j$ lies in $U$ for every $j$,
once we prove that the following matrix
$$
P=\left(\,(-1)^{i+j}\binom{i-1}{j-1}\,\right)_{\substack{N+1\leq i<\infty \\ 1\leq j\leq N }}\,,
$$
has rank $N$.
Since the sign $(-1)^{i+j}$ does not change the rank of the above matrix,
it sufficies to prove that the matrices
$$
T_h=\left(\,\binom{i-1}{j-1}\,\right)_{\substack{h+1\leq i\leq h+N \\ 1\leq j\leq N }}\,,
$$
are non-degenerate for every $h\geq0$.
This is clear since the matrix $T_0$ is upper triangular with $1$'s on the diagonal, 
and, by the Tartaglia-Pascal triangle, $T_h$ and $T_{h+1}$ have the same determinant.
\end{proof}

\subsection{Rational pseudodifferential operators}
\label{sec:2.2a}

For the rest of this section, 
let $\mc A$ be a differential algebra, i.e. a unital commutative associative algebra
with a derivation $\partial$, and assume that $\mc A$ is a domain.
For $a\in\mc A$, we denote $a'=\partial(a)$ and $a^{(n)}=\partial^n(a)$, for a non negative integer $n$.
We denote by $\mc K$ the field of fractions of $\mc A$.
Then of course we can extend $\partial$ to a derivation of $\mc K$
making it a differential field.

Recall that a \emph{pseudodifferential operator} over $\mc A$
is an expression of the form
\begin{equation}\label{20111003:eq1}
A=A(\partial)
=\sum_{n=-\infty}^N a_n \partial^n
\,\,,\,\,\,\, a_n\in\mc A\,.
\end{equation}
If $a_N\neq0$, one says that $A$ has \emph{order} $N$.
Pseudodifferential operators form a unital associative algebra, 
denoted by $\mc A((\partial^{-1}))$,
with product $\circ$ defined by letting
\begin{equation}\label{20111130:eq1}
\partial^n\circ a=\sum_{j\in\mb Z_+}\binom nj a^{(j)}\partial^{n-j}
\,\,,\,\,\,\, n\in\mb Z,\, a\in\mc A\,.
\end{equation}
We will often omit $\circ$ if no confusion may arise.

Clearly, $\mc K((\partial^{-1}))$ is a skewfield,
and it is the skewfield of fractions of $\mc A((\partial^{-1}))$.
If $A\in\mc A((\partial^{-1}))$ is a non-zero pseudodifferential operator
of order $N$ as in \eqref{20111003:eq1},
its inverse $A^{-1}\in\mc K((\partial^{-1}))$ is computed as follows.
We write
$$
A
=a_N\Big(1+\sum_{n=-\infty}^{-1} a_N^{-1}a_{n+N} \partial^n\Big)\partial^N\,,
$$
and expanding by geometric progression, we get
\begin{equation}\label{20111130:eq2}
A^{-1}
=
\partial^{-N}\circ \sum_{k=0}^\infty\Big(-\sum_{n=-\infty}^{-1} a_N^{-1}a_{n+N} \partial^n\Big)^k\circ a_N^{-1}\,,
\end{equation}
which is well defined as a pseudodifferential operator in $\mc K((\partial^{-1}))$,
since, by formula \eqref{20111130:eq1},
the powers of $\partial$ are bounded above by $-N$,
and the coefficient of each power of $\partial$ is a finite sum.

The \emph{symbol} of the pseudodifferential operator $A(\partial)$ in \eqref{20111003:eq1}
is the formal Laurent series
$A(\lambda)=\sum_{n=-\infty}^N a_n \lambda^n\,\in\mc A((\lambda^{-1}))$,
where $\lambda$ is an indeterminate commuting with $\mc A$.
We thus get a bijective map $\mc A((\partial^{-1}))\to\mc A((\lambda^{-1}))$
(which is not an algebra homomorphism).
A closed formula for the associative product in $\mc A((\partial^{-1}))$
in terms of the corresponding symbols is the following:
\begin{equation}\label{20111003:eq2}
(A\circ B)(\lambda)=A(\lambda+\partial)B(\lambda)\,.
\end{equation}
Here and further on, we always expand an expression as $(\lambda+\partial)^{n},\,n\in\mb Z$, 
in non-negative powers of $\partial$:
\begin{equation}\label{20111004:eq1}
(\lambda+\partial)^{n}=\sum_{j=0}^\infty\binom nj \lambda^{n-j}\partial^j\,.
\end{equation}
Therefore, the RHS of \eqref{20111003:eq2} means
$\sum_{m,n=-\infty}^N\sum_{j=0}^\infty \binom{m}{j}a_m b_n^{(j)} \lambda^{m+n-j}$.
%

The algebra $\mc A((\partial^{-1}))$
contains the algebra of \emph{differential operators} $\mc A[\partial]$ as a subalgebra.
\begin{definition}\label{20110926:def}
The field $\mc K(\partial)$ of \emph{rational pseudodifferential operators}
is the smallest subskewfield of $\mc K((\partial^{-1}))$ containing $\mc A[\partial]$.
We denote $\mc A(\partial)=\mc K(\partial)\cap\mc A((\partial^{-1}))$,
the subalgebra of \emph{rational pseudodifferential operators with coefficients in} $\mc A$.
\end{definition}
The following Proposition (see \cite[Prop.3.4]{CDSK12a}) describes explicitly 
the skewfield $\mc K(\partial)$ of rational pseudodifferential operators.
\begin{proposition}\label{20111003:thm2}
Let $\mc A$ be a differential domain, and let $\mc K$ be its field of fractions.
\begin{enumerate}[(a)]
\item
Every rational pseudodifferential operator $L\in\mc K(\partial)$
can be written as a right (resp. left) fraction
$L=AS^{-1}$ (resp. $L=S^{-1}A$)
for some $A,S\in\mc A[\partial]$ with $S\neq0$.
\item
Let $L=AS^{-1}$ (resp. $L=S^{-1}A$), with $A,S\in\mc A[\partial]$, $S\neq0$, 
be a decomposition of $L\in\mc K(\partial)$
such that $S$ has minimal possible order.
Then any other decomposition $L=A_1S_1^{-1}$
(resp. $L=S_1^{-1}A_1$),
with $A_1,S_1\in\mc A[\partial]$,
we have $A_1=AK$, $S_1=SK$ 
(resp. $A_1=KA$, $A_1=KS$), 
for some $K\in\mc K[\partial]$.
\end{enumerate}
\end{proposition}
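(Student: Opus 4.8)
The plan is to reduce everything to the structure of $\mc K[\partial]$ as a two-sided principal ideal domain, and to bridge the gap between $\mc A[\partial]$ and $\mc K[\partial]$ by clearing denominators. For part (a), the first step is to check that $\mc A[\partial]$ is a right (and, symmetrically, left) Ore domain. It is a domain because $\mc A$ is. For the right Ore condition, given $A,S\in\mc A[\partial]$ with $S\neq0$, I use that $\mc K[\partial]$ is right Euclidean with respect to order (the division algorithm works because leading coefficients are invertible in the field $\mc K$), hence a right PID and in particular right Ore; this yields $\tilde S_1,\tilde A_1\in\mc K[\partial]$, $\tilde S_1\neq0$, with $\tilde S_1 A=\tilde A_1 S$. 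Left-multiplying by a common denominator $c\in\mc A$ of the finitely many coefficients of $\tilde S_1,\tilde A_1$ gives $S_1 A=A_1 S$ with $A_1=c\tilde A_1,\,S_1=c\tilde S_1\in\mc A[\partial]$, $S_1\neq0$, the point being that left multiplication by $c\in\mc A$ merely rescales coefficients without introducing $\partial$. Then the set $\{AS^{-1}\mid A,S\in\mc A[\partial],\,S\neq0\}\subseteq\mc K((\partial^{-1}))$ is closed under the skewfield operations by standard Ore localization, so it is a subskewfield containing $\mc A[\partial]$; by the minimality in Definition \ref{20110926:def} it equals $\mc K(\partial)$, which is precisely the right-fraction assertion. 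The left-fraction statement is the mirror argument.

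For part (b) I may assume $L\neq0$ (the case $L=0$ being trivial). The main device is the right ideal $\mc I=\{T\in\mc K[\partial]\mid LT\in\mc K[\partial]\}$ of $\mc K[\partial]$. Since $\mc K[\partial]$ is a right PID, $\mc I=S_0\mc K[\partial]$ for a generator $S_0$ of minimal order $d_0$, and every $\mc A[\partial]$-denominator $S$ of $L$ satisfies $LS\in\mc A[\partial]\subseteq\mc K[\partial]$, hence $S\in\mc I$ and $\ord S\geq d_0$. The heart of the matter is to show that the minimal order attainable over $\mc A[\partial]$ is again exactly $d_0$.

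This is the step I expect to be the main obstacle, because the naive way to pass from a $\mc K[\partial]$-decomposition to an $\mc A[\partial]$-one — left-multiplying by a common denominator — fails: a scalar cannot be commuted past $L$. The remedy is to clear denominators by multiplication on the \emph{right}. Writing $A_0=LS_0\in\mc K[\partial]$ and choosing $q\in\mc A$ to be a common denominator of all coefficients of $A_0$ and $S_0$, I take $c=q^{m}$ with $m>\max(\ord A_0,\ord S_0)$. Because each $\partial^k(q^m)$ is divisible by $q^{m-k}$, and hence by $q$ for all $k$ up to the orders involved, both $S_0\circ c$ and $A_0\circ c$ have coefficients in $\mc A$, while $\ord(S_0\circ c)=\ord S_0=d_0$. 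Thus $S:=S_0 c,\;A:=A_0 c\in\mc A[\partial]$ give $L=AS^{-1}$ with $\ord S=d_0$, proving that the minimal $\mc A[\partial]$-order equals $d_0$.

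It remains to assemble the conclusion. A minimal $\mc A[\partial]$-denominator $S$ has order $d_0$ and lies in $\mc I=S_0\mc K[\partial]$, so $S=S_0 P$ with $\ord P=0$, i.e.\ $P\in\mc K^*$; therefore $S\mc K[\partial]=S_0\mc K[\partial]=\mc I$. For any other decomposition $L=A_1S_1^{-1}$ with $A_1,S_1\in\mc A[\partial]$ we have $S_1\in\mc I=S\mc K[\partial]$, hence $S_1=SK$ for some $K\in\mc K[\partial]$, and then $A_1=LS_1=(LS)K=AK$, which is exactly the claim. The left-fraction version follows verbatim by replacing right ideals, right division and right multiplication by their left counterparts.
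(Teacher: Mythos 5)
The paper does not prove this proposition itself; it imports it from \cite[Prop.3.4]{CDSK12a}. So I can only judge your argument on its own terms. Part (b) is correct and complete: the right ideal $\mc I=\{T\in\mc K(\partial)\cap\mc K[\partial]\text{-multipliers}\}=S_0\mc K[\partial]$, the identification of the minimal order $d_0$, and above all the observation that denominators must be cleared by \emph{right} multiplication by $q^m$ with $m$ exceeding the orders involved (using $(q^m)^{(j)}\in q^{m-j}\mc A$) are exactly the right ingredients, and the uniqueness conclusion $S_1=SK$, $A_1=LS_1=AK$ follows cleanly. This is in the spirit of the standard argument.

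Part (a), however, contains a left/right slip that, as written, leaves the right-fraction half unproved. The relation you extract, $\tilde S_1A=\tilde A_1S$, is a common \emph{left} multiple of $A$ and $S$ (an element of $\mc K[\partial]A\cap\mc K[\partial]S$), and left-multiplying it by a scalar $c$ does clear denominators --- but what this establishes is the \emph{left} Ore condition for $\mc A[\partial]$, which is what you need for the set of \emph{left} fractions $\{S^{-1}A\}$ to be a skewfield. Closure of the set of \emph{right} fractions $\{AS^{-1}\}$ under multiplication and inversion requires common \emph{right} multiples $AS'=SA'$ inside $\mc A[\partial]$, and here the ``mirror argument'' is not a mirror: the mirror of left-multiplying a common left multiple by $c$ is right-multiplying a common right multiple by $c$, and that fails for precisely the reason you yourself articulate at the start of part (b) --- a scalar does not commute past $\partial$, so $\tilde S_1\circ c$ need not have coefficients in $\mc A$. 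The repair is immediate with the tool you already have: either right-multiply the common right multiple relation $A\tilde S_1=S\tilde A_1$ in $\mc K[\partial]$ by $q^m$, or, more directly, skip the Ore condition for $\mc A[\partial]$ altogether and write $L=\tilde A\tilde S^{-1}$ over the PID $\mc K[\partial]$, then set $A=\tilde A q^m$, $S=\tilde S q^m$. Please make this explicit; as it stands the right-fraction assertion of (a), which part (b) then relies on for existence, rests on an argument that proves the other half of the statement.
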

%

\subsection{Rational matrix pseudodifferential operators}
\label{sec:2.3}

\begin{definition}\label{20111013:def}
A matrix pseudodifferential operator $A\in\Mat_{\ell\times\ell}\mc A((\partial^{-1}))$
is called \emph{rational with coefficients in $\mc A$}  
if its entries are rational pseudodifferential operators with coefficients in $\mc A$.
In other words, 
the algebra of rational matrix pseudodifferential operators with coefficients in $\mc A$ 
is $\Mat_{\ell\times\ell}\mc A(\partial)$.
\end{definition}
Let $M=\big(A_{ij}B_{ij}^{-1}\big)_{i,j\in I}$
be a rational matrix pseudodifferential operator with coefficients in $\mc A$, 
with $A_{ij},B_{ij}\in\mc A[\partial]$.
By the Ore condition (see e.g. \cite{CDSK12a}), 
we can find a common right multiple $B\in\mc A[\partial]$
of all operators $B_{ij}$,
i.e. for every $i,j$ we can factor $B=B_{ij}C_{ij}$ for some $C_{ij}\in\mc A[\partial]$.
Hence, $A_{ij}B_{ij}^{-1}=\tilde A_{ij}B^{-1}$, where $\tilde A_{ij}=A_{ij}C_{ij}$.
Then, the matrix $M$ can be represented as a ratio of two matrices:
$M=\tilde A (B\id)^{-1}$.
Hence,
$$
\Mat{}_{\ell\times\ell}\mc A(\partial)
=
\left\{A(B\id)^{-1}\,\left|\,
\begin{array}{c}
A\in\Mat{}_{\ell\times\ell}\mc A[\partial],\,B\in\mc A[\partial],\\
A_{ij}B^{-1}\in\mc A((\partial^{-1}))\,\,\forall i,j
\end{array}
\right.\right\}\,.
$$
However, in general this is not a representation of the rational matrix $M$ in ``minimal terms''
(see Definition \ref{def:minimal-fraction} below).

We recall now some linear algebra over the skewfield $\mc K((\partial^{-1}))$
and, in particular, the notion of the Dieudonn\'e determinant
(see \cite{Art57} for an overview over an arbitrary skewfield).

An \emph{elementary row operation} of an $\ell\times\ell$ matrix pseudodifferential operator 
$A$ is either a permutation of two rows of it,
or the operation $\mc T(i,j;P)$, where $1\leq i\neq j\leq m$
and $P\in\mc K((\partial^{-1}))$,
which replaces the $j$-th row by itself minus $i$-th row
multiplied on the left by $P$.
Using the usual Gauss elimination, we can get the (well known) analogues
of standard linear algebra theorems for matrix pseudodifferential operators.
In particular, 
any matrix pseudodifferential operator $A\in\Mat_{m\times\ell}\mc K((\partial^{-1}))$
can be brought by elementary row operations to a row echelon form.

The \emph{Dieudonn\'e determinant} of a $A\in\Mat_{\ell\times\ell}\mc K((\partial^{-1}))$
has the form $\det A=c\xi^d$, where $c\in\mc A$, $\xi$ is an indeterminate, 
and $d\in\mb Z$.
It is defined by the following properties:
$\det A$ changes sign if we permute two rows of $A$,
and it is unchanged under any elementary row operation $\mc T(i,j;P)$ defined above,
for aribtrary $i\neq j$ and a pseudodifferential operator $P\in\mc K((\partial^{-1}))$;
moreover, if $A$ is upper triangular,
with diagonal entries $A_{ii}$ of order $n_i$ and leading coefficoent $a_i$,
then 
$$
\det A=\Big(\prod_i a_i\Big) \xi^{\sum_in_i}\,.
$$
It was proved in \cite{Die43} (for any skewfield) that the Dieudonn\'e determinant is well defined
and $\det(A B)=(\det A)(\det B)$
for every $\ell\times\ell$ matrix pseudodifferential operators 
$A,B\in\Mat_{\ell\times\ell}\mc K((\partial^{-1}))$.

The Dieudonn\'e determinant gives a way to characterize invertible matrix pseudodifferential operators,
thanks to the following well known fact (see e.g. \cite[Prop.4.3]{CDSK12a}):
\begin{proposition}\label{20111005:prop2}
Let $\mc D$ be a subskewfield of the skewfield $\mc K\!((\partial^{-1}))$,\
and let $A\in\!\Mat_{\ell\times\ell}\mc D$.
Then $A$ is invertible in $\Mat_{\ell\times\ell}\mc D$ if and only if $\det A\neq0$.
\end{proposition}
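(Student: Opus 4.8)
The plan is to derive both implications from the multiplicativity $\det(AB)=(\det A)(\det B)$ of the Dieudonn\'e determinant, together with Gauss elimination carried out \emph{inside} the skewfield $\mc D$. The forward implication is immediate: if $A$ is invertible in $\Mat_{\ell\times\ell}\mc D$, then $AA^{-1}=\id$, so multiplicativity gives $(\det A)(\det A^{-1})=\det\id$. The identity matrix is upper triangular with all diagonal entries equal to $1$ (of order $0$ and leading coefficient $1$), so the triangular formula yields $\det\id=1\neq0$; hence $\det A\neq0$.

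For the converse, assume $\det A\neq0$. First I would bring $A$ to row echelon form by elementary row operations, checking that every such operation can be performed with coefficients in $\mc D$: since $\mc D$ is a skewfield, each pivoting step $\mc T(i,j;P)$ uses a multiplier $P=A_{ji}A_{ii}^{-1}\in\mc D$, and row permutations trivially preserve $\mc D$. Each operation is realized by left multiplication by an elementary matrix lying in, and invertible over, $\Mat_{\ell\times\ell}\mc D$; collecting them produces $U=EA$ with $E\in\Mat_{\ell\times\ell}\mc D$ invertible and $U$ upper triangular (being the echelon form of a square matrix).

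Next I would transport the hypothesis to $U$. Row permutations only change the sign of the Dieudonn\'e determinant while the operations $\mc T(i,j;P)$ leave it unchanged, so $\det U=\pm\det A\neq0$. By the triangular formula $\det U=\big(\prod_i a_i\big)\xi^{\sum_i n_i}$, where $a_i$ and $n_i$ are the leading coefficient and order of the diagonal entry $U_{ii}$, and this product is nonzero precisely when every $U_{ii}$ is a nonzero, hence invertible, element of $\mc D$. It then remains to invert $U$: an upper triangular matrix over $\mc D$ with invertible diagonal entries is invertible over $\mc D$, its inverse being again upper triangular and obtained by back substitution, which only involves the inverses of the $U_{ii}$ together with products and sums inside the skewfield $\mc D$. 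Hence $U^{-1}\in\Mat_{\ell\times\ell}\mc D$, and $A=E^{-1}U$ is a product of matrices invertible over $\mc D$, so $A$ is invertible in $\Mat_{\ell\times\ell}\mc D$.

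The reduction to echelon form and the triangular evaluation of the determinant are supplied by the preceding discussion, so the proof is short; the point requiring genuine care is keeping the entire elimination \emph{within} $\mc D$ rather than merely within the ambient skewfield $\mc K((\partial^{-1}))$, and verifying that back substitution respects the non-commutativity of $\mc D$. These are precisely the places where the skewfield (rather than merely skew-ring) structure of $\mc D$ is used, and they constitute the main, if modest, obstacle.
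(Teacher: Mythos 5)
Your proof is correct. Note that the paper itself gives no proof of this proposition --- it is quoted as a ``well known fact'' with a citation to \cite[Prop.4.3]{CDSK12a} --- so there is nothing internal to compare against; your argument (multiplicativity of the Dieudonn\'e determinant for the forward direction, and Gauss elimination with pivots $P=A_{ji}A_{ii}^{-1}\in\mc D$ followed by back substitution for the converse) is the standard one, and you correctly isolate the only delicate points: that every elimination step stays inside the skewfield $\mc D$, and that a triangular matrix over $\mc D$ with nonzero (hence invertible) diagonal entries is invertible over $\mc D$.
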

\begin{corollary}\label{20111005:prop3}
Let $A\in\Mat_{\ell\times\ell}\mc K((\partial^{-1}))$
be a matrix with $\det A\neq0$.
Then $A$ is a rational matrix if and only if $A^{-1}$ is.
\end{corollary}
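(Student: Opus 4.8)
The plan is to deduce both implications from Proposition \ref{20111005:prop2}, applied to the subskewfield $\mc D=\mc K(\partial)$ of $\mc K((\partial^{-1}))$, together with the multiplicativity of the Dieudonn\'e determinant. Recall that, by Definition \ref{20110926:def}, a matrix is rational precisely when its entries lie in $\mc K(\partial)$, i.e. $A\in\Mat_{\ell\times\ell}\mc K(\partial)$, and that $\mc K(\partial)$ is by construction a subskewfield of $\mc K((\partial^{-1}))$. This is what makes Proposition \ref{20111005:prop2} directly applicable.

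For the forward implication, I would suppose $A$ is rational, so that $A\in\Mat_{\ell\times\ell}\mc K(\partial)$, with $\det A\neq0$. Applying Proposition \ref{20111005:prop2} with $\mc D=\mc K(\partial)$, the matrix $A$ is invertible inside $\Mat_{\ell\times\ell}\mc K(\partial)$. Since a two-sided inverse in a matrix algebra is unique, the inverse produced in $\Mat_{\ell\times\ell}\mc K(\partial)$ coincides with $A^{-1}$ as computed in the larger algebra $\Mat_{\ell\times\ell}\mc K((\partial^{-1}))$; hence $A^{-1}$ has entries in $\mc K(\partial)$, i.e. $A^{-1}$ is rational.

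For the converse, I would first observe that $\det A^{-1}\neq0$: by multiplicativity of the Dieudonn\'e determinant, $(\det A)(\det A^{-1})=\det\id\neq0$, so $\det A^{-1}\neq0$ whenever $\det A\neq0$. Assuming now that $A^{-1}$ is rational, I apply the forward implication already established to the rational matrix $B:=A^{-1}$, which satisfies $\det B\neq0$; this yields that $B^{-1}=A$ is rational.

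The argument is essentially formal once Proposition \ref{20111005:prop2} is in hand. The only point requiring a small amount of care is the compatibility of inverses across the two matrix algebras $\Mat_{\ell\times\ell}\mc K(\partial)\subset\Mat_{\ell\times\ell}\mc K((\partial^{-1}))$, which I expect to be the sole (and minor) obstacle; it follows at once from uniqueness of two-sided inverses, so no genuine computation is needed.
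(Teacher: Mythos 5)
Your proof is correct and follows the same route as the paper, which simply invokes Proposition \ref{20111005:prop2} with $\mc D=\mc K(\partial)$; you have merely spelled out the uniqueness-of-inverses and $\det A^{-1}\neq0$ details that the paper leaves implicit.
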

\begin{proof}
It is a special case of Proposition \ref{20111005:prop2} when $\mc D$
is the subskewfield $\mc K(\partial)\subset\mc K((\partial^{-1}))$ of rational pseudodifferential operators.
\end{proof}
\begin{remark}\label{20111216:rem}
It is proved in \cite{CDSK12a} that,
if $A\in\Mat_{\ell\times\ell}\mc A((\partial^{-1}))$
then we have $\det A=c\xi^d$, with $c\in\bar{\mc A}$, 
where $\bar{\mc A}$ is the integral closure of $\mc A$.
Furthermore, if $c$ is an invertible element of $\bar{\mc A}$,
then the inverse matrix $A^{-1}$
lies in $\Mat_{\ell\times\ell}\bar{\mc A}((\partial^{-1}))$.
\end{remark}

\begin{definition}\label{def:non-deg}
Let $\mc A$ be a differential domain.
An $\ell\times\ell$-matrix pseudodifferential operator $A\in\Mat_{\ell\times\ell}\mc A((\partial^{-1}))$
is called \emph{non-degenerate}
if it has non-zero Dieudonn\`e determinant,
or, equivalently, if it is invertible in the ring $\Mat_{\ell\times\ell}\mc K((\partial^{-1}))$
of pseudodifferential operators with coefficients in the differential field of fractions $\mc K$ of $\mc A$.
\end{definition}

\begin{definition}[see \cite{CDSK12b}]\label{def:minimal-fraction}
Let $H\in\Mat_{\ell\times\ell}\mc K(\partial)$ be a rational matrix pseudodifferential 
operator with coefficients in the differential field $\mc K$.
A fractional decomposition $H=A B^{-1}$,
with $A,B\in\Mat_{\ell\times\ell}\mc K[\partial]$ and $B$ non-degenerate,
is called \emph{minimal} if $\deg_\xi \det B$ is minimal
(recall that it is a non-negative integer).
\end{definition}
\begin{proposition}[\cite{CDSK12b}]\label{prop:minimal-fraction}
\begin{enumerate}[(a)]
\item
A fractional decomposition $H\!\!=\!\!A B^{-1}$
of a rational matrix pseudodifferential operator $H\in\Mat_{\ell\times\ell}\mc K(\partial)$
is minimal if and only if
\begin{equation}\label{20120124:eq3}
\ker A\cap\ker B=0\,,
\end{equation}
in any differential field extension of $\mc K$.
\item
The minimal fractional decomposition of $H$ exists and is
unique up to multiplication of $A$ and $B$ on the right 
by a matrix differential operator $D$
which is invertible in the algebra $\Mat_{\ell\times\ell}\mc K[\partial]$.
Any other fractional decomposition of $H$ is obtained by multiplying $A$ and $B$
on the right by a non-degenerate matrix differential operator.
\end{enumerate}
\end{proposition}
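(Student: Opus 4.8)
The plan is to reduce the whole statement to two structural facts about matrix differential operators over the field $\mc K$, both belonging to the linear-algebra toolkit of \cite{CDSK12a}: (i) any $A,B\in\Mat_{\ell\times\ell}\mc K[\partial]$ admit a greatest common right divisor (GCRD) $D$, so that $A=\tilde A D$, $B=\tilde B D$ with $\tilde A,\tilde B$ right coprime; and (ii) there is a ``large'' differential field extension $\mc U\supseteq\mc K$ in which every non-degenerate $L\in\Mat_{\ell\times\ell}\mc K[\partial]$ satisfies $\dim_{\mb F}\ker_{\mc U}L=\deg_\xi\det L$, and in which $\ker_{\mc U}A\cap\ker_{\mc U}B=\ker_{\mc U}D$ for $D$ a GCRD of $A,B$. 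Granting (ii), the condition $\ker A\cap\ker B=0$ ``in any differential field extension'' is equivalent to its vanishing in $\mc U$, which by the dimension count is equivalent to the GCRD being invertible in $\Mat_{\ell\times\ell}\mc K[\partial]$, i.e. to $A$ and $B$ being right coprime.

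First I would record the elementary bookkeeping relating two fractional decompositions. If $H=AB^{-1}=A_1B_1^{-1}$ with $B,B_1$ non-degenerate, then setting $D=B^{-1}B_1\in\Mat_{\ell\times\ell}\mc K(\partial)$ one has $A_1=AD$ and $B_1=BD$; since the Dieudonn\'e determinant is multiplicative, $\deg_\xi\det B_1=\deg_\xi\det B+\deg_\xi\det D$. Applying this to the GCRD decomposition $A=\tilde A D$, $B=\tilde B D$ gives $\deg_\xi\det\tilde B=\deg_\xi\det B-\deg_\xi\det D\le\deg_\xi\det B$, with equality exactly when $D$ is invertible. Thus out of any decomposition one produces a right-coprime one $H=\tilde A\tilde B^{-1}$ of no larger $\deg_\xi\det$, and a non-coprime decomposition is never minimal. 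This already proves ``minimal $\Rightarrow$ right coprime $\Rightarrow\ker A\cap\ker B=0$''.

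The heart of the argument is the converse, which I would obtain from the following divisibility claim: if $H=AB^{-1}$ is right coprime (i.e. $\ker A\cap\ker B=0$ in $\mc U$) and $A_1=AD$, $B_1=BD$ are differential operators for some rational $D$, then $D$ is itself a matrix differential operator. To see this, write $D=PQ^{-1}$ as a right fraction with $P,Q\in\Mat_{\ell\times\ell}\mc K[\partial]$ (Ore condition). From $AP=A_1Q$ and $BP=B_1Q$, any $v\in\ker_{\mc U}Q$ gives $APv=A_1Qv=0$ and $BPv=B_1Qv=0$, so $Pv\in\ker_{\mc U}A\cap\ker_{\mc U}B=0$; hence $\ker_{\mc U}Q\subseteq\ker_{\mc U}P$. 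By the dimension count together with the GCRD characterization in (ii), this containment forces $Q$ to right-divide $P$, say $P=P'Q$, so $D=P'$ is differential. Granting the claim, minimality of a right-coprime decomposition follows: for any other decomposition the relating operator $D$ is differential, so $\deg_\xi\det D\ge0$ and the other denominator has $\deg_\xi\det\ge\deg_\xi\det\tilde B$. This completes part (a), and simultaneously yields the last assertion of part (b): starting from a minimal $H=AB^{-1}$, every other fractional decomposition is $A_1=AD$, $B_1=BD$ with $D$ a non-degenerate matrix differential operator (non-degenerate because $B_1=BD$ is).

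For existence and uniqueness in part (b), a fractional decomposition exists at all (e.g. $H=A(B\id)^{-1}$ from the discussion preceding Definition \ref{def:minimal-fraction}), and $\deg_\xi\det B$ ranges over a set of non-negative integers, so a minimal one exists. If $H=AB^{-1}=A_1B_1^{-1}$ are both minimal, the divisibility claim applies in both directions: from $\ker A\cap\ker B=0$ we get that $D=B^{-1}B_1$ is differential, and from $\ker A_1\cap\ker B_1=0$ we get that $D^{-1}=B_1^{-1}B$ is differential; hence $D$ is invertible in $\Mat_{\ell\times\ell}\mc K[\partial]$, which is exactly the asserted uniqueness. The main obstacle is fact (ii): producing a single differential field extension in which kernel dimensions of non-degenerate operators equal their total order $\deg_\xi\det$ and in which common kernels are cut out by the GCRD. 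This is precisely where the passage from the rational operator $D=B^{-1}B_1$ to a genuine differential operator is forced, and it is the one ingredient I would import wholesale from \cite{CDSK12a} rather than reprove.
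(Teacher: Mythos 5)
The paper itself offers no proof of this proposition: it is quoted from \cite{CDSK12b}, so there is nothing internal to compare against. Your argument is correct and follows essentially the route of that reference: reduce everything to right coprimality via the GCRD, and work in a linearly closed extension $\mc U$ where kernel dimensions of non-degenerate operators equal $\deg_\xi\det$, in order to (1) identify $\ker A\cap\ker B$ with the kernel of the GCRD (note that the inclusion $\ker A\cap\ker B\subseteq\ker D$ in an \emph{arbitrary} extension requires the Bezout identity $D=UA+VB$, which holds because $\Mat_{\ell\times\ell}\mc K[\partial]$ is a principal ideal ring, cf. \cite{CDSK12c}), and (2) establish the divisibility claim forcing the transition operator $D=B^{-1}B_1$ out of a coprime decomposition to be a genuine differential operator. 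The only imprecision is that kernel dimensions must be counted over the field of constants of $\mc U$ (kept equal to the algebraic closure of the constants of $\mc K$ in the standard construction), not over $\mb F$; this does not affect the argument.
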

%
%
\begin{remark}\label{rem:minimal-fraction2}
Let $\mc A$ be a differential domain, and let $\mc K$ be its field of fractions.
A fractional decomposition $H=AB^{-1}$ of $H\in\Mat_{\ell\times\ell}\mc A[\partial]$ over $\mc K$
can be turned into a fractional decomposition over $\mc A$ by clearing the denominators 
of $A$ and $B$.
Hence, a minimal fractional decomposition $H=AB^{-1}$ over $\mc A$,
in the sense that it has minimal possible $\deg_\xi\det B$ among all
fractional decompositions of $H$ with $A,B\in\Mat_{\ell\times\ell}\mc A[\partial]$,
is also minimal over $\mc K$.
\end{remark}


\section{Non-local Poisson vertex algebras}
\label{sec:3}

\subsection{Non-local $\lambda$-brackets and non-local Lie conformal algebras}
\label{sec:3.1}

Let $R$ be a module over the algebra of polynomials $\mb F[\partial]$.
\begin{definition}\label{20110919:def1}
A \emph{non-local} $\lambda$-\emph{bracket} on $R$ is a linear map
$\{\cdot\,_\lambda\,\cdot\}:\,R\otimes R\to R((\lambda^{-1}))$
satisfying the following \emph{sesquilinearity} conditions ($a,b\in R$):
\begin{equation}\label{20110921:eq1}
\{\partial a_\lambda b\}=-\lambda\{a_\lambda b\}
\,\,,\,\,\,\,
\{a_\lambda\partial b\}=(\lambda+\partial)\{a_\lambda b\}\,.
\end{equation}
The non-local $\lambda$-bracket $\{\cdot\,_\lambda\,\cdot\}$ is said to be \emph{skewsymmetric} 
(respectively \emph{symmetric})
if ($a,b\in R$)
\begin{equation}\label{20110921:eq2}
\{b_\lambda a\}=-\{a_{-\lambda-\partial}b\}
\quad \Big(\text{ resp. } =\{a_{-\lambda-\partial}b\}\Big)
\,.
\end{equation}
\end{definition}
The RHS of the skewsymmetry condition should be interpreted as follows:
if $\{a_\lambda b\}=\sum_{n=-\infty}^Nc_n\lambda^n$, then
$$
\begin{array}{c}
\displaystyle{
\{a_{-\lambda-\partial} b\}
=
\sum_{n=-\infty}^N(-\lambda-\partial)^n c_n
=
\sum_{n=-\infty}^N\sum_{k=0}^\infty\binom{n}{k}(-1)^n(\partial^k c_n)\lambda^{n-k} 
}\\
\displaystyle{
= \sum_{m=-\infty}^N\Big(\sum_{k=0}^{N-m}\binom{m+k}{k}(-1)^{m+k}(\partial^k c_{m+k})\Big)\lambda^m
\,.
}
\end{array}
$$
In other words, we move $-\lambda-\partial$ to the left and
we expand in non negative powers of $\partial$ as in \eqref{20111004:eq1}.

In general we have $\{a_\lambda\{b_\mu c\}\}\in R((\lambda^{-1}))((\mu^{-1}))$
for an arbitrary $\lambda$-bracket $\{\cdot\,_\lambda\,\cdot\}$.
Recall from Section \ref{sec:2.1} that $R_{\lambda,\mu}$ can be considered
as a subspace of $R((\lambda^{-1}))((\mu^{-1}))$ via the embedding $\iota_{\mu,\lambda}$.
\begin{definition}\label{20110919:def2}
The non-local $\lambda$-bracket $\{\cdot\,_\lambda\,\cdot\}$ on $R$
is called \emph{admissible} if
$$
\{a_\lambda\{b_\mu c\}\}\in R_{\lambda,\mu}
\qquad\forall a,b,c\in R\,.
$$
\end{definition}
\begin{remark}\label{20110919:rem}
If $\{\cdot\,_\lambda\,\cdot\}$ is a skewsymmetric admissible $\lambda$-bracket on $R$,
then $\{b_\mu\{a_\lambda c\}\}\in R_{\lambda,\mu}$ and $\{\{a_\lambda b\}_{\lambda+\mu} c\}\in R_{\lambda,\mu}$
for all $a,b,c\in R$.
Indeed, the first claim is obvious since $R_{\lambda,\mu}=R_{\mu,\lambda}$.
For the second claim, by skewsymmetry 
$\{\{a_\lambda b\}_{\lambda+\mu} c\}=-\{c_{-\lambda-\mu-\partial}\{a_\lambda b\}\}$,
and by the admissibility assumption $\{c_\nu\{a_\lambda b\}\}\in R_{\lambda,\nu}$.
To conclude it suffices to note that when replacing $\nu$ by $-\lambda-\mu-\partial$
in an element of $R_{\lambda,\nu}=R[[\lambda^{-1},\nu^{-1},(\lambda+\nu)^{-1}]][\lambda,\nu]$,
we have that $\nu^{-1}$ is expanded in negative powers of $\lambda+\mu$
and $(\lambda+\nu)^{-1}$ is expanded in negative powers of $\mu$.
As a result, we get an element of $R[[\lambda^{-1},\mu^{-1},(\lambda+\mu)^{-1}]][\lambda,\mu]=R_{\lambda,\mu}$.
\end{remark}
\begin{definition}\label{20110921:def1}
A \emph{non-local Lie conformal algebra} is an $\mb F[\partial]$-module $R$
endowed with an admissible skewsymmetric $\lambda$-bracket
$\{\cdot\,_\lambda\,\cdot\}:\,R\otimes R\to R((\lambda^{-1}))$
satisfying the Jacobi identity (in $R_{\lambda,\mu}$):
\begin{equation}\label{20110922:eq3}
\{a_\lambda\{b_\mu c\}\}-\{b_\mu\{a_\lambda c\}\}=\{\{a_\lambda b\}_{\lambda+\mu} c\}
\,\,\,\,\text{ for every } a,b,c\in R\,.
\end{equation}
\end{definition}
\begin{example}\label{20110921:ex1}
Let $R=\big(\mb F[\partial]\otimes V\big)\oplus\mb FC$,
where $V$ is a vector space with a symmetric bilinear form $(\cdot\,|\,\cdot)$.
Define the (non-local) $\lambda$-bracket on $R$ by
letting $C$ be a central element, defining
$$
\{a_\lambda b\}=(a|b)C\lambda^{-1}
\,\,\,\,\text{ for } a,b\in V\,,
$$
and extending it to a $\lambda$-bracket on $R$ by sesquilinearity.
Skewsymmetry for this $\lambda$-bracket holds since, by assumption, $(\cdot\,|\,\cdot)$ is symmetric.
Moreover, since any triple $\lambda$-bracket is zero,
the $\lambda$-bracket is obviously admissible and it satisfies the Jacobi identity.
Hence, we have a non-local Lie conformal algebra.
\end{example}

\subsection{Non-local Poisson vertex algebras}
\label{sec:3.2}

Let $\mc V$ be a differential algebra, i.e. a unital commutative associative algebra
with a derivation $\partial:\,\mc V\to\mc V$.
As before, we assume that $\mc V$ is a domain and denote by $\mc K$ its field of fractions.
\begin{definition}\label{20110921:def2}
\begin{enumerate}[(a)]
\item 
A \emph{non-local} $\lambda$-\emph{bracket} on the differential algebra $\mc V$ is a linear map
$\{\cdot\,_\lambda\,\cdot\}:\,\mc V\otimes \mc V\to \mc V((\lambda^{-1}))$
satisfying the sesquilinearity conditions \eqref{20110921:eq1}
and the following left and right \emph{Leibniz rules}:
\begin{equation}\label{20110921:eq3}
\begin{array}{l}
\{a_\lambda bc\}=b\{a_\lambda c\}+c\{a_\lambda b\}\,, \\
\{ab_\lambda c\}=\{a_{\lambda+\partial}c\}_\to b+\{b_{\lambda+\partial} c\}_\to a\,.
\end{array}
\end{equation}
Here and further an expression $\{a_{\lambda+\partial}b\}_\to c$ is interpreted as follows:
if $\{a_{\lambda}b\}=\sum_{n=-\infty}^Nc_n\lambda^n$, 
then $\{a_{\lambda+\partial}b\}_\to c=\sum_{n=-\infty}^Nc_n(\lambda+\partial)^nc$,
where we expand $(\lambda+\partial)^nc$ in non-negative powers of $\partial$ as in \eqref{20111004:eq1}.
\item
The conditions of (\emph{skew})\emph{symmetry}, \emph{admissibility} and \emph{Jacobi identity} 
for a non-local $\lambda$-bracket $\{\cdot\,_\lambda\,\cdot\}$ on $\mc V$
are the same as in Definitions \ref{20110919:def1}, \ref{20110919:def2} and \ref{20110921:def1} respectively.
\item
A \emph{non-local Poisson vertex algebra} is a differential algebra $\mc V$
endowed with a \emph{non-local Poisson} $\lambda$-\emph{bracket},
i.e. a skewsymetric admissible non-local $\lambda$-bracket,
satisfying the Jacobi identity.
\end{enumerate}
\end{definition}
\begin{example}[cf. Example \ref{20110921:ex1}]\label{20110921:ex2}
Let $\mc V=\mb F[u_i^{(n)}\,|\,i=1,\dots,\ell,n\in\mb Z_+]$ 
be the algebra of diffenertial polynoamials in $\ell$ differential variables $u_1,\dots,u_\ell$.
Let $C=\big(c_{ij}\big)_{i,j=1}^\ell$ be an $\ell\times\ell$ symmetric matrix 
with coefficients in $\mb F$.
The following formula defines a structure of a non-local Poisson vertex algebra on $\mc V$:
$$
\{P_\lambda Q\}
=
\sum_{m,n\in\mb Z_+}\sum_{i,j\in\mb Z_+} c_{ij}
\frac{\partial Q}{\partial u_j^{(n)}} (-1)^m 
(\lambda+\partial)^{m+n-1} 
\frac{\partial P}{\partial u_i^{(m)}}\,.
$$
For example, $\{{u_i}_\lambda {u_j}\}=c_{ij}\lambda^{-1}$ but, for 
$P,Q\in\mb F[u_1,\dots,u_\ell]\subset\mc V$,
we get an infinite formal Laurent series in $\lambda^{-1}$:
$$
\begin{array}{l}
\displaystyle{
\{P_\lambda Q\}
=
\sum_{i,j=1}^\ell c_{ij} \frac{\partial Q}{\partial u_j} 
(\lambda+\partial)^{-1} \frac{\partial P}{\partial u_i}
} \\
\displaystyle{
=\sum_{i,j=1}^\ell \sum_{n=0}^\infty (-1)^n  \frac{\partial Q}{\partial u_j}
\Big(\partial^n \frac{\partial P}{\partial u_i}\Big) \lambda^{-n-1}
\in\mc V((\lambda^{-1}))\,.
}
\end{array}
$$
We will prove that this is indeed a non-local Poisson $\lambda$-bracket 
in the next section,
where we will discuss a general construction of non-local Poisson vertex algebras,
which will include this example as a special case
(see Theorem \ref{20110923:prop}).
\end{example}

\begin{proposition}\label{20111219:prop}
Let $\{\cdot\,_\lambda\,\cdot\}$ be a non-local Poisson vertex algebra structure on the differential domain $\mc V$.
Then there is a unique way to extend it to a non-local Poisson vertex algebra structure
on the differential field of fractions $\mc K$,
and it can be computed using the following formulas ($a,b\in\mc K\backslash\{0\}$):
\begin{equation}\label{20111219:eq1}
\{a_\lambda b^{-1}\}=-b^{-2}\{a_\lambda b\}
\,\,,\,\,\,\,
\{a^{-1}_\lambda b\}=-\{a_{\lambda+\partial} b\}_\to a^{-2}\,.
\end{equation}
\end{proposition}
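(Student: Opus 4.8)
The plan is to pin down the extension by uniqueness, construct it through the universal property of localization of derivations, and only then check that skewsymmetry, admissibility and the Jacobi identity descend from $\mc V$ to $\mc K$. First I record that $\{a_\lambda 1\}=\{1_\lambda a\}=0$: the left Leibniz rule applied to $1=1\cdot1$ gives $\{a_\lambda 1\}=2\{a_\lambda 1\}$, and $\{1_\lambda a\}=0$ follows by skewsymmetry. If the bracket extends to $\mc K$ preserving the left Leibniz rule, then applying it to $1=b\,b^{-1}$ yields $0=b\{a_\lambda b^{-1}\}+b^{-1}\{a_\lambda b\}$, i.e. the first formula of \eqref{20111219:eq1}.

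For the second formula I would exploit the composition identity, valid for $X=\sum_n c_n\lambda^n\in\mc K((\lambda^{-1}))$ and $f,g\in\mc K$ with the shorthand $X_\to f:=\sum_n c_n(\lambda+\partial)^n f$:
\[
(X_\to f)_\to g=X_\to(fg)\,.
\]
This holds because, under the symbol calculus \eqref{20111003:eq2}, $X_\to f$ is the symbol of the product of the pseudodifferential operator $X$ with multiplication by $f$, and multiplication operators commute; in particular $X_\to 1=X$, so $Y\mapsto Y_\to a$ is invertible with inverse $W\mapsto W_\to a^{-1}$. Applying the right Leibniz rule to $1=a\,a^{-1}$ gives $\{a^{-1}_{\lambda+\partial}b\}_\to a=-\{a_{\lambda+\partial}b\}_\to a^{-1}$, and inverting the operator $(\cdot)_\to a$ produces $\{a^{-1}_\lambda b\}=-\{a_{\lambda+\partial}b\}_\to a^{-2}$, the second formula.

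For existence and well-definedness I would build the extension in two stages, each time invoking the fact that a derivation of $\mc V$ into a $\mc V$-module $M$ extends uniquely to a derivation of $\mc K$ into $M\otimes_{\mc V}\mc K$. For fixed $a\in\mc V$, the left Leibniz rule makes $\{a_\lambda\cdot\}$ a derivation $\mc V\to\mc V((\lambda^{-1}))$ for the ordinary module structure; its unique localization to $\mc K\to\mc K((\lambda^{-1}))$ is exactly the first formula. For fixed $c\in\mc K$, the right Leibniz rule makes $\{\cdot_\lambda c\}$ a derivation $\mc V\to\mc K((\lambda^{-1}))$ for the twisted module structure $f\cdot X:=X_\to f$, which is a genuine $\mc K$-module structure precisely by the composition identity above; its localization is the second formula. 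Checking that these two stages are compatible — that the right Leibniz rule persists once the second argument has been extended to $\mc K$ — is a routine but necessary verification, and it is what makes $\{\cdot_\lambda\cdot\}$ a well-defined non-local $\lambda$-bracket on $\mc K$, independent of the chosen representatives $ab^{-1}$.

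It then remains to show the three Poisson axioms descend to $\mc K$. Skewsymmetry I would verify on the pairs $(a,b^{-1})$, $(a^{-1},b)$, $(a^{-1},b^{-1})$ with $a,b\in\mc V$, each reducing via the two formulas to skewsymmetry on $\mc V$; the agreement of the two derivations of the second formula (from the right Leibniz rule and from skewsymmetry plus the first formula) is precisely this compatibility. For admissibility I would propagate membership in $\mc K_{\lambda,\mu}$ from $\mc V$ using that $\mc K_{\lambda,\mu}$ is an algebra (Lemma \ref{20111006:lem}) together with Lemma \ref{20120131:lem1} and the explicit formulas. The main obstacle is the Jacobi identity. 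I expect the cleanest route is the standard fact that, once sesquilinearity, both Leibniz rules and admissibility hold, the Jacobiator
\[
J(a,b,c)=\{a_\lambda\{b_\mu c\}\}-\{b_\mu\{a_\lambda c\}\}-\{\{a_\lambda b\}_{\lambda+\mu}c\}
\]
satisfies a Leibniz-type rule in each of its three arguments, so that it is determined by its values on a generating set. Since $\mc K$ is generated by $\mc V$ and the inverses $a^{-1}$, and $J$ vanishes on triples from $\mc V$, I would use $a\,a^{-1}=1$ and $J(1,\cdot,\cdot)=0$ (slot by slot) to express $J(a^{-1},\cdot,\cdot)$ through $J(a,\cdot,\cdot)$ and conclude $J\equiv0$ on $\mc K$ by induction over the three slots. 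The delicate point throughout is keeping every intermediate triple bracket inside $\mc K_{\lambda,\mu}$, so admissibility must be established in tandem with, and logically before, this Jacobi reduction.
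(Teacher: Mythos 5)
Your proposal is correct and follows essentially the route the paper intends: the paper's own proof merely asserts that formulas \eqref{20111219:eq1} define the extension, that all axioms are ``straightforward to check'', and that admissibility follows from Lemma \ref{20111006:lem}, leaving the details to the reader. Your outline supplies exactly those details — deriving \eqref{20111219:eq1} (hence uniqueness) from the two Leibniz rules applied to $1=b\,b^{-1}$ and $1=a\,a^{-1}$, using the composition identity $(X_\to f)_\to g=X_\to(fg)$ to invert $(\cdot)_\to a$, and propagating skewsymmetry, admissibility and Jacobi from $\mc V$ to $\mc K$ via the Leibniz-type behaviour of each axiom — so it is a faithful elaboration rather than a different proof.
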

\begin{proof}
It is straightforward to check that formulas \eqref{20111219:eq1} define
a non-local $\lambda$-bracket on the field of fraction $\mc K$,
satysfying all the axioms of non-local Poisson vertex algebra.
In particular, admissibility of the $\lambda$-bracket can be derived from Lemma \ref{20111006:lem}.
The details of the proof are left to the reader.
\end{proof}
%

Thanks to Proposition \ref{20111219:prop}
we can extend, uniquely, a non-local Poisson vertex algebra $\lambda$-bracket
on $\mc V$ to its field of fractions $\mc K$.
The following results are useful to prove admissibility of a non-local $\lambda$-bracket.
\begin{lemma}\label{20111012:lem}
Let $\mc V$ be a differential algebra, endowed 
with a non-local $\lambda$-bracket $\{\cdot\,_\lambda\,\}$.
Assume that $\mc V$ is a domain, and let $\mc K$ be its field of fractions.
Let $S=\big(S_{ij}\big)_{i,j\in I}\in\Mat_{\ell\times\ell}\big(\mc K((\partial^{-1}))\big)$
be an invertible $\ell\times\ell$ matrix pseudodifferential operator with coefficients in $\mc K$.
Letting $S_{ij}=\sum_{n=-\infty}^N s_{ij;n}\partial^n$,
the following identities hold for every $a\in\mc K$ and $i,j\in I$:
\begin{equation}\label{20111012:eq2a}
\begin{array}{r}
\displaystyle{
\big\{a_\lambda (S^{-1})_{ij}(\mu)\big\}
=
-\sum_{r,t=1}^\ell\sum_{n=-\infty}^N
\iota_{\mu,\lambda}(S^{-1})_{ir}(\lambda+\mu+\partial)
} \\
\displaystyle{
\{a_\lambda s_{rt;n}\} (\mu+\partial)^n (S^{-1})_{tj}(\mu)
\,\in\mc K((\lambda^{-1}))((\mu^{-1}))
\,,
}
\end{array}
\end{equation}
and
\begin{equation}\label{20111012:eq2b}
\begin{array}{l}
\displaystyle{
\big\{(S^{-1})_{ij}(\lambda) _{\lambda+\mu} a\big\}
=
-\sum_{r,t=1}^\ell\sum_{n=-\infty}^N
\{{s_{rt;n}}_{\lambda+\mu+\partial}a\}_\to
} \\
\displaystyle{
\Big((\lambda+\partial)^n (S^{-1})_{tj}(\lambda)\Big)
\iota_{\lambda,\lambda+\mu}({S^*}^{-1})_{ri}(\mu) 
\,\in\mc K(((\lambda+\mu)^{-1}))((\lambda^{-1}))
\,,
}
\end{array}
\end{equation}
where $\iota_{\mu,\lambda}$ and $\iota_{\lambda,\lambda+\mu}$ 
are as in \eqref{20110919:eq1b}.
In equation \eqref{20111012:eq2b} $S^*$ denotes the adjoint 
of the matrix pseudodifferential operator $S$
(its inverse being $(S^{-1})^*$).
\end{lemma}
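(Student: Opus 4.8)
The plan is to derive both identities by applying the $\lambda$-bracket, in the appropriate slot, to the symbol identity encoding $S\circ S^{-1}=\id$. By the product formula \eqref{20111003:eq2} this relation reads, entrywise,
\[
\sum_{t=1}^\ell\sum_{n=-\infty}^N s_{it;n}(\mu+\partial)^n (S^{-1})_{tj}(\mu)=\delta_{ij}\,.
\]
For \eqref{20111012:eq2a} I would bracket this on the left with $a$, i.e. apply $\{a_\lambda\,\cdot\,\}$; for \eqref{20111012:eq2b} I would bracket it on the right with $a$, i.e. apply $\{\,\cdot\,_\theta a\}$ to the $\nu$-symbol identity $\sum_t S_{it}(\nu+\partial)(S^{-1})_{tj}(\nu)=\delta_{ij}$ and then specialize $\nu=\lambda$, $\theta=\lambda+\mu$. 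Throughout I use the extension of the $\lambda$-bracket to $\mc K$ (cf. Proposition \ref{20111219:prop}), since all entries $s_{it;n}$, the coefficients of $(S^{-1})_{tj}$, and $a$ lie in $\mc K$.

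The computational engine is a \emph{Leibniz rule for symbols}. Using the left Leibniz rule and left sesquilinearity one checks, for $s,b\in\mc K$,
\[
\{a_\lambda\, s(\mu+\partial)^nb\}=s(\lambda+\mu+\partial)^n\{a_\lambda b\}+\big((\mu+\partial)^nb\big)\{a_\lambda s\}\,,
\]
the first term coming from $\sum_k\binom nk\mu^{n-k}(\lambda+\partial)^k=(\lambda+\mu+\partial)^n$. Applying $\{a_\lambda\,\cdot\,\}$ to the displayed identity and using $\{a_\lambda\delta_{ij}\}=0$ gives
\[
\sum_t S_{it}(\lambda+\mu+\partial)\{a_\lambda (S^{-1})_{tj}(\mu)\}=-\sum_{t,n}\big((\mu+\partial)^n(S^{-1})_{tj}(\mu)\big)\{a_\lambda s_{it;n}\}\,,
\]
where $S_{it}(\lambda+\mu+\partial)=\sum_n s_{it;n}(\lambda+\mu+\partial)^n$ is expanded via $\iota_{\mu,\lambda}$. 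The key point is that $A(\partial)\mapsto\iota_{\mu,\lambda}A(\lambda+\mu+\partial)$ is an algebra homomorphism of $\mc K((\partial^{-1}))$ into the operators on $\mc K((\lambda^{-1}))((\mu^{-1}))$, since the relation $\partial\circ f=f\partial+f'$ is preserved by $\partial\mapsto\lambda+\mu+\partial$ (the scalar $\lambda+\mu$ commutes with $\mc K$); this is the single-variable avatar of Lemma \ref{20111006:lem}. Hence applying $\sum_i\iota_{\mu,\lambda}(S^{-1})_{ri}(\lambda+\mu+\partial)$ on the left and using $S^{-1}\circ S=\id$, i.e. $\sum_i(S^{-1})_{ri}\circ S_{it}=\delta_{rt}$, collapses the left-hand side to $\{a_\lambda(S^{-1})_{rj}(\mu)\}$; after reordering the two commuting factors inside the operator's argument and relabelling indices one obtains exactly \eqref{20111012:eq2a}.

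For \eqref{20111012:eq2b} I would run the same scheme with the right Leibniz rule and right sesquilinearity. The analogous symbol identity, for $f=\sum_n f_n\nu^n$ and $g\in\mc K$, is
\[
\{(f(\nu+\partial)g)_\theta a\}=\sum_n\{{f_n}_{\theta+\partial}a\}_\to\big[(\nu+\partial)^ng\big]+\{g_{\theta+\partial}a\}_\to\big[f^*(\theta-\nu)\big]\,,
\]
where $f^*(\theta-\nu)=\sum_n(\nu-\theta-\partial)^nf_n$ is the symbol of the adjoint $f^*$; the adjoint appears precisely because the arrow in the right Leibniz rule carries $\theta+\partial$ onto the coefficients $f_n$. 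Applying this with $f=S_{it}$, $g=(S^{-1})_{tj}$ and $\nu=\lambda$, $\theta=\lambda+\mu$ (so $\theta-\nu=\mu$ and $(S_{it})^*=(S^*)_{ti}$) yields $0=\mathrm{FirstSum}_i+\mathrm{SecondSum}_i$, where $\mathrm{FirstSum}_i=\sum_{t,n}\{{s_{it;n}}_{\lambda+\mu+\partial}a\}_\to[(\lambda+\partial)^n(S^{-1})_{tj}(\lambda)]$ already matches the arrow-term of \eqref{20111012:eq2b}, and $\mathrm{SecondSum}_i=\sum_t\{(S^{-1})_{tj}(\lambda)_{\lambda+\mu+\partial}a\}_\to[(S^*)_{ti}(\mu)]$. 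By the product formula this second sum is the symbol of $\sum_t\mc H^{(t)}\circ(S^*)_{ti}$, where $\mc H^{(t)}$ is the operator with symbol $\{(S^{-1})_{tj}(\lambda)_{\lambda+\mu}a\}$; post-composing with $((S^*)^{-1})_{ir}=((S^{-1})^*)_{ir}$ and using $S^*\circ(S^*)^{-1}=\id$ isolates $\mc H^{(r)}$, which is the left-hand side of \eqref{20111012:eq2b}, producing the factor $\iota_{\lambda,\lambda+\mu}({S^*}^{-1})_{ri}(\mu)$.

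I expect the main obstacle to be the bookkeeping in \eqref{20111012:eq2b}: one must track the arrow operation $\{\,\cdot\,_{\theta+\partial}a\}_\to$ through the inversion and verify that the shift $\theta+\partial$ attached to $\mathrm{FirstSum}_i$ recombines with $\iota_{\lambda,\lambda+\mu}({S^*}^{-1})_{ri}(\mu)$ exactly as displayed, i.e. that ``symbol of the composition $\mc H^{(r)}$'' reproduces the stated product. Equally one must pin down the expansion directions ($\iota_{\mu,\lambda}$ for \eqref{20111012:eq2a}, $\iota_{\lambda,\lambda+\mu}$ for \eqref{20111012:eq2b}) so that both sides land in the asserted spaces $\mc K((\lambda^{-1}))((\mu^{-1}))$ and $\mc K(((\lambda+\mu)^{-1}))((\lambda^{-1}))$, and confirm that the homomorphism and adjoint-inverse identities hold in these completed rings. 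The identity \eqref{20111012:eq2a} is clean; essentially all the care concentrates on the adjoint side.
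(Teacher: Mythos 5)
Your proposal is correct and follows essentially the same route as the paper's proof: bracket the symbol identity for $S\circ S^{-1}=\id$ using the left (resp.\ right) Leibniz rule and sesquilinearity, then invert by applying $\iota_{\mu,\lambda}(S^{-1})_{ir}(\lambda+\mu+\partial)$ on the left (resp.\ composing with $\iota_{\lambda,\lambda+\mu}({S^*}^{-1})_{ri}(\mu)$ on the right). The auxiliary facts you flag — that $\partial\mapsto\lambda+\mu+\partial$ respects the operator product, and that the adjoint symbol $S^*_{tr}(\mu)$ arises from $(\lambda-\lambda-\mu-\partial)^n s_{rt;n}$ — are exactly the points the paper's computation relies on.
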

\begin{proof}
The identity $S\circ S^{-1}=1$ becomes, in terms of symbols,
$$
\sum_{t=1}^\ell S_{r,t}(\mu+\partial)(S^{-1})_{tj}(\mu)=\delta_{rj}\,.
$$
Taking $\lambda$-bracket with $a$, we have, by sesquilinearity and the (left) Leibniz rule,
$$
\begin{array}{l}
\displaystyle{
0=\sum_{t=1}^\ell \big\{a_\lambda S_{rt}(\mu+\partial)(S^{-1})_{tj}(\mu)\big\}
} \\
\displaystyle{
=
\sum_{t=1}^\ell\sum_{n=-\infty}^N 
\big\{a_\lambda s_{rt;n} (\mu+\partial)^n (S^{-1})_{tj}(\mu)\big\}
} \\
\displaystyle{
=
\sum_{t=1}^\ell\sum_{n=-\infty}^N \{a_\lambda s_{rt;n}\} (\mu+\partial)^n (S^{-1})_{tj}(\mu)
} \\
\displaystyle{
+\sum_{t=1}^\ell \iota_{\mu,\lambda} S_{rt}(\lambda+\mu+\partial) \big\{a_\lambda (S^{-1})_{tj}(\mu)\big\}
\,.
}
\end{array}
$$
Note that $\iota_{\mu,\lambda} S(\lambda+\mu+\partial)$
is invertible in
$\Mat_{\ell\times\ell}\big(\mc K[\partial]((\lambda^{-1}))((\mu^{-1}))\big)$,
its inverse being $\iota_{\mu,\lambda} S^{-1}(\lambda+\mu+\partial)$.
We then apply $\iota_{\mu,\lambda}(S^{-1})_{ir}(\lambda+\mu+\partial)$ on the left to both sides of the above equation
and we sum over $r=1,\dots,\ell$, to get
$$
\begin{array}{l}
\displaystyle{
\sum_{t=1}^\ell \delta_{it} \big\{a_\lambda (S^{-1})_{tj}(\mu)\big\}
} \\
\displaystyle{
=-
\sum_{r=1}^\ell
\sum_{t=1}^\ell\sum_{n=-\infty}^N \iota_{\mu,\lambda}(S^{-1})_{ir}(\lambda+\mu+\partial) \{a_\lambda s_{rt;n}\} (\mu+\partial)^n (S^{-1})_{tj}(\mu)\,,
}
\end{array}$$
proving equation \eqref{20111012:eq2a}.

Similarly, for the second equation we have, by the right Leibniz rule,
$$
\begin{array}{l}
\displaystyle{
0=\sum_{t=1}^\ell \big\{S_{rt}(\lambda+\partial)(S^{-1})_{tj}(\lambda)\,_{\lambda+\mu}a\big\}
} \\
\displaystyle{
=
\sum_{t=1}^\ell \sum_{n=-\infty}^N \big\{s_{rt;n}(\lambda+\partial)^n(S^{-1})_{tj}(\lambda)_{\lambda+\mu}a\big\}
} \\
\displaystyle{
=
\sum_{t=1}^\ell \sum_{n=-\infty}^N \big\{{s_{rt;n}}_{\lambda+\mu+\partial}a\big\}_\to (\lambda+\partial)^n(S^{-1})_{tj}(\lambda)
} \\
\displaystyle{
+ \sum_{t=1}^\ell \sum_{n=-\infty}^N \big\{(S^{-1})_{tj}(\lambda)_{\lambda+\mu+\partial}a\big\}_\to 
\iota_{\lambda,\lambda+\mu}(\lambda-\lambda-\mu-\partial)^ns_{rt;n}
} \\
\displaystyle{
=
\sum_{t=1}^\ell \sum_{n=-\infty}^N \big\{{s_{rt;n}}_{\lambda+\mu+\partial}a\big\}_\to (\lambda+\partial)^n(S^{-1})_{tj}(\lambda)
} \\
\displaystyle{
+ \sum_{t=1}^\ell \big\{(S^{-1})_{tj}(\lambda)_{\lambda+\mu+\partial}a\big\}_\to \iota_{\lambda,\lambda+\mu} S^*_{tr}(\mu)
\,.
}
\end{array}
$$
We next replace in the above equation $\mu$ (placed at the right) by $\mu+\partial$,
and we apply the resulting differential operator to $\iota_{\lambda,\lambda+\mu}({S^*}^{-1})_{ri}(\mu)$.
As a result we get, after summing over $r=1,\dots,\ell$,
$$
\begin{array}{l}
\displaystyle{
\sum_{t=1}^\ell 
\big\{(S^{-1})_{tj}(\lambda)_{\lambda+\mu+\partial}a\big\}_\to \delta_{ti}
} \\
\displaystyle{
=
-\sum_{r=1}^\ell
\sum_{t=1}^\ell 
\sum_{n=0}^N 
\big\{{s_{rt;n}}_{\lambda+\mu+\partial}a\big\}_\to 
\Big((\lambda+\partial)^n(S^{-1})_{tj}(\lambda)\Big)
\iota_{\lambda,\lambda+\mu}({S^*}^{-1})_{ri}(\mu)
\,,
}
\end{array}
$$
proving equation \eqref{20111012:eq2b}.
\end{proof}
\begin{corollary}\label{20111014:cor}
Let $\mc V$ be a differential algebra, endowed 
with a non-local $\lambda$-bracket $\{\cdot\,_\lambda\,\cdot\}$.
Assume that $\mc V$ is a domain, and let $\mc K$ be its field of fractions.
Let $S=\big(S_{ij}\big)_{i,j\in I}\in\Mat_{\ell\times\ell}\big(\mc K[\partial]\big)$
be non-degenerate (cf. Definition \ref{def:non-deg}).
Then the following identities hold for every $a\in\mc K$ and $i,j\in I$:
\begin{equation}\label{20111012:eq2c}
\begin{array}{l}
\displaystyle{
\big\{a_\lambda (S^{-1})_{ij}(\mu)\big\}
} \\
\displaystyle{
=
-\sum_{r,t=1}^\ell\sum_{n=0}^N
(S^{-1})_{ir}(\lambda+\mu+\partial) \{a_\lambda s_{rt;n}\} (\mu+\partial)^n (S^{-1})_{tj}(\mu)
\,\in\mc K_{\lambda,\mu}
\,,
}
\end{array}
\end{equation}
and
\begin{equation}\label{20111012:eq2d}
\begin{array}{l}
\displaystyle{
\big\{(S^{-1})_{ij}(\lambda) _{\lambda+\mu} a\big\}
} \\
\displaystyle{
=
-\sum_{r,t=1}^\ell\sum_{n=0}^N
\{{s_{rt;n}}_{\lambda+\mu+\partial}a\}_\to
\Big((\lambda+\partial)^n (S^{-1})_{tj}(\lambda)\Big)
({S^*}^{-1})_{ri}(\mu) 
\,\in\mc K_{\lambda,\mu}
\,,
}
\end{array}
\end{equation}
where $S_{ij}=\sum_{n=0}^N s_{ij;n}\partial^n$.
\end{corollary}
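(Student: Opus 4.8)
The plan is to derive both identities directly from Lemma \ref{20111012:lem}, specialized to a \emph{differential} operator $S$, and then to upgrade the conclusion from the large Laurent spaces $\mc K((\lambda^{-1}))((\mu^{-1}))$ and $\mc K(((\lambda+\mu)^{-1}))((\lambda^{-1}))$ to the smaller space $\mc K_{\lambda,\mu}$ using the multiplicative stability established in Lemma \ref{20111006:lem}.

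First, since $S\in\Mat_{\ell\times\ell}\mc K[\partial]$ is a matrix differential operator, its entries are $S_{ij}=\sum_{n=0}^N s_{ij;n}\partial^n$, so $s_{ij;n}=0$ for $n<0$. Hence, in formulas \eqref{20111012:eq2a}--\eqref{20111012:eq2b} the sum over $n$ collapses to the finite range $0\le n\le N$, yielding exactly the right-hand sides of \eqref{20111012:eq2c}--\eqref{20111012:eq2d}; at this stage these are identities in $\mc K((\lambda^{-1}))((\mu^{-1}))$ and $\mc K(((\lambda+\mu)^{-1}))((\lambda^{-1}))$, respectively. (Non-degeneracy of $S$ guarantees $S^{-1}\in\Mat_{\ell\times\ell}\mc K((\partial^{-1}))$, and likewise $(S^*)^{-1}$, so all symbols appearing are well defined.) The finiteness of the $n$-sum is the one place where the hypothesis $S\in\Mat_{\ell\times\ell}\mc K[\partial]$ is essential.

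It then remains to show that each summand of the right-hand side lies in $\mc K_{\lambda,\mu}$. I would observe that the three ``building blocks'' already sit inside $\mc K_{\lambda,\mu}$: indeed $\{a_\lambda s_{rt;n}\}\in\mc K((\lambda^{-1}))$, $(S^{-1})_{tj}(\mu)\in\mc K((\mu^{-1}))$, and $(S^{-1})_{ir}(\lambda+\mu)\in\mc K(((\lambda+\mu)^{-1}))$, and each of the spaces $\mc K((\lambda^{-1}))$, $\mc K((\mu^{-1}))$, $\mc K(((\lambda+\mu)^{-1}))$ is a subspace of $\mc K_{\lambda,\mu}=\mc K[[\lambda^{-1},\mu^{-1},(\lambda+\mu)^{-1}]][\lambda,\mu]$. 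The remaining operations are all ``shift-and-multiply'' moves of the form treated by Lemma \ref{20111006:lem}: the factor $(\mu+\partial)^n$ is the substitution $\mu\mapsto\mu+\partial$ (a polynomial shift, since $n\ge0$); the left factor $(S^{-1})_{ir}(\lambda+\mu+\partial)$ is obtained from $(S^{-1})_{ir}(\lambda+\mu)\in\mc K_{\lambda,\mu}$ by the shift $\lambda\mapsto\lambda+\partial$ (which, acting on a function of $\lambda+\mu$ alone, produces $\lambda+\mu+\partial$ as required); and similarly $\{{s_{rt;n}}_{\lambda+\mu+\partial}a\}_\to$ is the shift $\lambda\mapsto\lambda+\partial$ applied to $\{{s_{rt;n}}_{\lambda+\mu}a\}\in\mc K(((\lambda+\mu)^{-1}))$. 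By Lemma \ref{20111006:lem}, applied with the endomorphisms $S=\partial$, $T=0$ (or $S=0$, $T=\partial$), each such operation maps $\mc K_{\lambda,\mu}$ into itself, and the product of elements of $\mc K_{\lambda,\mu}$ again lies in $\mc K_{\lambda,\mu}$. Assembling the factors in order and summing over the finite index set, the whole right-hand side lands in $\mc K_{\lambda,\mu}$.

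Finally, to conclude that the stated identities hold \emph{in} $\mc K_{\lambda,\mu}$, I would invoke injectivity of the relevant embeddings $\iota_{\mu,\lambda}$ and $\iota_{\lambda,\lambda+\mu}$ of $\mc K_{\lambda,\mu}$ into the large Laurent spaces (cf. \eqref{20110919:eq1}--\eqref{20110919:eq1b}): the right-hand side of \eqref{20111012:eq2c} is an element of $\mc K_{\lambda,\mu}$ whose image under $\iota_{\mu,\lambda}$ coincides, by \eqref{20111012:eq2a}, with $\{a_\lambda(S^{-1})_{ij}(\mu)\}$, and likewise for \eqref{20111012:eq2d} via $\iota_{\lambda,\lambda+\mu}$ and \eqref{20111012:eq2b}. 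Since these embeddings are injective, the equalities are genuine identities in $\mc K_{\lambda,\mu}$. The main subtlety to get right is the bookkeeping in the previous step---namely recognizing each substitution $\lambda+\mu+\partial$ as a single-variable shift of a function depending only on $\lambda+\mu$, so that Lemma \ref{20111006:lem} applies verbatim with $\partial$ acting on all coefficients to the right.
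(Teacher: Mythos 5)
Your proof is correct and follows the same route as the paper, which simply declares the corollary ``immediate from equations \eqref{20111012:eq2a} and \eqref{20111012:eq2b}''; you have just made explicit what that word hides, namely that for a differential operator the $n$-sum is finite, that each factor lies in $\mc K_{\lambda,\mu}$ and products with shifts stay there by Lemma \ref{20111006:lem}, and that the embeddings $\iota_{\mu,\lambda}$, $\iota_{\lambda,\lambda+\mu}$ are injective.
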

\begin{proof}
It is immediate from equations \eqref{20111012:eq2a} and \eqref{20111012:eq2b}.
\end{proof}
\begin{corollary}\label{20111007:prop}
Let $\mc V$ be a differential algebra, endowed 
with a non-local $\lambda$-bracket $\{\cdot\,_\lambda\,\cdot\}$.
Assume that $\mc V$ is a domain, and let $\mc K$ be its field of fractions.
Let $A\in\mc V(\partial)=\mc K(\partial)\cap\mc V((\partial^{-1}))$ 
be a rational pseudodifferential operator with coefficients in $\mc V$.
Then 
$\{a_\lambda A(\mu)\}$
and $\{A(\lambda)_{\lambda+\mu} a\}$
lie in $\mc V_{\lambda,\mu}$
for every $a\in\mc V$.
In particular, the $\lambda$-bracket is admissible.
\end{corollary}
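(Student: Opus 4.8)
The plan is to reduce both claims, by means of a fractional decomposition of $A$, to the formulas of Corollary \ref{20111014:cor}, and then to descend from the field of fractions $\mc K$ back to $\mc V$ via Lemma \ref{20120131:lem1}. First I would apply Proposition \ref{20111003:thm2}(a) (with $\mc A=\mc V$) to write $A$ as a right fraction $A=BS^{-1}$ with $B,S\in\mc V[\partial]$ and $S\neq0$; in terms of symbols this reads $A(\mu)=B(\mu+\partial)(S^{-1})(\mu)$, where $(S^{-1})(\mu)\in\mc K((\mu^{-1}))$. Writing $B(\partial)=\sum_k b_k\partial^k$, so that $A(\mu)=\sum_k b_k(\mu+\partial)^k(S^{-1})(\mu)$, and using the left Leibniz rule together with the sesquilinearity identity $\{a_\lambda(\mu+\partial)^k c\}=(\lambda+\mu+\partial)^k\{a_\lambda c\}$, I would obtain
$$
\{a_\lambda A(\mu)\}=\sum_k\Big(b_k(\lambda+\mu+\partial)^k\{a_\lambda(S^{-1})(\mu)\}+\big((\mu+\partial)^k(S^{-1})(\mu)\big)\{a_\lambda b_k\}\Big).
$$

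The key input is then Corollary \ref{20111014:cor}: equation \eqref{20111012:eq2c}, applied to the scalar ($\ell=1$) non-degenerate operator $S$, shows $\{a_\lambda(S^{-1})(\mu)\}\in\mc K_{\lambda,\mu}$, while $\{a_\lambda b_k\}\in\mc V((\lambda^{-1}))\subset\mc K_{\lambda,\mu}$ because $a,b_k\in\mc V$. Since $\mc K_{\lambda,\mu}$ is closed under products and under the action of $(\lambda+\mu+\partial)^k$ and $(\mu+\partial)^k$ on coefficients (Lemma \ref{20111006:lem}), each summand, and hence the finite sum, lies in $\mc K_{\lambda,\mu}$. On the other hand, computing the same bracket directly from $A(\mu)=\sum_m A_m\mu^m$ with $A_m\in\mc V$ shows that $\iota_{\mu,\lambda}\{a_\lambda A(\mu)\}\in\mc V((\lambda^{-1}))((\mu^{-1}))$. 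Lemma \ref{20120131:lem1} (with $V=\mc K$, $U=\mc V$) then forces $\{a_\lambda A(\mu)\}\in\mc V_{\lambda,\mu}$, which is the first claim.

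For the second bracket I would argue symmetrically with the same fraction $A=BS^{-1}$, now using the right Leibniz rule and the sesquilinearity identity $\{(\lambda+\partial)^k c\,_\nu\,a\}=(\lambda-\nu)^k\{c_\nu a\}$ to reduce $\{A(\lambda)_{\lambda+\mu}a\}$ to the building blocks $\{(S^{-1})(\lambda)_{\lambda+\mu}a\}$, which lies in $\mc K_{\lambda,\mu}$ by equation \eqref{20111012:eq2d}, and $\{b_k{}_{\lambda+\mu}a\}\in\mc V(((\lambda+\mu)^{-1}))\subset\mc K_{\lambda,\mu}$. As before, Lemma \ref{20111006:lem} keeps the whole expression in $\mc K_{\lambda,\mu}$, while the direct computation shows its $\iota_{\lambda,\lambda+\mu}$-expansion has coefficients in $\mc V$, so the corresponding line of Lemma \ref{20120131:lem1} yields membership in $\mc V_{\lambda,\mu}$. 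Finally, admissibility follows by applying the first claim with $A(\mu)=\{b_\mu c\}$: whenever $\{b_\mu c\}$ is the symbol of a rational pseudodifferential operator with coefficients in $\mc V$ (the situation for the $\lambda$-bracket defined by the Master Formula \eqref{intro:eq1.5} with rational $H$), we get $\{a_\lambda\{b_\mu c\}\}\in\mc V_{\lambda,\mu}$ directly.

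The main obstacle, and the real crux, is precisely this descent from $\mc K$ to $\mc V$. The natural computation through the fraction $A=BS^{-1}$ is forced into $\mc K((\partial^{-1}))$ because $S^{-1}$ has coefficients only in $\mc K$, so the fraction argument exhibits membership only in $\mc K_{\lambda,\mu}$; the fact that the bracket is visibly defined over $\mc V$ is seen only through the \emph{other} description, as a coefficientwise bracket of $A(\mu)\in\mc V((\mu^{-1}))$ with $a\in\mc V$. Reconciling these two descriptions of the same element is exactly the content of Lemma \ref{20120131:lem1}, which guarantees that an element of $\mc K_{\lambda,\mu}$ whose expansion in a single region already has coefficients in $\mc V$ must actually lie in $\mc V_{\lambda,\mu}$. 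Everything else is bookkeeping with the Leibniz and sesquilinearity rules.
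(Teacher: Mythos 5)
Your proposal is correct and follows essentially the same route as the paper: reduce to $\{a_\lambda (S^{-1})(\mu)\}$ and $\{(S^{-1})(\lambda)_{\lambda+\mu}a\}$ via Corollary \ref{20111014:cor}, propagate through products with Lemma \ref{20111006:lem} and the Leibniz rules, and descend from $\mc K_{\lambda,\mu}$ to $\mc V_{\lambda,\mu}$ with Lemma \ref{20120131:lem1}. The only cosmetic difference is that you invoke a single fractional decomposition $A=BS^{-1}$ from Proposition \ref{20111003:thm2}(a), whereas the paper first proves closure of the property under products of arbitrary operators and then uses that $\mc K(\partial)$ is generated by $\mc A[\partial]$ and inverses; these amount to the same argument.
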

\begin{proof}
First, note that if the pseudodifferential operators $A,B\in\mc K((\partial^{-1}))$ 
satisfy the conditions
$$
\{a_\lambda A(\mu)\}
\,,\{A(\lambda)_{\lambda+\mu} a\}
\{a_\lambda B(\mu)\}
\,,\{B(\lambda)_{\lambda+\mu} a\}
\,\,
\in\mc K_{\lambda,\mu}\,,
$$
for every $a\in\mc K$,
so does $A B$.
Indeed, by the Leibniz rule,
$$
\begin{array}{l}
\displaystyle{
\{a_\lambda (A B)(\mu)\}
=
\{a_\lambda A(\mu+\partial) B(\mu)\}
} \\
\displaystyle{
=
\{a_\lambda A(\mu+\partial)\}_{\to} B(\mu)
+ A(\lambda+\mu+\partial) \{a_\lambda B(\mu)\}\,,
}
\end{array}
$$
and both terms in the RHS lie in $\mc K_{\lambda,\mu}$ by the assumption on $A$ and $B$,
thanks to Lemma \ref{20111006:lem}.
Similarly, by the right Leibniz rule,
$$
\begin{array}{l}
\displaystyle{
\{ (A B)(\lambda) _{\lambda+\mu} a \}
=
\{ A(\lambda+\partial) B(\lambda) _{\lambda+\mu} a \}
} \\
\displaystyle{
=
\{ B(\lambda) _{\lambda+\mu+\partial} a \}_\to 
\iota_{\lambda,\lambda+\mu}A^*(\mu)
+
\{ A(\lambda+\partial) _{\lambda+\mu+\partial} a \}_\to B(\lambda)
\,,
}
\end{array}
$$
and both terms in the RHS lie in $\mc K_{\lambda,\mu}$ 
(rather in the image of $\mc K_{\lambda,\mu}$ in $\mc K(((\lambda+\mu)^{-1}))((\lambda^{-1}))$ 
via $\iota_{\lambda,\lambda+\mu}$) by Lemma \ref{20111006:lem}.
By Corollary \ref{20111014:cor}
we have that, if $S\in\mc V[\partial]$, then 
$\{a_\lambda S^{-1}(\mu)\}$ and $\{S^{-1}(\lambda)_{\lambda+\mu} a\}$
lie in $\mc K_{\lambda,\mu}$ for all $a\in\mc K$.
Hence, 
by Definition \ref{20110926:def} and the above observations,
we get that, if $A\in\mc V(\partial)=\mc K(\partial)\cap\mc V((\partial^{-1}))$,
then 
$\{a_\lambda A(\mu)\}$ and $\{A(\lambda)_{\lambda+\mu} a\}$
lie in $\mc K_{\lambda,\mu}$
for all $a\in\mc K$.
On the other hand, if $a\in\mc V$, we clearly have
$\{a_\lambda A(\mu)\}\in\mc V((\lambda^{-1}))((\mu^{-1}))$ 
and $\{A(\lambda)_{\lambda+\mu} a\}\in\mc V(((\lambda+\mu)^{-1}))((\lambda^{-1}))$.
The claim follows from Lemma \ref{20120131:lem1}
applied to $V=\mc K$ and $U=\mc V$.
\end{proof}
\begin{remark}\label{20111104:rem}
In the case when $S\in\mc V(\partial)$ 
is a rational pseudodifferential operator with coefficients in $\mc V$,
thanks to Corollary \ref{20111007:prop},
we can drop $\iota_{\mu,\lambda}$ and $\iota_{\lambda,\lambda+\mu}$ respectively from
equations \eqref{20111012:eq2a} and \eqref{20111012:eq2b},
which hold in the space $\mc V_{\lambda,\mu}$.
\end{remark}


\section{Non-local Poisson structures}
\label{sec:4}

\subsection{Algebras of differential functions}
\label{sec:4.1}

Let $R_\ell=\mb F [u_i^{(n)}\, |\, i \in I,n \in \mb Z_+]$ be the algebra of differential polynomials
in the $\ell$ variables $u_i,\,i\in I=\{1,\dots,\ell\}$,
with the derivation $\partial$ defined by $\partial (u_i^{(n)}) = u^{(n+1)}_i$.
The partial derivatives $\frac{\partial}{\partial u_i^{(n)}}$
are commuting derivations of $R_\ell$, and they satisfy the following 
commutation relations with $\partial$:
\begin{equation}\label{eq:0.4}
\left[  \frac{\partial}{\partial u_i^{(n)}}, \partial \right] = \frac{\partial}{\partial u_i^{(n-1)}}
\,\,\,\,
\text{ (the RHS is $0$ if $n=0$) }\,.
\end{equation}

Recall from \cite{BDSK09} that an \emph{algebra of differential functions} 
is a differential algebra extension $\mc V$ of $R_\ell$,
endowed with commuting derivations
$$
\frac{\partial}{\partial u_i^{(n)}}:\,\mc V\to\mc V
\,\,,\,\,\,\,
i \in I ,\,n \in \mb Z_+\,,
$$
extending the usual partial derivatives on $R_\ell$,
such that only a finite number of
$\frac{\partial f}{\partial u_i^{(n)}}$ are non-zero for each $f\in \mc V$, 
and such that the commutation rules \eqref{eq:0.4} hold on $\mc V$.

Given an algebra of differential functions $\mc V$,
its  localization by a multiplicative subset is again an algebra of differential functions.
Also, we can add to $\mc V$ solutions of algebraic equations over $\mc V$,
or functions of the form $F(\varphi_1,\dots,\varphi_n)$,
where $F$ is an infinitely differentiable function in $n$ variables
and $\varphi_1,\dots,\varphi_n$ lie in $\mc V$,
to obtain again an algebra of differential functions.
(Note, though, that in general we cannot add to $\mc V$ solutions of linear differential equations.
For example, a solution of the equation $f'=fu'$ is $f=e^u$, which can be added,
while a non-zero solution of the equation $f'=fu$ can never be added,
due to simple differential order considerations.)
We will use these facts in the examples further on.

\begin{remark}\label{20130111:rem}
An algebra of differential functions $\mc V$ can be equivalently defined as follows.
It is a commutative associative algebra extension of $R_\ell$,
endowed with commuting derivations
$\frac{\partial}{\partial x}$ and $\frac{\partial}{\partial u_i^{(n)}},\,i\in I,n\in\mb Z_+$,
such that $\frac{\partial}{\partial x}$ acts trivially on $R_\ell$,
$\frac{\partial}{\partial u_i^{(n)}}$ extends the usual action of partial derivatives on $R_\ell$,
and for every $f\in\mc V$ we have $\frac{\partial f}{\partial u_i^{(n)}}=0$
for all but finitely many choices of indices $(i,n)$.
In this case, 
the total derivative $\partial:\,\mc V\to\mc V$
defined by the formula
$$
\partial f=\sum_{i\in I,n\in\mb Z_+}\frac{\partial f}{\partial u_i^{(n)}}u_i^{(n+1)}+\frac{\partial f}{\partial x}\,,
$$
satisfies equation \eqref{eq:0.4}
\end{remark}

Note that if the algebra of differential functions $\mc V$ is a domain,
then its field of fractions $\mc K$ is again an algebra of differential functions
in the same variables $u_1,\dots,u_\ell$,
with the maps $\frac{\partial}{\partial u_i^{(n)}}:\,\mc K\to\mc K$ defined in the obvious way.
When $\mc V=\mc K$ we call it a \emph{field of differential functions}.

We denote by $\mc C=\big\{c\in\mc V\,\big|\,\partial c=0\big\}\subset\mc V$ the subalgebra of \emph{constants},
and by
$$
\mc F=\Big\{f\in\mc V\,\Big|\,\frac{\partial f}{\partial u_i^{(n)}}=0
\,\,\text{ for all }\, i\in I,n\in\mb Z_+\Big\}\subset\mc V
$$
the subalgebra of \emph{quasiconstants}. It is easy to see that $\mc C\subset\mc F$.

Given $f\in\mc V$ which is not a quasiconstant, we say that is has \emph{differential order} $N$ if
$\frac{\partial f}{\partial u_i^{(N)}}\neq0$ for some $i\in I$,
and $\frac{\partial f}{\partial u_j^{(n)}}=0$ for every $j\in I$ and $n>N$.
We also set the differential order of a quasiconstant element equal to $-\infty$.
We let $\mc V_N$ be the subalgebra of elements of differential order at most $N$.
This gives an increasing sequence of subalgebras 
\begin{equation}\label{20130111:eq1}
\mc C\subset\mc F=\mc V_{-\infty}\subset\mc V_0\subset\mc V_1\subset\dots\subset\mc V\,,
\end{equation}
such that $\partial\mc V_N\subset\mc V_{N+1}$.
Clearly, if $\mc V$ is a field of differential functions,
then this is a tower of field extensions.

It is easy to show, using \eqref{eq:0.4}, that
\begin{equation}\label{20120907:eq1}
\partial\mc V\cap\mc V_N=\partial\mc V_{N-1}
\,\text{ for } N\geq 1\,,
\,\,\text{ and }\,\,
\partial\mc V\cap\mc V_0=\partial\mc F
\,.
\end{equation}

\subsection{Normal extensions}
\label{sec:4.1a}

We refine the filtration \eqref{20130111:eq1}
to a filtration 
$\mc V_m=\mc V_{m,0}\subset\mc V_{m,1}\subset\dots\subset\mc V_{m,\ell}=\mc V_{m+1}$, 
where
\begin{equation}\label{20130111:eq2}
V_{m,i}=\Big\{f\in\mc V_m\,\Big|\,
\frac{\partial f}{\partial u_j^{(m)}}=0 \text{ for all } j>i
\Big\}
\subset\mc V_m\,.
\end{equation}
Clearly, each subspace $\mc V_{m,i}$ is preserved 
by all partial derivatives $\frac{\partial}{\partial u_j^{(n)}}$ for $(n,j)\leq(m,i)$
(in lexicographic order),
and it is annihilated by $\frac{\partial}{\partial u_j^{(n)}}$ for $(n,j)>(m,i)$.
\begin{definition}\label{20130111:def}
The algebra of differential functions $\mc V$ is said to be \emph{normal}
if the map $\frac{\partial}{\partial u_i^{(m)}}:\,\mc V_{m,i}\to\mc V_{m,i}$
is surjective for every $i\in I,m\in\mb Z_+$.
\end{definition}

\begin{lemma}\label{20130112:lem}
Any algebra of differential function $\mc V$ can be extended to a normal one,
which can be taken to be a domain provided that $\mc V$ is.
\end{lemma}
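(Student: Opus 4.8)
The statement to prove is Lemma \ref{20130112:lem}: any algebra of differential functions $\mc V$ can be extended to a normal one, which may be taken to be a domain if $\mc V$ is. The plan is to build the normal extension by successively adjoining antiderivatives with respect to each partial derivative $\frac{\partial}{\partial u_i^{(m)}}$, proceeding through the indices $(m,i)$ in lexicographic order, so as to make each map $\frac{\partial}{\partial u_i^{(m)}}:\,\mc V_{m,i}\to\mc V_{m,i}$ surjective. The filtration $\mc V_m=\mc V_{m,0}\subset\dots\subset\mc V_{m,\ell}=\mc V_{m+1}$ from \eqref{20130111:eq2} and the fact (noted just after that equation) that $\mc V_{m,i}$ is preserved by $\frac{\partial}{\partial u_j^{(n)}}$ for $(n,j)\leq(m,i)$ and killed for $(n,j)>(m,i)$ are exactly the structural facts that make the induction go through.

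First I would fix a pair $(m,i)$ and consider the restricted derivation $D:=\frac{\partial}{\partial u_i^{(m)}}$ acting on $\mc V_{m,i}$. I want to enlarge $\mc V$ so that $D$ becomes surjective onto the (enlarged) $\mc V_{m,i}$. The cleanest way is to adjoin, for every element $f\in\mc V_{m,i}$ that is not in the image of $D$, a new variable $\theta_f$ serving as an antiderivative: one forms the polynomial algebra $\mc V[\theta_f : f\in\mc V_{m,i}]$ (or, treating all the missing antiderivatives at once, a polynomial ring in a suitable transcendence basis of the cokernel of $D$), and extends all the commuting derivations by declaring $D\theta_f=f$ and $\frac{\partial\theta_f}{\partial u_j^{(n)}}=0$ for $(n,j)\neq(m,i)$. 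One must check that these prescriptions are consistent with the commutation relations \eqref{eq:0.4}: the key point is that $D$ commutes with every other $\frac{\partial}{\partial u_j^{(n)}}$, so the assignments $D\theta_f=f$, $\frac{\partial\theta_f}{\partial u_j^{(n)}}=0$ are compatible, and the finiteness condition ($\frac{\partial g}{\partial u_j^{(n)}}=0$ for all but finitely many $(n,j)$) is preserved since each $\theta_f$ has nonzero partial derivative only in the single direction $(m,i)$. One also verifies, via the characterization in Remark \ref{20130111:rem}, that the total derivative $\partial$ extends and still satisfies \eqref{eq:0.4}. This produces an algebra of differential functions in which $D$ is surjective on the relevant filtration piece.

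The main subtlety — and what I expect to be the chief obstacle — is organizing the extensions so that surjectivity achieved at an earlier stage $(m,i)$ is not destroyed at a later stage $(m',i')>(m,i)$, and so that the whole process terminates in a well-defined limit. Because $\mc V_{m,i}$ is annihilated by every $\frac{\partial}{\partial u_j^{(n)}}$ with $(n,j)>(m,i)$, the later adjunctions add generators $\theta_g$ (with $g\in\mc V_{m',i'}$, $(m',i')>(m,i)$) that lie \emph{outside} $\mc V_{m,i}$ and do not interfere with the already-surjective map on $\mc V_{m,i}$; however one must confirm that enlarging $\mc V$ also enlarges $\mc V_{m,i}$ itself (new elements of low differential order can appear), and so the surjectivity must be re-examined. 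The standard resolution is to iterate the construction over all $(m,i)$ transfinitely (or to take a countable union/direct limit over a single sweep followed by repetition), and then pass to the direct limit $\mc V^{\mathrm{norm}}=\varinjlim$; since the filtration is exhausting and each map stabilizes to a surjection, the limit is normal. I would present this as: perform one sweep over all $(m,i)$ in lexicographic order, obtaining $\mc V^{(1)}\supset\mc V$; repeat to get a chain $\mc V\subset\mc V^{(1)}\subset\mc V^{(2)}\subset\cdots$; and set $\mc V^{\mathrm{norm}}=\bigcup_k\mc V^{(k)}$, checking that in the union every $\frac{\partial}{\partial u_i^{(m)}}$ is surjective on $\mc V^{\mathrm{norm}}_{m,i}$.

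Finally, for the domain assertion: at each adjunction step one adjoins polynomial (transcendental) generators $\theta_f$ over the previous algebra, and a polynomial ring over a domain is a domain; the direct limit of a chain of domains under injective maps is again a domain. Hence if $\mc V$ is a domain, so is $\mc V^{\mathrm{norm}}$. I would remark that injectivity of each $\mc V^{(k)}\hookrightarrow\mc V^{(k+1)}$ is automatic since we only adjoin new transcendentals, never impose relations, which is precisely what makes the domain property survive the limit.
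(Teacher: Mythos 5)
The heart of your construction is inconsistent. You adjoin a single new transcendental $\theta_f$ with $\frac{\partial\theta_f}{\partial u_i^{(m)}}=f$ and $\frac{\partial\theta_f}{\partial u_j^{(n)}}=0$ for every other index (and, implicitly, $\frac{\partial\theta_f}{\partial x}=0$), claiming that commutativity of the partial derivatives makes these assignments compatible. It is the other way around: commutativity is a \emph{constraint} on the extension, and it forces
$$
\frac{\partial}{\partial u_i^{(m)}}\Big(\frac{\partial\theta_f}{\partial u_j^{(n)}}\Big)
=\frac{\partial}{\partial u_j^{(n)}}\Big(\frac{\partial\theta_f}{\partial u_i^{(m)}}\Big)
=\frac{\partial f}{\partial u_j^{(n)}}\,,
$$
which is nonzero for a general $f\in\mc V_{m,i}$ and $(n,j)<(m,i)$ (and similarly for $\frac{\partial}{\partial x}$, cf.\ Remark \ref{20130111:rem}), whereas your prescription makes the left-hand side $0$. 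Concretely, for $\ell=1$ an antiderivative of $u/u'$ with respect to $u'$ is $u\log u'$, whose derivative with respect to $u$ is $\log u'$ --- a genuinely new element, not $0$. Thus $\frac{\partial\theta_f}{\partial u_j^{(n)}}$ must itself be a $u_i^{(m)}$-antiderivative of $\frac{\partial f}{\partial u_j^{(n)}}$, which typically does not exist in the algebra either and must be adjoined as another formal variable, and so on recursively. This is exactly why the paper's proof adjoins, for a single element $a$ not in the image, the whole infinite polynomial family of variables
$\frac{\partial^{k+k_{0,1}+\dots+k_{m,i-1}}A}{\partial x^{k}\partial {u_1^{(0)}}^{k_{0,1}}\dots\partial {u_{i-1}^{(m)}}^{k_{m,i-1}}}$,
on which $\frac{\partial}{\partial u_i^{(m)}}$ acts by the corresponding derivatives of $a$, the derivations with index at most $(m,i-1)$ and $\frac{\partial}{\partial x}$ act by shifting the multi-index, and the derivations with index greater than $(m,i)$ act by zero. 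Your parenthetical alternative (a polynomial ring on a transcendence basis of the cokernel of $D$) does not repair this.

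The outer architecture of your argument is fine and agrees with the paper's: iterate the adjunction over all indices and all elements (the paper invokes Zorn's Lemma for this), note that at every step one adjoins polynomial generators over the previous algebra so that the finiteness condition on the partial derivatives and, when relevant, the domain property are preserved, and pass to the union, where the stabilized maps are surjective. But the single-variable adjunction at the core must be replaced by the infinite family above for the intermediate extensions to be algebras of differential functions at all.
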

\begin{proof}
Given an algebra of differential functions $\mc V$
and an element $a\in\mc V_{m,i}$ which is not in the image 
of $\frac{\partial}{\partial u_i^{(m)}}$,
one can construct an algebra of differential functions $\tilde{\mc V}$ extension of $\mc V$
with an element $A\in\tilde{\mc V}_{m,i}$ such that $\frac{\partial A}{\partial u_i^{(m)}}=a$.
For example, we can take the algebra of polynomials in infinitely many variables
$$
\tilde{\mc V}=\mc V\bigg[\frac{\partial^{k+k_{0,1}+\dots+k_{m,i-1}}A}
{\partial x^{k}\partial {u_1^{(0)}}^{k_{0,1}}\dots\partial {u_{i-1}^{(m)}}^{k_{m,i-1}}}
\,\,\bigg|\,\,k,k_{0,1},\dots,k_{m,i-1}\in\mb Z_+\bigg]\,,
$$
and define on it a structure of algebra of differential functions by
letting
$\frac{\partial}{\partial x}$ and $\frac{\partial}{\partial u_j^{(n)}}$ for $(n,j)\leq(m,i-1)$
act on 
$\frac{\partial^{k+k_{0,1}+\dots+k_{m,i-1}}A}
{\partial x^{k}\partial {u_1^{(0)}}^{k_{0,1}}\dots\partial {u_{i-1}^{(m)}}^{k_{m,i-1}}}$
in the obvious way (suggested by the notation used to denote the new variables),
letting
$\frac{\partial}{\partial u_j^{(n)}}$ for $(n,j)>(m,i)$
act as zero on 
$\frac{\partial^{k+k_{0,1}+\dots+k_{m,i-1}}A}
{\partial x^{k}\partial {u_1^{(0)}}^{k_{0,1}}\dots\partial {u_{i-1}^{(m)}}^{k_{m,i-1}}}$,
and letting
$$
\frac{\partial}{\partial u_i^{(m)}}
\Big(
\frac{\partial^{k+k_{0,1}+\dots+k_{m,i-1}}A}
{\partial x^{k}\partial {u_1^{(0)}}^{k_{0,1}}\dots\partial {u_{i-1}^{(m)}}^{k_{m,i-1}}}
\Big)
=
\frac{\partial^{k+k_{0,1}+\dots+k_{m,i-1}}a}
{\partial x^{k}\partial {u_1^{(0)}}^{k_{0,1}}\dots\partial {u_{i-1}^{(m)}}^{k_{m,i-1}}}
\,.
$$
The lemma follows by standard arguments using Zorn's Lemma.
\end{proof}

%

\begin{example}\label{20130112:ex1}
The algebra $\mc F\big[u_i^{(n)},\,i\in I,n\in\mb Z_+\big]$
of differential polynomials over a differential algebra $\mc F$
is a normal algebra of differential functions in the variables $u_i,\,\in I$ 
(we can always integrate polynomials).
\end{example}

\begin{example}\label{20130112:ex2}
The algebra of differential functions 
$\mc V=\mc F[u^{\pm1},u^{(n)},\,n\geq1]$,
is not normal, since $u^{-1}$ is not in the image of $\frac{\partial}{\partial u}$.
A normal extension of it is
$\tilde{\mc V}=\mc F[u^{\pm1},u^{(n)},\,n\geq1,\log u]$.
Indeed a preimage via $\frac{\partial}{\partial u}$ of $u^m(\log u)^n$, 
$m\in\mb Z\backslash\{-1\},n\in\mb Z_+$,
is obtained by integration by parts
$$
\tint du\,u^m(\log u)^n=\frac{1}{m+1}u^{m+1}(\log u)^n-\frac{n}{m+1}\tint du\,u^m(\log u)^{n-1}\,,
$$
and a preimage via $\frac{\partial}{\partial u}$ of $u^{-1}(\log u)^n$ is $\frac1{n+1}(\log u)^{n+1}$.
Similarly, $\mc F[u^{(n)},\,n\in\mb Z_+,{u^{(s)}}^{-1},\log u^{(s)}]$
is normal for every $s$.
\end{example}

\begin{example}\label{20130112:ex3}
Other examples of normal algebras of differential functions are
$\mc F[u^{(n)},\,n\in\mb Z_+,e^{\pm u}]$,
since we can always integrate by parts $P(u)e^{nu},\,n\in\mb Z$, 
for a polynomial $P(u)$.
\end{example}

\subsection{Variational complex}
\label{sec:4.1.a.5}

For $f\in\mc V$, as usual we denote by $\tint f$
the image of $f$ in the quotient space $\mc V/\partial\mc V$.
Recall that, by \eqref{eq:0.4}, we have a well-defined variational derivative
$\frac{\delta}{\delta u}:\,\mc V/\partial\mc V\to\mc V^{\oplus\ell}$,
given by
$$
\frac{\delta\tint f}{\delta u_i}=\sum_{n\in\mb Z_+}(-\partial)^n\frac{\partial f}{\partial u_i^{(n)}},\,i\in I\,.
$$

Given a set $J$ and an element $X\in\mc V^J$,
we define the \emph{Frechet derivative} of $X$ as the differential operator
$D_X(\partial):\,\mc V^\ell\to\mc V^J$ given by
\begin{equation}\label{20111020:eq1}
\big(D_X(\partial)P\big)_j
=\sum_{n\in\mb Z_+}\sum_{i\in I}\frac{\partial X_j}{\partial u_i^{(n)}} \partial^n P_i
\,\,,\,\,\,\,j\in J
\,.
\end{equation}
Its adjoint operator is the map $D_X^*(\partial):\,\mc V^{\oplus J}\to\mc V^{\oplus\ell}$ given by:
\begin{equation}\label{20111020:eq2}
\big(D_X^*(\partial)Y\big)_i
=\sum_{n\in\mb Z_+}\sum_{j\in J}(-\partial)^n\Big(\frac{\partial X_j}{\partial u_i^{(n)}} Y_j\Big)
\,\,,\,\,\,\,i\in I\,.
\end{equation}
Here and further, for a possibly infinite set $J$, $\mc V^{\oplus J}$
denotes the space of column vectors in $\mc V^J$ with only finitely many non-zero entries.
(Though in this paper we do not consider any example with infinite $\ell$,
we still distinguish $\mc V^\ell$ and $\mc V^{\oplus\ell}$ as a book-keeping device.)

The following identity can be checked directly and it will be useful later:
\begin{equation}\label{20120405:eq1}
\tint X\cdot D_Y(\partial)P+\tint Y\cdot D_X(\partial)P
=\tint P\cdot \frac{\delta}{\delta u}(X\cdot Y)\,,
\end{equation}
for all $X\in\mc V^J,\,Y\in\mc V^{\oplus J},\,P\in\mc V^\ell$.

The above notions are linked naturally in the variational complex:
$$
0\to\mc F/\frac{\partial}{\partial x}\mc F\to\mc V/\partial\mc V
\stackrel{\frac{\delta}{\delta u}}{\longrightarrow}
\mc V^{\oplus\ell}
\stackrel{\delta}{\longrightarrow}
\Sigma_\ell
\to\dots
$$
where $\Sigma_\ell$ is the space of skewadjoint $\ell\times\ell$ 
matrix differential operators over $\mc V$,
and $\delta(F)=D_F(\partial)-D^*_F(\partial)$,
for $F\in\mc V^{\oplus\ell}$.
The construction of the whole complex can be found in \cite{DSK09},
but we will not need it here.
In \cite{BDSK09} it is proved that the variational complex
is exact, provided that the algebra of differential functions $\mc V$ is normal.
In particular, if $\mc V$ is normal we have 
that $\ker\big(\frac\delta{\delta u}\big)=\mc F+\partial\mc V$,
and that $F\in\mc V^{\oplus\ell}$ is \emph{closed}, i.e. its Frechet derivative $D_F(\partial)$ is selfadjoint,
if and only if it is \emph{exact}, i.e. $F\in\frac{\delta}{\delta u}\mc V^{\oplus\ell}$.

\subsection{Differential orders}
\label{sec:4.1.b}

Given an arbitrary $k\times \ell$-matrix $A$ with entries in $\mc V$,
we define its differential order, denoted by $\dord(A)$,
as the maximal differential order of all its entries.

Given a matrix differential operator $D=\sum_{i=0}^n A_i\partial^i\in\Mat_{k\times\ell}\mc V[\partial]$,
we define its \emph{differential order} as
\begin{equation}\label{20120910:eq1}
\dord(D)=\max\{\dord(A_1),\dots,\dord(A_n)\}\,,
\end{equation}
which should not be confused with its \emph{order},
defined as 
\begin{equation}\label{20120910:eq2}
|D|=n \,\,\text{ if }\, A_n\neq0\,.
\end{equation}
(Note that the notion of order carries over to matrix pseudofferential operators,
while the differential order is not defined in general.)
\begin{lemma}\label{20120910:lem1}
Let $D\in\Mat_{k\times\ell}\mc V[\partial]$
be a matrix  differential operator over $\mc V$
and let $F\in\mc V^\ell$.
Then:
\begin{enumerate}[(a)]
\item
$\dord(DF)\leq\max\{\dord(D),\dord(F)+|D|\}$.
\item
If $D$ has non-degenerate leading coefficient
(meaning that its determinant is not a zero divisor in $\mc V$)
and it satisfies $\dord(F)+|D|>\dord(D)$, then $\dord(DF)=\dord(F)+|D|$.
\item
If $D$ has non-degenerate leading coefficient
and it satisfies $\dord(DF)>\dord(D)$, then $\dord(DF)=\dord(F)+|D|$.
\end{enumerate}
\end{lemma}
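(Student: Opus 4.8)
The plan is to reduce the whole statement to two elementary facts about how $\dord$ behaves under the two operations out of which $DF$ is built: application of $\partial$ and multiplication by elements of $\mc V$. The key sublemma I would establish first is that $\dord(\partial^i f)=\dord(f)+i$ for every $f\in\mc V$ and $i\in\mb Z_+$ (with the convention $-\infty+i=-\infty$ for quasiconstants). This follows by induction on $i$ from the case $i=1$: using the commutation relation \eqref{eq:0.4} one gets $\frac{\partial}{\partial u_k^{(n)}}\partial f=\partial\frac{\partial f}{\partial u_k^{(n)}}+\frac{\partial f}{\partial u_k^{(n-1)}}$, so if $d=\dord(f)\geq0$ then $\frac{\partial}{\partial u_k^{(d+1)}}\partial f=\frac{\partial f}{\partial u_k^{(d)}}$ (the top derivative survives, giving $\dord(\partial f)=d+1$) while $\frac{\partial}{\partial u_k^{(n)}}\partial f$ vanishes for all $n>d+1$. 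Alongside this I would record the two monotonicity facts $\dord(ab)\leq\max\{\dord(a),\dord(b)\}$ (from the Leibniz rule for $\frac{\partial}{\partial u_k^{(n)}}$) and, crucially for sharpness, $\dord(X+Y)=\dord(X)$ whenever $\dord(Y)<\dord(X)$, since then the top-order partial derivative of $X+Y$ coincides with that of $X$.

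Part (a) is then immediate: writing $DF=\sum_{i=0}^{|D|}A_i\partial^iF$ and combining the sublemma with the product bound entrywise gives $\dord(A_i\partial^iF)\leq\max\{\dord(A_i),\dord(F)+i\}$, and taking the maximum over $i$ (using $\dord(A_i)\leq\dord(D)$ for every coefficient) yields $\dord(DF)\leq\max\{\dord(D),\dord(F)+|D|\}$.

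Part (b) is the heart of the matter and the step I expect to be the main obstacle, since it is exactly where non-degeneracy is needed to rule out cancellation in the top-order term. Set $n=|D|$ and $d=\dord(F)$, and isolate the leading term $A_n\partial^nF$. Under the hypothesis $d+n>\dord(D)$, every lower term satisfies $\dord(A_i\partial^iF)\leq\max\{\dord(D),d+n-1\}=d+n-1<d+n$, so by the additivity fact above it suffices to show $\dord(A_n\partial^nF)=d+n$. Put $G=\partial^nF$, so $\dord(G)=d+n$ by the sublemma, and choose $k,l$ with $\frac{\partial G_l}{\partial u_k^{(d+n)}}\neq0$. Now the non-degeneracy of $A_n$ enters: since $\dord(A_n)\leq\dord(D)<d+n$, every entry of $A_n$ is killed by $\frac{\partial}{\partial u_k^{(d+n)}}$, so the Leibniz rule collapses to
\begin{equation*}
\frac{\partial}{\partial u_k^{(d+n)}}(A_nG)=A_n\cdot\frac{\partial G}{\partial u_k^{(d+n)}}\,,
\end{equation*}
ordinary matrix-times-vector multiplication. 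The vector $\frac{\partial G}{\partial u_k^{(d+n)}}\in\mc V^\ell$ is nonzero, and since $A_n$ is non-degenerate it is invertible over the fraction field $\mc K$, hence injective; therefore $A_n\cdot\frac{\partial G}{\partial u_k^{(d+n)}}\neq0$. This forces $\dord(A_nG)\geq d+n$, and together with the upper bound from (a) gives equality, completing (b).

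Part (c) follows formally from (a) and (b): if $\dord(DF)>\dord(D)$, then the bound $\dord(DF)\leq\max\{\dord(D),\dord(F)+|D|\}$ from (a) forces $\dord(F)+|D|>\dord(D)$, which is precisely the hypothesis of (b); hence $\dord(DF)=\dord(F)+|D|$. The only care needed throughout is the bookkeeping for quasiconstant entries (differential order $-\infty$) and the degenerate case $|D|=0$, both of which are absorbed by the same conventions.
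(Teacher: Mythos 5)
Your proof is correct and follows essentially the same route as the paper: both rest on $\dord(f^{(i)})=\dord(f)+i$, prove (a) by the same term-by-term bound, prove (b) by applying $\frac{\partial}{\partial u_k^{(\dord(F)+|D|)}}$ so that only the leading coefficient survives and then invoking non-degeneracy of $A_n$ (your fraction-field injectivity argument is the same fact the paper uses, and can be phrased via the adjugate if $\mc V$ is not a domain), and deduce (c) formally from (a) and (b).
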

\begin{proof}
Let $D=\sum_{s=0}^n A_s\partial^s$.
Clearly, for $f\in\mc V$ and $s\in\mb Z_+$, we have $\dord(f^{(i)})=\dord(f)+i$.
Hence, If  $h>\max\{\dord(D),\dord(F)+|D|\}$, we have
$$
\frac{\partial}{\partial u^{(h)}}(DF)_i=
\sum_{j=1}^\ell\sum_{s=0}^n \frac{\partial}{\partial u^{(h)}} (A_s)_{ij}F^{(s)}_j=0\,,
$$
for every $i=1,\dots,k$, proving part (a).
Furthermore, 
if $|D|=n$ and $\dord(F)+n>\dord|D|$, we can use \eqref{eq:0.4} to get
$$
\begin{array}{l}
\displaystyle{
\frac{\partial}{\partial u^{(\dord(F)+n)}}(DF)_i
=\sum_{j=1}^\ell\sum_{s=0}^n \frac{\partial}{\partial u^{(\dord(F)+n)}} (A_s)_{ij} F^{(s)}_j
} \\
\displaystyle{
=\sum_{j=1}^\ell (A_n)_{ij} \frac{\partial F^{(n)}_j}{\partial u^{(\dord(F)+n)}}
=\sum_{j=1}^\ell (A_n)_{ij} \frac{\partial F_j}{\partial u^{(\dord(F))}}
\,.
}
\end{array}
$$
Since, by assumption,
the leading coefficient $A_n\in\Mat_{\ell\times\ell}\mc V$ of $D$ is non-degenerate,
the RHS above is non-zero for some $i$.
Hence, 
$\dord(DF)=\dord(F)+n$, proving part (b).
Part (c) follows from parts (a) and (b).
\end{proof}

\subsection{Construction of non-local Poisson structures}
\label{sec:4.2}

Let $\mc V$ be an algebra of differential functions in $u_1,\dots,u_\ell$.
Assume that $\mc V$ is a domain, and let $\mc K$ be the corresponding
field of fractions, which is a field of differential functions.
Let $H=\big(H_{ij}\big)_{i,j\in I}\in\Mat_{\ell\times\ell}\mc V((\partial^{-1}))$
be an $\ell\times\ell$ matrix pseudodifferential operator over $\mc V$,
namely
$$
H_{ij}=\sum_{n=-\infty}^NH_{ij;n}\partial^n
\,\,\in\mc V((\partial^{-1}))
\,\,,\,\,\,\,
i,j\in I\,.
$$
We associate to this matrix $H$ a non-local $\lambda$-bracket on $\mc V$
given by the following \emph{Master Formula} (cf. \cite{DSK06})
\begin{equation}\label{20110922:eq1}
\{f_\lambda g\}_H
=
\sum_{\substack{i,j\in I \\ m,n\in\mb Z_+}} 
\frac{\partial g}{\partial u_j^{(n)}}
(\lambda+\partial)^n
H_{ji}(\lambda+\partial)
(-\lambda-\partial)^m
\frac{\partial f}{\partial u_i^{(m)}}
\,\in\mc V((\lambda^{-1}))
\,.
\end{equation}
In particular
\begin{equation}\label{20110923:eq1}
\{{u_i}_\lambda{u_j}\}_H
=
H_{ji}(\lambda)
\,\,,\,\,\,i,j\in I\,.
\end{equation}

The following result gives a way to check if a matrix pseudodifferential operator 
$H\in\Mat_{\ell\times\ell}\mc V((\partial^{-1}))$
defines a structure of non-local Poisson vertex algebra on $\mc V$.
The analogous statement in the local case was proved in \cite{BDSK09}.
\begin{theorem}\label{20110923:prop}
Let $\mc V$ be an algebra of differential functions, which is a domain,
ane let $\mc K$ be its field of fractions.
Let $H\in\Mat_{\ell\times\ell}\mc V((\partial^{-1}))$.
Then:
\begin{enumerate}[(a)]
\item
Formula \eqref{20110922:eq1} gives a well-defined non-local $\lambda$-bracket on $\mc V$.
\item
This non-local $\lambda$-bracket is skewsymmetric if and only if $H$
is a skew-adjoint matrix pseudodifferential operator.
\item
If $H=\big(H_{ij}\big)_{i,j\in I}\in\Mat_{\ell\times\ell}\mc V(\partial)$
is a rational matrix pseudodifferential operator with coefficients in $\mc V$,
then the corresponding non-local $\lambda$-bracket 
$\{\cdot\,_\lambda\,\cdot\}_H:\,\mc V\times\mc V\to\mc V((\lambda^{-1}))$
(given by equation \eqref{20110922:eq1}) is admissible.
\item
Let $H=\big(H_{ij}\big)_{i,j\in I}\in\Mat_{\ell\times\ell}\mc V(\partial)$
be a skewadjoint rational matrix pseudodifferential operator with coefficients in $\mc V$.
Then the non-local $\lambda$-bracket $\{\cdot\,_\lambda\,\cdot\}_H$ defined by \eqref{20110922:eq1}
is a Poisson non-local $\lambda$-bracket, i.e. it satisfies the Jacobi identity \eqref{20110922:eq3},
if and only if the Jacobi identity holds on generators ($i,j,k\in I$):
\begin{equation}\label{20110922:eq4}
\{{u_i}_\lambda\{{u_j}_\mu {u_k}\}_H\}_H-\{{u_j}_\mu\{{u_i}_\lambda {u_k}\}_H\}_H
-\{{\{{u_i}_\lambda {u_j}\}_H}_{\lambda+\mu} {u_k}\}_H=0\,,
\end{equation}
where the equality holds in the space $\mc V_{\lambda,\mu}$.
\end{enumerate}
\end{theorem}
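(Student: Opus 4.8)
The plan is to treat the four parts in order, leaning on the structural results already established for $\mc V_{\lambda,\mu}$. For part (a), well-definedness is immediate: for fixed $f,g\in\mc V$ only finitely many of the $\frac{\partial f}{\partial u_i^{(m)}}$ and $\frac{\partial g}{\partial u_j^{(n)}}$ are nonzero, so the double sum in \eqref{20110922:eq1} is finite, and each summand lies in $\mc V((\lambda^{-1}))$ because $H_{ji}(\lambda+\partial)$ is a pseudodifferential symbol and $(\lambda+\partial)^n$, $(-\lambda-\partial)^m$ are expanded in non-negative powers of $\partial$ as in \eqref{20111004:eq1}. The sesquilinearity relations \eqref{20110921:eq1} and the two Leibniz rules \eqref{20110921:eq3} are then verified by direct substitution into \eqref{20110922:eq1}: for the left sesquilinearity one uses the commutation rule \eqref{eq:0.4} to rewrite $\frac{\partial(\partial f)}{\partial u_i^{(m)}}$, and for the Leibniz rules one uses that each $\frac{\partial}{\partial u_i^{(m)}}$ is a derivation. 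These are the computations of the local case \cite{BDSK09}, now carried out in $\mc V((\lambda^{-1}))$; I would record only the key cancellations. For part (b), evaluating on generators gives $\{{u_i}_\lambda{u_j}\}_H=H_{ji}(\lambda)$, so the skewsymmetry condition \eqref{20110921:eq2} on the pair $(u_i,u_j)$ reads $H_{ji}(\lambda)=-H_{ij}(-\lambda-\partial)$, which is exactly skewadjointness of $H$. Conversely, assuming $H$ skewadjoint, the subset of pairs $(f,g)$ satisfying \eqref{20110921:eq2} is stable under both Leibniz rules and under $\partial$ via sesquilinearity, and since $\mc V$ is generated as a differential algebra by the $u_i$, skewsymmetry propagates from generators to all of $\mc V$; the compatibility of skewsymmetry with the Leibniz rules is a routine manipulation in $\mc V((\lambda^{-1}))$.

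Part (c) is essentially a corollary of Corollary \ref{20111007:prop}. Expanding $\{g_\mu h\}_H$ by the Master Formula exhibits it as a finite sum of terms of the form (element of $\mc V$) times $(\mu+\partial)^n H_{ji}(\mu+\partial)(-\mu-\partial)^m$ (element of $\mc V$). Applying $\{f_\lambda\cdot\}_H$ and using the left Leibniz rule together with sesquilinearity, every resulting summand is a product of factors of the two types $\{f_\lambda a\}_H\in\mc V((\lambda^{-1}))\subset\mc V_{\lambda,\mu}$ (for $a\in\mc V$) and $\{f_\lambda H_{ji}(\mu)\}_H$, the latter lying in $\mc V_{\lambda,\mu}$ by Corollary \ref{20111007:prop} since $H$ is rational. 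As $\mc V_{\lambda,\mu}$ is closed under the operations $A(\lambda+S,\mu+T)B(\lambda,\mu)$ of Lemma \ref{20111006:lem}, the whole expression lies in $\mc V_{\lambda,\mu}$, which is admissibility.

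Part (d) is the substantive one; one implication is trivial, so assume \eqref{20110922:eq4} holds for all triples of generators. Introduce the Jacobiator
\[
J_{\lambda,\mu}(f,g,h)=\{f_\lambda\{g_\mu h\}_H\}_H-\{g_\mu\{f_\lambda h\}_H\}_H-\{{\{f_\lambda g\}_H}_{\lambda+\mu}h\}_H\,,
\]
which by part (c) lies in $\mc V_{\lambda,\mu}$ for all $f,g,h$, and whose vanishing I must establish. The strategy is to show that $J_{\lambda,\mu}$ behaves as a derivation in each of its three arguments and is compatible with $\partial$, so that its vanishing on generators forces vanishing everywhere. Concretely I would first prove a left-Leibniz rule in the third slot, $J_{\lambda,\mu}(f,g,h_1h_2)=h_1 J_{\lambda,\mu}(f,g,h_2)+h_2 J_{\lambda,\mu}(f,g,h_1)$, by expanding all three terms with the left Leibniz rule for the outer bracket and checking that the cross terms cancel. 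Using skewsymmetry (available by part (b), as $H$ is skewadjoint) one relates the first and second arguments of $J_{\lambda,\mu}$ to the third, reducing the Leibniz rules in those slots to the one just proved; and sesquilinearity \eqref{20110921:eq1} gives $J_{\lambda,\mu}(\partial f,g,h)=-\lambda J_{\lambda,\mu}(f,g,h)$, with the analogous relations in the remaining slots. Since $\mc V$ is generated by the $u_i^{(n)}=\partial^n u_i$, these three Leibniz rules together with the sesquilinearity relations propagate the vanishing of $J_{\lambda,\mu}$ from the generators, where it holds by \eqref{20110922:eq4}, to all of $\mc V$.

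The hard part will not be the formal bookkeeping of the Leibniz rules but ensuring that each step is a genuine identity in $\mc V_{\lambda,\mu}$ rather than a merely formal manipulation: a priori the three terms of $J_{\lambda,\mu}$ live in the three different completions $\mc V((\lambda^{-1}))((\mu^{-1}))$, $\mc V((\mu^{-1}))((\lambda^{-1}))$, and $\mc V(((\lambda+\mu)^{-1}))((\lambda^{-1}))$, and the Leibniz reductions shuffle expansions among them. The linchpins are admissibility (part (c)) and Lemma \ref{20120131:lem1}, which together guarantee that all intermediate expressions, once expanded, actually descend from the common subspace $\mc V_{\lambda,\mu}$, so that equality may be tested coefficientwise there. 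Keeping track of where $\lambda+\partial$, $\mu+\partial$, and $\lambda+\mu+\partial$ must be moved, and verifying that the corresponding expansions agree in $\mc V_{\lambda,\mu}$, is the delicate point in the verification of the third-slot Leibniz rule.
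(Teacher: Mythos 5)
Your proposal is correct and follows essentially the same route as the paper: for parts (a), (b) and (d) the paper simply invokes the local-case computations of \cite{BDSK09} (which are exactly the Leibniz/sesquilinearity propagation arguments you sketch, with the added care that all identities are read in $\mc V_{\lambda,\mu}$ via admissibility and Lemma \ref{20120131:lem1}), and for part (c) it gives precisely your argument — expand $\{f_\mu g\}_H$ by the Master Formula, apply the left Leibniz rule, and conclude via Corollary \ref{20111007:prop} and Lemma \ref{20111006:lem}. You have also correctly identified the one genuinely non-local subtlety, namely that the three Jacobiator terms a priori live in different completions and must be compared inside $\mc V_{\lambda,\mu}$.
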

\begin{proof}
For the proofs of (a), (b) and (d) one does the same computations 
as in the proof of \cite[Thm.1.15]{BDSK09} for the local case.
So, we only prove part (c).
Let $a,f,g\in\mc V$. By the Master Formula \eqref{20110922:eq1}
and the left Leibniz rule, we have
$$
\begin{array}{l}
\displaystyle{
\{a_\lambda\{f_\mu g\}_H\}_H
=
\sum_{\substack{i,j\in I \\ m,n\in\mb Z_+}} 
\big\{
a_\lambda
\frac{\partial g}{\partial u_j^{(n)}}
(\mu+\partial)^n
H_{ji}(\mu+\partial)
(-\mu-\partial)^m
\frac{\partial f}{\partial u_i^{(m)}}
\big\}_H
} \\
\displaystyle{
=
\sum_{\substack{i,j\in I \\ m,n\in\mb Z_+}} 
\big\{
a_\lambda
\frac{\partial g}{\partial u_j^{(n)}}
\big\}_H
(\mu+\partial)^n
H_{ji}(\mu+\partial)
(-\mu-\partial)^m
\frac{\partial f}{\partial u_i^{(m)}}
} \\
\displaystyle{
+
\sum_{\substack{i,j\in I \\ m,n\in\mb Z_+}} 
\frac{\partial g}{\partial u_j^{(n)}}
(\lambda+\mu+\partial)^n
{\{
a_\lambda
H_{ji}(\mu+\partial)
\}_H}_\to
(-\mu-\partial)^m
\frac{\partial f}{\partial u_i^{(m)}}
} \\
\displaystyle{
+
\sum_{\substack{i,j\in I \\ m,n\in\mb Z_+}} 
\frac{\partial g}{\partial u_j^{(n)}}
(\lambda+\mu+\partial)^n
H_{ji}(\lambda+\mu+\partial)
(-\lambda-\mu-\partial)^m
\big\{
a_\lambda
\frac{\partial f}{\partial u_i^{(m)}}
\big\}_H\,.
}
\end{array}
$$
All sums in the above equations are finite.
Therefore, all three terms in the RHS lie in $\mc V_{\lambda,\mu}$,
thanks to Corollary \ref{20111007:prop} and Lemma \ref{20111006:lem}.
\end{proof}
%
%
\begin{definition}\label{20111007:def}
Let $\mc V$ be an algebra of differential functions.
A \emph{non-local Poisson structure} on $\mc V$
is  a skewadjoint rational matrix pseudodifferential operator with coefficients in $\mc V$,
$H=\big(H_{ij}\big)_{i,j\in I}\in\Mat_{\ell\times\ell}\mc V(\partial)$,
satisfying equation \eqref{20110922:eq4} for every $i,j,k\in I$.
\end{definition}
\begin{remark}\label{20111007:rem2}
It is easy to show that, if $L\in\mc K(\partial)$ is a rational pseudodifferential operator,
then it can be expanded as
\begin{equation}\label{20110922:eq2}
L=\sum_{s=1}^\infty
\sum_{n=0}^N
\sum_{\substack{p_1,\dots,p_s\in\mc V_M \\ (\text{finite sum})}}
p_1\partial^{-1}\circ p_2\partial^{-1}\circ\dots\partial^{-1}\circ p_s\partial^n
\,,
\end{equation}
for some fixed $M,N\in\mb Z_+$.
To see this, write $L=AS^{-1}$, where $A,S\in\mc V[\partial]$
and $S=\sum_{n=0}^Ns_n\partial^n$ has non-zero leading coefficient $s_N$,
and expand $S^{-1}$ using geometric progression:
\begin{equation}\label{20111007:eq1}
S^{-1}=
\partial^{-N}\sum_{i=0}^\infty\Big(-s_N^{-1}s_{N-1}\partial^{-1}-\dots-s_N^{-1}s_0\partial^{-N}\Big)^i
\circ s_N^{-1}\,.
\end{equation}
On the other hand, it is not hard to see that if $L$ admits an expansion as in \eqref{20110922:eq2},
then $\{a_\lambda L(\mu)\}_H\in K_{\lambda,\mu}$ for every $a\in\mc K$
and every matrix pseudodifferential operator $H$.
As a consequence, if all the entries of a matrix pseudodifferential operator $H$
admit an expansion as in \eqref{20110922:eq2}, 
then the corresponding $\lambda$-bracket $\{\cdot\,_\lambda\,\cdot\}_H$ on $\mc K$
is admissible.
\end{remark}
\begin{remark}\label{20111019:def}
It is claimed in the literature (without a proof) \cite{DN89}
that, in order to show that a skewadjoint operator $H$ defines a (local) Poisson structure,
it suffices to check the Jacobi identity 
for the Lie bracket $\{\cdot\,,\,\cdot\}_H=\{\cdot\,_\lambda\,\cdot\}_H\big|_{\lambda=0}$
in $\mc V/\partial\mc V$ on triples of elements of the form $\tint f u_i$, where $f\in\mc F$ is a quasiconstant.
This is indeed true, provided that the algebra of quasiconstants $\mc F$ is ``big enough'', by the following argument.
By a straightforward computation, using the Master Formula, we get
$$
\begin{array}{l}
\displaystyle{
\{\tint fu_i,\{\tint gu_j,\tint hu_k\}_H\}_H
-\{\tint gu_j,\{\tint fu_i,\tint hu_k\}_H\}_H
} \\
\displaystyle{
-\{\{\tint fu_i,\tint gu_j\}_H,\tint hu_k\}_H
=
\tint h
\Big(
\{{u_i}_\lambda\{{u_j}_\mu{u_k}\}_H\}_H
} \\
\displaystyle{
-\{{u_j}_\mu\{{u_i}_\lambda{u_k}\}_H\}_H
-\{{\{{u_i}_\lambda{u_j}\}_H}_{\lambda+\mu}{u_k}\}_H
\Big)
\big(|_{\lambda=\partial}f\big)
\big(|_{\mu=\partial}g\big)\,.
}
\end{array}
$$
Clearly, this is zero for all $f,g,h\in\mc F$ and all $i,j,k\in I$
if and only if $H$ is a Poisson structure,
provided that the algebra $\mc F$ satisfies the following non-degeneracy conditions:
\begin{enumerate}[(i)]
\item 
if $\tint ha=0$ for some $a\in\mc V$ and all $h\in\mc F$, then $a=0$,
\item
if $P(\partial)f=0$ for some differential operator $P\in\mc V[\partial]$ 
and for all $f\in\mc F$, then $P=0$.
\end{enumerate}
Obviously, $\mc F$ fulfills these conditions if it contains the algebra of polynomials $\mb F[x]$.
Often in the literature this criterion is used also for non-local Poisson structures,
which does not seem to have much sense, since in the non-local case $\mc V/\partial\mc V$
does not have a Lie algebra structure.
\end{remark}

\subsection{Examples}
\label{sec:4.3}

\begin{example}\label{20111010:ex1}
Let $\mc V$ be any algebra of differential functions in $\ell$ differential variables, 
with subalgebra of quasiconstants $\mc F\subset\mc V$.
Any skewadjoint rational matrix pseudodifferential operator 
with quasiconstant coefficients,
$H=\big(H_{ij}(\partial)\big)_{ij\in I}\in\Mat_{\ell\times\ell}\mc F(\partial)$,
is a Poisson structure.
Indeed, by askewadjointness of $H$ the $\lambda$-bracket $\{\cdot\,_\lambda\,\cdot\}_H$
is skewsymmetric, and by the Master Formula, all triple $\lambda$-brackets are zero.
Note that, if $H\in\Mat_{\ell\times\ell}\mc F((\partial^{-1}))$ is skewadjoint, 
even if it is not a rational matrix,
the corresponding $\lambda$-bracket $\{\cdot\,_\lambda\,\cdot\}_H$ 
is still admissible,
hence it defines a non-local Poisson vertex algebra on $\mc V$.

In the special case when $H_{ij}(\lambda)=c_{ij}\lambda^{-1}$,
and $C=(c_{ij})_{i,j=1}^\ell$ is a symmetric matrix with constant coefficients,
we recover the non-local Poisson vertex algebras from Example \ref{20110921:ex2}.
When $C$ if a symmetrized Cartan matrix or extended Cartan matrix of a simple Lie algebra,
we get the Poisson structure for a Toda lattice (see \cite{Fr98}).
\end{example}
\begin{example}\label{20110922:ex1}
The following three operators form a compatible family of non-local Poisson structures
(i.e. any their linear combination is a non-local Poisson structure)
on the algebra $R_1=\mb F[u,u',u'',\dots]$
of differential polynomials in one variable:
\begin{enumerate}[(i)]
\item
$K_{1}=\partial$ (GFZ Hamiltonina structure),
\item
$K_{-1}=\partial^{-1}$ (Toda non-local Poisson structure),
\item
$H=u'\partial^{-1}\circ u'$ (Sokolov non-local Hamitonian structure),
\end{enumerate}
First, any linear combination over $\mc C$ of $K_1$ and $K_{-1}$
is a non-local Poisson structure, as discussed in Example \ref{20111010:ex1}.
Next, it is easy to show (cf. \cite[Example 3.14]{BDSK09})
that $H^{-1}$ is a symplectic structure over the field of fractions $\mc K_1=\text{Frac} R_1$,
known as the Sokolov symplectic structure, \cite{Sok84}.
Hence, by Theorem \ref{20111012:thm} below,
we deduce that $H$ is a non-local Poisson structure.
To conclude that $K_1,K_{-1},H$ form a compatible family,
it suffices to check that
\begin{equation}\label{20130617:eq1}
\{u_\lambda H(\mu)\}_{K_{\pm1}}
-\{u_\mu H(\lambda)\}_{K_{\pm1}}
=\{H(\lambda)_{\lambda+\mu}u\}_{K_{\pm1}}\,,
\end{equation}
where $H(\lambda)=u'(\partial+\lambda)^{-1}u'\in\mc V((\lambda^{-1}))$.
This is straightforward,
but we shall perform the computation in order to demonstrate how it works.
We have
$$
\begin{array}{l}
\displaystyle{
\vphantom{\Big(}
\{u_\lambda H(\mu)\}_{K_{\pm1}}
=\{u_\lambda u'(\partial+\mu)^{-1}u'\}_{K_{\pm1}}
} \\
\displaystyle{
\vphantom{\Big(}
=\Big((\partial+\lambda)\{u_\lambda u\}_{K_{\pm1}}\Big)(\partial+\mu)^{-1}u'
+u'(\partial+\lambda+\mu)^{-1}(\partial+\lambda)\{u_\lambda u\}_{K_{\pm1}}
} \\
\displaystyle{
\vphantom{\Big(}
=\lambda^{1\pm1}(\partial+\mu)^{-1}u'
+u'(\lambda+\mu)^{-1}\lambda^{1\pm1}
\,.}
\end{array}
$$
In the second identity we used the Leibniz rule and sesquilinearity,
and in the last identity we used the definition of the $K_{\pm1}$-$\lambda$-bracket.
Hence, the LHS of \eqref{20130617:eq1} equals
\begin{equation}\label{20130617:eq2}
\lambda^{1\pm1}(\partial+\mu)^{-1}u'
-\mu^{1\pm1}(\partial+\lambda)^{-1}u'
+u'(\lambda+\mu)^{-1}(\lambda^{1\pm1}-\mu^{1\pm1})
\,.
\end{equation}
Similarly, for the RHS of \eqref{20130617:eq1} we have
\begin{equation}\label{20130617:eq3}
\begin{array}{l}
\displaystyle{
\vphantom{\Big(}
\{H(\lambda)_{\lambda+\mu} u\}_{K_{\pm1}}
=\{u'(\partial+\lambda)^{-1}u'_{\lambda+\mu}u\}_{K_{\pm1}}
} \\
\displaystyle{
\vphantom{\Big(}
=
-{\{u_{\lambda+\mu+\partial} u\}_{K_{\pm1}}}_\to
(\lambda+\mu+\partial)
\Big((\partial+\lambda)^{-1}u'
+(-\partial-\mu)^{-1}u'\Big)
} \\
\displaystyle{
\vphantom{\Big(}
=(\lambda+\mu+\partial)^{1\pm1}
\big(
-(\partial+\lambda)^{-1}u'
+(\partial+\mu)^{-1}u'
\big)
\,.}
\end{array}
\end{equation}
It is then immediate to check that \eqref{20130617:eq2} and the RHS of \eqref{20130617:eq3} 
are equal.
\end{example}
\begin{example}\label{20110922:ex2}
Dorfman non-local Poisson structure 
on the algebra of differential polynomials
$R_1=\mb F[u,u',u'',\dots]$ is:
$$
H=\partial^{-1}\circ u'\partial^{-1}\circ u'\partial^{-1}\,.
$$
One easily shows (cf. \cite[Example 3.14]{BDSK09})
that $H^{-1}$ is a symplectic structure over the field of fractions $\mc K_1=\text{Frac} R_1$, 
known as Dorfman symplectic structre, \cite{Dor93},
hence $H$ is indeed a non-local Poisson structure.
Furthermore, one can show, by a lengthy calculation,
that Sokolov's and Dorfman's non-local Poisson structures
are compatible.
\end{example}
\begin{example}[cf. \cite{Dor93}]\label{20110922:ex3}
Another triple of compatible non-local Poisson structures
on $R_1=\mb F[u,u',u'',\dots]$ is:
\begin{enumerate}[(i)]
\item
$K_{1}=\partial$ (GFZ Poisson structure),
\item
$K_{-1}=\partial^{-1}$ (Toda non-local Poisson structure),
\item
$H=\partial^{-1}\circ u'+u'\partial^{-1}$ (potential Virasoro-Magri non-local Poisson structure).
\end{enumerate}
\end{example}
\begin{example}[cf. \cite{Mag80}]\label{20110922:ex4}
There is yet another triple of compatible non-local Poisson structures
on $R_1=\mb F[u,u',u'',\dots]$:
\begin{enumerate}[(i)]
\item
$K_1=\partial$ (GFZ Poisson structure),
\item
$K_3=\partial^3$,
\item
$H=\partial\circ u\partial^{-1}\circ u\partial$ 
(modified Virasoro-Magri non-local Poisson structure).
\end{enumerate}
\end{example}
\begin{example}[cf. \cite{Mag78,Mag80}]\label{20110922:ex5}
The following is a triple of compatible non-local Poisson structures
on $R_2=\mb F[u,v,u',v',\dots]$:
\begin{enumerate}[(i)]
\item
$K_1=\partial\id$ (GFZ Poisson structure),
\item
$K=\left(\begin{array}{cc} 0 & -1 \\ 1 & 0 \end{array}\right)$,
\item
$H=\left(\begin{array}{cc} 
v\partial^{-1}\circ v & -v\partial^{-1}\circ u \\
-u\partial^{-1}\circ v & u\partial^{-1}\circ u
\end{array}\right)$ (NLS non-local Poisson structure).
\end{enumerate}
\end{example}


\section{Constructing families of compatible non-local \\ Poisson structures}
\label{sec:6}

As in the previous sections, let $\mc V$ be an algebra of differential functions
in the variables $u_1,\dots,u_\ell$,
we assume that $\mc V$ is a domain, and we let $\mc K$ be its field of fractions.
As in the local case, two non-local Poisson vertex algebra $\lambda$-brackets on $\mc V$
(respectively two non-local Poisson structures) 
are said to be \emph{compatible} if any their linear combination
is again a non-local Poisson vertex algebra structure
(resp. a non-local Poisson structure).
Such a pair is called a \emph{bi-Poisson structure}.
More generally, a collection of non-local 
Poisson structures $\{H^\alpha\}_{\alpha\in\mc A}$ on $\mc V$,
is called \emph{compatible} if any their (finite) linear combination
is a non-local Poisson structure over $\mc V$.

Recalling the Jacobi identity \eqref{20110922:eq4}, we introduce the following notation.
Given rational $\ell\times\ell$-matrix pseudodifferential operators 
$K,H\in\Mat_{\ell\times\ell}\mc V(\partial)$,
we let $J(H,K)=J^1(H,K)-J^2(H,K)-J^3(H,K)$,
where $J^\alpha(H,K)=\big(J^\alpha_{ijk}(H,K)(\lambda,\mu)\big)_{i,j,k\in I}$, for $\alpha=1,2,3$,
are the arrays with the following entries in $\mc V_{\lambda,\mu}$:
\begin{equation}\label{20111118:eq1}
\begin{array}{rcl}
J^1(H,K)_{ijk}(\lambda,\mu)&=&\{{u_i}_\lambda\{{u_j}_\mu{u_k}\}_H\}_K\,,\\
J^2(H,K)_{ijk}(\lambda,\mu)&=&\{{u_j}_\mu\{{u_i}_\lambda{u_k}\}_H\}_K\,,\\
J^3(H,K)_{ijk}(\lambda,\mu)&=&\{{\{{u_i}_\lambda{u_j}\}_H}_{\lambda+\mu}{u_k}\}_K\,.
\end{array}
\end{equation}
Consider a collection $\{H^\alpha\}_{\alpha\in\mc A}$
of skewadjoint rational non-local matrix pseudodifferential operators.
By definition, $H^\alpha$ is a Poisson structure
if and only if 
$J(H^\alpha,H^\alpha)=0$.
It is easy to see that the $H^\alpha$'s form a compatible family of Poisson structures
if and only if each pair is compatible, i.e.
\begin{equation}\label{20111116:eq3}
J(H^\alpha,H^\beta)+J(H^\beta,H^\alpha)=0
\,\,,\,\,\,\,\forall \alpha,\beta\in\mc A\,.
\end{equation}

\begin{theorem}\label{20111021:thm}
Let $H,\,K\in\Mat_{\ell\times\ell}\mc V(\partial)$
be compatible non-local Poisson structures over
the the algebra of differential functions $\mc V$, which is a domain.
Assume that $K$ is an invertible element of the algebra $\Mat_{\ell\times\ell}\mc V(\partial)$.
Then the following sequence of rational matrix pseudodifferential operators
with coefficients in $\mc V$:
$$
H^{[0]}=K
\,\,,\,\,\,\,
H^{[n]} :=
\big(H K^{-1}\big)^{n-1} H
\,\in\Mat{}_{\ell\times\ell}\mc V(\partial)
\,\,,\,\,\,\,n\geq1\,,
$$
form a compatible family of non-local Poisson structures over $\mc V$.
\end{theorem}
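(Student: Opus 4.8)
The plan is to write $R=HK^{-1}$, so that $H^{[n]}=R^nK=(HK^{-1})^{n-1}H$ for $n\ge1$ and $H^{[n+1]}=HK^{-1}H^{[n]}$, and to first dispose of the two structural requirements. That each $H^{[n]}$ lies in $\Mat_{\ell\times\ell}\mc V(\partial)$ is immediate, since this is an associative algebra (Definition \ref{20110926:def}) containing $H$, $K$ and, by hypothesis, $K^{-1}$, hence every product. Skewadjointness follows from $H^*=-H$, $K^*=-K$ together with the telescoping identity $H(K^{-1}H)^{n-1}=(HK^{-1})^{n-1}H$: taking adjoints reverses the product and sends $HK^{-1}$ to $(K^{-1})^*H^*=K^{-1}H$, so $(H^{[n]})^*=-H(K^{-1}H)^{n-1}=-H^{[n]}$. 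By Theorem \ref{20110923:prop}(b),(c), each $H^{[n]}$ then defines a skewsymmetric, admissible $\lambda$-bracket.

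By \eqref{20111116:eq3}, the family $\{H^{[n]}\}$ is compatible precisely when $S(m,n):=J(H^{[m]},H^{[n]})+J(H^{[n]},H^{[m]})$ vanishes in $\mc V_{\lambda,\mu}$ for all $m,n\in\mb Z_+$; note that $S$ is symmetric and that $S(n,n)=2J(H^{[n]},H^{[n]})$, whose vanishing is exactly the assertion that $H^{[n]}$ is a Poisson structure. The hypotheses on $H$ and $K$ furnish the base cases: $S(0,0)=2J(K,K)=0$ and $S(1,1)=2J(H,H)=0$ because $K$ and $H$ are Poisson, and $S(0,1)=J(K,H)+J(H,K)=0$ because they are compatible. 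I would obtain the remaining cases by double induction on $m+n$.

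The engine of the induction is a fundamental recursion for the trilinear form $J$ under $R$. Writing $H^{[m+1]}=HK^{-1}H^{[m]}$ and expanding the three brackets in \eqref{20111118:eq1} by the left and right Leibniz rules \eqref{20110921:eq3} and by the explicit formulas \eqref{20111012:eq2c}--\eqref{20111012:eq2d} for the $\lambda$-brackets of the entries of $K^{-1}$, one reexpresses $J(H^{[m+1]},H^{[n]})$ and $J(H^{[n]},H^{[m+1]})$ through the forms attached to $H^{[m]}$, $H$ and $K$. The decisive algebraic input is the commutation $HK^{-1}H^{[n]}=H^{[n]}K^{-1}H=H^{[n+1]}$ (both sides being the alternating product), which symmetrizes the $H$- and $K$-contributions; after summing over the two slots I expect $S(m+1,n)$ to reduce to a combination of $S$-values of strictly smaller total index together with the base data $J(H,H)$, $J(K,K)$, $J(H,K)+J(K,H)$, all already known to vanish. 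Throughout, Corollary \ref{20111007:prop} and Lemma \ref{20111006:lem} guarantee that every term stays in $\mc V_{\lambda,\mu}$, so these are genuine identities there rather than formal ones.

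The hard part will be the derivation and the correct organization of this recursion. Because $H^{[m+1]}$ carries the non-local factor $K^{-1}$, expanding its triple brackets through Corollary \ref{20111014:cor} produces a large number of terms, and the crux is to show that---modulo the inductive vanishing of the lower $S$'s and of the Poisson and compatibility data for $H,K$---they reassemble into the desired lower-order $S$-terms. Conceptually this is the rigorous $\lambda$-bracket incarnation of the classical fact that $R=HK^{-1}$ is a Nijenhuis (hereditary) operator and that its iterates applied to a compatible bi-Poisson pair yield a compatible hierarchy; the computational content is exactly the vanishing of the associated torsion. I would therefore isolate the $S$-recursion as a separate lemma and then close the argument by the induction on $m+n$ sketched above.
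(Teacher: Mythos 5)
Your treatment of the routine parts is fine: closure of $\Mat_{\ell\times\ell}\mc V(\partial)$ under products, the skewadjointness computation via $H(K^{-1}H)^{n-1}=(HK^{-1})^{n-1}H$, the reduction of compatibility to the vanishing of $S(m,n)=J(H^{[m]},H^{[n]})+J(H^{[n]},H^{[m]})$, and the base cases are all correct. The gap is the ``fundamental recursion'' itself, which you never derive and which is the entire content of the theorem. Two concrete problems. First, the operator identity $HK^{-1}H^{[n]}=H^{[n]}K^{-1}H$ does not by itself control the triple $\lambda$-brackets: $J(A,B)$ is not a function of the operators $A,B$ alone in a way that lets you substitute such commutation relations, and expanding $J(H^{[m+1]},H^{[n]})$ via the Leibniz rules and Corollary \ref{20111014:cor} produces individual terms of the form $J(H^{[i]},H^{[j]})$, whereas your inductive hypothesis (compatibility) only controls the \emph{symmetrized} combinations $J(H^{[i]},H^{[j]})+J(H^{[j]},H^{[i]})$. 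Whether the expansion reassembles into symmetrized combinations is precisely what must be proved, and it is not a consequence of the Nijenhuis-operator heuristic; indeed Remark \ref{20111114:rem} records that the argument along these lines in [FF81] only goes through under the extra hypothesis that $H$ is invertible. Second, even in the single case $m=n=1$ the paper's Lemma \ref{20111116:lem1} requires a delicate pairing of six groups of terms (equations \eqref{20111121:eq4a}--\eqref{20111121:eq4c}), each using the Jacobi identity for $H$ at a specific slot before the compatibility of $H$ and $K$ can be invoked at the very last step; there is no indication that this pattern propagates to a clean recursion for general $m,n$.

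The paper avoids your recursion entirely. It proves exactly one hard bracket identity --- Lemma \ref{20111116:lem1}: if $\tilde H$ and $K$ are compatible skewadjoint rational operators with $K$ invertible, then $\tilde HK^{-1}\tilde H$ satisfies the Jacobi identity (remarkably, without using that $K$ is Poisson, cf.\ Remark \ref{20111122:rem}) --- and then bootstraps by polarization: applying the lemma to $\tilde H=\sum_{i=0}^n x_iH^{[i]}$ gives that $\sum_{p=0}^{2n}Q_p(x)H^{[p]}$ with $Q_p(x)=\sum_{i+j=p}x_ix_j$ is Poisson for all $x\in\mb F^{n+1}$; assigning $\deg x_i=i$ and extracting the homogeneous components of degree $2n+2$ and $m+n+1$ of $J(\tilde HK^{-1}\tilde H,\tilde HK^{-1}\tilde H)=0$ yields \eqref{20111116:eq6}--\eqref{20111116:eq7}, and the linear-independence Lemma \ref{20111116:lem2} for the quadratics $Q_p$ then isolates $J(H^{[n+1]},H^{[n+1]})$ and $S(m,n+1)$. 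If you want to salvage your outline, you should replace the conjectured $S$-recursion by this polarization step; as written, the central lemma of your proof is asserted by analogy rather than established, and the induction resting on it does not close.
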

\begin{remark}\label{20111114:rem}
It is stated in \cite{FF81} that $H^{[n]},\,n\geq0$, are non-local Poisson structures,
but the prove there is given only under the additional
assupmtion that $H$ is invertible as well.
In this case the proof becomes much easier since $H^{[n]}$ is invertible, 
therefore one needs to prove that $(H^{[n]})^{-1}$ is a symplectic structure.
\end{remark}
Following the idea in \cite{TT11}, we will reduce the proof of Theorem \ref{20111021:thm}
to the following special case of it:
\begin{lemma}\label{20111116:lem1}
Let $\tilde H,\,K\in\Mat_{\ell\times\ell}\mc V(\partial)$
be compatible non-local Poisson structures over $\mc V$,
and assume that $K$ is an invertible element 
of the algebra $\Mat_{\ell\times\ell}\mc V(\partial)$.
Then 
$\tilde H(\partial) K^{-1}(\partial)\tilde H(\partial)\in\Mat{}_{\ell\times\ell}\mc V(\partial)$
is a non-local Poisson structure over $\mc V$.
\end{lemma}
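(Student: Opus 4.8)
The plan is to check the three defining properties of a non-local Poisson structure for $P:=\tilde H(\partial)K^{-1}(\partial)\tilde H(\partial)$ — membership in $\Mat_{\ell\times\ell}\mc V(\partial)$, skewadjointness, and the Jacobi identity on generators \eqref{20110922:eq4} — the last being the only real content. Rationality is immediate: by hypothesis $K$ is invertible in $\Mat_{\ell\times\ell}\mc V(\partial)$, so $K^{-1}\in\Mat_{\ell\times\ell}\mc V(\partial)$ (cf. Corollary \ref{20111005:prop3}), and $\tilde HK^{-1}\tilde H$ is then a product of elements of this algebra. Skewadjointness is equally quick: $\tilde H$ and $K$ are skewadjoint, hence so is $K^{-1}$ (since $(K^{-1})^*=(K^*)^{-1}=-K^{-1}$), and therefore $(\tilde HK^{-1}\tilde H)^*=\tilde H^*(K^{-1})^*\tilde H^*=-\tilde HK^{-1}\tilde H$. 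Thus everything reduces to proving $J(P,P)=0$, where $J(\cdot,\cdot)$ is defined by \eqref{20111118:eq1}.

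The key idea is a deformation that makes $\tilde H$ invertible while preserving compatibility, so as to fall into the easy invertible case of Remark \ref{20111114:rem}. I would extend the base field from $\mb F$ to $\mb F(t)$, with $t$ a new constant ($\partial t=0$), replacing $\mc V$ by the algebra of differential functions $\mc V\otimes_{\mb F}\mb F(t)$, which is still a domain, and set $\tilde H_t:=\tilde H+tK$. Since $J(\cdot,\cdot)$ is \emph{bilinear} in its two arguments — the inner and outer $\lambda$-brackets in \eqref{20111118:eq1} are linear in the respective pseudodifferential operator by the Master Formula \eqref{20110922:eq1} — the hypotheses $J(\tilde H,\tilde H)=J(K,K)=0$ and \eqref{20111116:eq3} for the pair $(\tilde H,K)$ give at once $J(\tilde H_t,\tilde H_t)=0$ and $J(\tilde H_t,K)+J(K,\tilde H_t)=0$; that is, $\tilde H_t$ is a non-local Poisson structure compatible with $K$. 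Moreover $\tilde H_t$ is non-degenerate over $\mc V\otimes\mb F(t)$: writing $\tilde H_t=(\tilde HK^{-1}+t\id)K$ with $K$ invertible, it suffices that $\tilde HK^{-1}+t\id$ be non-degenerate, and if $v(t)=\sum_j v_jt^j$ of top degree $d$ (so $v_d\neq0$) were a left null vector, then comparing coefficients of $t^{d+1}$ in $v(t)(\tilde HK^{-1}+t\id)=0$ forces $v_d=0$, a contradiction; hence $\det(\tilde H_t)\neq0$ (Proposition \ref{20111005:prop2}).

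Now I would invoke the invertible case. Both $\tilde H_t$ and $K$ are non-degenerate compatible non-local Poisson structures over $\mc V\otimes\mb F(t)$, so $P_t:=\tilde H_tK^{-1}\tilde H_t$ is invertible, and by Theorem \ref{20111012:thm} it is a non-local Poisson structure as soon as its inverse $P_t^{-1}=\tilde H_t^{-1}K\tilde H_t^{-1}$ is a non-local symplectic structure. This is precisely the situation of Remark \ref{20111114:rem}: with all operators invertible, symplecticity of $\tilde H_t^{-1}K\tilde H_t^{-1}$ is the direct (FF81) verification. Hence $J(P_t,P_t)=0$.

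Finally I would specialize $t\to0$. One computes $P_t=(\tilde H+tK)K^{-1}(\tilde H+tK)=\tilde HK^{-1}\tilde H+2t\tilde H+t^2K$, whose coefficients are polynomials in $t$; by bilinearity of $J$ the expression $J(P_t,P_t)$ is then a polynomial in $t$ with coefficients in $\mc V_{\lambda,\mu}$, and it vanishes identically since it is $0$ over $\mb F(t)$. Setting $t=0$ (legitimate, as no negative powers of $t$ occur) and using $P_t|_{t=0}=P$ yields $J(P,P)=0$, i.e. \eqref{20110922:eq4} for $\tilde HK^{-1}\tilde H$; together with the skewadjointness and rationality already noted, this proves $\tilde HK^{-1}\tilde H$ is a non-local Poisson structure (Definition \ref{20111007:def}). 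The main obstacle is the reduction to the invertible case, with two technical cores: verifying that $\tilde H+tK$ is non-degenerate over the enlarged differential field (the null-vector/leading-coefficient argument above), which is what licenses the easy symplectic argument, and ensuring the whole computation stays polynomial in $t$ so that the substitution $t=0$ is valid — this rests on the bilinearity of $J$ and on $K^{-1}$ being independent of $t$. The symplecticity of $\tilde H_t^{-1}K\tilde H_t^{-1}$, though a computation, is the genuinely easy invertible case and is not where the difficulty lies.
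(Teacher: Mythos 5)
Your overall strategy --- deform $\tilde H$ to $\tilde H_t=\tilde H+tK$ over $\mc V\otimes\mb F(t)$, invoke the invertible case, and specialize at $t=0$ using bilinearity of $J$ --- is genuinely different from the paper's proof, and the reduction and specialization steps are essentially sound. However, there is a real gap at the heart of the argument: the ``invertible case'' is never proved. You assert that symplecticity of $\tilde H_t^{-1}K\tilde H_t^{-1}$ is ``the direct (FF81) verification'' and ``not where the difficulty lies,'' citing Remark \ref{20111114:rem}. But that remark contains no proof; it only observes that when $H$ is invertible one may \emph{equivalently} show that $(H^{[2]})^{-1}$ is a symplectic structure, and by Theorem \ref{20111012:thm} this equivalent statement is exactly as strong as the statement that $H^{[2]}$ is Poisson --- nothing has been gained unless one actually carries out the computation showing that compatibility of $\tilde H_t$ and $K$ implies the symplectic identity \eqref{20111017:eq1} for $\tilde H_t^{-1}K\tilde H_t^{-1}$. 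That computation is the entire content of the lemma (the paper's own proof does the analogous work directly, expanding $J^1,J^2,J^3$ via \eqref{20111103:eq1a}--\eqref{20111103:eq3c} and Lemma \ref{20111012:lem}, and cancelling against the compatibility identity), and the paper explicitly characterizes the FF81 argument as a partial proof made without rigorous definitions. As written, your proof reduces the lemma to an unproved statement of essentially equal difficulty.

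Two secondary points. First, your non-degeneracy argument for $\tilde H_t$ only rules out left null vectors that are \emph{polynomial} in $t$; non-degeneracy in the sense of Definition \ref{def:non-deg} means invertibility over the skewfield $\mc K(t)((\partial^{-1}))$, whose elements need not admit a common denominator in $t$ (e.g.\ $\sum_n(t-c_n)^{-1}\partial^{-n}$). This is repairable --- one can reduce to the Ore localization of $\mc K((\partial^{-1}))[t]$, where denominators can be cleared on the left and your degree-in-$t$ argument applies --- but the step is missing. Second, note that the paper's direct proof has the additional feature (Remark \ref{20111122:rem}) that it never uses $J(K,K)=0$, only skewadjointness of $K$ and the mixed compatibility identity; your route, by contrast, genuinely needs $K$ to be a Poisson structure in order for $\tilde H+tK$ to be one. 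If you want to salvage your approach, you must either supply the symplectic verification for $\tilde H_t^{-1}K\tilde H_t^{-1}$ in full, or accept that the direct computation of $J(H^{[2]},H^{[2]})$ cannot be avoided.
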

\begin{proof}
To simplify notation, in this proof we denote $\tilde H$ by $H$,
and we let $R=H K^{-1}$
so that $R^*=K^{-1} H$.
Let $H^{[2]}=H K^{-1} H
\Big(=R H=H R^*\Big)$,
and let $\{\cdot\,_\lambda\,\cdot\}_2=\{\cdot\,_\lambda\,\cdot\}_{H^{[2]}}$
be the non-local $\lambda$-bracket on $\mc V$ associated
to $H^{[2]}\in\Mat_{\ell\times\ell}\mc V(\partial)$
via \eqref{20110922:eq1}.
We need to prove the Jacobi identity, i.e. using the notation in \eqref{20111118:eq1},
that $J(H^{[2]},H^{[2]})=0$.

We need to compute all three terms $J^\alpha=J^\alpha(H^{[2]},H^{[2]})_{ijk}(\lambda,\mu)$, for $\alpha=1,2,3$,
of the Jacobi identity.
First, if $f\in\mc V$ and $i\in I$, we have, in $\mc V((\lambda^{-1}))$,
\begin{eqnarray}
\label{20111103:eq1a}
\{{u_i}_\lambda f\}_2
&=&
\displaystyle{
\sum_{s\in I}{\{{u_s}_{\lambda+\partial}f\}_H}_\to R^*_{si}(\lambda)\,,
} \\
\label{20111103:eq1b}
\{{u_j}_\mu f\}_2
&=&
\displaystyle{
\sum_{t\in I}{\{{u_t}_{\mu+\partial}f\}_H}_\to R^*_{tj}(\mu)\,,
} \\
\label{20111103:eq2}
\{f_{\lambda+\mu}{u_k}\}_2
&=&
\displaystyle{
\sum_{r\in I}R_{kr}(\lambda+\mu+\partial)\, \{f_{\lambda+\mu}u_r\}_H\,.
}
\end{eqnarray}
Both the above equations follow immediately from the Master formula \eqref{20110922:eq1}
and the definition of $H^{[2]}$.
The following identities are proved in a similar way, using that $K\circ K^{-1}=\id$,
\begin{eqnarray}
\label{20111121:eq1a}
\{{u_i}_\lambda f\}_H
&=&
\displaystyle{
\sum_{s\in I}{\{{u_s}_{\lambda+\partial}f\}_K}_\to R^*_{si}(\lambda)\,,
} \\
\label{20111121:eq1b}
\{{u_j}_\mu f\}_H
&=&
\displaystyle{
\sum_{t\in I}{\{{u_t}_{\mu+\partial}f\}_K}_\to R^*_{tj}(\mu)\,,
} \\
\label{20111121:eq2}
\{f_{\lambda+\mu}{u_k}\}_H
&=&
\displaystyle{
\sum_{r\in I}R_{kr}(\lambda+\mu+\partial)\, \{f_{\lambda+\mu}u_r\}_K\,.
}
\end{eqnarray}
Next, it is not hard to chek, using the left and right Leibniz rules
and Lemma \ref{20111012:lem}, that,
given an admissible non-local $\lambda$-bracket $\{\cdot\,_\lambda\,\cdot\}$ on $\mc V$,
the following identities hold in $\mc V_{\lambda,\mu}$, for every $i,j,k\in I$:
\begin{eqnarray}
\label{20111103:eq3a}
&&\{{u_i}_\lambda H^{[2]}_{kj}(\mu)\} 
=
\displaystyle{
\sum_{t\in I}
\{{u_i}_\lambda \{{u_t}_y{u_k}\}_H\} \Big(\Big|_{y=\mu+\partial}R^*_{tj}(\mu)\Big)
} \\
&&\displaystyle{
- \sum_{r,t\in I}
R_{kr}(\lambda\!+\!\mu\!+\!\partial)\,
\{{u_i}_\lambda \{{u_t}_y{u_r}\}_K\} \Big(\Big|_{y=\mu+\partial}\!\!\!R^*_{tj}(\mu)\Big)
}\nonumber \\
&&\displaystyle{
+ \sum_{r\in I}
R_{kr}(\lambda+\mu+\partial)\,
\{{u_i}_\lambda \{{u_j}_\mu{u_r}\}_H\}
\,,
}\nonumber \\
\label{20111103:eq3b}
&&\{{u_j}_\mu H^{[2]}_{ki}(\lambda)\} 
=
\displaystyle{
\sum_{s\in I}
\{{u_j}_\mu \{{u_s}_x{u_k}\}_H\} \Big(\Big|_{x=\lambda+\partial}R^*_{si}(\lambda)\Big)
} \\
&&\displaystyle{
- \sum_{r,s\in I}
R_{kr}(\lambda\!+\!\mu\!+\!\partial)\,
\{{u_j}_\mu \{{u_s}_x{u_r}\}_K\} \Big(\Big|_{x=\lambda+\partial}\!\!\!R^*_{si}(\lambda)\Big)
}\nonumber \\
&&\displaystyle{
+ \sum_{r\in I}
R_{kr}(\lambda+\mu+\partial)\,
\{{u_j}_\mu \{{u_i}_\lambda{u_r}\}_H\}
\,,
}\nonumber \\
\label{20111103:eq3c}
&&\{H^{[2]}_{ji}(\lambda)_{\lambda+\mu}{u_k}\} 
=
\displaystyle{
\sum_{s\in I}
\{{\{{u_s}_x{u_j}\}_H}_{\lambda+\mu+\partial}{u_k}\}_\to 
\Big(\Big|_{x=\lambda+\partial}\!\!\!R^*_{si}(\lambda)\Big)
} \\
&&\displaystyle{
- \sum_{s,t\in I}
\{{\{{u_s}_x{u_t}\}_K}_{\lambda+\mu+\partial}{u_k}\}_\to 
\Big(\Big|_{x=\lambda+\partial}\!\!\!\!R^*_{si}(\lambda)\!\Big)
\Big(\Big|_{y=\mu+\partial}\!\!\!\!R^*_{tj}(\mu)\!\Big)
}\nonumber \\
&&\displaystyle{
+ \sum_{t\in I}
\{{\{{u_i}_\lambda{u_t}\}_H}_{\lambda+\mu+\partial}{u_k}\}_\to 
R^*_{tj}(\mu)
\,.
}\nonumber 
\end{eqnarray}
Here and further we use the following notation:
given an element 
$$
P(\lambda,\mu)=\sum_{m,n,p=-\infty}^N p_{m,n,p}\lambda^m\mu^n(\lambda+\mu)^p
\in\mc V_{\lambda,\mu}\,,
$$ 
and $f,g\in\mc V$, we let
\begin{equation}\label{20111018:eq5}
\begin{array}{l}
\displaystyle{
P(x,y)\Big(\Big|_{x=\lambda+\partial}f\Big)
\Big(\Big|_{y=\mu+\partial}g\Big)
} \\
\displaystyle{
=
\sum_{m,n,p=-\infty}^N
p_{m,n,p}(\lambda+\mu+\partial)^p
\big((\lambda+\partial)^mf\big)\big((\mu+\partial)^ng\big)
\,\,\in\mc V_{\lambda,\mu}
\,.
}
\end{array}
\end{equation}
In equation \eqref{20111103:eq3c} we used the assumption that $H$ and $K$ are skewadjoint.
Combining equations \eqref{20111103:eq1a} and \eqref{20111103:eq3a},
equations \eqref{20111103:eq1b} and \eqref{20111103:eq3b},
and equations \eqref{20111103:eq2} and \eqref{20111103:eq3c},
we get, respectively,
\begin{eqnarray}
\label{20111121:eq3a}
&&\displaystyle{
J^1 = \{{u_i}_\lambda \{{u_j}_\mu{u_k}\}_2\}_2
} \\
&&\displaystyle{
=\sum_{s,t\in I}
\{{u_s}_x \{{u_t}_y{u_k}\}_H\}_H 
\Big(\Big|_{x=\lambda+\partial}R^*_{si}(\lambda)\Big)
\Big(\Big|_{y=\mu+\partial}R^*_{tj}(\mu)\Big)
}\nonumber \\
&&\displaystyle{
- \sum_{r,s,t\in I}
R_{kr}(\lambda\!+\!\mu\!+\!\partial)\,
\{{u_s}_x \{{u_t}_y{u_r}\}_K\}_H 
\Big(\Big|_{x=\lambda+\partial}R^*_{si}(\lambda)\Big)
\Big(\Big|_{y=\mu+\partial}R^*_{tj}(\mu)\Big)
}\nonumber \\
&&\displaystyle{
+ \sum_{r,s\in I}
R_{kr}(\lambda+\mu+\partial)\,
\{{u_s}_x \{{u_j}_\mu{u_r}\}_H\}_H
\Big(\Big|_{x=\lambda+\partial}R^*_{si}(\lambda)\Big)
\,,
}\nonumber \\
\label{20111121:eq3b}
&&\displaystyle{
J^2 = \{{u_j}_\mu \{{u_i}_\lambda{u_k}\}_2\}_2
} \\
&&\displaystyle{
=\sum_{s,t\in I}
\{{u_j}_y \{{u_s}_x{u_k}\}_H\}_H 
\Big(\Big|_{x=\lambda+\partial}R^*_{si}(\lambda)\Big)
\Big(\Big|_{y=\mu+\partial}R^*_{tj}(\mu)\Big)
}\nonumber \\
&&\displaystyle{
- \sum_{r,s,t\in I}
R_{kr}(\lambda\!+\!\mu\!+\!\partial)\,
\{{u_t}_y \{{u_s}_x{u_r}\}_K\}_H 
\Big(\Big|_{x=\lambda+\partial}R^*_{si}(\lambda)\Big)
\Big(\Big|_{y=\mu+\partial}R^*_{tj}(\mu)\Big)
}\nonumber \\
&&\displaystyle{
+ \sum_{r,t\in I}
R_{kr}(\lambda+\mu+\partial)\,
\{{u_t}_y \{{u_i}_\lambda{u_r}\}_H\}_H
\Big(\Big|_{y=\mu+\partial}R^*_{tj}(\mu)\Big)
\,,
}\nonumber \\
\label{20111121:eq3c}
&&\displaystyle{
J^3 = \{{\{{u_i}_\lambda{u_j}\}_2}_{\lambda+\mu}{u_k}\}_2
} \\
&&\displaystyle{
=\sum_{r,s\in I}
R_{kr}(\lambda+\mu+\partial)\,
{\{{\{{u_s}_x{u_j}\}_H}_{\lambda+\mu+\partial}{u_r}\}_H}_\to 
\Big(\Big|_{x=\lambda+\partial}\!\!\!R^*_{si}(\lambda)\Big)
}\nonumber \\
&&\displaystyle{
- \sum_{r,s,t\in I}
R_{kr}(\lambda+\mu+\partial)\,
{\{{\{{u_s}_x{u_t}\}_K}_{\lambda+\mu+\partial}{u_r}\}_H}_\to 
\Big(\Big|_{x=\lambda+\partial}R^*_{si}(\lambda)\!\Big)
R^*_{tj}(\mu)
}\nonumber \\
&&\displaystyle{
+ \sum_{r,t\in I}
R_{kr}(\lambda+\mu+\partial)\,
{\{{\{{u_i}_\lambda{u_t}\}_H}_{\lambda+\mu+\partial}{u_r}\}_H}_\to 
R^*_{tj}(\mu)
\,.
}\nonumber 
\end{eqnarray}
We need to prove that $J^1-J^2-J^3=0$.
The first term of the RHS of \eqref{20111121:eq3a} combined with the first term of the RHS of \eqref{20111121:eq3b}
gives, by the Jacobi identity for $H$ and by equation \eqref{20111121:eq2},
\begin{equation}\label{20111121:eq4a}
\begin{array}{l}
\displaystyle{
\sum_{s,t\in I}
\!\!
\Big(\{{u_s}_x \{{u_t}_y{u_k}\}_H\}_H - \{{u_j}_y \{{u_s}_x{u_k}\}_H\}_H\Big)\!\!
\Big(\Big|_{x=\lambda+\partial}R^*_{si}(\lambda)\Big)\!\!
\Big(\Big|_{y=\mu+\partial}R^*_{tj}(\mu)\Big)
} \\
\displaystyle{
=
\sum_{s,t\in I}
\{{\{{u_s}_x{u_t}\}_H}_{x+y}{u_k}\}_H\}
\Big(\Big|_{x=\lambda+\partial}R^*_{si}(\lambda)\Big)
\Big(\Big|_{y=\mu+\partial}R^*_{tj}(\mu)\Big)
} \\
\displaystyle{
=
\sum_{r,s,t\in I}
\!\!
R_{kr}(\lambda+\mu+\partial)
\{{\{{u_s}_x{u_t}\}_H}_{x+y}{u_r}\}_K
\Big(\Big|_{x=\lambda+\partial}R^*_{si}(\lambda)\Big)
\!\!\Big(\Big|_{y=\mu+\partial}R^*_{tj}(\mu)\Big).
}
\end{array}
\end{equation}
Similarly,
the third term of the RHS of \eqref{20111121:eq3a} combined with the first term of the RHS of \eqref{20111121:eq3c}
gives, by the Jacobi identity for $H$ and by equation \eqref{20111121:eq1b},
\begin{equation}\label{20111121:eq4b}
\begin{array}{l}
\displaystyle{
\sum_{r,s\in I}
R_{kr}(\lambda+\mu+\partial)\,
\Big(\{{u_s}_x \{{u_j}_\mu{u_r}\}_H\}_H
-{\{{\{{u_s}_x{u_j}\}_H}_{\lambda+\mu+\partial}{u_r}\}_H}_\to\Big)
} \\
\displaystyle{
\Big(\Big|_{x=\lambda+\partial}R^*_{si}(\lambda)\Big)
=
\sum_{r,s\in I}
R_{kr}(\lambda+\mu+\partial)\,
\{{u_j}_\mu \{{u_s}_x{u_r}\}_H\}_H
\Big(\Big|_{x=\lambda+\partial}R^*_{si}(\lambda)\Big)
} \\
\displaystyle{
=
\sum_{r,s,t\in I}
R_{kr}(\lambda+\mu+\partial)\,
\{{u_t}_y \{{u_s}_x{u_r}\}_H\}_K
\Big(\Big|_{x=\lambda+\partial}R^*_{si}(\lambda)\Big)
\Big(\Big|_{y=\mu+\partial}R^*_{tj}(\mu)\Big)\,.
}
\end{array}
\end{equation}
In the same way,
the third term of the RHS of \eqref{20111121:eq3b} combined with the third term of the RHS of \eqref{20111121:eq3c}
gives, by the Jacobi identity for $H$ and by equation \eqref{20111121:eq1a},
\begin{equation}\label{20111121:eq4c}
\begin{array}{c}
\displaystyle{
-\sum_{r,t\in I}
R_{kr}(\lambda+\mu+\partial)\,
\Big(
\{{u_t}_y \{{u_i}_\lambda{u_r}\}_H\}_H
+{\{{\{{u_i}_\lambda{u_t}\}_H}_{\lambda+y}{u_r}\}_H} 
\Big)
} \\
\displaystyle{
\Big(\Big|_{y=\mu+\partial}R^*_{tj}(\mu)\Big)
=
-\sum_{r,t\in I}
R_{kr}(\lambda+\mu+\partial)\,
\{{u_i}_\lambda \{{u_t}_y{u_r}\}_H\}_H
} \\
\displaystyle{
\Big(\Big|_{y=\mu+\partial}R^*_{tj}(\mu)\Big)
=
-\sum_{r,s,t\in I}
R_{kr}(\lambda+\mu+\partial)\,
\{{u_s}_\lambda \{{u_t}_y{u_r}\}_H\}_K
} \\
\displaystyle{
\Big(\Big|_{x=\lambda+\partial}R^*_{si}(\lambda)\Big)
\Big(\Big|_{y=\mu+\partial}R^*_{tj}(\mu)\Big)\,.
}
\end{array}
\end{equation}
Finally, combining the second term in the RHS of \eqref{20111121:eq3a}, \eqref{20111121:eq3b} and \eqref{20111121:eq3c},
together with the RHS of equations \eqref{20111121:eq4a}, \eqref{20111121:eq4b} and \eqref{20111121:eq4c},
we get
$$
\begin{array}{c}
\displaystyle{
J^1-J^2-J^3=
\sum_{r,s,t\in I}
R_{kr}(\lambda+\mu+\partial)\,
\Big(
-\{{u_s}_x \{{u_t}_y{u_r}\}_K\}_H 
} \\
\displaystyle{
+\{{u_t}_y \{{u_s}_x{u_r}\}_K\}_H 
+{\{{\{{u_s}_x{u_t}\}_K}_{x+y}{u_r}\}_H}_\to 
+\{{\{{u_s}_x{u_t}\}_H}_{x+y}{u_r}\}_K
} \\
\displaystyle{
+\{{u_t}_y \{{u_s}_x{u_r}\}_H\}_K
-\{{u_s}_\lambda \{{u_t}_y{u_r}\}_H\}_K
\Big)
\Big(\Big|_{x=\lambda+\partial}R^*_{si}(\lambda)\Big)
\Big(\Big|_{y=\mu+\partial}R^*_{tj}(\mu)\Big)\,,
}
\end{array}
$$
which is zero since, by assumption, $H$ and $K$ are compatible.
\end{proof}
\begin{remark}\label{20111122:rem}
The proof of Lemma \ref{20111116:lem1} does not use the assumption that $K$
is a Poisson structure.
\end{remark}
\begin{proof}[Proof of Theorem \ref{20111021:thm}]
We prove, by induction on $n\geq1$, that the rational matrix pseudodifferential operators
$$
H^{[0]}=K,\,H^{[1]}=H,\dots,H^{[n]}
\,\in\Mat{}_{\ell\times\ell}\mc V(\partial)\,,
$$
form a compatible family of non-local Poisson structures over $\mc V$.
For $n=1$, this holds by assumption.
Assuming by induction that the statement holds for $n\geq1$,
we will prove that it holds for $n+1$.
Namely, 
thanks to the observations at the beginning of the section,
we need to prove that
\begin{enumerate}[(i)]
\item
$J(H^{[n+1]},H^{[n+1]})=0$,
\item
$J(H^{[m]},H^{[n+1]})+J(H^{[n+1]},H^{[m]})=0$
for every $m=0,\dots,n$.
\end{enumerate}
By the inductive assumption, $\tilde H=\sum_{i=0}^nx_iH^{[i]}$
is a Poisson structure for every $x_0,\dots,x_n\in\mb F$.
Hence, by Lemma \ref{20111116:lem1},
we get the following Poisson structure for every point $(x_0,\dots,x_n)\in\mb F^{n+1}$:
$$
\tilde H K^{-1}\tilde H
=
\sum_{i,j=0}^nx_ix_j H^{[i]} K^{-1} H^{[j]}
=
\sum_{p=0}^{2n}Q_p(x_0,\dots,x_n) H^{[p]}\,,
$$
where, for $p=0,\dots,2n$,
\begin{equation}\label{20111116:eq4}
Q_p(x_0,\dots,x_n)=
\sum_{\substack{i,j=0 \\ (i+j=p)}}^n x_ix_j\,.
\end{equation}
We thus get
$$
\begin{array}{l}
\displaystyle{
0=J(\tilde H K^{-1}\tilde H,\tilde H K^{-1}\tilde H)
=
\sum_{p=0}^{2n}Q_p^2(x_0,\dots,x_n) J(H^{[p]},H^{[p]})
} \\
\displaystyle{
+\sum_{\substack{p,q=0 \\ (p<q)}}^{2n}Q_p(x_0,\dots,x_n) Q_q(x_0,\dots,x_n) 
\big(J(H^{[p]},H^{[q]})+J(H^{[q]},H^{[p]})\big)\,,
}
\end{array}
$$
for every $(x_0,\dots,x_n)\in\mb F^{n+1}$.
Note that, by the inductive assumption, $J(H^{[p]},H^{[p]})=0$
for every $0\leq p\leq n$
and $J(H^{[p]},H^{[q]})+J(H^{[q]},H^{[p]})=0$ for every $0\leq p<q\leq n$.
Hence the above equation gives
\begin{equation}\label{20111116:eq5}
\begin{array}{l}
\displaystyle{
\sum_{p=n+1}^{2n} Q_p^2(x_0,\dots,x_n) J(H^{[p]},H^{[p]})
} \\
\displaystyle{
+\sum_{p=0}^n\sum_{q=n+1}^{2n} Q_p(x_0,\dots,x_n) Q_q(x_0,\dots,x_n) 
\big(J(H^{[p]},H^{[q]})+J(H^{[q]},H^{[p]})\big)
} \\
\displaystyle{
+\!\sum_{\substack{p,q=n+1 \\ (p<q)}}^{2n} Q_p(x_0,\dots,x_n) Q_q(x_0,\dots,x_n) 
\big(J(H^{[p]},H^{[q]})+J(H^{[q]},H^{[p]})\big)=0
}
\end{array}
\end{equation}
for every $(x_0,\dots,x_n)\in\mb F^{n+1}$.
Next, we introduce a grading in the algebra of polynomials in $x_0,\dots,x_n$,
letting $\deg(x_i)=i$.
Then $Q_p(x_0,\dots,x_n)$ is homogeneous of degree $p$.
By looking at the terms of degree $d=2n+2$ in equation \eqref{20111116:eq5},
we get
\begin{equation}\label{20111116:eq6}
\begin{array}{l}
\displaystyle{
Q_{n+1}^2(x_0,\dots,x_n) J(H^{[n+1]},H^{[n+1]})
+\sum_{p=2}^n
Q_p(x_0,\dots,x_n) 
} \\
\displaystyle{
Q_{2n+2-p}(x_0,\dots,x_n) 
\big(J(H^{[p]},H^{[2n+2-p]})+J(H^{[2n+2-p]},H^{[p]})\big)=0\,,
}
\end{array}
\end{equation}
while, 
by looking at the terms of degree $d=m+n+1$ with $m\in\{0,\dots,n\}$
in equation \eqref{20111116:eq5}, we get
\begin{equation}\label{20111116:eq7}
\begin{array}{l}
\displaystyle{
\sum_{p=0}^m
Q_p(x_0,\dots,x_n) Q_{m+n+1-p}(x_0,\dots,x_n) 
} \\
\displaystyle{
\big(J(H^{[p]},H^{[m+n+1-p]})+J(H^{[m+n+1-p]},H^{[p]})\big)=0\,,
}
\end{array}
\end{equation}
for every $(x_0,\dots,x_n)\in\mb F^{n+1}$.
To conclude the proof, we only need to show that equations \eqref{20111116:eq6} and \eqref{20111116:eq7}
imply respectively relations (i) and (ii) above.
This is a consequence of the following lemma.
\begin{lemma}\label{20111116:lem2}
\begin{enumerate}[(a)]
\item
For every $n\geq1$,
\begin{equation}\label{20111117:eq1}
Q_{n+1}^2(x_0,\dots,x_n)\not\in
\Span{}_{\mb F}\Big\{Q_p(x_0,\dots,x_n) Q_{2n+2-p}(x_0,\dots,x_n)\Big\}_{2\leq p\leq n}\,.
\end{equation}
\item
For every $n\geq1$ and $m\in\{0,\dots,n\}$,
\begin{equation}\label{20111117:eq2}
Q_m Q_{n+1}
\not\in\Span{}_{\mb F}\Big\{Q_p Q_{m+n+1-p}\Big\}_{0\leq p\leq m-1}\,.
\end{equation}
\end{enumerate}
\end{lemma}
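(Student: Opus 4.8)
The plan is to regard the $Q_p$ as the coefficients of a single square. Setting $X(t)=\sum_{i=0}^n x_it^i$, formula \eqref{20111116:eq4} says precisely that $Q_p=\sum_{i+j=p}x_ix_j$ is the coefficient of $t^p$ in $X(t)^2$; consequently a quartic monomial $x_ax_bx_cx_d$ occurs in a product $Q_pQ_{D-p}$ (for fixed total degree $D$) exactly when one of the three pairings of the multiset $\{a,b,c,d\}$ has pair-sums $\{p,D-p\}$. Both statements assert that a distinguished product is not in the span of a family of other products of the same degree $D$, so in each case it suffices to exhibit a linear functional — a coefficient extraction, or a ``lowest order'' functional — that kills all the listed competitors but not the distinguished product. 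The main tool I would use is a convex weighting: substitute $x_i=\epsilon^{c_i}$ for a strictly convex sequence $c_0<c_1<\dots$ and read off the lowest power of $\epsilon$. The minimal exponent of $Q_q$ is the infimal self-convolution $\ell(q)=\min_{i+j=q}(c_i+c_j)$, attained at the most balanced split, so the lowest exponent of $Q_pQ_{D-p}$ is $L(p)=\ell(p)+\ell(D-p)$. A short computation shows $\ell$ is convex with difference sequence $d(q)=\ell(q+1)-\ell(q)$ constant on each parity pair $\{2r,2r+1\}$ and strictly increasing across pairs; hence $L$ is convex and symmetric about $p=D/2$, with $L(p+1)-L(p)=d(p)-d(D-1-p)$.

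For part (b) this settles the claim cleanly. Here $D=m+n+1$, and $m\le n$ forces $m<D/2$, so the distinguished index $m$ and all competitors $p\le m-1$ sit strictly left of the centre. The decisive step $j=m-1$ gives $L(m)-L(m-1)=d(m-1)-d(n+1)$, and since $\lfloor(n+1)/2\rfloor>\lfloor(m-1)/2\rfloor$ we get $d(n+1)>d(m-1)$; by convexity $L(m)<L(p)$ for every $p\le m-1$. Comparing lowest orders in a hypothetical identity $Q_mQ_{n+1}=\sum_{p\le m-1}c_pQ_pQ_{D-p}$ then leaves the left side with a nonzero term at order $L(m)$ and nothing to match it on the right.

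Part (a) is the hard case, and the main obstacle is the parity of $n$. When $n$ is odd the monomial $x_{(n+1)/2}^4$ has its unique pairing summing to $(n+1,n+1)$, so it occurs in $Q_{n+1}^2$ but in none of the $Q_pQ_{2n+2-p}$ with $2\le p\le n$, and we are done. When $n$ is even the centre $p=n+1$ is the balanced index, and the parity structure of $\ell$ forces $L(n)=L(n+1)$: the weighting can no longer separate $Q_{n+1}^2$ from its neighbour $Q_nQ_{n+2}$, and indeed no single monomial separates $Q_{n+1}^2$ from the whole family (every monomial of $Q_{n+1}^2$ uses only indices in $[1,n]$, and its ``cross'' pairings cannot all avoid the range $[2,n]$). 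To get around this I would use that $Q_{n+1}^2$ is free of $x_0$ while every competitor $Q_pQ_{2n+2-p}$ ($2\le p\le n$) has $x_0$-degree exactly one. Writing $\pi$ for the projection that kills all monomials carrying an $(n+1,n+1)$-pairing — that is, all monomials of $Q_{n+1}^2$ — it then suffices to prove that the images $\pi(Q_pQ_{2n+2-p})$ are linearly independent: the relation $\sum_p c_p\,\pi(Q_pQ_{2n+2-p})=\pi(Q_{n+1}^2)=0$ would force all $c_p=0$, whence $Q_{n+1}^2=0$, a contradiction.

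Concretely I would establish this independence by a triangular cascade of coefficient equations. For $n\ge4$ the monomial $x_{k+2}x_k^3$ with $k=n/2$, all of whose pairings share the single split $(n,n+2)$, lies in $Q_nQ_{n+2}$ only and not in $Q_{n+1}^2$, so it pins down $c_n=0$ at once (the case $n=2$ is immediate, via $x_0x_2^3$). One then descends from the top, using the $x_0$-monomials $x_0x_px_nx_{n+2-p}$ (and the fixed point $p=k+1$ of the involution $p\mapsto n+2-p$) to force the remaining $c_p$ to vanish; finally a genuine monomial of $Q_{n+1}^2$, such as $x_{k-1}x_kx_{k+1}x_{k+2}$, produces the contradiction. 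The delicate point, which I expect to be the crux, is to choose these monomials so that the resulting coefficient matrix is triangular with nonzero diagonal — equivalently, to prove the independence of the $\pi(Q_pQ_{2n+2-p})$ outright. This is the exact analogue here of the Tartaglia--Pascal determinant computation used to finish Lemma \ref{20120131:lem1}.
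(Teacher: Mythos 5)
Your part (b) and your part (a) for odd $n$ (and $n=2$) are correct. The convex weighting $x_i=\epsilon^{c_i}$ does give $L(m)<L(p)$ for all $p\le m-1$, because $L$ is convex and $L(m)-L(m-1)=d(m-1)-d(n+1)<0$; since all lowest-order coefficients are positive there is no cancellation, and comparing orders $\epsilon^{L(m)}$ finishes (b). Likewise $x_{(n+1)/2}^4$ occurs in $Q_{n+1}^2$ and in no $Q_pQ_{2n+2-p}$ with $2\le p\le n$ when $n$ is odd. These are in substance the same specializations the paper uses (its substitution $x_0=\dots=x_{k-1}=0$ is the degenerate limit of your weighting), just phrased as valuations.

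The gap is in part (a) for even $n\ge4$. There you reduce the lemma to the linear independence of the projected family $\big\{\pi(Q_pQ_{2n+2-p})\big\}_{2\le p\le n}$, and you say yourself that this independence is the crux and only sketch a "triangular cascade" for it. As written this is not a proof: the independence is a strictly \emph{stronger} assertion than the lemma (which only needs $Q_{n+1}^2$ to avoid the span), it carries all the content of the even case, and no argument for it is supplied. The paper sidesteps the issue entirely: with $k=n/2\ge2$, the specialization $x_0=\dots=x_{k-1}=0$ annihilates every $Q_p$ with $2\le p\le n-1$, so the only surviving competitor is $Q_nQ_{n+2}=2x_k^3x_{k+2}+x_k^2x_{k+1}^2$, which is visibly not proportional to $Q_{n+1}^2=4x_k^2x_{k+1}^2$. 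Alternatively, the two functionals you already have close the gap with no independence claim: in a putative relation $Q_{n+1}^2=\sum_{p=2}^n c_p\,Q_pQ_{2n+2-p}$, your weighting gives $L(p)>L(n)=L(n+1)$ for $p\le n-1$, so comparing coefficients of $\epsilon^{L(n+1)}$ yields $4=c_n$; while the coefficient of the monomial $x_k^3x_{k+2}$ — which, as you note, lies in $Q_nQ_{n+2}$ only and not in $Q_{n+1}^2$ — yields $0=2c_n$. You should replace the independence scheme by one of these two finishes.
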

\begin{proof}
Note that,
$$
\begin{array}{l}
Q_p(0,\dots,0,x_k,\dots,x_n) \\
\displaystyle{
=
\sum_{\substack{i,j=k \\ (i+j=p)}}^nx_ix_j=
\left\{\begin{array}{ll}
0 & \text{ if } p<2k\,,\\
x_k^2 & \text{ if } p=2k\,,\\
\displaystyle{
2x_kx_{p-k}+\dots } & \text{ if } p>2k\,.
\end{array}\right.
}
\end{array}
$$

We prove part (a) separately in the cases when $n$ is even and odd.
If $n=2k-1$ is odd, letting $x_0=\dots=x_{k-1}=0$ we have
$Q_{n+1}=x_k^2\neq0$, and $Q_p=0$ for all $p=2,\dots,n=2k-1$.
This implies \eqref{20111117:eq1} for odd $n$.
If $n=2$, we have $Q_2=2x_0x_2+x_1^2,\,Q_3=2x_1x_2,\,Q_4=x_2^2$,
hence $Q_3^2\not\in\mb FQ_2Q_4$.
If $n=2k$ with $k\geq2$, letting $x_0=\dots=x_{k-1}=0$ we have
$Q_p=0$ for all $p=2,\dots,n-1$,
$Q_n=x_k^2$,
$Q_{n+1}=2x_kx_{k+1}$, 
$Q_{n+2}=2x_kx_{k+2}+x_{k+1}^2$.
Since $Q_{n+1}^2=4x_k^2x_{k+1}^2$ is not a multiple of $Q_nQ_{n+2}=2x_k^3x_{k+2}+x_k^2x_{k+1}^2$,
\eqref{20111117:eq1} holds for even $n$.

Similarly, we prove part (b) separately in the cases when $m$ is even and odd.
If $m=2k$ is even, letting $x_0=\dots=x_{k-1}=0$ we have
$Q_{m}Q_{n+1}=x_k^2Q_{n+1}\neq0$, and $Q_p=0$ for all $p=2,\dots,m-1$.
Hence \eqref{20111117:eq2} holds for even $m$.
For $m=1\leq n$, we have $Q_0=x_0^2,\,Q_1=2x_0x_1,\,Q_{n+1}=2x_1x_n+\dots$.
Therefore, $Q_0Q_{n+2}$ is divisible by $x_0^2$, while $Q_1Q_{n+1}=2x_0x_1(2x_1x_n+\dots)$ is not.
%
Finally, if $m=2k+1$ is odd, with $k\geq1$, letting $x_0=\dots=x_{k-1}=0$ we have
$Q_p=0$ for all $p=2,\dots,m-2$,
$Q_{m-1}=x_k^2$,
$Q_{m}=2x_kx_{k+1}$, 
$Q_{n+1}=2x_kx_{n+1-k}+2x_{k+1}x_{n-k}+\dots$.
Hence, $Q_{m-1}Q_{n+2}=x_k^2Q_{n+2}$ is divisible by $x_k^2$,
while $Q_mQ_{n+1}=4x_k^2x_{k+1}x_{n+1-k}+4x_kx_{k+1}^2x_{n-k}+\dots$ is not,
proving \eqref{20111117:eq2} for odd $m$.
\end{proof}
\end{proof}
\begin{example}\label{20120224:vic}
Let $K=\partial^3,\,H=\partial^2\circ\frac1u\partial\circ\frac1u\partial^2$. These are compatible Hamiltonian
structures (see \cite{DSKW10}).
Hence, by Theorem \ref{20111021:thm},
$$
H^{[n]}=(H K^{-1})^{n-1} H
=\partial^2\circ (\frac1u\circ \partial)^{2n}\circ\partial
\,\,,\,\,\,\,n\in\mb Z_+\,,
$$
are compatible Poisson structures.
This was proved in \cite{DSKW10} by direct verification,
and deduced from Theorem \ref{20111021:thm} in \cite{TT11}.
\end{example}


\section{Symplectic structures and Dirac structures in terms of non-local Poisson structures}
\label{sec:5}

\subsection{Simplectic structure as inverse of a non-local Poisson structure}
\label{sec:5.1}

As in the previous sections,
let $\mc V$ be an algebra of differential functions in the variables $u_1,\dots,u_\ell$,
which is a domain, and let $\mc K$ be its field of fractions.

Recall that (see e.g. \cite{BDSK09}) a (local) \emph{symplectic structure} on $\mc V$ 
is an $\ell\times\ell$ matrix differential operator
$S=\big(S_{ij}(\partial)\big)_{i,j\in I}\in\Mat_{\ell\times\ell}\mc V[\partial]$
which is skewadjoint and satisfies the
following \emph{symplectic identity}:
\begin{equation}\label{20111012:eq1}
\sum_{n\in\mb Z_+}\Big(
\frac{\partial S_{ki}(\mu)}{\partial u_j^{(n)}} \lambda^n
-\frac{\partial S_{kj}(\lambda)}{\partial u_i^{(n)}} \mu^n
+(-\lambda-\mu-\partial)^n \frac{\partial S_{ij}(\lambda)}{\partial u_k^{(n)}}
\Big)=0\,.
\end{equation}

We can write the symplectic identity \eqref{20111012:eq1} in terms 
of the \emph{Beltrami} $\lambda$-\emph{bracket}
$\langle\cdot\,_\lambda\,\cdot\rangle:\,\mc V\times\mc V\to\mc V[\lambda]$,
introduced in \cite{BDSK09}.
It is defined as the symmetric $\lambda$-bracket such that $\langle {u_i}_\lambda u_j\rangle=\delta_{ij}$,
and extended by the Master Formula \eqref{20110922:eq1}:
$$
\langle f_\lambda g\rangle
=
\sum_{\substack{ i\in I \\ m,n\in\mb Z_+}}(-1)^m
\frac{\partial g}{\partial u_i^{(n)}}(\lambda+\partial)^{m+n}\frac{\partial f}{u_i^{(m)}}\,.
$$
Then, the symplectic identity \eqref{20111012:eq1} becomes
\begin{equation}\label{20111017:eq1}
\langle {u_j}_\lambda\{{u_i}_\mu{u_k}\}_S\rangle
-\langle {u_i}_\mu\{{u_j}_\lambda{u_k}\}_S\rangle
+\langle {\{{u_j}_\lambda {u_i}\}_S}_{\lambda+\mu}{u_k}\rangle=0
\,,
\end{equation}
where, recalling \eqref{20110922:eq1}, we let $\{{u_j}_\lambda{u_i}\}_S=S_{ij}(\lambda)$.

Note that, 
if $S\in\Mat_{\ell\times\ell}\mc V(\partial)$ 
is a rational matrix pseudodifferential operator with coefficients in $\mc V$,
then, by Corollary \ref{20111007:prop}, all three terms in the LHS of equation \eqref{20111017:eq1}
lie in $\mc V_{\lambda,\mu}$. Hence, equation \eqref{20111017:eq1} still makes sense
(as an equation in $\mc V_{\lambda,\mu}$).
\begin{definition}\label{20111017:def}
A \emph{non-local symplectic structure} on $\mc V$
is a skewadjoint rational matrix pseudodifferential operator 
$S=\big(S_{ij}(\partial)\big)_{i,j\in I}\in\Mat_{\ell\times\ell}\mc V(\partial)$
with coefficients in $\mc V$,
satisftying equation \eqref{20111017:eq1} in $\mc V_{\lambda,\mu}$ for all $i,j,k\in I$.
\end{definition}
\begin{theorem}\label{20111012:thm}
Let $S\in\Mat_{\ell\times\ell}\mc V(\partial)$ be a skewadjoint
rational matrix pseudodifferential operator with coefficients in 
the algebra  of differential functions $\mc V$.
Assume that $S$ is an invertible element of the algebra $\Mat_{\ell\times\ell}\mc V(\partial)$.
Then, $S$ is a non-local symplectic structure over $\mc V$ if and only if 
$S^{-1}$ is a non-local Poisson structure over $\mc V$.
\end{theorem}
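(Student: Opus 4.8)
The plan is to prove both implications simultaneously by showing that the Jacobi expression for $H:=S^{-1}$ evaluated on generators is obtained from the symplectic expression for $S$ evaluated on generators by conjugating with invertible operators built from $S^{-1}$; since these factors are invertible, one expression vanishes for all triples $u_i,u_j,u_k$ if and only if the other does. First some reductions. Since $S$ is skewadjoint and invertible, $(S^{-1})^*=(S^*)^{-1}=(-S)^{-1}=-S^{-1}$, so $H=S^{-1}$ is again skewadjoint; thus by Theorem~\ref{20110923:prop} the Master Formula \eqref{20110922:eq1} endows $\mc V$ with a skewsymmetric admissible non-local $\lambda$-bracket $\{\cdot\,_\lambda\,\cdot\}_H$, and $H$ is a non-local Poisson structure if and only if \eqref{20110922:eq4} holds on every triple of generators. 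By the same token $S$ is a non-local symplectic structure if and only if \eqref{20111017:eq1} holds on generators. Because $S\in\Mat_{\ell\times\ell}\mc V(\partial)$, Corollary~\ref{20111007:prop} and Remark~\ref{20111104:rem} guarantee that all expressions involved lie in $\mc V_{\lambda,\mu}$ and that the $\iota$-expansions in Lemma~\ref{20111012:lem} may be dropped.

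Next I would record two bridge identities, read off directly from the Master Formula together with $\langle {u_i}_\lambda u_j\rangle=\delta_{ij}$: for every $a\in\mc V$ and $i,k\in I$,
$$\{{u_i}_\lambda a\}_H=\sum_{j\in I}\langle {u_j}_{\lambda+\partial} a\rangle_\to (S^{-1})_{ji}(\lambda),\qquad \{a_\nu u_k\}_H=\sum_{i\in I}(S^{-1})_{ki}(\nu+\partial)\,\langle a_\nu u_i\rangle.$$
These express every $H$-bracket of $a$ with a generator as a Beltrami $\lambda$-bracket composed with $S^{-1}$, and they are precisely the tool that turns Beltrami brackets of the coefficients of $S$ into the ingredients of the symplectic identity \eqref{20111017:eq1}.

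The heart of the proof is the computation of the three terms of \eqref{20110922:eq4} for $H$. Using $\{{u_j}_\mu u_k\}_H=(S^{-1})_{kj}(\mu)$ and its analogues, I would expand each double $H$-bracket by Lemma~\ref{20111012:lem}: the first two terms, of the shape $\{{u_i}_\lambda(S^{-1})_{kj}(\mu)\}_H$, are expanded by \eqref{20111012:eq2a}, while the third term $\{(S^{-1})_{ji}(\lambda)_{\lambda+\mu} u_k\}_H$ is expanded by \eqref{20111012:eq2b}. Each expansion produces a left factor $(S^{-1})_{kr}(\lambda+\mu+\partial)$, an inner $H$-bracket of the coefficients $s_{rt;n}$ of $S$, and right factors $(S^{-1})_{tj}(\mu)$, $(S^{-1})_{si}(\lambda)$; in the third term the right factor appears as $({S^*}^{-1})_{ri}(\mu)=-(S^{-1})_{ri}(\mu)$, which is exactly where skewadjointness $S^*=-S$ is used to align the conjugating factors. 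Applying the bridge identities to the inner $H$-brackets of the $s_{rt;n}$ converts them into Beltrami brackets of the coefficients of $S$, i.e.\ into the three terms of the symplectic expression \eqref{20111017:eq1}. After collecting, the full Jacobi array $J(H,H)_{ijk}(\lambda,\mu)$ equals the symplectic expression for $S$ contracted on the left by $(S^{-1})_{kr}(\lambda+\mu+\partial)$ and on the right by $(S^{-1})_{si}(\lambda)$ and $(S^{-1})_{tj}(\mu)$, with the appropriate $\lambda+\partial$, $\mu+\partial$ shifts.

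Finally, since $S^{-1}$ is invertible in $\Mat_{\ell\times\ell}\mc V(\partial)$, the three conjugating factors are invertible operators, so the Jacobi array vanishes in $\mc V_{\lambda,\mu}$ for all $i,j,k$ if and only if the symplectic expression vanishes for all $i,j,k$; this gives the equivalence in both directions. The main obstacle is the bookkeeping in this central computation: correctly tracking the three shifted arguments $\lambda+\partial$, $\mu+\partial$, $\lambda+\mu+\partial$, matching the three expanded Jacobi terms against the three symplectic terms term by term, and using skewadjointness consistently so that every adjoint factor $({S^*}^{-1})$ reduces to $-S^{-1}$. The conceptual content — that the Jacobi identity for $S^{-1}$ is the symplectic identity for $S$ conjugated by the invertible $S^{-1}$ — is simple, but this index-and-shift matching is where the real work lies.
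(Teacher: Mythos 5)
Your proposal is correct and follows essentially the same route as the paper's proof: both expand the three Jacobi terms for $H=S^{-1}$ via Lemma \ref{20111012:lem} (with the $\iota$'s dropped by Remark \ref{20111104:rem}), identify the inner brackets of the coefficients $s_{rt;n}$ with the terms of the symplectic identity (your ``bridge identities'' through the Beltrami bracket are exactly the Master Formula computation the paper performs, since \eqref{20111017:eq1} is just \eqref{20111012:eq1} rewritten), and obtain the Jacobi array as the symplectic expression conjugated by $S^{-1}$ factors as in \eqref{20111018:eq4}. The only point where the paper is more explicit is the converse direction, where instead of merely invoking invertibility of the conjugating factors it actually applies $S_{\gamma k}(x+y+\partial)$ on the left and substitutes $S_{i\alpha}(\lambda)$, $S_{j\beta}(\mu)$ on the right to recover the symplectic expression exactly.
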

\begin{proof}
Clearly, $S$ is skewadjoint if and only if $S^{-1}$ is skewadjoint.
Hence, recalling the Definition \ref{20111007:def} of non-local Poisson structure,
we only need to show that equation \eqref{20111012:eq1} in $\mc V_{\lambda,\mu}$ 
is equivalent to the Jacobi identity \eqref{20110922:eq4}, again in $\mc V_{\lambda,\mu}$, 
for $H=S^{-1}$.
By equation \eqref{20111012:eq2a}, Remark \ref{20111104:rem},
and the Master Formula \eqref{20110922:eq1}, we have,
letting $S_{ij}(\partial)=\sum_{p=-\infty}^Ns_{ij;p}\partial^p$,
\begin{equation}\label{20111018:eq1}
\begin{array}{l}
\displaystyle{
\{{u_i}_\lambda\{{u_j}_\mu{u_k}\}_H\}_H
=
\big\{{u_i}_\lambda (S^{-1})_{kj}(\mu)\big\}_{S^{-1}}
=
}\\
\displaystyle{
-\sum_{r,t=1}^\ell\sum_{p=-\infty}^N
(S^{-1})_{kr}(\lambda+\mu+\partial) 
\{{u_i}_\lambda s_{rt;p}\}_{S^{-1}} (\mu+\partial)^p (S^{-1})_{tj}(\mu)
}\\
\displaystyle{
=
-\sum_{r,s,t\in I,\,n\in\mb Z_+}
(S^{-1})_{kr}(\lambda+\mu+\partial) 
\Big(
(\lambda+\partial)^n
(S^{-1})_{si}(\lambda)
\Big)
}\\
\displaystyle{
\,\,\,\,\,\,\,\,\,\,\,\,\,\,\,\,\,\,\,\,\,\,\,\,\,\,\,\,\,\,\,\,\,\,\,\,\,\,\,\,\,\,\,\,\,\,\,\,\,\,\,\,\,\,
\,\,\,\,\,\,\,\,\,\,\,\,\,\,\,\,\,\,\,\,\,\,\,\,\,\,\,\,\,\,\,\,\,\,\,\,\,\,\,\,\,\,\,\,\,\,\,\,\,\,\,\,\,\,
\Big(
\frac{\partial S_{rt}(\mu+\partial)}{\partial u_s^{(n)}}
(S^{-1})_{tj}(\mu)
\Big)
\,.
}
\end{array}
\end{equation}
Exchanging $i$ with $j$ and $\lambda$ with $\mu$, we get
\begin{equation}\label{20111018:eq2}
\begin{array}{c}
\displaystyle{
\{{u_j}_\mu\{{u_i}_\lambda{u_k}\}_H\}_H
=
-\sum_{r,s,t\in I,\,n\in\mb Z_+}
(S^{-1})_{kr}(\lambda+\mu+\partial) 
}\\
\displaystyle{
\Big(
(\mu+\partial)^n
(S^{-1})_{tj}(\mu)
\Big)
\Big(
\frac{\partial S_{rs}(\lambda+\partial)}{\partial u_t^{(n)}}
(S^{-1})_{si}(\lambda)
\Big)
\,.
}
\end{array}
\end{equation}
Similarly, by equation \eqref{20111012:eq2b} and Remark \ref{20111104:rem}, we have,
using the assumption that $S$ is skewadjoint,
\begin{equation}\label{20111018:eq3}
\begin{array}{l}
\displaystyle{
\{{\{{u_i}_\lambda{u_j}\}_H}_{\lambda+\mu}{u_k}\}_H
=
\big\{(S^{-1})_{ji}(\lambda)_{\lambda+\mu}{u_k}\big\}_{S^{-1}}
}\\
\displaystyle{
=
\sum_{s,t=1}^\ell\sum_{p=-\infty}^N
{\{{s_{ts;p}}_{\lambda+\mu+\partial}{u_k}\}_{S^{-1}}}_\to
(S^{-1})_{tj}(\mu) (\lambda+\partial)^p (S^{-1})_{si}(\lambda)
}\\
\displaystyle{
=
\sum_{r,s,t\in I,\,m\in\mb Z_+}
(S^{-1})_{kr}(\lambda+\mu+\partial)
(-\lambda-\mu-\partial)^m
}\\
\displaystyle{
\,\,\,\,\,\,\,\,\,\,\,\,\,\,\,\,\,\,\,\,\,\,\,\,\,\,\,\,\,\,\,\,\,\,\,\,\,\,\,\,\,\,\,\,\,\,\,\,\,\,\,\,\,\,
\,\,\,\,\,\,\,\,\,\,\,\,\,
\Big(
(S^{-1})_{tj}(\mu)
\frac{\partial S_{ts}(\lambda+\partial)}{\partial u_r^{(m)}} (S^{-1})_{si}(\lambda)
\Big)
\,.
}
\end{array}
\end{equation}
Combining equations \eqref{20111018:eq1}, \eqref{20111018:eq2} and \eqref{20111018:eq3},
we get that the LHS of the Jacobi identity \eqref{20110922:eq4} is
\begin{equation}\label{20111018:eq4}
\begin{array}{l}
\displaystyle{
\{{u_i}_\lambda\{{u_j}_\mu {u_k}\}_H\}_H-\{{u_j}_\mu\{{u_i}_\lambda {u_k}\}_H\}_H
-\{{\{{u_i}_\lambda {u_j}\}_H}_{\lambda+\mu} {u_k}\}_H
} \\
\displaystyle{
=\sum_{r,s,t\in I,\,n\in\mb Z_+}
(S^{-1})_{kr}(\lambda+\mu+\partial) 
\Bigg(
-\frac{\partial S_{rt}(y)}{\partial u_s^{(n)}}x^n
+\frac{\partial S_{rs}(x)}{\partial u_t^{(n)}}y^n
} \\
\displaystyle{
-(-x-y-\partial)^n
\frac{\partial S_{ts}(x)}{\partial u_r^{(n)}}
\Bigg)
\Big(\Big|_{x=\lambda+\partial}(S^{-1})_{si}(\lambda)\Big)
\Big(\Big|_{y=\mu+\partial}(S^{-1})_{tj}(\mu)\Big)
\,,
}
\end{array}
\end{equation}
where we used the notation introduced in \eqref{20111018:eq5}.
Clearly, the RHS of \eqref{20111018:eq4} is zero, provided that the symplectic identity \eqref{20111012:eq1} holds.
For the opposite implication,
we have, by \eqref{20111018:eq4},
$$
\begin{array}{l}
\displaystyle{
\sum_{i,j,k\in I}
S_{\gamma k}(x+y+\partial)
\Bigg(\{{u_i}_x\{{u_j}_y {u_k}\}_H\}_H-\{{u_j}_y\{{u_i}_x {u_k}\}_H\}_H
} \\
\displaystyle{
-\{{\{{u_i}_x {u_j}\}_H}_{x+y} {u_k}\}_H
\Bigg)
\Big(\Big|_{x=\lambda+\partial} S_{i\alpha}(\lambda)\Big)
\Big(\Big|_{y=\mu+\partial}S_{j\beta}(\mu)\Big)
} \\
\displaystyle{
=
\sum_{n\in\mb Z_+}
\Bigg(
-\frac{\partial S_{\gamma\beta}(\mu)}{\partial u_\alpha^{(n)}}\lambda^n
+\frac{\partial S_{\gamma\alpha}(\lambda)}{\partial u_\beta^{(n)}}\mu^n
-(-\lambda-\mu-\partial)^n
\frac{\partial S_{\beta\alpha}(\lambda)}{\partial u_\gamma^{(n)}}
\Bigg)
\,.
}
\end{array}
$$
Hence, equation \eqref{20110922:eq4} implies equation \eqref{20111012:eq1}.
\end{proof}

\subsection{Dirac structure in terms of non-local Poisson structure}
\label{sec:5.2}

Let $\mc V$ be an algebra of differential functions, which is a domain,
and let $\mc K$ be its field of fractions.

We have the usual pairing $\mc V^{\oplus\ell}\times\mc V^\ell\to\mc V/\partial\mc V$
given by $(F|P)=\tint F\cdot P$.
This pairing is non-degenerate (see e.g. \cite[Prop.1.3(a)]{BDSK09}).
We extend it to a non-degenerate  symmetric bilinear form
\begin{equation}\label{20111020:eq3}
\langle\cdot\,|\,\cdot\rangle\,:\,\,
\big(\mc V^{\oplus\ell}\oplus\mc V^\ell\big)\times\big(\mc V^{\oplus\ell}\oplus\mc V^\ell\big)\to\mc V/\partial\mc V\,,
\end{equation}
given by
$\langle F\oplus P|G\oplus Q\rangle=\tint (F\cdot Q+G\cdot P)$.


The \emph{Courant-Dorfman product}  $\circ$ is the following product 
on the space $\mc V^{\oplus\ell}\oplus\mc V^{\ell}$:
\begin{equation}\label{20111020:eq5}
(F\oplus P)\circ(G\oplus Q)
=
\big(
D_G(\partial)P+D_P^*(\partial)G-D_F(\partial)Q+D_F^*(\partial)Q
\big)
\oplus
[P,Q]\,,
\end{equation}
where, for $P,Q\in\mc V^\ell$, we let
\begin{equation}\label{20120126:eq1}
[P,Q]
=
D_Q(\partial)P-D_P(\partial)Q\,.
\end{equation}
By definition, if $F\in\mc V^{\oplus\ell}$ is closed (cf. Section \ref{sec:4.1.a.5}),
we have $D_F(\partial)Q-D_F^*(\partial)Q=0$.
Moreover, it is straightforward to check that, for arbitrary $G\in\mc V^{\oplus\ell}$ and $P\in\mc V^\ell$,
we have
$$
D^*_G(\partial)P+D_P^*(\partial)G=\frac{\delta}{\delta u}\tint P\cdot G\,.
$$
Hence, formula \eqref{20111020:eq5}
takes a simpler form when $F$ and $G$ are closed elements of $\mc V^{\oplus\ell}$:
\begin{equation}\label{20120127:eq1}
(F\oplus P)\circ(G\oplus Q)
=
\frac{\delta}{\delta u}\Big(\int P\cdot G\Big)\oplus[P,Q]
\,.
\end{equation}

\begin{remark}\label{20111020:rem1}
All the above notions have a natural interpretation from the point of view of variational calculus.
Indeed, the space $\mc V^\ell$ is naturally identified with the Lie algebra of evolutionary vector fields $\mf g^\partial$,
and the space $\mc V^{\oplus\ell}$ is naturally identified with the space 
of variational 1-forms $\Omega^1$.
Then the contraction of variational 1-forms by evolutionary vector fields gives the inner product \eqref{20111020:eq3};
the Courant-Dorfman product corresponds to the derived bracket
$[\cdot\,,\,\cdot]_d$, where $[\cdot\,,\,\cdot]$ is the Lie superalgebra bracket 
on the space of endomorphisms of the space of all de Rham forms over $\mc V$,
and $d=\ad(\delta)$, where $\delta$ is the de Rham differential, \cite[Prop.4.2]{BDSK09}.
\end{remark}

\begin{definition}[\cite{Dor93,BDSK09}]\label{20111020:def}
A \emph{Dirac structure} is a subspace $\mc L\subset\mc V^{\oplus\ell}\oplus\mc V^{\ell}$,
which is maximal isotropic with respect to the inner product \eqref{20111020:eq3},
and which is closed under the Courant-Dorfman product \eqref{20111020:eq5}.
\end{definition}
\begin{remark}\label{20130123:rem}
If $\mc L\subset\mc V^{\oplus\ell}\oplus\mc V^\ell$ is a Dirac structure,
then the subspace 
$$
\mf g=\big\{F\oplus P\in\mc L\big|\, F \text{ is closed }\big\}\subset\mc L
$$
is a Lie algebra with respect to the Courant-Dorfman product,
and its derived Lie algebra lies in the subalgebra
$$
\mf h=\Big\{\frac{\delta f}{\delta u}\oplus P\in\mc L\Big\}\,.
$$
Indeed,
the Courant-Dorfman product \eqref{20111020:eq5} on $\mc V^{\oplus\ell}\oplus\mc V^\ell$
satisfies the left Jacobi identity, cf. \cite[Sec.4.2]{BDSK09}.
Moreover, by the isotropicity assumption on $\mc L$
we have $\tint P\cdot G=-\tint Q\cdot F$ for $F\oplus P,\,G\oplus Q\in\mc L$,
so that the Courant-Dorfman product restricted to $\mf g$,
given by formula \eqref{20120127:eq1} is also skewsymmetric.
Therefore $\mf g$ is a Lie algebra with respect to $\circ$, and,
again by formula \eqref{20120127:eq1}, 
$\mf h$ contains the derived subalgebra $\mf g\circ\mf g$.
\end{remark}
Given two $\ell\times\ell$ 
matrix differential operators $A,B\in\Mat_{\ell\times\ell}\mc V[\partial]$
consider the following subspace of $\mc V^{\oplus\ell}\oplus\mc V^{\ell}$:
\begin{equation}\label{20120109:eq1}
\mc L_{A,B}=\big\{B(\partial)X\oplus A(\partial)X\,\big|\,X\in\mc V^{\oplus\ell}\big\} \,.
\end{equation}
\begin{proposition}\label{20120103:propa}
The subspace $\mc L_{A,B}\subset\mc V^{\oplus\ell}\oplus\mc V^{\ell}$ is isotropic
with respect to the inner product \eqref{20111020:eq3}
if and only if 
\begin{equation}\label{20120107:eq1}
A^* B+B^* A=0\,.
\end{equation}
If, moreover, $B$ is non-degenerate,
then \eqref{20120107:eq1} holds if and only if
$A B^{-1}\in\Mat_{\ell\times\ell} \mc K((\partial^{-1}))$ 
is skewadjiont,
while if $A$ is non-degenerate 
then \eqref{20120107:eq1} holds if and only if
$B A^{-1}\in\Mat_{\ell\times\ell} \mc K((\partial^{-1}))$
is skewadjiont.
\end{proposition}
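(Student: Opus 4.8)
The plan is to reduce the isotropy condition to the single operator identity \eqref{20120107:eq1} by means of integration by parts, and then to dispose of the two non-degenerate cases by elementary manipulation of adjoints. The only genuinely delicate point will be the non-degeneracy argument that turns vanishing of the pairing on all test vectors into vanishing of the operator itself.

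First I would pair two arbitrary elements of $\mc L_{A,B}$, say $B(\partial)X\oplus A(\partial)X$ and $B(\partial)Y\oplus A(\partial)Y$ with $X,Y\in\mc V^{\oplus\ell}$, using the definition of the inner product \eqref{20111020:eq3}:
\[
\langle B(\partial)X\oplus A(\partial)X\,|\,B(\partial)Y\oplus A(\partial)Y\rangle
=\tint\big((B(\partial)X)\cdot(A(\partial)Y)+(B(\partial)Y)\cdot(A(\partial)X)\big).
\]
The key tool is the integration-by-parts identity $\tint(D(\partial)F)\cdot G=\tint F\cdot(D^*(\partial)G)$, valid for any matrix differential operator $D$ and $F,G\in\mc V^{\oplus\ell}$, together with the symmetry of the dot product. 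Moving $B$ off the first summand, and $A$ off the second after swapping its two factors, both terms acquire the common left factor $X$, and the pairing collapses to $\tint X\cdot\big((B^*A+A^*B)(\partial)Y\big)$.

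Thus $\mc L_{A,B}$ is isotropic precisely when $\tint X\cdot\big((A^*B+B^*A)(\partial)Y\big)=0$ for all $X,Y\in\mc V^{\oplus\ell}$. Here I would invoke the non-degeneracy of the pairing $\mc V^{\oplus\ell}\times\mc V^\ell\to\mc V/\partial\mc V$: fixing $Y$ and letting $X$ range forces $(A^*B+B^*A)(\partial)Y=0$ in $\mc V^\ell$ for every $Y$. The point that requires a short argument is the passage from ``$C(\partial)Y=0$ for all $Y$'' to ``$C=0$'' for the matrix differential operator $C=A^*B+B^*A$; this holds because $\mc V\supseteq R_\ell$, so evaluating $C$ on vectors $e_j\,u_1^{(m)}$ with $m$ larger than the differential order of all coefficients of $C$ exhibits each coefficient as the coefficient of a distinct independent variable $u_1^{(m+s)}$, whence every coefficient must vanish. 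This establishes the equivalence with \eqref{20120107:eq1}.

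For the last two assertions I would simply manipulate adjoints, using that $(XY)^*=Y^*X^*$ and $(X^{-1})^*=(X^*)^{-1}$ for matrix pseudodifferential operators. When $B$ is non-degenerate, $B^*$ is invertible as well and $AB^{-1}\in\Mat_{\ell\times\ell}\mc K((\partial^{-1}))$ is defined; then $AB^{-1}$ is skewadjoint iff $(B^*)^{-1}A^*=-AB^{-1}$, and multiplying this on the left by $B^*$ and on the right by $B$ gives exactly $A^*B=-B^*A$, the steps being reversible. The case of non-degenerate $A$ is identical: $BA^{-1}$ is skewadjoint iff $(A^*)^{-1}B^*=-BA^{-1}$, and left-multiplying by $A^*$ and right-multiplying by $A$ yields $B^*A=-A^*B$, which is again \eqref{20120107:eq1}. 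All computations are routine; the only step needing care is the non-degeneracy passage in the first part.
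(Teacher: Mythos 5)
Your proof is correct and follows essentially the same route as the paper's: pair two generic elements of $\mc L_{A,B}$, integrate by parts to collapse the pairing to $\tint X\cdot\big((A^*B+B^*A)(\partial)Y\big)$, and invoke non-degeneracy of $(F|P)=\tint F\cdot P$; the paper then dismisses the two non-degenerate cases as straightforward, which is exactly the adjoint manipulation you carry out. The only difference is that you spell out two details the paper leaves implicit (the passage from ``$C(\partial)Y=0$ for all $Y$'' to ``$C=0$'', and the explicit computation with $(B^*)^{-1}$), both of which are handled correctly.
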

\begin{proof}
For $X,Y\in\mc V^{\oplus\ell}$ we have
$$
\langle B(\partial)X\oplus A(\partial)X\,|\,B(\partial)Y\oplus A(\partial)Y\rangle
=\tint Y\cdot\big(A^*(\partial)B(\partial)+B^*(\partial)A(\partial)\big)X\,.
$$
Hence, due to non-degeeracy of the pairing $(F|P)=\tint F\cdot P$,
the space $\mc L_{A,B}$ is isotropic if and only if \eqref{20120107:eq1} holds.
The remaining statements are straightforward.
\end{proof}
\begin{example}\label{20120124:ex0}
Letting $A\in\Mat_{\ell\times\ell}\mc V$ and $B=\id_\ell\,\partial$,
condition \eqref{20120107:eq1} holds
if and only if $A$ is a symmetric matrix with entries in $\mc C\subset\mc V$ 
(the subring of constant functions).
In this case $AB^{-1}$ is a skewadjont matrix pseudodifferential operator
and 
$\mc L_{A,B}=\big\{AX\oplus\partial X\,\big|\,X\in\mc V^{\oplus\ell}\big\}$ 
is an isotropic subspace of $\mc V^{\oplus\ell}\oplus\mc V^\ell$.
It is not hard to show directly that
$\mc L_{A,B}$ is maximal isotropic if and only if 
the matrix $A$ is non-degenerate.
When $\mc V=\mc K$ is a differential field, this is a corollary of the following general result:
\end{example}
\begin{proposition}[\cite{CDSK12b}]\label{prop:max-isotrop}
Let $\mc K$ be a differential field, 
and let $H=AB^{-1}$ be a minimal fractional decomposition
of the skewadjoint rational matrix pseudodifferential operator $H\in\Mat_{\ell\times\ell}\mc K(\partial)$.
Then the subspace $\mc L_{A,B}\subset\mc K^{\oplus\ell}\oplus\mc K^\ell$
is maximal isotropic with respect to the inner product \eqref{20111020:eq3}.
\end{proposition}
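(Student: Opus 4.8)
The plan is to show that $\mc L_{A,B}$ coincides with its own orthogonal complement $\mc L_{A,B}^{\perp}$ with respect to the inner product \eqref{20111020:eq3}; this suffices for maximal isotropy, since any isotropic $\mc M\supseteq\mc L_{A,B}$ then satisfies $\mc M\subseteq\mc M^{\perp}\subseteq\mc L_{A,B}^{\perp}=\mc L_{A,B}\subseteq\mc M$, forcing $\mc M=\mc L_{A,B}$. The inclusion $\mc L_{A,B}\subseteq\mc L_{A,B}^{\perp}$ is precisely the isotropy already furnished by Proposition \ref{20120103:propa}, using that $H=AB^{-1}$ is skewadjoint with $B$ non-degenerate, so that $A^{*}B+B^{*}A=0$.

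First I would identify $\mc L_{A,B}^{\perp}$ explicitly. For $G\oplus Q\in\mc K^{\oplus\ell}\oplus\mc K^{\ell}$ and $X\in\mc K^{\oplus\ell}$, integration by parts gives $\langle G\oplus Q\,|\,B(\partial)X\oplus A(\partial)X\rangle=\tint\big(A^{*}(\partial)G+B^{*}(\partial)Q\big)\cdot X$. By non-degeneracy of the pairing $(F|P)=\tint F\cdot P$ (\cite[Prop.1.3(a)]{BDSK09}), this vanishes for all $X$ if and only if $A^{*}G+B^{*}Q=0$. Hence $\mc L_{A,B}^{\perp}=\{G\oplus Q\,|\,A^{*}(\partial)G+B^{*}(\partial)Q=0\}$, and it remains to prove the reverse inclusion $\mc L_{A,B}^{\perp}\subseteq\mc L_{A,B}$, i.e. to produce, for each such pair, a vector $Y\in\mc K^{\oplus\ell}$ with $B(\partial)Y=G$ and $A(\partial)Y=Q$.

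The main difficulty is that a non-degenerate matrix differential operator need not be injective on $\mc K^{\ell}$ (for instance $\partial$ annihilates the constants), so one cannot simply invert $B$. I would overcome this using minimality. By Proposition \ref{prop:minimal-fraction}(a), $H=AB^{-1}$ minimal means $A,B$ have no nontrivial common right factor, which over the Ore domain $\mc K[\partial]$ is equivalent to a Bezout identity $UA+VB=\id$ for some $U,V\in\Mat_{\ell\times\ell}\mc K[\partial]$ (obtained by reducing $\begin{pmatrix}A\\B\end{pmatrix}$ to Hermite form; cf. \cite{CDSK12a}); taking adjoints yields $A^{*}U^{*}+B^{*}V^{*}=\id$. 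I then assemble the $2\ell\times2\ell$ operator $\mathcal S=\begin{pmatrix}V&U\\A^{*}&B^{*}\end{pmatrix}$ and compute
\[
\mathcal S
\begin{pmatrix} B & U^{*} \\ A & V^{*} \end{pmatrix}
=
\begin{pmatrix} VB+UA & VU^{*}+UV^{*} \\ A^{*}B+B^{*}A & A^{*}U^{*}+B^{*}V^{*} \end{pmatrix}
=
\begin{pmatrix} \id & VU^{*}+UV^{*} \\ 0 & \id \end{pmatrix},
\]
where the four blocks use respectively $VB+UA=\id$, the isotropy relation $A^{*}B+B^{*}A=0$, and $A^{*}U^{*}+B^{*}V^{*}=\id$. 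The right-hand side is unipotent, hence invertible, so $\mathcal S$ is right-invertible in $\Mat_{2\ell\times2\ell}\mc K[\partial]$; since this matrix ring embeds in $\Mat_{2\ell\times2\ell}\mc K((\partial^{-1}))$, a matrix ring over a skewfield and therefore Dedekind-finite, $\mathcal S$ is in fact two-sided invertible with inverse again in $\Mat_{2\ell\times2\ell}\mc K[\partial]$, and so is injective on $\mc K^{2\ell}$.

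To finish, I would set $Y:=V(\partial)G+U(\partial)Q\in\mc K^{\oplus\ell}$ and let $(G',Q'):=\big(G-B(\partial)Y,\,Q-A(\partial)Y\big)$. Writing $L(g,q)=Vg+Uq$, the Bezout identity gives $L\big(B(\partial)Z,A(\partial)Z\big)=(VB+UA)Z=Z$, so $VG'+UQ'=L(G,Q)-Y=0$; moreover $(G',Q')$ differs from $(G,Q)\in\mc L_{A,B}^{\perp}$ by the element $\big(B(\partial)Y,A(\partial)Y\big)\in\mc L_{A,B}\subseteq\mc L_{A,B}^{\perp}$, whence $A^{*}G'+B^{*}Q'=0$. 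Thus $\mathcal S\begin{pmatrix}G'\\Q'\end{pmatrix}=0$, and injectivity of $\mathcal S$ forces $(G',Q')=0$. Therefore $G=B(\partial)Y$ and $Q=A(\partial)Y$, i.e. $G\oplus Q\in\mc L_{A,B}$, completing the inclusion $\mc L_{A,B}^{\perp}\subseteq\mc L_{A,B}$ and hence the proof. I expect the only genuinely nontrivial input to be the passage from minimality to the Bezout identity, which is where the hypothesis that $AB^{-1}$ is a minimal fractional decomposition is essential.
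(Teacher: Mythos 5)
Your proof is correct. Note first that the paper itself contains no argument for this proposition: it is imported from \cite{CDSK12b}, so there is no in-paper proof to compare against line by line; what follows compares your route with the one underlying the cited reference.

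The reference encodes minimality through the kernel condition of Proposition \ref{prop:minimal-fraction}(a), $\ker A\cap\ker B=0$ in differential field extensions, and the natural proof along those lines solves $B(\partial)X=G$ in a linearly closed extension, uses $A^*B=-B^*A$ to conclude $Q-A(\partial)X\in\ker B^*$, corrects $X$ via the dimension count $\dim_{\mc C}\ker B=\dim_{\mc C}\ker B^*=\deg_\xi\det B$ together with injectivity of $A$ on $\ker B$, and then descends to $\mc K$. You instead encode minimality as right-coprimality of $A$ and $B$ in $\Mat_{\ell\times\ell}\mc K[\partial]$, i.e.\ a Bezout identity $UA+VB=\id$, after which everything reduces to a clean block computation showing that $\mathcal S=\left(\begin{smallmatrix}V&U\\A^*&B^*\end{smallmatrix}\right)$ is invertible over $\mc K[\partial]$, hence injective on $\mc K^{2\ell}$; the explicit preimage $Y=V(\partial)G+U(\partial)Q$ then finishes the argument. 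This buys a proof with no field extensions and no constants-dimension counts; the price is that the equivalence between the kernel characterization of minimality and the coprimality/Bezout characterization is itself the nontrivial content of \cite{CDSK12b}, \cite{CDSK12c}, so the hard input is relocated rather than avoided. Two details are worth making explicit if this were written up: the reduction of the column block $\binom{A}{B}$ to Hermite form must use row operations over $\mc K[\partial]$ (the Euclidean algorithm in the Ore domain), not over the skewfield as in Section \ref{sec:2.3}; and the left-inverse identity that yields injectivity of $\mathcal S$ on $\mc K^{2\ell}$ does follow from Dedekind-finiteness of $\Mat_{2\ell\times2\ell}\mc K((\partial^{-1}))$ as you say, precisely because the a priori one-sided inverse you construct already has entries in $\mc K[\partial]$. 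The remaining steps --- the identification of $\mc L_{A,B}^{\perp}$ from non-degeneracy of the pairing, the block computation using $A^*B+B^*A=0$ and $(UA+VB)^*=\id$, and the final correction $(G',Q')=(G-BY,\,Q-AY)$ --- all check out.
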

\begin{proposition}\label{20120103:propb}
Suppose that $A,B\in\Mat_{\ell\times\ell}\mc V[\partial]$
satisfy equation \eqref{20120107:eq1}.
Then the following conditions are equivalent:
\begin{enumerate}[(i)]
\item
$\langle X\circ Y,Z\rangle=0$ for all $X,Y,Z\in\mc L_{A,B}$.
\item 
for every $F,G\in\mc V^{\ell}$ one has:
\begin{equation}\label{20120103:eq1}
\begin{array}{l}
A^*(\partial)D_{B(\partial)G}(\partial)A(\partial)F
+A^*(\partial)D_{A(\partial)F}^*(\partial)B(\partial)G \\
-A^*(\partial)D_{B(\partial)F}(\partial)A(\partial)G
+A^*(\partial)D_{B(\partial)F}^*(\partial)A(\partial)G \\
+B^*(\partial)D_{A(\partial)G}(\partial)A(\partial)F
-B^*(\partial)D_{A(\partial)F}(\partial)A(\partial)G\,=\,0\,.
\end{array}
\end{equation}
\item
for every $i,j,k\in I$, one has in the space $\mc V[\lambda,\mu]$:
\end{enumerate}
\begin{equation}\label{20120103:eq2}
\begin{array}{l}
\displaystyle{
\sum_{\substack{s,t\in I \\ n\in\mb Z_+}}
\!\Bigg(\!
A^*_{ks}(\lambda+\mu+\partial) \bigg(
\frac{\partial B_{sj}(\mu)}{\partial u_t^{(n)}} (\lambda+\partial)^nA_{ti}(\lambda)
-
\frac{\partial B_{si}(\lambda)}{\partial u_t^{(n)}} (\mu+\partial)^nA_{tj}(\mu)
\!\bigg)
} \\
\displaystyle{
+B^*_{ks}(\lambda+\mu+\partial) \bigg(
\frac{\partial A_{sj}(\mu)}{\partial u_t^{(n)}} 
(\lambda+\partial)^nA_{ti}(\lambda)
-
\frac{\partial A_{si}(\lambda)}{\partial u_t^{(n)}} 
(\mu+\partial)^nA_{tj}(\mu)
\bigg)
} \\
\displaystyle{
+A^*_{ks}(\lambda+\mu+\partial) 
(-\lambda-\mu-\partial)^n
\bigg(
\!
\frac{\partial A_{ti}(\lambda)}{\partial u_s^{(n)}} 
B_{tj}(\mu)
+
\frac{\partial B_{ti}(\lambda)}{\partial u_s^{(n)}} 
A_{tj}(\mu)
\bigg)
\!\Bigg)
=\,0\,.
}
\end{array}
\end{equation}
\end{proposition}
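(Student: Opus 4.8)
The plan is to prove the two equivalences (i) $\Leftrightarrow$ (ii) and (ii) $\Leftrightarrow$ (iii) separately. The first is a formal computation with the Courant--Dorfman product and the non-degeneracy of the pairing; the second is the translation of an identity of differential operators, written through Frechet derivatives, into the corresponding symbol identity in $\mc V[\lambda,\mu]$, and this is where the real work lies.

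For (i) $\Leftrightarrow$ (ii) I would write a general triple in $\mc L_{A,B}$ as $X=B(\partial)F\oplus A(\partial)F$, $Y=B(\partial)G\oplus A(\partial)G$ and $Z=B(\partial)K\oplus A(\partial)K$, with $F,G,K\in\mc V^{\oplus\ell}$, and expand $X\circ Y$ by the Courant--Dorfman formula \eqref{20111020:eq5}; its second component is $[A(\partial)F,A(\partial)G]=D_{A(\partial)G}(\partial)A(\partial)F-D_{A(\partial)F}(\partial)A(\partial)G$ by \eqref{20120126:eq1}. Pairing with $Z$ through the explicit form \eqref{20111020:eq3} of $\langle\cdot\,|\,\cdot\rangle$ and integrating by parts to move $A$ and $B$ off of $K$ (using $\tint v\cdot A(\partial)K=\tint (A^*(\partial)v)\cdot K$ and $\tint (B(\partial)K)\cdot w=\tint K\cdot(B^*(\partial)w)$) rewrites $\langle X\circ Y\,|\,Z\rangle=\tint K\cdot E(F,G)$, where $E(F,G)$ is exactly the left-hand side of \eqref{20120103:eq1}. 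Since $K$ ranges over all finitely supported vectors and the pairing $(F|P)=\tint F\cdot P$ is non-degenerate, condition (i) is equivalent to $E(F,G)=0$ for all $F,G$, i.e.\ to (ii).

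For (ii) $\Leftrightarrow$ (iii) I would specialize \eqref{20120103:eq1} to single-slot test vectors $F=f\,e_i$ and $G=g\,e_j$ and read off its $k$-th component. Each factor $D_{A(\partial)F}(\partial)$, $D_{B(\partial)G}(\partial)$, $D_{A(\partial)F}^*(\partial)$, $\dots$ splits, by the product rule, into a piece in which $\partial/\partial u^{(n)}$ differentiates the coefficients of the matrix operator (producing the symbols $\partial A_{st}(\cdot)/\partial u^{(n)}$ and $\partial B_{st}(\cdot)/\partial u^{(n)}$) and a piece in which it differentiates the test functions $f,g$. In the coefficient piece the powers of $\partial$ that then act on $f$ and $g$ are recorded as the monomials $\lambda^n$ and $\mu^n$ exactly as in the Master Formula \eqref{20110922:eq1}; the $(-\partial)^n$ appearing in the adjoints \eqref{20111020:eq2} becomes $(-\lambda-\mu-\partial)^n$. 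Term by term this matches the six coefficient-derivative contributions of \eqref{20120103:eq2}: the two untransposed $A^*(\cdots)$ terms give the first line, the two untransposed $B^*(\cdots)$ terms give the second line, and the two transposed terms (carrying $D^*$) give the third, $(-\lambda-\mu-\partial)^n$-weighted, group. Separating the coefficients of the distinct monomials $f^{(p)}g^{(q)}$ to pass from ``for all $f,g$'' to an identity of symbols is justified because $\mc V$ is large enough in the sense of the non-degeneracy conditions recalled in Remark~\ref{20111019:def}.

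The hard part is the cancellation, in the above specialization, of the pieces in which $\partial/\partial u^{(n)}$ differentiates the test functions $f,g$: these must vanish so that only the coefficient pieces, assembled into \eqref{20120103:eq2}, survive. I expect this cancellation to be forced by the isotropy hypothesis \eqref{20120107:eq1}. The mechanism is already visible in the constant-coefficient case, where $D_{B(\partial)g}(\partial)=B(\partial)D_g(\partial)$ and the six terms of \eqref{20120103:eq1} pair off as $(1,5)$, $(3,6)$, $(2,4)$, each pair cancelling via $A^*B=-B^*A$; in the general case the test-function pieces reorganize, after moving $\partial/\partial u^{(n)}$ past powers of $\partial$ by the commutation rule \eqref{eq:0.4}, into expressions of the shape $(A^*B+B^*A)(\partial)(\cdots)$, which vanish by \eqref{20120107:eq1}. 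Keeping this bookkeeping straight --- the commutator corrections from \eqref{eq:0.4} and the precise matching of every surviving symbol monomial --- is the technical heart of the argument.
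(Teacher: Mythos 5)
Your proposal is correct and follows essentially the same route as the paper: the same parametrization $X=B(\partial)F\oplus A(\partial)F$ etc.\ together with non-degeneracy of the pairing for (i)$\Leftrightarrow$(ii), and for (ii)$\Leftrightarrow$(iii) the same splitting of each Frechet derivative into a coefficient piece (yielding the symbols of \eqref{20120103:eq2}) and a test-function piece, with the latter cancelling in exactly the pairs $(1,5)$, $(2,4)$, $(3,6)$ via the chain-rule identities $D_{B(\partial)G}=D_B^{\mathrm{coeff}}\cdot G+B(\partial)D_G$, $D^*_{A(\partial)F}=D^{*,\mathrm{coeff}}_A\cdot F+D^*_F(\partial)A^*(\partial)$ and the hypothesis $A^*B+B^*A=0$. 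The only cosmetic difference is that the paper passes to the symbol identity by substituting $\lambda$, $\mu$ for $\partial$ acting on $F_i$, $G_j$ rather than by isolating coefficients of $f^{(p)}g^{(q)}$, which amounts to the same thing.
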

\begin{proof}
Letting 
$X=B(\partial)F\oplus A(\partial)F,\,Y=B(\partial)G\oplus A(\partial)G,\,Z=B(\partial)E\oplus A(\partial)E$,
condition (i) reads
$$
\begin{array}{l}
\tint
\big(A(\partial)E\big)
\cdot \big(D_{B(\partial)G}(\partial)A(\partial)F
+D_{A(\partial)F}^*(\partial)B(\partial)G
-D_{B(\partial)F}(\partial) A(\partial)G 
\\
+D_{B(\partial)F}^*(\partial)A(\partial)G\big)
\!+\!
\big(B(\partial)E\big)
\!\!\cdot\!\!
\big(D_{A(\partial)G}(\partial)A(\partial)F
\!-\!
D_{A(\partial)F}(\partial)A(\partial)G\big)
\!=\!0
.
\end{array}
$$
Since the above equation holds for every $E\in\mc V^{\oplus\ell}$
it reduces, integrating by parts, to equation \eqref{20120103:eq1}.
For this we use the non-degeneracy of the pairing \eqref{20111020:eq3}.
This proves that conditions (i) and (ii) are equivalent.

We next prove that conditions (ii) and (iii) are equivalent,
provided that \eqref{20120107:eq1} holds.
For $\alpha=1,\dots,6$, let \eqref{20120103:eq1}$_\alpha$ 
be the $k$-entry of the $\alpha$-th term of the LHS of \eqref{20120103:eq1}:
for example
\eqref{20120103:eq1}$_1=\big(A^*(\partial)D_{B(\partial)G}(\partial)A(\partial)F\big)_k$.
We have, by the definition of the Frechet derivative and some algebraic manipulations
(similar to those used in the proof of \cite[Prop.1.16]{BDSK09}),
$$
\begin{array}{rcl}
\eqref{20120103:eq1}_1&=&
\displaystyle{
\sum_{i,j,s,t\in I}\sum_{n\in\mb Z_+}
\bigg(
A^*_{ks} (\partial)\Big(\frac{\partial B_{sj}(\partial)}{\partial u_t^{(n)}} G_j\Big) \partial^n A_{ti}(\partial)F_i
}\\&&
\displaystyle{
+
A^*_{ks}(\partial)B_{sj}(\partial) \frac{\partial G_j}{\partial u_t^{(n)}} \partial^n A_{ti}(\partial)F_i
\bigg)
\,,} \\
\eqref{20120103:eq1}_2&=&
\displaystyle{
\sum_{i,j,s,t\in I}\sum_{n\in\mb Z_+}
\bigg(
A^*_{ks}(\partial) (-\partial)^n 
\Big(\frac{\partial A_{ti}(\partial)}{\partial u_s^{(n)}} F_i\Big)  B_{tj}(\partial)G_j
}\\&&
\displaystyle{
+
A^*_{ks}(\partial) (-\partial)^n
\frac{\partial F_i}{\partial u_s^{(n)}} A^*_{it}(\partial) B_{tj}(\partial)G_j
\bigg)
\,,} \\
\eqref{20120103:eq1}_3&=&
\displaystyle{
-\sum_{i,j,s,t\in I}\sum_{n\in\mb Z_+}
\bigg(
A^*_{ks} (\partial)\Big(\frac{\partial B_{si}(\partial)}{\partial u_t^{(n)}} F_i\Big) \partial^n A_{tj}(\partial)G_j
}\\&&
\displaystyle{
+
A^*_{ks}(\partial)B_{si}(\partial) \frac{\partial F_i}{\partial u_t^{(n)}} \partial^n A_{tj}(\partial)G_j
\bigg)
\,,} 
\end{array}
$$
$$
\begin{array}{rcl}
\eqref{20120103:eq1}_4&=&
\displaystyle{
\sum_{i,j,s,t\in I}\sum_{n\in\mb Z_+}
\bigg(
A^*_{ks}(\partial) (-\partial)^n 
\Big(\frac{\partial B_{ti}(\partial)}{\partial u_s^{(n)}} F_i\Big)  A_{tj}(\partial)G_j
}\\&&
\displaystyle{
+
A^*_{ks}(\partial) (-\partial)^n
\frac{\partial F_i}{\partial u_s^{(n)}} B^*_{it}(\partial) A_{tj}(\partial)G_j
\bigg)
\,,} \\
\eqref{20120103:eq1}_5&=&
\displaystyle{
\sum_{i,j,s,t\in I}\sum_{n\in\mb Z_+}
\bigg(
B^*_{ks} (\partial)\Big(\frac{\partial A_{sj}(\partial)}{\partial u_t^{(n)}} G_j\Big) \partial^n A_{ti}(\partial)F_i
}\\&&
\displaystyle{
+
B^*_{ks}(\partial)A_{sj}(\partial) \frac{\partial G_j}{\partial u_t^{(n)}} \partial^n A_{ti}(\partial)F_i
\bigg)
\,,}
\end{array}
$$
$$
\begin{array}{rcl}
\eqref{20120103:eq1}_6&=&
\displaystyle{
-\sum_{i,j,s,t\in I}\sum_{n\in\mb Z_+}
\bigg(
B^*_{ks} (\partial)\Big(\frac{\partial A_{si}(\partial)}{\partial u_t^{(n)}} F_i\Big) \partial^n A_{tj}(\partial)G_j
}\\&&
\displaystyle{
+
B^*_{ks}(\partial)A_{si}(\partial) \frac{\partial F_i}{\partial u_t^{(n)}} \partial^n A_{tj}(\partial)G_j
\bigg)
\,.}
\end{array}
$$
Combining the second terms in the RHS 
of \eqref{20120103:eq1}$_1$ and \eqref{20120103:eq1}$_5$
we get zero, thanks to equation \eqref{20120107:eq1}.
Similarly, we get zero if we combine the second terms in the RHS 
of \eqref{20120103:eq1}$_2$ and \eqref{20120103:eq1}$_4$,
and if we combine the second terms in the RHS 
of \eqref{20120103:eq1}$_3$ and \eqref{20120103:eq1}$_6$.
Equation \eqref{20120103:eq2} thus follows from equation \eqref{20120103:eq1}
once we replace $\partial$ acting on $F_i$ by $\lambda$,
and $\partial$ acting on $G_j$ by $\mu$.
\end{proof}
\begin{remark}
It follows by Definition \ref{20111020:def} and Proposition \ref{20120103:propb}
that a Dirac structure is a maximal isotropic 
subspace $\mc L$ of $\mc V^{\oplus\ell}\oplus\mc V^\ell$
satisfying one of the equivalent conditions (i)--(iii) above.
\end{remark}
\begin{proposition}\label{20120103:propc}
Suppose that $A,B\in\Mat_{\ell\times\ell}\mc V[\partial]$
satisfy equation \eqref{20120107:eq1}
and assume that $B$ is non-degenerate.
Suppose, moreover, that
the (skewadjoint) rational matrix pseudodifferential operator 
$H=A B^{-1}$ has coefficients in $\mc V$, i.e. $H\in\Mat_{\ell\times\ell}\mc V(\partial)$.
Consider the corresponding non-local $\lambda$-bracket $\{\cdot\,_\lambda\,\cdot\}_H$
given by the Master Formula \eqref{20110922:eq1}.
Then the Jacobi identity \eqref{20110922:eq4} on $\{\cdot\,_\lambda\,\cdot\}_H$
is equivalent to equation \eqref{20120103:eq2} on the entries of matrices 
$A$ and $B$.
\end{proposition}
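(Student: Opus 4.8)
The plan is to compute the Jacobiator $J(H,H)_{ijk}(\lambda,\mu)$ of \eqref{20110922:eq4} directly in terms of the entries of $A$ and $B$, following closely the computation in the proof of Theorem~\ref{20111012:thm}, of which this is the generalization from the case $A=\id_\ell$, $B=S$ to arbitrary $A$. By Corollary~\ref{20111007:prop} every term below lies in $\mc V_{\lambda,\mu}$, so all the identities make sense there. First I would write, using the symbol-product formula \eqref{20111003:eq2}, $\{{u_j}_\mu{u_k}\}_H=H_{kj}(\mu)=\sum_r A_{kr}(\mu+\partial)(B^{-1})_{rj}(\mu)$, and compute $\{{u_i}_\lambda H_{kj}(\mu)\}_H$ by the left Leibniz rule. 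This splits into a contribution in which $\{{u_i}_\lambda\cdot\}_H$ falls on the coefficients of $A_{kr}$, evaluated by the Master Formula \eqref{20110922:eq1} and producing the terms carrying $\partial A/\partial u_s^{(n)}$; and the contribution $\sum_r A_{kr}(\lambda+\mu+\partial)\{{u_i}_\lambda(B^{-1})_{rj}(\mu)\}_H$, where the inner bracket is given in closed form by equation \eqref{20111012:eq2c} of Corollary~\ref{20111014:cor} applied with $S=B$. The key simplification here is that $\sum_r A_{kr}(\lambda+\mu+\partial)(B^{-1})_{rs}(\lambda+\mu+\partial)=H_{ks}(\lambda+\mu+\partial)$ by associativity of the operator product, so this second contribution produces the terms carrying $\partial B/\partial u_t^{(n)}$, with external factor $H_{ks}(\lambda+\mu+\partial)$ and internal factors $H(\lambda)$ and $(B^{-1})(\mu)$ in the $i$- and $j$-slots.

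The term $\{{u_j}_\mu\{{u_i}_\lambda{u_k}\}_H\}_H$ is obtained from the first by exchanging $i\leftrightarrow j$ and $\lambda\leftrightarrow\mu$, and the third term $\{{\{{u_i}_\lambda{u_j}\}_H}_{\lambda+\mu}{u_k}\}_H$ is computed analogously, now using the right Leibniz rule together with equation \eqref{20111012:eq2b}; as in Theorem~\ref{20111012:thm}, the skewadjointness of $A$ and $B$, i.e. equation \eqref{20120107:eq1}, is used here to cast the third term into a comparable form and to produce the $(-\lambda-\mu-\partial)^n$ contributions. Assembling the three pieces, I expect $J(H,H)_{ijk}(\lambda,\mu)$ to appear as the left-hand side of \eqref{20120103:eq2} with the free indices and variables contracted by matrix pseudodifferential operators built from $H$ and $B^{-1}$ in each of the three slots, exactly as in the factored expression \eqref{20111018:eq4}.

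To pass between this factored form and \eqref{20120103:eq2} itself, I would contract $J(H,H)_{ijk}$ in the three slots with entries of $B^*$ in the $k$-slot (through $\lambda+\mu+\partial$) and of $B$ in the $i$- and $j$-slots (through evaluation at $\lambda$ and $\mu$, in the notation of \eqref{20111018:eq5}), summing over $i,j,k$, precisely as in the last display of the proof of Theorem~\ref{20111012:thm}. The identities $HB=A$ and $B^*H=-A^*$ (the latter being a restatement of \eqref{20120107:eq1}) convert the external factor $H$ into $-A^*$ and the internal factor $H(\lambda)$ into $A(\lambda)$, while composing $(B^{-1})(\mu)$ with $B(\mu)$ yields a Kronecker delta that transfers the free index onto the differentiated $B$, thereby turning the contracted Jacobiator into the left-hand side of \eqref{20120103:eq2} (the $\partial B$-terms acquiring the external factor $A^*$, the $\partial A$-terms the external factor $B^*$, consistently with \eqref{20120107:eq1}). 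Since $B$ is non-degenerate, $B$ and $B^*$ are invertible in $\Mat_{\ell\times\ell}\mc K((\partial^{-1}))$, so this contraction is a bijection; hence $J(H,H)_{ijk}=0$ for all $i,j,k$ if and only if \eqref{20120103:eq2} holds for all $i,j,k$, which is the assertion. One may alternatively recognize the contracted identity as condition (iii), equation \eqref{20120103:eq2}, of Proposition~\ref{20120103:propb}.

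The hard part will be the bookkeeping in the middle step: one must track in which of the variables $\lambda$, $\mu$, $\lambda+\mu$ each factor is expanded, move the operators $A_{kr}(\lambda+\mu+\partial)$ through the brackets correctly by sesquilinearity (turning $\partial$ into $\lambda+\mu+\partial$), and, above all, verify the cancellation of the spurious terms — those in which $\{{u_i}_\lambda\cdot\}_H$ differentiates a product $A^*B$ or $B^*A$ of coefficients rather than a single factor — which must vanish by \eqref{20120107:eq1}, in the same way that the second terms in the expansions of \eqref{20120103:eq1} cancel in pairs in the proof of Proposition~\ref{20120103:propb}. Once these cancellations are arranged so as to be manifest, the equivalence of \eqref{20110922:eq4} with \eqref{20120103:eq2} reduces to the non-degeneracy of $B$.
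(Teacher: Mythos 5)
Your proposal is correct and follows essentially the same route as the paper's own proof: expand the three terms of the Jacobiator via the left and right Leibniz rules together with the formulas of Corollary \ref{20111014:cor} for $\lambda$-brackets with $B^{-1}$, contract with $B^*_{k'k}(\lambda+\mu+\partial)$ in the $k$-slot and with $B_{ii'}(\lambda)$, $B_{jj'}(\mu)$ in the other two slots, use the skewadjointness relation \eqref{20120107:eq1} to produce the $A^*$/$B^*$ external factors of \eqref{20120103:eq2}, and conclude by non-degeneracy of $B$. The only cosmetic difference is that you cite \eqref{20111012:eq2b} where the paper uses its differential-operator specialization \eqref{20111012:eq2d}, which is immaterial by Remark \ref{20111104:rem}.
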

\begin{proof}
Letting $A_{st}(\partial)=\sum_{m=0}^Ma_{st;m}\partial^m$
and $B_{st}(\partial)=\sum_{n=0}^M b_{ij;n}\partial^n$.
By formula \eqref{20110923:eq1} and the left Leibniz rule \eqref{20110921:eq3}
we have,
$$
\begin{array}{c}
\displaystyle{
\{{u_i}_\lambda\{{u_j}_\mu{u_k}\}_H\}_H
=
\sum_{r\in I}
\{{u_i}_\lambda A_{kr}(\mu+\partial)B^{-1}_{rj}(\mu)\}_H
} \\
\displaystyle{
=
\sum_{r\in I}\sum_{m=0}^M
\{{u_i}_\lambda a_{kr;m}\}_H (\mu+\partial)^m B^{-1}_{rj}(\mu)
} \\
\displaystyle{
+\sum_{r\in I}
A_{kr}(\lambda+\mu+\partial) \{{u_i}_\lambda B^{-1}_{rj}(\mu)\}_H 
\,.} 
\end{array}
$$
By equation \eqref{20111012:eq2c} we have
$$
\begin{array}{l}
\displaystyle{
\{{u_i}_\lambda B^{-1}_{rj}(\mu)\}_H
} \\
\displaystyle{
=
-\sum_{s,t\in I} \sum_{m=0}^M
(B^{-1})_{rs}(\lambda+\mu+\partial) \{{u_i}_\lambda b_{st;m}\}_H (\mu+\partial)^m (B^{-1})_{tj}(\mu)\,.
}
\end{array}
$$
Combining the above two equations we then get, using the Master Formula \eqref{20110922:eq1},
$$
\begin{array}{l}
\displaystyle{
\{{u_i}_\lambda\{{u_j}_\mu{u_k}\}_H\}_H
} \\
\displaystyle{
=
\sum_{r,s,t\in I}\sum_{m=0}^M\sum_{n\in\mb Z_+}
\frac{\partial a_{kr;m}}{\partial u_s^{(n)}} 
\Big(
(\mu+\partial)^m B^{-1}_{rj}(\mu)
\Big)
\Big(
(\lambda+\partial)^n
A_{st}(\lambda+\partial)
B^{-1}_{ti}(\lambda)
\Big)
} \\
\displaystyle{
-\sum_{p,q,r,s,t\in I} \sum_{m=0}^M \sum_{n\in\mb Z_+}
A_{kr}(\lambda+\mu+\partial) 
(B^{-1})_{rs}(\lambda+\mu+\partial) 
} \\
\displaystyle{
\,\,\,\,\,\,\,\,\,\,\,\,\,\,\,\,\,\,\,\,\,\,\,\,\,\,\,\,\,\,\,\,\,
\frac{\partial b_{st;m}}{\partial u_p^{(n)}}
\Big((\mu+\partial)^m (B^{-1})_{tj}(\mu)\Big)
\Big((\lambda+\partial)^n
A_{pq}(\lambda+\partial) B^{-1}_{qi}(\lambda)\Big)
} \\
\displaystyle{
=
\sum_{r,s,t\in I}\sum_{n\in\mb Z_+}
\Big(
\frac{\partial A_{kr}(\mu+\partial)}{\partial u_s^{(n)}} 
B^{-1}_{rj}(\mu)
\Big)
\Big(
(\lambda+\partial)^n
A_{st}(\lambda+\partial)
B^{-1}_{ti}(\lambda)
\Big)
} \\
\displaystyle{
-\sum_{p,q,r,s,t\in I} \sum_{n\in\mb Z_+}
A_{kr}(\lambda+\mu+\partial) 
(B^{-1})_{rs}(\lambda+\mu+\partial) 
} \\
\displaystyle{
\,\,\,\,\,\,\,\,\,\,\,\,\,\,\,\,\,\,\,\,\,\,\,\,\,\,\,\,\,\,\,\,\,
\Big(
\frac{\partial B_{st}(\mu+\partial)}{\partial u_p^{(n)}}
(B^{-1})_{tj}(\mu)\Big)
\Big((\lambda+\partial)^n
A_{pq}(\lambda+\partial) B^{-1}_{qi}(\lambda)\Big)
\,.}
\end{array}
$$
Next, we apply $B^*_{k'k}(\lambda+\mu+\partial)$ to both sides of the above equation (on the left),
replace $\lambda$ by $\lambda+\partial$ acting on $B_{ii'}(\lambda)$,
replace $\mu$ by $\mu+\partial$ acting on $B_{jj'}(\mu)$,
and sum over $i,j,k\in I$.
As a result we get, using the assumption \eqref{20120107:eq1}
(see notation \eqref{20111018:eq5}),
\begin{equation}\label{20120109:eq2}
\begin{array}{c}
\displaystyle{
\sum_{i,j,k\in I}B^*_{k'k}(\lambda+\mu+\partial)
\{{u_i}_x\{{u_j}_y{u_k}\}_H\}_H
\Big(\Big|_{x=\lambda+\partial}B_{ii'}(\lambda)\Big)
\Big(\Big|_{y=\mu+\partial}B_{jj'}(\mu)\Big)
} \\
\displaystyle{
=
\sum_{k,i\in I}\sum_{n\in\mb Z_+}
\bigg(
B^*_{k'k}(\lambda+\mu+\partial)
\frac{\partial A_{kj'}(\mu)}{\partial u_i^{(n)}} 
(\lambda+\partial)^n
A_{ii'}(\lambda)
} \\
\displaystyle{
+
A^*_{k'k}(\lambda+\mu+\partial)
\frac{\partial B_{kj'}(\mu)}{\partial u_i^{(n)}}
(\lambda+\partial)^n
A_{ii'}(\lambda) 
\bigg)
\,.}
\end{array}
\end{equation}
Exchanging $i'$ with $j'$ and $\lambda$ with $\mu$ in \eqref{20120109:eq2}, we get
\begin{equation}\label{20120109:eq3}
\begin{array}{c}
\displaystyle{
\sum_{i,j,k\in I}B^*_{k'k}(\lambda+\mu+\partial)
\{{u_j}_y\{{u_i}_x{u_k}\}_H\}_H
\Big(\Big|_{x=\lambda+\partial}B_{ii'}(\lambda)\Big)
\Big(\Big|_{y=\mu+\partial}B_{jj'}(\mu)\Big)
} \\
\displaystyle{
=
\!\sum_{i,j,k\in I}B^*_{k'k}(\lambda+\mu+\partial)
\{{u_i}_x\{{u_j}_y{u_k}\}_H\}_H
\Big(\Big|_{x=\mu+\partial}B_{ij'}(\mu)\Big)
\Big(\Big|_{y=\lambda+\partial}B_{ji'}(\lambda)\Big)
} \\
\displaystyle{
=
\sum_{k,j\in I}\sum_{n\in\mb Z_+}
\bigg(
B^*_{k'k}(\lambda+\mu+\partial)
\frac{\partial A_{ki'}(\lambda)}{\partial u_j^{(n)}} 
(\mu+\partial)^n
A_{jj'}(\mu)
} \\
\displaystyle{
+
A^*_{k'k}(\lambda+\mu+\partial)
\frac{\partial B_{ki'}(\lambda)}{\partial u_j^{(n)}}
(\mu+\partial)^n
A_{jj'}(\mu) 
\bigg)
\,.}
\end{array}
\end{equation}
We are left to study the third term in the Jacobi identity.
By the right Leibniz rule \eqref{20110921:eq3} we get,
$$
\begin{array}{l}
\displaystyle{
\{{\{{u_i}_\lambda{u_j}\}_H}_{\lambda+\mu}{u_k}\}_H
=
\sum_{r\in I}
\{A_{jr}(\lambda+\partial)B^{-1}_{ri}(\lambda)_{\lambda+\mu}{u_k}\}_H
} \\
\displaystyle{
=
\sum_{r\in I}\sum_{m=0}^M
{\{{a_{jr;m}}_{\lambda+\mu+\partial}{u_k}\}_H}_\to
(\lambda+\partial)^m B^{-1}_{ri}(\lambda)
} \\
\displaystyle{
+
\sum_{r\in I}\sum_{m=0}^M
{\{B^{-1}_{ri}(\lambda)_{\lambda+\mu+\partial}{u_k}\}_H}_\to
(-\mu-\partial)^m a_{jr;m}
\,.} 
\end{array}
$$
By equation \eqref{20111012:eq2d} we have
$$
\begin{array}{l}
\displaystyle{
{\{B^{-1}_{ri}(\lambda)_{\lambda+\mu+\partial}{u_k}\}_H}_\to
} \\
\displaystyle{
=
-\sum_{s,t=1}^\ell\sum_{m=0}^M
\{{b_{st;m}}_{\lambda+\mu+\partial}{u_k}\}_\to
\circ
\Big((\lambda+\partial)^m (B^{-1})_{ti}(\lambda)\Big)
({B^*}^{-1})_{sr}(\mu+\partial) 
\,.}
\end{array}
$$
Combining the above two equations and using the Master Formula \eqref{20110922:eq1} we then get
$$
\begin{array}{l}
\displaystyle{
\{{\{{u_i}_\lambda{u_j}\}_H}_{\lambda+\mu}{u_k}\}_H
} \\
\displaystyle{
=
\sum_{\substack{r,s,t\in I \\ n\in\mb Z_+}}
A_{kt}(\lambda+\mu+\partial)B^{-1}_{ts}(\lambda+\mu+\partial)
(-\lambda-\mu-\partial)^n
\frac{\partial A_{jr}(\lambda+\partial)}{\partial u_s^{(n)}} B^{-1}_{ri}(\lambda)
} \\
\displaystyle{
-
\sum_{\substack{r,p,q,s,t\in I \\ n\in\mb Z_+}}
A_{kq}(\lambda+\mu+\partial)B^{-1}_{qp}(\lambda+\mu+\partial)
(-\lambda-\mu-\partial)^n
} \\
\displaystyle{
\,\,\,\,\,\,\,\,\,\,\,\,\,\,\,\,\,\,\,\,\,\,\,\,\,\,\,\,\,\,\,\,\,\,\,\,
\Big(
\frac{\partial B_{st}(\lambda+\partial)}{\partial u_p^{(n)}}
(B^{-1})_{ti}(\lambda)\Big)
\Big(({B^*}^{-1})_{sr}(\mu+\partial) A^*_{rj}(\mu)\Big)
\,.}
\end{array}
$$
Hence, if we apply, as before, $B^*_{k'k}(\lambda+\mu+\partial)$ on the left,
replace $\lambda$ by $\lambda+\partial$ acting on $B_{ii'}(\lambda)$,
replace $\mu$ by $\mu+\partial$ acting on $B_{jj'}(\mu)$,
and sum over $i,j,k\in I$, we get, using \eqref{20120107:eq1},
\begin{equation}\label{20120109:eq4}
\begin{array}{l}
\displaystyle{
\sum_{i,j,k\in I}B^*_{k'k}(\lambda+\mu+\partial)
\{{\{{u_i}_x{u_j}\}_H}_{x+y}{u_k}\}_H
\Big(\Big|_{x=\lambda+\partial}B_{ii'}(\lambda)\Big)
} \\
\displaystyle{
\Big(\Big|_{y=\mu+\partial}B_{jj'}(\mu)\Big)
=
-\sum_{j,k\in I}\sum_{n\in\mb Z_+}
A^*_{k'k}(\lambda+\mu+\partial)
(-\lambda-\mu-\partial)^n
} \\
\displaystyle{
\,\,\,\,\,\,\,\,\,\,\,\,\,\,\,\,\,\,\,\,\,\,\,\,\,\,\,\,\,\,\,\,\,\,\,\,
\,\,\,\,\,\,\,\,\,\,\,\,\,\,\,\,\,\,
\bigg(
\frac{\partial A_{ji'}(\lambda)}{\partial u_k^{(n)}} B_{jj'}(\mu)
+
\frac{\partial B_{ji'}(\lambda)}{\partial u_k^{(n)}}
A_{jj'}(\mu)
\bigg)
\,.}
\end{array}
\end{equation}
Combining \eqref{20120109:eq2}, \eqref{20120109:eq3} and \eqref{20120109:eq4},
we get that the expression
$$
\begin{array}{l}
\displaystyle{
\sum_{i,j,k\in I}B^*_{k'k}(\lambda+\mu+\partial)
\Big(
\{{u_i}_x\{{u_j}_y{u_k}\}_H\}_H
-\{{u_j}_y\{{u_i}_x{u_k}\}_H\}_H
} \\
\displaystyle{
-\{{\{{u_i}_x{u_j}\}_H}_{x+y}{u_k}\}_H
\Big)
\Big(\Big|_{x=\lambda+\partial}B_{ii'}(\lambda)\Big)
\Big(\Big|_{y=\mu+\partial}B_{jj'}(\mu)\Big)
}
\end{array}
$$
is the same as the LHS of \eqref{20120103:eq2}.
The claim follows from the assumption that 
the matrix $B\in\Mat_{\ell\times\ell}\mc V[\partial]$
has non-zero Dieudonn\`e determinant.
\end{proof}
\begin{theorem}\label{20111020:thm}
Let $\mc V$ be an algebra of differential functions in the variables $u_1,\dots,u_\ell$,
which is a domain, and let $\mc K$ be its field of fractions.
Let $H=A B^{-1}$, with $A,B\in\Mat_{\ell\times\ell}\mc V[\partial]$, $B$ non-degenerate,
be a minimal fractional decomposition (cf. Definition \ref{def:minimal-fraction} 
and Remark \ref{rem:minimal-fraction2}) of the rational matrix pseudodifferential 
operator $H\in\Mat_{\ell\times\ell}\mc V(\partial)$.
Then the subspace 
\begin{equation}\label{eq:dirac}
\mc L_{A,B}(\mc K)=\big\{B(\partial)X\oplus A(\partial)A\,\big|\,X\in\mc K^{\oplus\ell}\big\}
\,\subset\mc K^{\oplus\ell}\oplus\mc K^\ell\,,
\end{equation}
is a Dirac structure if and only if
$H$ is a non-local Poisson structure over $\mc V$.
\end{theorem}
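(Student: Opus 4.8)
The plan is to read off each of the three defining properties of a Dirac structure from a corresponding algebraic condition on the pair $(A,B)$, using the three preceding propositions as black boxes. Recall that $\mc L_{A,B}(\mc K)$ is a Dirac structure precisely when it is (1) isotropic, (2) maximal isotropic, and (3) closed under the Courant-Dorfman product \eqref{20111020:eq5}. I will match these, respectively, to skewadjointness of $H$, minimality of the fractional decomposition, and the Jacobi identity \eqref{20110922:eq4}, and then assemble the equivalences in both directions.

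First I would dispose of (1) and (2). By Proposition \ref{20120103:propa}, applied with the non-degenerate $B$, the subspace $\mc L_{A,B}(\mc K)$ is isotropic with respect to \eqref{20111020:eq3} if and only if $A^*B+B^*A=0$, i.e. if and only if $H=AB^{-1}$ is skewadjoint; this is exactly the skewadjointness clause of Definition \ref{20111007:def}. Assuming $H$ skewadjoint, I then invoke Proposition \ref{prop:max-isotrop}: since $H=AB^{-1}$ is a \emph{minimal} fractional decomposition over $\mc V$, hence over the differential field $\mc K$ by Remark \ref{rem:minimal-fraction2}, isotropy is automatically upgraded to \emph{maximal} isotropy. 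Thus (2) comes for free from the minimality hypothesis once skewadjointness is in force.

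The core of the argument is (3). Here I would use the standard fact that a maximal isotropic subspace $\mc L$ for a non-degenerate symmetric form equals its own orthocomplement, $\mc L=\mc L^\perp$; consequently $\mc L_{A,B}(\mc K)$ is closed under the Courant-Dorfman product if and only if $X\circ Y$ is orthogonal to all of $\mc L_{A,B}(\mc K)$, which is precisely condition (i) of Proposition \ref{20120103:propb}. That proposition then equates condition (i) with equation \eqref{20120103:eq2} on the entries of $A$ and $B$, and Proposition \ref{20120103:propc}—whose hypotheses (namely \eqref{20120107:eq1}, non-degeneracy of $B$, and $H\in\Mat_{\ell\times\ell}\mc V(\partial)$) are exactly those available to us—equates \eqref{20120103:eq2} in turn with the Jacobi identity \eqref{20110922:eq4}. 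Chaining these gives: closure under $\circ$ $\Leftrightarrow$ \eqref{20120103:eq2} $\Leftrightarrow$ \eqref{20110922:eq4}.

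Putting the pieces together closes both directions. If $\mc L_{A,B}(\mc K)$ is a Dirac structure, its isotropy forces $H$ skewadjoint and its closure forces \eqref{20120103:eq2}, hence \eqref{20110922:eq4}, so $H$ is a non-local Poisson structure. Conversely, if $H$ is a non-local Poisson structure, skewadjointness yields isotropy and then maximal isotropy via Proposition \ref{prop:max-isotrop}, while \eqref{20110922:eq4} yields \eqref{20120103:eq2}, hence condition (i), which together with $\mc L_{A,B}(\mc K)=\mc L_{A,B}(\mc K)^\perp$ gives closure under $\circ$. The one genuinely delicate point I anticipate is the bookkeeping between $\mc V$ and $\mc K$: the Dirac structure lives over $\mc K$, whereas $H$ and the identities \eqref{20110922:eq4}, \eqref{20120103:eq2} are statements about coefficients in $\mc V$. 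I must therefore apply Propositions \ref{20120103:propa} and \ref{20120103:propb} with $\mc K$ (itself a field of differential functions) as the ground ring, while observing that \eqref{20120103:eq2}, being an operator identity in the coefficients of $A,B\in\Mat_{\ell\times\ell}\mc V[\partial]$, is insensitive to this distinction, so that Proposition \ref{20120103:propc} applies over $\mc V$ as stated and the chain of equivalences genuinely closes up.
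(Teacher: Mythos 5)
Your proposal is correct and follows exactly the route the paper takes: its proof of Theorem \ref{20111020:thm} is precisely the assembly of Remark \ref{rem:minimal-fraction2} with Propositions \ref{20120103:propa}, \ref{prop:max-isotrop}, \ref{20120103:propb} and \ref{20120103:propc}, which you have simply spelled out (including the identification of closure under the Courant--Dorfman product with condition (i) via maximal isotropy, which the paper records in the remark following Proposition \ref{20120103:propb}). No gaps.
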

\begin{proof}
It immediately follows from Remark \ref{rem:minimal-fraction2} 
and Propositions \ref{20120103:propa}, \ref{prop:max-isotrop}, 
\ref{20120103:propb} and \ref{20120103:propc}.
\end{proof}
\begin{remark}
We may define a ``generalized'' Dirac structure
as a subspace $\mc L$ of $\mc V^{\oplus\ell}\oplus\mc V^\ell$,
such that $\mc L\subset\mc L^\perp$ (i.e. $\mc L$ is isotropic),
and $\mc L\circ\mc L\subset\mc L^\perp$ (i.e. condition (i) in Proposition \ref{20120103:propb} holds),
where $\mc L^\perp$ is the orthogonal complement to $\mc L$ 
with respect to the inner product \eqref{20111020:eq3}.
Note that a Dirac structure is a special case of this when $\mc L$ is maximal isotropic.
If $A,B\in\Mat_{\ell\times\ell}\mc V[\partial]$, with $B$ non-degenerate,
then $\mc L_{A,B}$ is a generalized Dirac structure
if and only if $H=AB^{-1}$ is a non-local Poisson structure over $\mc V$
(not necessarily in its minimal fractional decomposition).
Note also that any subspace of a generalized Dirac structure is a generalized Dirac structure.
\end{remark}

\subsection{Compatible pairs of Dirac structures}
\label{sec:6.3}

The notion of compatibility of Dirac structures was introduced by Gelfand and Dorfman 
\cite{GD80}, \cite{Dor93} (see also \cite{BDSK09}).
In this paper we introduce a weaker, but more natural, 
notion of compatibility, which still can be used to implement successfully 
the Lenard-Magri scheme of integrability, and which is more closely related 
to the notion of compatibility of the corresponding non-local Poisson structures.

Given two Dirac structures $\mc L$ and $\mc L^\prime\subset\mc V^{\oplus\ell}\oplus\mc V^\ell$,
we define the relations 
\begin{equation}\label{eq:20090322_1}
\begin{array}{l}
\mc N_{\mc L,\mc L^\prime}
=
\big\{P\oplus P^\prime
\,\big|\,
F\oplus P\in\mc L,\,F\oplus P^\prime\in\mc L^\prime\,\text{ for some } F\in\mc V^{\oplus\ell}\big\}
\subset\mc V^\ell\oplus\mc V^\ell\,,
\\
\mc N_{\mc L,\mc L^\prime}\wcheck
=
\big\{F\oplus F^\prime
\,\big|\,
F\oplus P\in\mc L,\,F^\prime\oplus P\in\mc L^\prime\,\text{ for some } P\in\mc V^{\ell}\big\}
\subset\mc V^{\oplus\ell}\!\oplus\!\mc V^{\oplus\ell}.
\end{array}
\end{equation}
\begin{definition}\label{2006_NRel}
Two Dirac structures $\mc{L},\,\mc{L}^\prime\,\subset \mc V^{\oplus\ell}\oplus\mc V^\ell$ 
are said to be \emph{compatible} if 
for all $P,P^\prime,Q,Q^\prime\in\mc V^\ell,\,
F,F^\prime,F^{\prime\prime}\in\mc V^{\oplus\ell}$
such that
$$
P\oplus P^\prime,\,Q\oplus Q^\prime\,\in\mc N_{\mc L,\mc L^\prime}
\,\,\,\,\text{ and }\,\,\,\,
F\oplus F^\prime,\,F^\prime\oplus F^{\prime\prime}
\in \mc N_{\mc L,\mc L^\prime}\wcheck\,,
$$ 
we have
\begin{equation}\label{eq:20090320_1}
(F|[P,Q])-(F^\prime|[P,Q^\prime])-(F^\prime|[P^\prime,Q])
+(F^{\prime\prime}|[P^\prime,Q^\prime])\,=\,0\,,
\end{equation}
where, as before, $(F|P)=\tint F\cdot P$,
and, for $P,Q\in\mc V^\ell$, $[P,Q]$ is given by \eqref{20120126:eq1}.
\end{definition}
\begin{remark}\label{oldcompatibility}
The original notion of compatibility, introduced by Dorfman in \cite{Dor93},
is similar, except that $\mc N_{\mc L,\mc L^\prime}\wcheck$
is replaced by the ``dual'' relation
$$
\mc N^*_{\mc L,\mc L^\prime}
\,=\,
\big\{F\oplus F^\prime\in\mc V^{\oplus\ell}\oplus\mc V^{\oplus\ell}
\,\big|\,
\tint F\cdot P=\tint F^\prime\cdot P^\prime\,\text{ for all } P\oplus P^\prime
\in\mc N_{\mc L,\mc L^\prime}\big\}\,.
$$
Since $\mc L$ and $\mc L^\prime$ are isotropic, we have, for 
$F\oplus F^\prime\in\mc N_{\mc L,\mc L^\prime}\wcheck$,
and for $Q\oplus Q^\prime\in\mc N_{\mc L,\mc L^\prime}$,
$\tint F\cdot Q=-\tint G\cdot P=\tint F^\prime\cdot Q^\prime$,
where $P\in\mc V^\ell$ and $G\in\mc V^{\oplus\ell}$ are such that 
$F\oplus P,G\oplus Q\in\mc L,\,F^\prime\oplus P,G\oplus Q^\prime\in\mc L^\prime$.
Hence, $\mc N_{\mc L,\mc L^\prime}\wcheck\subset\mc N^*_{\mc L,\mc L^\prime}$.
\end{remark}

Even with the weaker notion of compatibility,
the following important theorem still holds (cf. \cite[Thm.4.13]{BDSK09}).
\begin{theorem}\label{mtst}
Let $(\mc L,\mc L^\prime)$ be a pair of compatible Dirac structures.
Let $F_0,F_1,F_2\in\mc V^{\oplus\ell}$ be such that:
\begin{enumerate}[(i)]
\item
$F_0$ and $F_1$ are closed, i.e.
$D_{F_n}^*(\partial)=D_{F_n}(\partial)$ for $n=0,1$;
\item
$F_0\oplus F_1,\,F_1\oplus F_2\,\in\mc N_{\mc L,\mc L^\prime}\wcheck$.
\end{enumerate}
Then, for all $P\oplus P^\prime,Q\oplus Q^\prime\in\mc N_{\mc L,\mc L^\prime}$, we have
\begin{equation}\label{20120405:eq2}
\tint Q^\prime\cdot \big(D_{F_2}(\partial)-D_{F_2}^*(\partial)\big)P^\prime
=0\,.
\end{equation}
\end{theorem}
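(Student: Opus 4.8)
The plan is to reduce the target identity \eqref{20120405:eq2} to the compatibility relation \eqref{eq:20090320_1} by putting every pairing $(F|[P,Q])$ that occurs there into a canonical form. For a $1$-form $F\in\mc V^{\oplus\ell}$ and vector fields $P,Q\in\mc V^\ell$, write $\omega_F(P,Q)=\tint Q\cdot(D_F(\partial)-D_F^*(\partial))P$ and $\Psi_F(P,Q)=\tint P\cdot\frac{\delta}{\delta u}(F\cdot Q)-\tint Q\cdot\frac{\delta}{\delta u}(F\cdot P)$. First I would establish, applying the identity \eqref{20120405:eq1} twice (once with $(X,Y,P)=(F,Q,P)$ and once with $(X,Y,P)=(F,P,Q)$) and using the adjointness relation $\tint P\cdot D_F(\partial)Q=\tint Q\cdot D_F^*(\partial)P$, the basic formula
\[
(F|[P,Q])=\Psi_F(P,Q)-\omega_F(P,Q).
\]
The point of this formula is twofold: the target \eqref{20120405:eq2} is exactly the statement $\omega_{F_2}(P',Q')=0$, and $\omega_F=0$ whenever $F$ is closed, since then $D_F(\partial)=D_F^*(\partial)$.

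Next I would invoke the compatibility of $(\mc L,\mc L')$ from Definition \ref{2006_NRel} with the choice $F=F_0,\ F'=F_1,\ F''=F_2$. Hypothesis (ii) guarantees $F_0\oplus F_1,\,F_1\oplus F_2\in\mc N_{\mc L,\mc L'}\wcheck$, and $P\oplus P',\,Q\oplus Q'\in\mc N_{\mc L,\mc L'}$ are given, so \eqref{eq:20090320_1} applies. Substituting the displayed formula into each of its four pairings, using $\omega_{F_0}=\omega_{F_1}=0$ because $F_0,F_1$ are closed by (i), while the last pairing contributes the extra term $-\omega_{F_2}(P',Q')$, the relation \eqref{eq:20090320_1} collapses to
\[
\omega_{F_2}(P',Q')=\Psi_{F_0}(P,Q)-\Psi_{F_1}(P,Q')-\Psi_{F_1}(P',Q)+\Psi_{F_2}(P',Q').
\]
Thus it remains only to show that the right-hand side vanishes.

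The crucial input is that isotropy of $\mc L$ and $\mc L'$ promotes the defining memberships to congruences modulo $\partial\mc V$, which in turn become equalities of variational derivatives. Concretely, I would pick $P_0,P_1\in\mc V^\ell$ realizing $F_0\oplus F_1,\,F_1\oplus F_2\in\mc N_{\mc L,\mc L'}\wcheck$ (so $F_0\oplus P_0\in\mc L$, $F_1\oplus P_0\in\mc L'$, $F_1\oplus P_1\in\mc L$, $F_2\oplus P_1\in\mc L'$) and $G,G'\in\mc V^{\oplus\ell}$ realizing $P\oplus P',\,Q\oplus Q'\in\mc N_{\mc L,\mc L'}$. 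Subtracting the isotropy conditions $\langle F_0\oplus P_0|G\oplus P\rangle=0$ and $\langle F_1\oplus P_0|G\oplus P'\rangle=0$ gives $\tint F_0\cdot P=\tint F_1\cdot P'$, i.e. $F_0\cdot P\equiv F_1\cdot P'\pmod{\partial\mc V}$; since $\frac{\delta}{\delta u}$ annihilates $\partial\mc V$ this yields $\frac{\delta}{\delta u}(F_0\cdot P)=\frac{\delta}{\delta u}(F_1\cdot P')$. The three analogous pairs of isotropy conditions produce $\frac{\delta}{\delta u}(F_0\cdot Q)=\frac{\delta}{\delta u}(F_1\cdot Q')$, $\frac{\delta}{\delta u}(F_1\cdot P)=\frac{\delta}{\delta u}(F_2\cdot P')$, and $\frac{\delta}{\delta u}(F_1\cdot Q)=\frac{\delta}{\delta u}(F_2\cdot Q')$.

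Finally I would expand the four $\Psi$'s into eight terms and pair them off using these four identities: the two terms carrying the factor $\tint P\cdot(\,\cdot\,)$ cancel via $\frac{\delta}{\delta u}(F_0\cdot Q)=\frac{\delta}{\delta u}(F_1\cdot Q')$, the two carrying $\tint Q\cdot(\,\cdot\,)$ cancel via $\frac{\delta}{\delta u}(F_0\cdot P)=\frac{\delta}{\delta u}(F_1\cdot P')$, the two carrying $\tint Q'\cdot(\,\cdot\,)$ cancel via $\frac{\delta}{\delta u}(F_1\cdot P)=\frac{\delta}{\delta u}(F_2\cdot P')$, and the two carrying $\tint P'\cdot(\,\cdot\,)$ cancel via $\frac{\delta}{\delta u}(F_1\cdot Q)=\frac{\delta}{\delta u}(F_2\cdot Q')$. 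Hence the right-hand side is $0$, so $\omega_{F_2}(P',Q')=0$, which is precisely \eqref{20120405:eq2}. The step I expect to be the main obstacle is not any single computation but the correct bookkeeping in this last cancellation, hinging on the observation that the scalar isotropy relations must be upgraded to equalities of variational derivatives; the rest is a mechanical application of \eqref{20120405:eq1} and the definitions of $\mc N_{\mc L,\mc L'}$ and $\mc N_{\mc L,\mc L'}\wcheck$.
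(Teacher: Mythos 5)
Your proof is correct and follows essentially the same route as the paper's: both expand the four pairings in \eqref{eq:20090320_1} via the identity \eqref{20120405:eq1}, kill the variational-derivative terms using the isotropy relations $(F_0|P)=(F_1|P')$, $(F_0|Q)=(F_1|Q')$, $(F_1|P)=(F_2|P')$, $(F_1|Q)=(F_2|Q')$ (the content of Remark \ref{oldcompatibility}), kill the $D_{F_0},D_{F_1}$ terms by closedness, and are left with exactly $-\tint Q'\cdot(D_{F_2}(\partial)-D_{F_2}^*(\partial))P'$. Packaging the computation as $(F|[P,Q])=\Psi_F(P,Q)-\omega_F(P,Q)$ is a tidy reorganization but not a different argument.
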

\begin{proof}
By the assumption \eqref{eq:20090320_1}, we have
$$
\begin{array}{l}
\displaystyle{
0=(F_0|[P,Q])-(F_1|[P,Q^\prime])-(F_1|[P^\prime,Q])
+(F_2|[P^\prime,Q^\prime])
} \\
\displaystyle{
=\int \Big(
F_0\cdot D_Q(\partial)P -F_0\cdot D_P(\partial)Q
-F_1\cdot D_{Q^\prime}(\partial)P +F_1\cdot D_P(\partial)Q^\prime
} \\
\displaystyle{
-F_1\cdot D_Q(\partial)P^\prime +F_1\cdot D_{P^\prime}(\partial)Q
+F_2\cdot D_{Q^\prime}(\partial)P^\prime -F_2\cdot D_{P^\prime}(\partial)Q^\prime
\Big)
} \\
\displaystyle{
=\int P\cdot \frac{\delta}{\delta u} \big((F_0|Q)-(F_1|Q^\prime)\big)
-\int Q\cdot\frac{\delta}{\delta u} \big((F_0|P)-(F_1|P^\prime)\big)
} \\
\displaystyle{
-\int P^\prime\cdot \frac{\delta}{\delta u}\big((F_1|Q)- (F_2|Q^\prime)\big)
+\int Q^\prime\cdot \frac{\delta}{\delta u} \big((F_1|P)-(F_2|P^\prime)\big)
} \\
\displaystyle{
-\int Q\cdot D_{F_0}(\partial)P 
+\int P\cdot D_{F_0}(\partial)Q
+\int {Q^\prime}\cdot D_{F_1}(\partial)P 
-\int P\cdot D_{F_1}(\partial)Q^\prime
} \\
\displaystyle{
+\int Q\cdot D_{F_1}(\partial)P^\prime 
-\int {P^\prime}\cdot D_{F_1}(\partial)Q
-\int {Q^\prime}\cdot D_{F_2}(\partial)P^\prime 
+\int {P^\prime}\cdot D_{F_2}(\partial)Q^\prime
\,.
}\end{array}
$$
In the second identity we used the definition \eqref{20120126:eq1} of the Lie bracket on $\mc V^{\ell}$,
and in the last identity we used equation \eqref{20120405:eq1}.
Since, by assumption, $F_0\oplus F_1\in\mc N_{\mc L,\mc L^\prime}\wcheck$ 
and $Q\oplus Q^\prime\in\mc N_{\mc L,\mc L^\prime}$, we have (by Remark \ref{oldcompatibility})
that $(F_0|Q)=(F_1|Q^\prime)$. 
Hence the first term in the RHS above is zero,
and, by the same argument, the first four terms are zero.
The following six terms are also zero since, by assumption, $D_{F_0}(\partial)$
and $D_{F_1}(\partial)$ are selfadjoint.
In conclusion, equation \eqref{20120405:eq2} holds.
\end{proof}

\subsection{Compatible non-local Poisson structures 
and the corresponding compatible pairs of Dirac structures}
\label{sec:6.4}

In Theorem \ref{20111020:thm} we proved that to a non-local Hamiltonian 
structure $H\in\Mat_{\ell\times\ell}\mc V(\partial)$
in its minimal fractional decomposition $H=AB^{-1}$, 
with $A,B\in\Mat_{\ell\times\ell}\mc V[\partial]$,
there corresponds a Dirac structure $\mc L_{A,B}(\mc K)$ 
over the field of frations $\mc K$.
In this section we prove that to a compatible pair 
of non-local Poisson structures $H=AB^{-1},\,K=CD^{-1}$,
in their minimal fractional decompositions,
there corresponds a compatible pair 
of Dirac structures $\mc L_{A,B}(\mc K),\,\mc L_{C,D}(\mc K)$ over $\mc K$.
This is stated in the following:
\begin{theorem}\label{20120126:prop2}
Let $\mc V$ be an algebra of differential functions in $u_1,\dots,u_\ell$,
which is a domain, and let $\mc K$ be its field of fractions.
Let $H,\,K\in\Mat_{\ell\times\ell}\mc V(\partial)$ be compatible 
non-local Poisson structures over $\mc V$.
Let $H=AB^{-1},\,K=CD^{-1}$ be their minimal fractional decompositions
(cf. Definition \ref{def:minimal-fraction} and Remark \ref{rem:minimal-fraction2}).
Then $\mc L_{A,B}(\mc K)$ and $\mc L_{C,D}(\mc K)$
are compatible Dirac structures over $\mc K$.
\end{theorem}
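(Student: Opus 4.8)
The plan is to verify the compatibility condition of Definition \ref{2006_NRel} by reducing it to an identity in $\mc K_{\lambda,\mu}$, following the route by which the proof of Theorem \ref{20111020:thm} reduced the Dirac structure axioms for a single $\mc L_{A,B}(\mc K)$ to equation \eqref{20120103:eq2}. Two facts are available at the outset. First, by Theorem \ref{20111020:thm} each of $\mc L_{A,B}(\mc K)$ and $\mc L_{C,D}(\mc K)$ is a Dirac structure over $\mc K$, since $H$ and $K$ are non-local Poisson structures; in particular the decompositions satisfy the respective relations \eqref{20120107:eq1}. Second, the compatibility of $H$ and $K$ means that every linear combination is again a non-local Poisson structure, which by \eqref{20111116:eq3} is equivalent to
$$
J(H,K)_{ijk}(\lambda,\mu)+J(K,H)_{ijk}(\lambda,\mu)=0
\,\,,\,\,\,\,i,j,k\in I\,,
$$
an identity in $\mc K_{\lambda,\mu}$, where $J=J^1-J^2-J^3$ is as in \eqref{20111118:eq1}. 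It remains to establish equation \eqref{eq:20090320_1}.

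First I would unwind the relations \eqref{eq:20090322_1} in terms of the decompositions $H=AB^{-1}$, $K=CD^{-1}$. A pair $P\oplus P'\in\mc N_{\mc L_{A,B},\mc L_{C,D}}$ is exactly $P=A(\partial)X,\,P'=C(\partial)Y$ for some $X,Y\in\mc K^{\oplus\ell}$ with $B(\partial)X=D(\partial)Y$, and a pair $F\oplus F'\in\mc N_{\mc L_{A,B},\mc L_{C,D}}\wcheck$ is exactly $F=B(\partial)Z,\,F'=D(\partial)W$ for some $Z,W$ with $A(\partial)Z=C(\partial)W$; the remaining data $Q\oplus Q'$ and $F'\oplus F''$ are of the same shape, subject to the matching of the repeated entry $F'$. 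Substituting these expressions into the four pairings of \eqref{eq:20090320_1}, writing each $[R,S]=D_S(\partial)R-D_R(\partial)S$ by \eqref{20120126:eq1}, and reorganizing by means of the Frechet-derivative identity \eqref{20120405:eq1}, the terms containing a variational derivative cancel in pairs: by the isotropy of $\mc L_{A,B}(\mc K)$ and $\mc L_{C,D}(\mc K)$ one has $(F|Q)=(F'|Q')$ and $(F'|P)=(F''|P')$ (as in Remark \ref{oldcompatibility}), so these classes in $\mc K/\partial\mc K$ vanish. What is left is a purely Frechet-derivative expression, bilinear in the entries of $(A,B)$ and $(C,D)$, the mixed analogue of equation \eqref{20120103:eq1}. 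By the non-degeneracy of the pairing $(F|P)=\tint F\cdot P$, exactly as in the proof of Proposition \ref{20120103:propb}, it is equivalent to the corresponding symbol identity in $\mc K_{\lambda,\mu}$, the mixed analogue of \eqref{20120103:eq2}.

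Finally I would identify this mixed symbol identity with $J(H,K)+J(K,H)=0$. For this I would repeat the manipulations in the proof of Proposition \ref{20120103:propc}: apply the non-degenerate operators $B^*$ and $D^*$ on the left, transfer $\lambda$ (resp. $\mu$) onto the appropriate factor of $B$ or $D$ as $\lambda+\partial$ (resp. $\mu+\partial$) in the sense of \eqref{20111018:eq5}, and use the skewadjointness relations \eqref{20120107:eq1} for $(A,B)$ and for $(C,D)$ to discard the spurious terms, precisely as in passing from the Frechet form to \eqref{20120109:eq2}--\eqref{20120109:eq4}. This recasts the three groups of terms of the mixed identity as the three symmetrized combinations $J^\alpha(H,K)+J^\alpha(K,H)$, $\alpha=1,2,3$, evaluated on the symbols of the relevant factors. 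Since $B$ and $D$ are non-degenerate, the whole identity is equivalent to $J(H,K)+J(K,H)=0$, which holds by compatibility of $H$ and $K$; hence \eqref{eq:20090320_1} holds and $\mc L_{A,B}(\mc K)$, $\mc L_{C,D}(\mc K)$ are compatible.

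The step I expect to be the main obstacle is the bookkeeping forced by the constrained relations $\mc N$ and $\mc N\wcheck$. In the single-structure case (Proposition \ref{20120103:propb}) the elements of $\mc L_{A,B}$ are freely parametrized by a single vector, whereas here a pair must share a common $\mc V^{\oplus\ell}$- (respectively $\mc V^\ell$-) component across the two Dirac structures, and the four terms of \eqref{eq:20090320_1} interlace the data $(X,Y)$ with $(Z,W)$ in several distinct ways. Tracking which factor carries $A$ versus $C$, and $B$ versus $D$, so that after the reduction the terms assemble into the symmetrized combination $J(H,K)+J(K,H)$ rather than a single $J$, is the delicate point; once the correct mixed analogue of \eqref{20120103:eq2} has been written down, its reduction to $J(H,K)+J(K,H)$ is routine.
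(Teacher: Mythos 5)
Your overall strategy is the paper's: unwind the relations \eqref{eq:20090322_1} via \eqref{20120405:eq5}, cancel the variational-derivative terms produced by \eqref{20120405:eq1} using isotropy, and reduce the residue to the mixed Jacobi identity \eqref{20120405:eq3} (your $J(H,K)+J(K,H)=0$) by the symbol manipulations of Proposition \ref{20120103:propc}. But there is a genuine gap at the middle step. After the isotropy cancellations the residue is not yet ``the mixed analogue of \eqref{20120103:eq1}'': it is the expression \eqref{20120405:eq12}, whose first two terms are purely in $(A,B)$ and whose last two terms are purely in $(C,D)$, and which still involves the outer data $Z$ and $W^\prime$ (equivalently $F_0=B(\partial)Z$ and $F_2=D(\partial)W^\prime$, which are tied to the remaining data only through $A(\partial)Z=C(\partial)Z^\prime$ and $A(\partial)W=C(\partial)W^\prime$, while $B(\partial)Z$ and $D(\partial)W^\prime$ are otherwise unconstrained). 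Converting \eqref{20120405:eq12} directly to symbols does not produce $J(H,K)+J(K,H)$, and the compatibility hypothesis alone does not annihilate it.

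The missing ingredient is that $H$ and $K$ are each, separately, non-local Poisson structures, used not merely through isotropy but through the closure condition \eqref{20120103:eq1} applied to $(A,B)$ and to $(C,D)$ individually. This is exactly what the paper's identities \eqref{20120405:eq14} and \eqref{20120405:eq15} accomplish: they swap $Z\leftrightarrow X$ inside the pure-$H$ block and $W^\prime\leftrightarrow X^\prime$ inside the pure-$K$ block, so that $Z$ and $W^\prime$ enter only through $A(\partial)Z=C(\partial)Z^\prime$ and $C(\partial)W^\prime=A(\partial)W$ and can be eliminated. Only after this substitution does one arrive at \eqref{20120405:eq16}, which depends solely on data constrained by the first three identities of \eqref{20120405:eq6}, and which the computations \eqref{20120405:X1}--\eqref{20120405:X12} together with Lemma \ref{20120405:lem1} identify with the integral of the mixed Jacobi expression. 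With this step inserted your argument closes; without it the claimed equivalence of the residue with $J(H,K)+J(K,H)=0$ fails.
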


By Theorem \ref{20110923:prop}, 
the Poisson structures $H$ and $K$ over $\mc V$ are compatible
if and only if we have the following ``mixed'' Jacobi identity on generators
($i,j,k\in I$):
\begin{equation}\label{20120405:eq3}
\begin{array}{l}
\{{u_i}_\lambda\{{u_j}_\mu {u_k}\}_H\}_K-\{{u_j}_\mu\{{u_i}_\lambda {u_k}\}_H\}_K
-\{{\{{u_i}_\lambda {u_j}\}_H}_{\lambda+\mu} {u_k}\}_K \\
+\{{u_i}_\lambda\{{u_j}_\mu {u_k}\}_K\}_H-\{{u_j}_\mu\{{u_i}_\lambda {u_k}\}_K\}_H
-\{{\{{u_i}_\lambda {u_j}\}_K}_{\lambda+\mu} {u_k}\}_H
=0\,,
\end{array}\end{equation}

In order to relate the above condition to the compatibility 
of the corresponding Dirac structures $\mc L_{A,B}$ and $\mc L_{C,D}$,
we need to compute explicitly each term of the above equation.
This is done in the following:
\begin{lemma}\label{20120405:lem1}
Suppose that the pairs $(A,B)$ and $(C,D)$, with $A,B,C,D\in\Mat_{\ell\times\ell}\mc V[\partial]$,
satisfy equation \eqref{20120107:eq1}:
\begin{equation}\label{20120405:skew}
A^* B+B^* A=0
\,\,,\,\,\,\,
C^* D+D^* C=0\,.
\end{equation}
Assume that $B$ and $D$ are non-degenerate,
and that the (skewadjoint) rational matrix pseudodifferential operators 
$H=A B^{-1}$ and $K=CD^{-1}$
have coefficients in $\mc V$, i.e. $H,K\in\Mat_{\ell\times\ell}\mc V(\partial)$.
Consider the corresponding non-local $\lambda$-brackets 
$\{\cdot\,_\lambda\,\cdot\}_H$ and $\{\cdot\,_\lambda\,\cdot\}_K$
given by the Master Formula \eqref{20110922:eq1}.
Then, in terms of notation \eqref{20111018:eq5}, 
we have the following identities for every $i',j',k'\in I$:
\begin{equation}\label{20120405:A1}
\begin{array}{c}
\displaystyle{
\sum_{i,j,k\in I}
B^*_{k'k}(\lambda+\mu+\partial)
\{{u_i}_x\{{u_j}_y {u_k}\}_H\}_K
\big(\big|_{x=\lambda+\partial}D_{ii'}(\lambda)\big)
\big(\big|_{y=\mu+\partial}B_{jj'}(\mu)\big)
} \\
\displaystyle{
=
\sum_{i,k\in I}\sum_{n\in\mb Z_+}
B^*_{k'k}(\lambda+\mu+\partial)
\frac{\partial A_{kj'}(\mu)}{\partial u_i^{(n)}}
(\lambda+\partial)^n C_{ii'}(\lambda)
} \\
\displaystyle{
+\sum_{i,k\in I}\sum_{n\in\mb Z_+}
A^*_{k'k}(\lambda+\mu+\partial)
\frac{\partial B_{kj'}(\mu)}{\partial u_i^{(n)}}
(\lambda+\partial)^n C_{ii'}(\lambda)
\,,} 
\end{array}
\end{equation}
\begin{equation}\label{20120405:A2}
\begin{array}{c}
\displaystyle{
\sum_{i,j,k\in I}
D^*_{k'k}(\lambda+\mu+\partial)
\{{u_i}_x\{{u_j}_y {u_k}\}_K\}_H
\big(\big|_{x=\lambda+\partial}B_{ii'}(\lambda)\big)
\big(\big|_{y=\mu+\partial}D_{jj'}(\mu)\big)
} \\
\displaystyle{
=
\sum_{i,k\in I}\sum_{n\in\mb Z_+}
D^*_{k'k}(\lambda+\mu+\partial)
\frac{\partial C_{kj'}(\mu)}{\partial u_i^{(n)}}
(\lambda+\partial)^n A_{ii'}(\lambda)
} \\
\displaystyle{
+\sum_{i,k\in I}\sum_{n\in\mb Z_+}
C^*_{k'k}(\lambda+\mu+\partial)
\frac{\partial D_{kj'}(\mu)}{\partial u_i^{(n)}}
(\lambda+\partial)^n A_{ii'}(\lambda)
\,,} 
\end{array}
\end{equation}
\begin{equation}\label{20120405:B1}
\begin{array}{c}
\displaystyle{
\sum_{i,j,k\in I}
B^*_{k'k}(\lambda+\mu+\partial)
\{{u_j}_y\{{u_i}_x {u_k}\}_H\}_K
\big(\big|_{x=\lambda+\partial}B_{ii'}(\lambda)\big)
\big(\big|_{y=\mu+\partial}D_{jj'}(\mu)\big)
} \\
\displaystyle{
=
\sum_{j,k\in I}\sum_{n\in\mb Z_+}
B^*_{k'k}(\lambda+\mu+\partial)
\frac{\partial A_{ki'}(\lambda)}{\partial u_j^{(n)}}
(\mu+\partial)^n C_{jj'}(\mu)
} \\
\displaystyle{
+\sum_{j,k\in I}\sum_{n\in\mb Z_+}
A^*_{k'k}(\lambda+\mu+\partial)
\frac{\partial B_{ki'}(\lambda)}{\partial u_j^{(n)}}
(\mu+\partial)^n C_{jj'}(\mu)
\,,} 
\end{array}
\end{equation}
\begin{equation}\label{20120405:B2}
\begin{array}{c}
\displaystyle{
\sum_{i,j,k\in I}
D^*_{k'k}(\lambda+\mu+\partial)
\{{u_j}_y\{{u_i}_x {u_k}\}_K\}_H
\big(\big|_{x=\lambda+\partial}D_{ii'}(\lambda)\big)
\big(\big|_{y=\mu+\partial}B_{jj'}(\mu)\big)
} \\
\displaystyle{
=
\sum_{j,k\in I}\sum_{n\in\mb Z_+}
D^*_{k'k}(\lambda+\mu+\partial)
\frac{\partial C_{ki'}(\lambda)}{\partial u_j^{(n)}}
(\mu+\partial)^n A_{jj'}(\mu)
} \\
\displaystyle{
+\sum_{j,k\in I}\sum_{n\in\mb Z_+}
C^*_{k'k}(\lambda+\mu+\partial)
\frac{\partial D_{ki'}(\lambda)}{\partial u_j^{(n)}}
(\mu+\partial)^n A_{jj'}(\mu)
\,,} 
\end{array}
\end{equation}
\begin{equation}\label{20120405:C1}
\begin{array}{c}
\displaystyle{
\sum_{i,j,k\in I}
D^*_{k'k}(\lambda+\mu+\partial)
\{{\{{u_i}_x{u_j}\}_H}_{x+y}{u_k}\}_K
\big(\big|_{x=\lambda+\partial}B_{ii'}(\lambda)\big)
\big(\big|_{y=\mu+\partial}B_{jj'}(\mu)\big)
} \\
\displaystyle{
=
-\sum_{j,k\in I}\sum_{n\in\mb Z_+}
C^*_{k'k}(\lambda+\mu+\partial)
(-\lambda-\mu-\partial)^n
\frac{\partial A_{ji'}(\lambda)}{\partial u_k^{(n)}}
B_{jj'}(\mu)
} \\
\displaystyle{
-\sum_{j,k\in I}\sum_{n\in\mb Z_+}
C^*_{k'k}(\lambda+\mu+\partial)
(-\lambda-\mu-\partial)^n
\frac{\partial B_{ji'}(\lambda)}{\partial u_k^{(n)}}
A_{jj'}(\mu)
\,,} 
\end{array}
\end{equation}
\begin{equation}\label{20120405:C2}
\begin{array}{c}
\displaystyle{
\sum_{i,j,k\in I}
B^*_{k'k}(\lambda+\mu+\partial)
\{{\{{u_i}_x{u_j}\}_K}_{x+y}{u_k}\}_H
\big(\big|_{x=\lambda+\partial}D_{ii'}(\lambda)\big)
\big(\big|_{y=\mu+\partial}D_{jj'}(\mu)\big)
} \\
\displaystyle{
=
-\sum_{j,k\in I}\sum_{n\in\mb Z_+}
A^*_{k'k}(\lambda+\mu+\partial)
(-\lambda-\mu-\partial)^n
\frac{\partial C_{ji'}(\lambda)}{\partial u_k^{(n)}}
D_{jj'}(\mu)
} \\
\displaystyle{
-\sum_{j,k\in I}\sum_{n\in\mb Z_+}
A^*_{k'k}(\lambda+\mu+\partial)
(-\lambda-\mu-\partial)^n
\frac{\partial D_{ji'}(\lambda)}{\partial u_k^{(n)}}
C_{jj'}(\mu)
\,.} 
\end{array}
\end{equation}
\end{lemma}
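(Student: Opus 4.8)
The plan is to prove all six identities by the single computational template already executed in the proof of Proposition \ref{20120103:propc} to produce \eqref{20120109:eq2}, \eqref{20120109:eq3} and \eqref{20120109:eq4}, now allowing the inner and the outer $\lambda$-bracket to be governed by two different structures. Concretely, \eqref{20120405:A1}, \eqref{20120405:B1} and \eqref{20120405:C1} are the ``polarizations'' of \eqref{20120109:eq2}, \eqref{20120109:eq3} and \eqref{20120109:eq4} (viewing $J(H,K)$ as a bilinear form whose diagonal $J(H,H)$ was treated in Proposition \ref{20120103:propc}), in which the outer bracket $\{\cdot\,_\lambda\,\cdot\}_H$ is replaced by $\{\cdot\,_\lambda\,\cdot\}_K$, the inner fractional data $A,B$ is retained, and the test operator attached to the variable of the outer bracket is replaced by the $K$-denominator $D$; the three remaining identities \eqref{20120405:A2}, \eqref{20120405:B2}, \eqref{20120405:C2} are then obtained by the symmetry $H\leftrightarrow K$, i.e. $A\leftrightarrow C$ and $B\leftrightarrow D$. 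Since $H,K\in\Mat_{\ell\times\ell}\mc V(\partial)$, all the expressions involved lie in $\mc V_{\lambda,\mu}$ by Corollary \ref{20111007:prop}, so the stated equalities make sense.

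First I would carry out \eqref{20120405:A1} in full. By the Master Formula \eqref{20110922:eq1} and \eqref{20110923:eq1} the inner bracket is $\{{u_j}_y{u_k}\}_H=\sum_r A_{kr}(y+\partial)(B^{-1})_{rj}(y)$. Applying $\{{u_i}_x\,\cdot\,\}_K$ and using the left Leibniz rule \eqref{20110921:eq3} splits the result into a term in which the $K$-bracket falls on the coefficients of $A$ and a term in which it falls on $(B^{-1})_{rj}(y)$; the first is evaluated directly by the Master Formula for $K$, and the second by equation \eqref{20111012:eq2c} of Corollary \ref{20111014:cor} applied with $S=B$ and the bracket $\{\cdot\,_\lambda\,\cdot\}_K$. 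Both produce the $K$-factor $K_{si}(x)=\sum_t C_{st}(x+\partial)(D^{-1})_{ti}(x)$. I would then multiply on the left by $B^*_{k'k}(\lambda+\mu+\partial)$, substitute $x$ by $\lambda+\partial$ acting on $D_{ii'}(\lambda)$ and $y$ by $\mu+\partial$ acting on $B_{jj'}(\mu)$ in the sense of \eqref{20111018:eq5}, and sum over $i,j,k$. The symbol-of-identity contractions $\sum_i(D^{-1})_{ti}(\lambda+\partial)D_{ii'}(\lambda)=\delta_{ti'}$ and $\sum_j(B^{-1})_{rj}(\mu+\partial)B_{jj'}(\mu)=\delta_{rj'}$ (valid because $B,D$ are non-degenerate, cf. \eqref{20111003:eq2}) eliminate the inverses and leave the numerator factor $C_{ii'}(\lambda)$ of the right-hand side. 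Finally the first relation $A^*B+B^*A=0$ of \eqref{20120405:skew} rewrites the composition $B^*\circ A$ occurring in the second term as $-A^*\circ B$, and the contraction $\sum_r B_{kr}(B^{-1})_{rs}=\delta_{ks}$ then yields the $A^*$-term of \eqref{20120405:A1}. This is exactly the manipulation leading to \eqref{20120109:eq2}, decorated with structure labels.

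The remaining identities require only the analogous relabelings. For \eqref{20120405:B1} I would repeat the derivation of \eqref{20120109:eq3} with inner structure $H$ and outer $K$, attaching $B_{ii'}$ to the inner variable $x$ and $D_{jj'}$ to the outer variable $y$, and again using only the first relation of \eqref{20120405:skew}. For \eqref{20120405:C1} I would instead start from the inner double bracket $\{{u_i}_x{u_j}\}_H=H_{ji}(x)=\sum_r A_{jr}(x+\partial)(B^{-1})_{ri}(x)$, apply $\{\cdot\,_{x+y}\,{u_k}\}_K$ through the right Leibniz rule \eqref{20110921:eq3} and equation \eqref{20111012:eq2d} of Corollary \ref{20111014:cor} (with $S=B$ and bracket $K$), and follow the derivation of \eqref{20120109:eq4}. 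Here \emph{both} relations of \eqref{20120405:skew} intervene: the outer $K$-bracket of ${u_k}$ supplies a factor $C\circ D^{-1}$ whose leading index is contracted by $D^*_{k'k}(\lambda+\mu+\partial)$, so that $C^*D+D^*C=0$ produces the $C^*$-multiplier of \eqref{20120405:C1}, while $A^*B+B^*A=0$ together with $(B^*)^{-1}\circ B^*=\id$ rearranges the inner $(B^{-1})$- and $({B^*}^{-1})$-factors into the bare $A_{jj'}$, $B_{jj'}$ of the right-hand side. The identities \eqref{20120405:A2}, \eqref{20120405:B2}, \eqref{20120405:C2} then follow verbatim upon exchanging the roles of $H$ and $K$.

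The difficulty here is organizational rather than conceptual. One must track, through the two Leibniz rules, the two applications of the Master Formula and Corollary \ref{20111014:cor}, which of $A,B,C,D$ occupies each slot, the spectral arguments (powers of $\lambda$, $\mu$, or $\lambda+\mu$ shifted by $\partial$), and the order in which the substitutions \eqref{20111018:eq5} and the symbol contractions are performed. The genuinely new feature relative to Proposition \ref{20120103:propc} is the presence of two distinct non-degenerate denominators $B$ and $D$: each cancellation $B^{-1}\circ B=\id$, $D^{-1}\circ D=\id$ must be applied to the correct (inner or outer) variable, and the two skewadjointness relations of \eqref{20120405:skew} now enter asymmetrically — the $J^1$- and $J^2$-type identities \eqref{20120405:A1}, \eqref{20120405:A2}, \eqref{20120405:B1}, \eqref{20120405:B2} invoke only the relation of the inner structure, whereas the $J^3$-type identities \eqref{20120405:C1}, \eqref{20120405:C2} invoke both. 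Verifying term by term that these choices reproduce the stated right-hand sides is the step that demands the most care.
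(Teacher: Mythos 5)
Your proposal is correct and follows essentially the same route as the paper: expand the inner bracket via the Master Formula, apply the appropriate Leibniz rule together with equation \eqref{20111012:eq2c} (resp. \eqref{20111012:eq2d}) for the inverse of $B$, then multiply by the adjoint operator, perform the substitutions \eqref{20111018:eq5}, contract the inverses against $B$ and $D$, and invoke \eqref{20120405:skew} — with the remaining identities obtained by the $H\leftrightarrow K$ and $(\lambda,i,i')\leftrightarrow(\mu,j,j')$ symmetries. Your finer observation that the $A$- and $B$-type identities use only the inner structure's skewadjointness relation while the $C$-type identities use both is accurate and consistent with the paper's computation.
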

\begin{proof}
For equation \eqref{20120405:A1}, 
we can use the Leibniz rule and equation \eqref{20111012:eq2c} to get
$$
\begin{array}{l}
\displaystyle{
\{{u_i}_x\{{u_j}_y {u_k}\}_H\}_K
=\sum_{r\in I}\{{u_i}_x A_{kr}(y+\partial)(B^{-1})_{rj}(y)\}_K
} \\
\displaystyle{
=\sum_{r\in I}\sum_{m\in\mb Z_+}\{{u_i}_x a_{kr;m}\}_K
(y+\partial)^m(B^{-1})_{rj}(y)
} \\
\displaystyle{
-\sum_{\substack{r,p,q\in I \\ m\in\mb Z_+}}\!\!\!
A_{kr}(x+y+\partial)
(B^{-1})_{rp}(x+y+\partial)
\{{u_i}_\lambda b_{pq;m}\}_K
(y+\partial)^m(B^{-1})_{qj}(y)\,.
}
\end{array}
$$
We can then use the Master Formula \eqref{20110922:eq1} to get
$$
\begin{array}{l}
\displaystyle{
\{{u_i}_x\{{u_j}_y {u_k}\}_H\}_K
=\sum_{r,s\in I}\sum_{n\in\mb Z_+}
\Big(\frac{\partial A_{kr}(y+\partial)}{\partial u_s^{(n)}} (B^{-1})_{rj}(y)\Big)
(x+\partial)^n K_{si}(x)
} \\
\displaystyle{
-\sum_{\substack{r,p,q,s\in I \\ n\in\mb Z_+}}\!\!\!
A_{kr}(x+y+\partial)
(B^{-1})_{rp}(x+y+\partial)
\Big(\frac{\partial B_{pq}(y+\partial)}{\partial u_s^{(n)}} (B^{-1})_{qj}(y)\Big)
} \\
\displaystyle{
\,\,\,\,\,\,\,\,\,\,\,\,\,\,\,\,\,\,\,\,\,\,\,\,\,\,\,\,\,\,\,\,\,\,\,\,\,\,\,\,\,\,\,\,\,\,\,\,\,\,\,\,\,\,\,\,\,\,\,\,\,\,\,\,\,\,\,\,\,\,\,\,\,\,\,\,\,\,\,\,\,
\,\,\,\,\,\,\,\,\,\,\,\,\,\,\,\,\,\,\,\,\,\,\,\,\,\,\,\,\,\,\,\,\,\,\,\,\,
\times
(x+\partial)^n K_{si}(x)
\,.
}
\end{array}
$$
If we now replace $x$ with $\lambda+\partial$ acting on $D_{ii'}(\lambda)$
and $y$ by $\mu+\partial$ acting on $B_{jj'}(\mu)$,
and we apply $B^*_{k'k}(\lambda+\mu+\partial)$, acting from the left,
to both sides of the above equation, 
we get, after using the assuption \eqref{20120405:skew},
that equation \eqref{20120405:A1} holds.
Equation \eqref{20120405:A2} is obtained from \eqref{20120405:A1}
by exchanging the roles of $H$ and $K$.
Equation \eqref{20120405:B1} is obtained from \eqref{20120405:A1}
by exchanging $\lambda$ with $\mu$ and $i$ and $i'$ with $j$ and $j'$ respectively,
and equation \eqref{20120405:B2} is obtained
from \eqref{20120405:B1} by exchaing the roles of $H$ and $K$.
Finally, equations \eqref{20120405:C1} and \eqref{20120405:C2} 
can be derived with a similar computation,
which involves the right Leibniz rule (instead of the left)
and equation \eqref{20111012:eq2d} (instead of \eqref{20111012:eq2c}).
\end{proof}

Let us next describe the relations \eqref{eq:20090322_1}
associated to Dirac structures $\mc L_{A,B}(\mc K)$ and $\mc L_{C,D}(\mc K)$
defined in \eqref{eq:dirac}.
We have
\begin{equation}\label{20120405:eq5}
\begin{array}{l}
\mc N_{\mc L_{A,B}(\mc K),\mc L_{C,D}(\mc K)}
=
\big\{
A(\partial)X\oplus C(\partial)X^\prime
\,\big|\,
X,X^\prime\in\mc K^{\oplus\ell}\,,\,\,B(\partial)X=D(\partial)X^\prime
\big\}
\,,
\\
\mc N_{\mc L_{A,B}(\mc K),\mc L_{C,D}(\mc K)}\wwcheck
=
\big\{
B(\partial)Z\oplus D(\partial)Z^\prime
\,\big|\,
Z,Z^\prime\in\mc K^{\oplus\ell}\,,\,\,A(\partial)Z=C(\partial)Z^\prime
\big\}
\,.
\end{array}
\end{equation}
Hence, by Definition \ref{2006_NRel},
the Dirac structures $\mc L_{A,B}$ and $\mc L_{C,D}$
are compatible if and only if,
for every $X,X^\prime,Y,Y^\prime,Z,Z^\prime,W,W^\prime\in\mc V^{\oplus\ell}$ 
such that
\begin{equation}\label{20120405:eq6}
\begin{array}{c}
\displaystyle{
\vphantom{\Big(}
B(\partial)X=D(\partial)X^\prime
\,\,,\,\,\,\,
B(\partial)Y=D(\partial)Y^\prime
\,\,,\,\,\,\,
B(\partial)W=D(\partial)Z^\prime
\,,}\\
\displaystyle{
\vphantom{\Big(}
A(\partial)Z=C(\partial)Z^\prime
\,\,,\,\,\,\,
A(\partial)W=C(\partial)W^\prime
\,,}
\end{array}
\end{equation}
we have the following identity:
\begin{equation}\label{20120405:eq7}
\begin{array}{l}
\displaystyle{
\vphantom{\Big(}
\big(B(\partial)Z\big|[A(\partial)X,A(\partial)Y]\big)
-\big(D(\partial)Z^\prime\big|[A(\partial)X,C(\partial)Y^\prime]\big)
}\\
\displaystyle{
\vphantom{\Big(}
-\big(B(\partial)W\big|[C(\partial)X^\prime,A(\partial)Y]\big)
+\big(D(\partial)W^\prime\big|[C(\partial)X^\prime,C(\partial)Y^\prime]\big)
=0
\,.}
\end{array}\end{equation}

\begin{lemma}\label{20120405:lem2}
Suppose that $H=AB^{-1}$ and $K=CD^{-1}$ are non-local Poisson structures,
and that conditions \eqref{20120405:eq6} hold. 
Then equation \eqref{20120405:eq7} is equivalent to the following equation:
\begin{equation}\label{20120405:eq16}
\begin{array}{l}
\displaystyle{
\vphantom{\Big(}
-\tint (A(\partial)Y)\cdot D_{B(\partial)X}(\partial)C(\partial)Z^\prime
+\tint (A(\partial)Y)\cdot D^*_{B(\partial)X}(\partial) C(\partial)Z^\prime
} \\
\displaystyle{
\vphantom{\Big(}
+\tint (B(\partial)Y)\cdot D_{C(\partial)Z^\prime}(\partial)A(\partial)X
-\tint (B(\partial)Y)\cdot D_{A(\partial)X}(\partial) C(\partial)Z^\prime
} \\
\displaystyle{
\vphantom{\Big(}
+\tint (C(\partial)Y^\prime)\cdot D_{D(\partial)Z^\prime}(\partial)A(\partial)X
+\tint (C(\partial)Y^\prime)\cdot D^*_{A(\partial)X}(\partial) D(\partial)Z^\prime
} \\
\displaystyle{
\vphantom{\Big(}
+\tint (A(\partial)Y)\cdot D_{B(\partial)W}(\partial)C(\partial)X^\prime
+\tint (A(\partial)Y)\cdot D^*_{C(\partial)X^\prime}(\partial) B(\partial)W
} \\
\displaystyle{
\vphantom{\Big(}
-\tint (C(\partial)Y^\prime)\cdot D_{D(\partial)X^\prime}(\partial)A(\partial)W
+\tint (C(\partial)Y^\prime)\cdot D^*_{D(\partial)X^\prime}(\partial) A(\partial)W
} \\
\displaystyle{
\vphantom{\Big(}
+\tint (D(\partial)Y^\prime)\cdot D_{A(\partial)W}(\partial)C(\partial)X^\prime
-\tint (D(\partial)Y^\prime)\cdot D_{C(\partial)X^\prime}(\partial) A(\partial)W
=0\,.
}\end{array}
\end{equation}
\end{lemma}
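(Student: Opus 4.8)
The strategy is to expand the two inner products in \eqref{20120405:eq7} by means of the bracket formula \eqref{20120126:eq1}, $[P,Q]=D_Q(\partial)P-D_P(\partial)Q$, and then to transport Frechet derivatives across the pairing $\tint F\cdot P$ until every summand matches one of the twelve summands of \eqref{20120405:eq16}. Expanding the four terms of \eqref{20120405:eq7} produces eight integrals. Two of them are \emph{diagonal}, namely $\tint B(\partial)Z\cdot[A(\partial)X,A(\partial)Y]$ and $\tint D(\partial)W'\cdot[C(\partial)X',C(\partial)Y']$, and these involve the auxiliary fields $Z$ and $W'$, which do not occur in \eqref{20120405:eq16}; removing them is the crux of the argument.

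Here is where the hypothesis that $H=AB^{-1}$ and $K=CD^{-1}$ are non-local Poisson structures enters. By Proposition \ref{20120103:propb} this is exactly identity \eqref{20120103:eq1} for the pairs $(A,B)$ and $(C,D)$. Pairing \eqref{20120103:eq1} for $(A,B)$ against $Z$, and using $\tint Z\cdot A^*(\partial)(\cdot)=\tint A(\partial)Z\cdot(\cdot)$ together with the analogue for $B^*$, I can solve for $\tint B(\partial)Z\cdot[A(\partial)X,A(\partial)Y]$ and rewrite it as a combination of integrals in which $Z$ enters only through $A(\partial)Z$; the constraint $A(\partial)Z=C(\partial)Z'$ of \eqref{20120405:eq6} then turns it into an expression in the variables occurring in \eqref{20120405:eq16}. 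The same manoeuvre with \eqref{20120103:eq1} for $(C,D)$ paired against $W'$, followed by $A(\partial)W=C(\partial)W'$, disposes of $W'$. After transposing adjoints via $\tint F\cdot D_G^*(\partial)P=\tint P\cdot D_G(\partial)F$, these two diagonal contributions reproduce four of the twelve summands of \eqref{20120405:eq16} outright and leave four extra integrals.

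It remains to treat the two \emph{off-diagonal} terms $-\tint D(\partial)Z'\cdot[A(\partial)X,C(\partial)Y']$ and $-\tint B(\partial)W\cdot[C(\partial)X',A(\partial)Y]$: applying the integration-by-parts identity \eqref{20120405:eq1} once to each reproduces four further summands of \eqref{20120405:eq16} and creates two variational remainders of the shape $\tint P\cdot\frac{\delta}{\delta u}(F\cdot G)$. At this stage two of the four extra integrals already coincide with summands of \eqref{20120405:eq16}, so that exactly two summands of \eqref{20120405:eq16} remain unmatched and two extra integrals remain; a final use of \eqref{20120405:eq1} recombines the latter with the former into two more variational remainders. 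After substituting $B(\partial)Y=D(\partial)Y'$ and $B(\partial)W=D(\partial)Z'$ from \eqref{20120405:eq6}, the whole difference of \eqref{20120405:eq7} and \eqref{20120405:eq16} collapses to
\[
-\tint A(\partial)X\cdot\frac{\delta}{\delta u}\big(D(\partial)Y'\cdot C(\partial)Z'+D(\partial)Z'\cdot C(\partial)Y'\big)
-\tint C(\partial)X'\cdot\frac{\delta}{\delta u}\big(A(\partial)W\cdot B(\partial)Y+B(\partial)W\cdot A(\partial)Y\big).
\]

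Finally I would observe that both arguments of $\frac{\delta}{\delta u}$ lie in $\partial\mc V$: the isotropy relation $C^*D+D^*C=0$ from \eqref{20120405:skew} gives $D(\partial)Y'\cdot C(\partial)Z'+D(\partial)Z'\cdot C(\partial)Y'\in\partial\mc V$, while $A^*B+B^*A=0$ gives $A(\partial)W\cdot B(\partial)Y+B(\partial)W\cdot A(\partial)Y\in\partial\mc V$. Since $\frac{\delta}{\delta u}$ annihilates $\partial\mc V$, the displayed difference vanishes, which proves that \eqref{20120405:eq7} and \eqref{20120405:eq16} are equivalent. The main obstacle is organizational rather than conceptual: keeping the many signs and adjoint transpositions under control, and recognizing that the spurious fields $Z$ and $W'$ can be eliminated only by invoking the self-Jacobi identities \eqref{20120103:eq1} for $H$ and for $K$ separately — this is the sole place where the Poisson property is used — after which the isotropy relations \eqref{20120405:skew} clear away the residual total-derivative terms.
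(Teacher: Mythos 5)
Your proposal is correct and follows essentially the same route as the paper's proof: expand the four brackets of \eqref{20120405:eq7} via \eqref{20120126:eq1} and \eqref{20120405:eq1}, eliminate the auxiliary fields $Z$ and $W'$ by invoking the self-Jacobi identity \eqref{20120103:eq1} for $H$ and for $K$ together with the constraints $A(\partial)Z=C(\partial)Z'$ and $A(\partial)W=C(\partial)W'$, and kill the residual variational-derivative terms using \eqref{20120405:skew} and \eqref{20120405:eq6}. The only difference is cosmetic: the paper cancels the $\frac{\delta}{\delta u}$-remainders pairwise as it goes (equations \eqref{20120405:eq8}--\eqref{20120405:eq11}), whereas you collect them at the end and observe that their arguments lie in $\partial\mc V$ — the same facts in a different order.
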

\begin{proof}
By \eqref{20120126:eq1} and \eqref{20120405:eq1}, we have 
\begin{equation}\label{20120405:eq8}
\begin{array}{l}
\displaystyle{
\vphantom{\Big(}
\big(B(\partial)Z\big|[A(\partial)X,A(\partial)Y]\big)
=\tint (B(\partial)Z)\cdot D_{A(\partial)Y}(\partial)A(\partial)X
} \\
\displaystyle{
\vphantom{\Big(}
-\tint (B(\partial)Z)\cdot D_{A(\partial)X}(\partial)A(\partial)Y
=\tint (A(\partial)X)\cdot\frac{\delta}{\delta u}(B(\partial)Z|A(\partial)Y)
} \\
\displaystyle{
\vphantom{\Big(}
-\tint (A(\partial)Y)\cdot D_{B(\partial)Z}(\partial)A(\partial)X
-\tint (A(\partial)Y)\cdot D^*_{A(\partial)X}(\partial) B(\partial)Z
\,.
}\end{array}
\end{equation}
Similarly, we have
\begin{equation}\label{20120405:eq9}
\begin{array}{l}
\displaystyle{
\vphantom{\Big(}
\big(D(\partial)Z^\prime\big|[A(\partial)X,C(\partial)Y^\prime]\big)
=\tint (A(\partial)X)\cdot\frac{\delta}{\delta u}(D(\partial)Z^\prime|C(\partial)Y^\prime)
} \\
\displaystyle{
\vphantom{\Big(}
-\tint (C(\partial)Y^\prime)\cdot D_{D(\partial)Z^\prime}(\partial)A(\partial)X
-\tint (C(\partial)Y^\prime)\cdot D^*_{A(\partial)X}(\partial) D(\partial)Z^\prime
\,,
}\end{array}
\end{equation}
\begin{equation}\label{20120405:eq10}
\begin{array}{l}
\displaystyle{
\vphantom{\Big(}
\big(B(\partial)W\big|[C(\partial)X^\prime,A(\partial)Y]\big)
=\tint (C(\partial)X^\prime)\cdot\frac{\delta}{\delta u}(B(\partial)W|A(\partial)Y)
} \\
\displaystyle{
\vphantom{\Big(}
-\tint (A(\partial)Y)\cdot D_{B(\partial)W}(\partial)C(\partial)X^\prime
-\tint (A(\partial)Y)\cdot D^*_{C(\partial)X^\prime}(\partial) B(\partial)W
\,,
}\end{array}
\end{equation}
and
\begin{equation}\label{20120405:eq11}
\begin{array}{l}
\displaystyle{
\vphantom{\Big(}
\big(D(\partial)W^\prime\big|[C(\partial)X^\prime,C(\partial)Y^\prime]\big)
=\tint (C(\partial)X^\prime)\cdot\frac{\delta}{\delta u}(D(\partial)W^\prime|C(\partial)Y^\prime)
} \\
\displaystyle{
\vphantom{\Big(}
-\tint (C(\partial)Y^\prime)\cdot D_{D(\partial)W^\prime}(\partial)C(\partial)X^\prime
-\tint (C(\partial)Y^\prime)\cdot D^*_{C(\partial)X^\prime}(\partial) D(\partial)W^\prime
\,.
}\end{array}
\end{equation}
By the skewadnointness of $H$ and $K$, which translates to \eqref{20120405:skew}, 
and by conditions \eqref{20120405:eq6}, we have
$$
\begin{array}{l}
(B(\partial)Z|A(\partial)Y)
=-(A(\partial)Z|B(\partial)Y) \\
=-(C(\partial)Z^\prime|D(\partial)Y^\prime)
=(D(\partial)Z^\prime|C(\partial)Y^\prime)
\,,
\end{array}
$$
hence the first terms in the RHS of \eqref{20120405:eq8} and \eqref{20120405:eq9}
cancel.
Similarly for the first terms in the RHS of \eqref{20120405:eq10} and \eqref{20120405:eq11}.
Therefore, combining equations \eqref{20120405:eq8}--\eqref{20120405:eq11}, we get that
equation \eqref{20120405:eq7} is equivalent to
\begin{equation}\label{20120405:eq12}
\begin{array}{l}
\displaystyle{
\vphantom{\Big(}
-\tint (A(\partial)Y)\cdot D_{B(\partial)Z}(\partial)A(\partial)X
-\tint (A(\partial)Y)\cdot D^*_{A(\partial)X}(\partial) B(\partial)Z
} \\
\displaystyle{
\vphantom{\Big(}
+\tint (C(\partial)Y^\prime)\cdot D_{D(\partial)Z^\prime}(\partial)A(\partial)X
+\tint (C(\partial)Y^\prime)\cdot D^*_{A(\partial)X}(\partial) D(\partial)Z^\prime
} \\
\displaystyle{
\vphantom{\Big(}
+\tint (A(\partial)Y)\cdot D_{B(\partial)W}(\partial)C(\partial)X^\prime
+\tint (A(\partial)Y)\cdot D^*_{C(\partial)X^\prime}(\partial) B(\partial)W
} \\
\displaystyle{
\vphantom{\Big(}
-\tint (C(\partial)Y^\prime)\cdot D_{D(\partial)W^\prime}(\partial)C(\partial)X^\prime
-\tint (C(\partial)Y^\prime)\cdot D^*_{C(\partial)X^\prime}(\partial) D(\partial)W^\prime
=0\,.
}\end{array}
\end{equation}
Next, since by assumption $H=AB^{-1}$ is a non-local Poisson structure,
it follows by Propositions \ref{20120103:propc} and \ref{20120103:propb}
that equation \eqref{20120103:eq1} holds.
In particular,
\begin{equation}\label{20120405:eq14}
\begin{array}{l}
\displaystyle{
\vphantom{\Big(}
-\tint (A(\partial)Y)\cdot D_{B(\partial)Z}(\partial)A(\partial)X
-\tint (A(\partial)Y)\cdot D^*_{A(\partial)X}(\partial) B(\partial)Z
} \\
\displaystyle{
\vphantom{\Big(}
=-\tint (A(\partial)Y)\cdot D_{B(\partial)X}(\partial)A(\partial)Z
+\tint (A(\partial)Y)\cdot D^*_{B(\partial)X}(\partial) A(\partial)Z
} \\
\displaystyle{
\vphantom{\Big(}
+\tint (B(\partial)Y)\cdot D_{A(\partial)Z}(\partial)A(\partial)X
-\tint (B(\partial)Y)\cdot D_{A(\partial)X}(\partial) A(\partial)Z
\,.
}\end{array}
\end{equation}
Similarly, using the assumption that $K=CD^{-1}$ is a non-local Poisson structure,
we get
\begin{equation}\label{20120405:eq15}
\begin{array}{l}
\displaystyle{
\vphantom{\Big(}
-\tint (C(\partial)Y^\prime)\cdot D_{D(\partial)W^\prime}(\partial)C(\partial)X^\prime
-\tint (C(\partial)Y^\prime)\cdot D^*_{C(\partial)X^\prime}(\partial) D(\partial)W^\prime
} \\
\displaystyle{
\vphantom{\Big(}
=-\tint (C(\partial)Y^\prime)\cdot D_{D(\partial)X^\prime}(\partial)C(\partial)W^\prime
+\tint (C(\partial)Y^\prime)\cdot D^*_{D(\partial)X^\prime}(\partial) C(\partial)W^\prime
} \\
\displaystyle{
\vphantom{\Big(}
+\tint (D(\partial)Y^\prime)\cdot D_{C(\partial)W^\prime}(\partial)C(\partial)X^\prime
-\tint (D(\partial)Y^\prime)\cdot D_{C(\partial)X^\prime}(\partial) C(\partial)W^\prime
\,.
}\end{array}
\end{equation}
Combining equations \eqref{20120405:eq12}, \eqref{20120405:eq14} and \eqref{20120405:eq15},
we get \eqref{20120405:eq16}.
\end{proof}

\begin{proof}[Proof of Theorem \ref{20120126:prop2}]
By Lemma \ref{20120405:lem2},
we only need to prove that, if condition \eqref{20120405:eq3} holds,
then equation \eqref{20120405:eq16} holds
for every $X,X^\prime,Y,Y^\prime,W,Z^\prime$
satisfying the first three identities in \eqref{20120405:eq6}.
It follows by some straightforward computation that
we can rewrite each term in the LHS of \eqref{20120405:eq16} as follows
\begin{equation}\label{20120405:X1}
\begin{array}{l}
\displaystyle{
\vphantom{\Big(}
-\tint (A(\partial)Y)\cdot D_{B(\partial)X}(\partial)C(\partial)Z^\prime
=-\tint (A(\partial)Y)\cdot B(\partial)D_X(\partial)C(\partial)Z^\prime
} \\
\displaystyle{
\vphantom{\Big(}
-\int \!\!\!\!\sum_{\substack{i',j',k'\in I \\ j,k,\in I,\, n\in\mb Z_+}}\!\!\!\!
Y_{k'}A^*_{k'k}(\lambda\!+\!\mu\!+\!\partial)
\frac{\partial B_{ki'}(\lambda)}{\partial u_j^{(n)}} (\mu+\partial)^n C_{jj'}(\mu)
\big(\big|_{\lambda=\partial}X_{i'}\big)\big(\big|_{\mu=\partial}Z^\prime_{j'}\big)
\,,
}\end{array}
\end{equation}
\begin{equation}\label{20120405:X2}
\begin{array}{l}
\displaystyle{
\vphantom{\Big(}
\tint (A(\partial)Y)\cdot D^*_{D(\partial)X^\prime}(\partial) C(\partial)Z^\prime
=\tint (A(\partial)Y)\cdot D^*_{X^\prime}(\partial)D^*(\partial)C(\partial)Z^\prime
} \\
\displaystyle{
\vphantom{\Big(}
+\int \!\!\!\!\!\!\sum_{\substack{i',j',k'\in I \\ j,k,\in I, n\in\mb Z_+}}\!\!\!\!\!\!
Y_{k'}A^*_{k'k}\!(\lambda\!+\!\mu\!+\!\partial)
(\!-\!\lambda\!-\!\mu\!-\!\partial)^n
\frac{\partial D_{ji'}(\lambda)}{\partial u_k^{(n)}} C_{jj'}(\mu)
\big(\big|_{\lambda=\partial}\!X^\prime_{i'}\big)\big(\big|_{\mu=\partial}\!Z^\prime_{j'}\big)
\,,
}\end{array}
\end{equation}
\begin{equation}\label{20120405:X3}
\begin{array}{l}
\displaystyle{
\vphantom{\Big(}
\tint (D(\partial)Y^\prime)\cdot D_{C(\partial)Z^\prime}(\partial)A(\partial)X
=\tint (D(\partial)Y^\prime)\cdot C(\partial)D_{Z^\prime}(\partial)A(\partial)X
} \\
\displaystyle{
\vphantom{\Big(}
+\int \!\!\!\sum_{\substack{i',j',k'\in I \\ i,k,\in I, n\in\mb Z_+}}\!\!\!
Y^\prime_{k'}D^*_{k'k}(\lambda\!+\!\mu\!+\!\partial)
\frac{\partial C_{kj'}(\mu)}{\partial u_i^{(n)}} (\lambda+\partial)^n A_{ii'}(\lambda)
\big(\big|_{\lambda=\partial}X_{i'}\big)\big(\big|_{\mu=\partial}Z^\prime_{j'}\big)
\,,
}\end{array}
\end{equation}
\begin{equation}\label{20120405:X4}
\begin{array}{l}
\displaystyle{
\vphantom{\Big(}
-\tint (B(\partial)Y)\cdot D_{A(\partial)X}(\partial) C(\partial)Z^\prime
=-\tint (B(\partial)Y)\cdot A(\partial)D_X(\partial)C(\partial)Z^\prime
} \\
\displaystyle{
\vphantom{\Big(}
-\int \!\!\!\sum_{\substack{i',j',k'\in I \\ j,k,\in I, n\in\mb Z_+}}\!\!\!
Y_{k'}B^*_{k'k}(\lambda\!+\!\mu\!+\!\partial)
\frac{\partial A_{ki'}(\lambda)}{\partial u_j^{(n)}} (\mu+\partial)^n C_{jj'}(\mu)
\big(\big|_{\lambda=\partial}X_{i'}\big)\big(\big|_{\mu=\partial}Z^\prime_{j'}\big)
\,,
}\end{array}
\end{equation}
\begin{equation}\label{20120405:X5}
\begin{array}{l}
\displaystyle{
\vphantom{\Big(}
\tint (C(\partial)Y^\prime)\cdot D_{D(\partial)Z^\prime}(\partial)A(\partial)X
=\tint (C(\partial)Y^\prime)\cdot D(\partial)D_{Z^\prime}(\partial)A(\partial)X
} \\
\displaystyle{
\vphantom{\Big(}
+\int \!\!\!\sum_{\substack{i',j',k'\in I \\ i,k,\in I, n\in\mb Z_+}}\!\!\!
Y^\prime_{k'}C^*_{k'k}(\lambda\!+\!\mu\!+\!\partial)
\frac{\partial D_{kj'}(\mu)}{\partial u_i^{(n)}} (\lambda+\partial)^n A_{ii'}(\lambda)
\big(\big|_{\lambda=\partial}X_{i'}\big)\big(\big|_{\mu=\partial}Z^\prime_{j'}\big)
\,,
}\end{array}
\end{equation}
\begin{equation}\label{20120405:X6}
\begin{array}{l}
\displaystyle{
\vphantom{\Big(}
\tint (C(\partial)Y^\prime)\cdot D^*_{A(\partial)X}(\partial) B(\partial)W
=\tint (C(\partial)Y^\prime)\cdot D^*_{X}(\partial)A^*(\partial)B(\partial)W
} \\
\displaystyle{
\vphantom{\Big(}
+\int \!\!\!\!\!\!\sum_{\substack{i',j',k'\in I \\ j,k,\in I, n\in\mb Z_+}}\!\!\!\!\!\!\!
Y^\prime_{k'}C^*_{k'k}\!(\lambda\!+\!\mu\!+\!\partial)
(\!-\!\lambda\!-\!\mu\!-\!\partial)^n
\frac{\partial A_{ji'}(\lambda)}{\partial u_k^{(n)}} B_{jj'}(\mu)
\big(\big|_{\lambda=\partial}\!X_{i'}\big)\big(\big|_{\mu=\partial}\!W_{j'}\big)
,
}\end{array}
\end{equation}
\begin{equation}\label{20120405:X7}
\begin{array}{l}
\displaystyle{
\vphantom{\Big(}
\tint (A(\partial)Y)\cdot D_{B(\partial)W}(\partial)C(\partial)X^\prime
=\tint (A(\partial)Y)\cdot B(\partial)D_W(\partial)C(\partial)X^\prime
} \\
\displaystyle{
\vphantom{\Big(}
+\int \!\!\!\sum_{\substack{i',j',k'\in I \\ i,k,\in I, n\in\mb Z_+}}\!\!\!
Y_{k'}A^*_{k'k}(\lambda\!+\!\mu\!+\!\partial)
\frac{\partial B_{kj'}(\mu)}{\partial u_i^{(n)}} (\lambda+\partial)^n C_{ii'}(\lambda)
\big(\big|_{\lambda=\partial}X^\prime_{i'}\big)\big(\big|_{\mu=\partial}W_{j'}\big)
\,,
}\end{array}
\end{equation}
\begin{equation}\label{20120405:X8}
\begin{array}{l}
\displaystyle{
\vphantom{\Big(}
\tint (A(\partial)Y)\cdot D^*_{C(\partial)X^\prime}(\partial) D(\partial)Z^\prime
=\tint (A(\partial)Y)\cdot D^*_{X^\prime}(\partial)C^*(\partial)D(\partial)Z^\prime
} \\
\displaystyle{
\vphantom{\Big(}
+\int \!\!\!\!\!\!\sum_{\substack{i',j',k'\in I \\ j,k,\in I, n\in\mb Z_+}}\!\!\!\!\!\!
Y_{k'}A^*_{k'k}\!(\lambda\!+\!\mu\!+\!\partial)
(\!-\!\lambda\!-\!\mu\!-\!\partial)^n
\frac{\partial C_{ji'}(\lambda)}{\partial u_k^{(n)}} D_{jj'}(\mu)
\big(\big|_{\lambda=\partial}\!X^\prime_{i'}\big)\big(\big|_{\mu=\partial}\!Z^\prime_{j'}\big)
\,,
}\end{array}
\end{equation}
\begin{equation}\label{20120405:X9}
\begin{array}{l}
\displaystyle{
\vphantom{\Big(}
-\tint (C(\partial)Y^\prime)\cdot D_{D(\partial)X^\prime}(\partial)A(\partial)W
=-\tint (C(\partial)Y^\prime)\cdot D(\partial)D_{X^\prime}(\partial)A(\partial)W
} \\
\displaystyle{
\vphantom{\Big(}
-\int \!\!\!\!\sum_{\substack{i',j',k'\in I \\ j,k,\in I, n\in\mb Z_+}}\!\!\!\!
Y^\prime_{k'}C^*_{k'k}(\lambda\!+\!\mu\!+\!\partial)
\frac{\partial D_{ki'}(\lambda)}{\partial u_j^{(n)}} (\mu+\partial)^n A_{jj'}(\mu)
\big(\big|_{\lambda=\partial}X^\prime_{i'}\big)\big(\big|_{\mu=\partial}W_{j'}\big)
\,,
}\end{array}
\end{equation}
\begin{equation}\label{20120405:X10}
\begin{array}{l}
\displaystyle{
\vphantom{\Big(}
\tint (C(\partial)Y^\prime)\cdot D^*_{B(\partial)X}(\partial) A(\partial)W
=\tint (C(\partial)Y^\prime)\cdot D^*_{X}(\partial)B^*(\partial)A(\partial)W
} \\
\displaystyle{
\vphantom{\Big(}
+\int \!\!\!\!\!\!\sum_{\substack{i',j',k'\in I \\ j,k,\in I, n\in\mb Z_+}}\!\!\!\!\!\!
Y^\prime_{k'}C^*_{k'k}\!(\lambda\!+\!\mu\!+\!\partial)
(\!-\!\lambda\!-\!\mu\!-\!\partial)^n
\frac{\partial B_{ji'}(\lambda)}{\partial u_k^{(n)}} A_{jj'}(\mu)
\big(\big|_{\lambda=\partial}\!X_{i'}\big)\!\big(\big|_{\mu=\partial}\!W_{j'}\big)
,
}\end{array}
\end{equation}
\begin{equation}\label{20120405:X11}
\begin{array}{l}
\displaystyle{
\vphantom{\Big(}
\tint (B(\partial)Y)\cdot D_{A(\partial)W}(\partial)C(\partial)X^\prime
=\tint (B(\partial)Y)\cdot A(\partial)D_W(\partial)C(\partial)X^\prime
} \\
\displaystyle{
\vphantom{\Big(}
+\int \!\!\!\!\sum_{\substack{i',j',k'\in I \\ i,k,\in I, n\in\mb Z_+}}\!\!\!\!
Y_{k'}B^*_{k'k}(\lambda\!+\!\mu\!+\!\partial)
\frac{\partial A_{kj'}(\mu)}{\partial u_i^{(n)}} (\lambda+\partial)^n C_{ii'}(\lambda)
\big(\big|_{\lambda=\partial}X^\prime_{i'}\big)\big(\big|_{\mu=\partial}W_{j'}\big)
\,,
}\end{array}
\end{equation}
\begin{equation}\label{20120405:X12}
\begin{array}{l}
\displaystyle{
\vphantom{\Big(}
-\tint (D(\partial)Y^\prime)\cdot D_{C(\partial)X^\prime}(\partial) A(\partial)W
=-\tint (D(\partial)Y^\prime)\cdot C(\partial)D_{X^\prime}(\partial)A(\partial)W
} \\
\displaystyle{
\vphantom{\Big(}
-\int \!\!\!\!\sum_{\substack{i',j',k'\in I \\ j,k,\in I, n\in\mb Z_+}}\!\!\!\!
Y^\prime_{k'}D^*_{k'k}(\lambda\!+\!\mu\!+\!\partial)
\frac{\partial C_{ki'}(\lambda)}{\partial u_j^{(n)}} (\mu+\partial)^n A_{jj'}(\mu)
\big(\big|_{\lambda=\partial}X^\prime_{i'}\big)\big(\big|_{\mu=\partial}W_{j'}\big)
\,.
}\end{array}
\end{equation}
It follows from the skewadjointness conditions \eqref{20120405:skew}
that the first term in the RHS of \eqref{20120405:X1} 
cancels with the first term in the RHS of \eqref{20120405:X4},
the first term in the RHS of \eqref{20120405:X2} 
cancels with the first term in the RHS of \eqref{20120405:X8},
the first term in the RHS of \eqref{20120405:X3} 
cancels with the first term in the RHS of \eqref{20120405:X5},
the first term in the RHS of \eqref{20120405:X6} 
cancels with the first term in the RHS of \eqref{20120405:X10},
the first term in the RHS of \eqref{20120405:X7} 
cancels with the first term in the RHS of \eqref{20120405:X11},
and the first term in the RHS of \eqref{20120405:X9} 
cancels with the first term in the RHS of \eqref{20120405:X12}.
Furthermore, 
combining the second terms of the RHS's of \eqref{20120405:X7} and \eqref{20120405:X11},
we get, thanks to \eqref{20120405:A1},
$$
\begin{array}{l}
\displaystyle{
\vphantom{\Big(}
\int 
\sum_{\substack{i',j',k'\in I \\ i,j,k\in I}}
\big(B_{kk'}(\partial)Y_{k'}\big)
\{{u_i}_\lambda\{{u_j}_\mu {u_k}\}_H\}_K
\big(\big|_{\lambda=\partial}D_{ii'}(\partial)X^\prime_{i'}\big)
\big(\big|_{\mu=\partial}B_{jj'}(\partial)W_{j'}\big)
\,.
}\end{array}
$$
Combining the second terms of the RHS's of \eqref{20120405:X3} and \eqref{20120405:X5},
we get, thanks to \eqref{20120405:A2},
$$
\begin{array}{l}
\displaystyle{
\vphantom{\Big(}
\int \sum_{\substack{i',j',k'\in I \\ i,j,k,\in I}}
\big(D_{kk'}(\partial)Y^\prime_{k'}\big)
\{{u_i}_\lambda\{{u_j}_\mu {u_k}\}_K\}_H
\big(\big|_{\lambda=\partial}B_{ii'}(\partial)X_{i'}\big)
\big(\big|_{\mu=\partial}D_{jj'}(\partial)Z^\prime_{j'}\big)
\,.
}\end{array}
$$
Combining the second terms of the RHS's of \eqref{20120405:X1} and \eqref{20120405:X4},
we get, thanks to \eqref{20120405:B1},
$$
\begin{array}{l}
\displaystyle{
\vphantom{\Big(}
-\int \!\!\sum_{\substack{i',j',k'\in I \\ i,j,k,\in I}}\!\!
\big(B_{kk'}(\partial)Y_{k'}\big)
\{{u_j}_\mu\{{u_i}_\lambda {u_k}\}_H\}_K
\big(\big|_{\lambda=\partial}B_{ii'}(\partial)X_{i'}\big)
\big(\big|_{\mu=\partial}D_{jj'}(\partial)Z^\prime_{j'}\big)
\,.
}\end{array}
$$
Combining the second terms of the RHS's of \eqref{20120405:X9} and \eqref{20120405:X12},
we get, thanks to \eqref{20120405:B2},
$$
\begin{array}{l}
\displaystyle{
\vphantom{\Big(}
-\int \!\!\sum_{\substack{i',j',k'\in I \\ i,j,k,\in I}}\!\!
\big(D_{kk'}(\partial)Y^\prime_{k'}\big)
\{{u_j}_\mu\{{u_i}_\lambda {u_k}\}_K\}_H
\big(\big|_{\lambda=\partial}D_{ii'}(\partial)X^\prime_{i'}\big)
\big(\big|_{\mu=\partial}B_{jj'}(\partial)W_{j'}\big)
\,.
}\end{array}
$$
Combining the second terms of the RHS's of \eqref{20120405:X6} and \eqref{20120405:X10},
we get, thanks to \eqref{20120405:C1},
$$
\begin{array}{l}
\displaystyle{
\vphantom{\Big(}
-\int \!\!\!\sum_{\substack{i',j',k'\in I \\ i,j,k,\in I}}\!\!\!
\big(D_{kk'}(\partial)Y^\prime_{k'}\big)
\{{\{{u_i}_\lambda{u_j}\}_H}\!_{\lambda+\mu}\!{u_k}\}_K
\big(\big|_{\lambda=\partial}B_{ii'}(\partial)X_{i'}\big)
\!\big(\big|_{\mu=\partial}B_{jj'}(\partial)W_{j'}\big)
.
}\end{array}
$$
Finally, 
combining the second terms of the RHS's of \eqref{20120405:X2} and \eqref{20120405:X8},
we get, thanks to \eqref{20120405:C2},
$$
\begin{array}{l}
\displaystyle{
\vphantom{\Big(}
-\int \!\!\!\sum_{\substack{i',j',k'\in I \\ i,j,k,\in I}}\!\!\!
\big(B_{kk'}(\partial)Y_{k'}\big)
\{{\{{u_i}_\lambda{u_j}\}_K}_{\lambda+\mu}\!{u_k}\}_H
\big(\big|_{\lambda=\partial}D_{ii'}(\partial)X^\prime_{i'}\big)
\!\big(\big|_{\mu=\partial}D_{jj'}(\partial)Z^\prime_{j'}\big)
.
}\end{array}
$$
Putting together all the above results, we conclude that the LHS of \eqref{20120405:eq16}
is equal to
$$
\begin{array}{l}
\displaystyle{
\vphantom{\Big(}
\int  \sum_{i,j,k\in I}
\big(B(\partial)Y\big)_k
\Big(
\{{u_i}_\lambda\{{u_j}_\mu {u_k}\}_H\}_K
+\{{u_i}_\lambda\{{u_j}_\mu {u_k}\}_K\}_H
} \\
\displaystyle{
\vphantom{\Big(}
-\{{u_j}_\mu\{{u_i}_\lambda {u_k}\}_H\}_K
-\{{u_j}_\mu\{{u_i}_\lambda {u_k}\}_K\}_H
-\{{\{{u_i}_\lambda{u_j}\}_H}_{\lambda+\mu}{u_k}\}_K
} \\
\displaystyle{
\vphantom{\Big(}
-\{{\{{u_i}_\lambda{u_j}\}_K}_{\lambda+\mu}\!{u_k}\}_H
\Big)
\big(\big|_{\lambda=\partial}B(\partial)X\big)_i
\big(\big|_{\mu=\partial}B(\partial)W\big)_j
\,,
}\end{array}
$$
which is zero by \eqref{20120405:eq3}.
\end{proof}

In view of Theorem \ref{20120126:prop2},
we can translate Theorem \ref{mtst} in terms of compatible non-local Poisson structures.
\begin{theorem}\label{mtst-nonloc}
Let $\mc V$ be an algebra of differential functions in $u_1,\dots,u_\ell$, which is a domain.
Let $H,K\in\Mat_{\ell\times\ell}\mc V(\partial)$
be compatible non-local Poisson structures over $\mc V$.
Let $H=AB^{-1},$ and $K=CD^{-1}$ be their minimal fractional decompositions
(cf. Definition \ref{def:minimal-fraction}),
with $A,B,C,D\in\Mat_{\ell\times\ell}\mc V[\partial]$, $B$, $D$ non-degenerate
(cf. Definition \ref{def:non-deg} and Remark \ref{rem:minimal-fraction2}).
Let $F_0=B(\partial)Z,\,F_1=D(\partial)Z^\prime=B(\partial)W,\,F_2=D(\partial)W^\prime$,
with $Z,Z^\prime,W,W^\prime\in\mc V^{\oplus\ell}$, be such that
\begin{equation}\label{20130123:eq3}
D(\partial)Z^\prime=B(\partial)W
\,\,,\,\,\,\,
A(\partial)Z=C(\partial)Z^\prime
\,\,,\,\,\,\,
A(\partial)W=C(\partial)W^\prime
\,,
\end{equation}
and $F_0$ and $F_1$ are closed, i.e.
$$
D_{F_0}^*(\partial)=D_{F_0}(\partial)
\,\,,\,\,\,\,
D_{F_1}^*(\partial)=D_{F_1}(\partial)
\,.
$$
Then:
\begin{enumerate}[(a)]
\item
For all $X,\,Y\in\mc V^{\oplus\ell}$, such that
$D(\partial)X,D(\partial)Y\in\im(B)$, we have
\begin{equation}\label{20120405:eq2b}
\tint Y\cdot C^*(\partial)\big(D_{F_2}(\partial)-D_{F_2}^*(\partial)\big)C(\partial)X
=0\,.
\end{equation}
\item
If we also assume that $K$ is non-degenerate, then 
$F_2$ is closed.
\item
Moreover $F_2$ is exact in any normal differential algebra extension $\tilde{\mc V}$ of $\mc V$:
$F_2=\frac{\delta f_2}{\delta u}$, where $\tint f_2\in\tilde{\mc V}/\partial\tilde{\mc V}$.
\end{enumerate}
\end{theorem}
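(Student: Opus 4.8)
The plan is to deduce all three parts from Theorem~\ref{mtst}, feeding it the compatible pair of Dirac structures produced by Theorem~\ref{20120126:prop2}. First I would set $\mc L:=\mc L_{A,B}(\mc K)$ and $\mc L':=\mc L_{C,D}(\mc K)$, which are compatible Dirac structures over $\mc K$ by Theorem~\ref{20120126:prop2}. Next I would check that the data $F_0,F_1,F_2$ are exactly what Theorem~\ref{mtst} requires: by the description \eqref{20120405:eq5} of $\mc N_{\mc L,\mc L'}\wcheck$, the three identities in \eqref{20130123:eq3} say precisely that $F_0\oplus F_1$ and $F_1\oplus F_2$ lie in $\mc N_{\mc L,\mc L'}\wcheck$, and $F_0,F_1$ are assumed closed. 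Recalling from \eqref{20120405:eq5} that a general element of $\mc N_{\mc L,\mc L'}$ has the form $A(\partial)X\oplus C(\partial)X'$ with $B(\partial)X=D(\partial)X'$, Theorem~\ref{mtst} yields $\int (C(\partial)Y')\cdot\big(D_{F_2}(\partial)-D^*_{F_2}(\partial)\big)C(\partial)X'=0$ for all $X',Y'\in\mc K^{\oplus\ell}$ with $DX',DY'\in\im B$. Writing $M:=D_{F_2}(\partial)-D^*_{F_2}(\partial)$ and integrating by parts to move $C$ onto the first factor gives $\int Y'\cdot C^*(\partial)\,M\,C(\partial)X'=0$; after renaming $X',Y'$ this is exactly (a), valid over $\mc K$ and in particular over $\mc V$.

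For (b) the goal is to promote the scalar identity (a) to the operator identity $M=0$ by non-degeneracy of the pairing $(F|P)=\int F\cdot P$; the obstacle is the restricted test space $S=\{X: DX\in\im B\}$, over which $X$ is not free. I would remove this obstruction with one substitution. Since $D^{-1}B\in\Mat_{\ell\times\ell}\mc K(\partial)$ is rational, Proposition~\ref{20111003:thm2}(a) (the Ore condition for $\mc K[\partial]$, in its matrix form) lets me write $D^{-1}B=\hat B\hat D^{-1}$ with $\hat B,\hat D\in\Mat_{\ell\times\ell}\mc K[\partial]$ non-degenerate, equivalently $B\hat D=D\hat B$. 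Then for every $f\in\mc K^{\oplus\ell}$ the vector $X:=\hat B(\partial)f$ satisfies $DX=D\hat B f=B\hat D f\in\im B$, so $X\in S$ with no constraint on $f$. Substituting $X=\hat B f$, $Y=\hat B g$ into the $\mc K$-version of (a) and integrating by parts gives $\int g\cdot (C\hat B)^*(\partial)\,M\,(C\hat B)(\partial)f=0$ for all $f,g\in\mc K^{\oplus\ell}$. The crucial gain is that $C\hat B$ is a \emph{differential} operator, so $(C\hat B)^* M\,(C\hat B)$ is differential and non-degeneracy of the pairing forces $(C\hat B)^* M\,(C\hat B)=0$. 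Finally $C$ is non-degenerate (as $K=CD^{-1}$ with $K,D$ non-degenerate) and $\hat B$ is non-degenerate, so $C\hat B$ is invertible in $\Mat_{\ell\times\ell}\mc K((\partial^{-1}))$; cancelling it on both sides yields $M=0$, i.e. $F_2$ is closed. I expect this to be the main difficulty: everything hinges on choosing the substitution $X=\hat B f$ so that it simultaneously lands automatically in $S$ and makes the composite $C\hat B$ differential.

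Part (c) is then short. The identity $D^*_{F_2}(\partial)=D_{F_2}(\partial)$ has coefficients in $\mc V$, hence persists in any differential algebra extension, and $F_2=D(\partial)W'\in\mc V^{\oplus\ell}\subset\tilde{\mc V}^{\oplus\ell}$. For a normal extension $\tilde{\mc V}$ the variational complex is exact (Section~\ref{sec:4.1.a.5}, following \cite{BDSK09}), so the closed element $F_2$ is exact: $F_2=\frac{\delta f_2}{\delta u}$ for some $\tint f_2\in\tilde{\mc V}/\partial\tilde{\mc V}$.
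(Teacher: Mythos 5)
Your proof is correct and follows essentially the same route as the paper: part (a) by feeding the compatible Dirac structures from Theorem~\ref{20120126:prop2} into Theorem~\ref{mtst}, part (b) by substituting through a right common multiple $B\hat D=D\hat B$ of $B$ and $D$ (the paper cites \cite{CDSK12b} for its existence, which is the matrix Ore property you invoke) and then cancelling the non-degenerate factors $C$ and $\hat B$, and part (c) by exactness of the variational complex over a normal extension. No gaps.
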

\begin{proof}
By Theorem \ref{20120126:prop2}, $\mc L_{A,B}(\mc K)$ and $\mc L_{C,D}(\mc K)$
are compatible Dirac structures over $\mc K$, the field of fractions of $\mc V$.
Recalling the expressions \eqref{20120405:eq5} of $\mc N_{\mc L,\mc L^\prime}\wcheck$ 
and $\mc N_{\mc L,\mc L^\prime}$ for these Dirac structures,
we get by Theorem \ref{mtst} that equation \eqref{20120405:eq2b} holds over $\mc K$,
hence over $\mc V$, proving (a).
Let us prove part (b).
It is proved in \cite{CDSK12b} that any two 
non-degenerate matrix differential operators $B(\partial)$ and $D(\partial)$
have a right common multiple 
$B(\partial)D_1(\partial)=D(\partial)B_1(\partial)$,
where $B_1(\partial),D_1(\partial)\in\Mat_{\ell\times\ell}\mc K[\partial]$
are non-degenerate.
By clearing denominators, we can assume that $B_1(\partial)$ and $D_1(\partial)$
have coefficients in $\mc V$.
Hence, if $X,Y\in\im(B_1)$, we have $D(\partial)X,D(\partial)Y\in\im(B)$.
Therefore, by part (a) we have
$$
\int G\cdot B_1^*(\partial)C^*(\partial)\big(D_{F_2}(\partial)-D_{F_2}^*(\partial)\big)C(\partial)B_1(\partial)F
=0\,,
$$
for all $F,G\in\mc V^{\oplus\ell}$.
Since, by assumption, $C(\partial)$ and $B_1(\partial)$ are non-degenerate,
it follows that $D_{F_2}^*(\partial)=D_{F_2}(\partial)$, as we wanted.
Finally, part (c) follows by the fact that, under the assumption that $\tilde{\mc V}$ is normal,
the variational complex is exact (see \cite[Thm.3.2]{BDSK09}).
\end{proof}


\section{Hamiltonian equations associated to a non-local Poisson structure}
\label{sec:7}

\subsection{A simple linear algebra lemma}
\label{sec:7.0}

Let $U,V,W$ be vector spaces over $\mb F$.
\begin{definition}\label{20130104:def}
Given linear maps $A:\,U\to W$ and $B:\,U\to V$,
we say that $v\in V$ and $w\in W$ are $(A,B)$-\emph{associated} (over $U$),
and we denote this by $v\ass{(A,B)}w$ or $w\ass{(A,B)}v$,
if there exists $u\in U$ such that $v=Bu$ and $w=Au$.
\end{definition}

Let $(\cdot\,|\,\cdot):\,U\times U\to G$ be a symmetric bi-additive form 
with values in an abelian group $G$,
and, by abuse of notation, let also $(\cdot\,|\,\cdot):\,V\times W\to G$ 
be a bi-additive form with values in $G$.
Given a subspace $V_1\subset V$, we define its \emph{orthogonal complement} $V_1^\perp\subset W$ as
$$
V_1^\perp=\big\{w\in W\,\big|\,(v|w)=0\,\text{ for all } v\in V_1\big\}\,
$$
and similarly,
given a subspace $W_1\subset W$, we define its \emph{orthogonal complement} $W_1^\perp\subset V$ as
$$
W_1^\perp=\big\{v\in V\,\big|\,(v|w)=0\,\text{ for all } w\in W_1\big\}\,.
$$
We say that linear maps $A:\,U\to W$ and $A^*:\,V\to U$ are \emph{adjoint}
if we have
$$
(v|Au)=(A^*v|u)\,,
$$
for every $u\in U$ and $v\in V$,
and similarly we say that linear maps $B:\,U\to V$ and $B^*:\,W\to U$ are \emph{adjoint}
if we have
$$
(Bu|w)=(u|B^*u)\,,
$$
for every $u\in U$ and $w\in W$.
(The adjoints $A^*$ and $B^*$ are unique
if the inner product $(\cdot\,|\,\cdot):\,U\times U\to G$ is non-degenerate.)

\begin{lemma}\label{20130104:lem}
Let $A:\,U\to W$, $B:\,U\to V$, $C:\,U\to W$, $D:\,U\to V$, be linear maps,
and let 
$A^*:\,V\to U$, $B^*:\,W\to U$, $C^*:\,V\to U$, $D^*:\,W\to U$ be their adjoint.
Assume that
\begin{equation}\label{20130104:eq3}
A^*B+B^*A=0
\,\,,\,\,\,\,
C^*D+D^*C=0
\,.
\end{equation}
Let $\{v_n\}_{n=-1}^N\subset V$ and $\{w_n\}_{n=0}^N\subset W$
be finite sequences such that
\begin{equation}\label{20130104:eq1}
v_{n-1}\ass{(A,B)}w_n\ass{(C,D)}v_n
\,,
\end{equation}
holds for every $n=0,\dots,N$.
\begin{enumerate}[(a)]
\item
Then we have
$(v_m|w_n)=0$ for every $m=-1,\dots,N$, $n=0,\dots,N$.
\item
Suppose, moreover, that the the following orthogonality conditions hold:
\begin{equation}\label{20130104:eq2b}
\Big(\Span{}_{\mb F}\{v_n\}_{n=-1}^N\Big)^\perp\subset\im C
\,\,,\,\,\,\,
\Big(\Span{}_{\mb F}\{w_n\}_{n=0}^N\Big)^\perp\subset\im B\,.
\end{equation}
Then, we can extend the given finite sequences to infinite sequences $\{v_n\}_{n=-1}^\infty\subset V$,
$\{w_n\}_{n=0}^\infty\subset W$ such that the association relations \eqref{20130104:eq1} 
hold for every $n\in\mb Z_+$.
\end{enumerate}
\end{lemma}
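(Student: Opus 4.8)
The plan is to turn each association into an explicit witness in $U$ and to convert the two skew-adjointness hypotheses \eqref{20130104:eq3} into two ``reflection'' identities for the pairings $a_{m,n}:=(v_m|w_n)$. First I would record witnesses: for $n=0,\dots,N$ pick $p_n,q_n\in U$ with $v_{n-1}=Bp_n,\ w_n=Ap_n$ and $w_n=Cq_n,\ v_n=Dq_n$. Using the adjunction formulas and the symmetry of the $U$-form, I would then compute, for $-1\le m\le N-1$ and $0\le n\le N$,
$$
(v_m|w_n)=(Bp_{m+1}|Ap_n)=(p_{m+1}|B^*Ap_n)=-(p_{m+1}|A^*Bp_n)=-(v_{n-1}|w_{m+1}),
$$
using $B^*A=-A^*B$; and similarly, for $0\le m,n\le N$,
$$
(v_m|w_n)=(Dq_m|Cq_n)=(q_m|D^*Cq_n)=-(q_m|C^*Dq_n)=-(v_n|w_m),
$$
using $D^*C=-C^*D$. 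These are the whole engine of the proof.

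For part (a), I would combine the two identities into the anti-diagonal invariance $(v_m|w_n)=(v_{m+1}|w_{n-1})$ (shift by $(A,B)$-skewness, then reindex by $(C,D)$-skewness), valid whenever the indices stay in $\{-1,\dots,N\}\times\{0,\dots,N\}$. Iterating the shift along a fixed anti-diagonal $m+n=\mathrm{const}$ until the indices coincide (or differ by one) reduces any $(v_m|w_n)$ to a term $(v_j|w_j)$ or $(v_j|w_{j+1})$. On such a term the relevant reflection identity reads $(v_j|w_j)=-(v_j|w_j)$, respectively $(v_j|w_{j+1})=-(v_j|w_{j+1})$, so $2(v_m|w_n)=0$ and hence $(v_m|w_n)=0$, since in the characteristic-zero setting $G$ is $2$-torsion-free.

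For part (b), I would extend the chain one step at a time, preserving the association relations. At level $N'\ge N$, part (a) gives $(v_m|w_n)=0$ for all current indices; in particular $v_{N'}$ lies in $\bigl(\Span_{\mb F}\{w_n\}_{n=0}^{N'}\bigr)^\perp$. Enlarging a span shrinks its orthogonal complement, so this complement is contained in $\bigl(\Span_{\mb F}\{w_n\}_{n=0}^{N}\bigr)^\perp\subset\im B$; thus $v_{N'}=Bu$ for some $u\in U$, and setting $w_{N'+1}:=Au$ gives $v_{N'}\ass{(A,B)}w_{N'+1}$. The same computation as in (a) shows $(v_m|w_{N'+1})=-(v_{N'}|w_{m+1})=0$ for $m\le N'-1$, while $(v_{N'}|w_{N'+1})=(Bu|Au)=-(Bu|Au)$ forces that pairing to vanish as well; hence $w_{N'+1}\in\bigl(\Span_{\mb F}\{v_n\}_{n=-1}^{N'}\bigr)^\perp\subset\im C$, so $w_{N'+1}=Cu'$ and $v_{N'+1}:=Du'$ completes $w_{N'+1}\ass{(C,D)}v_{N'+1}$. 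Induction on $N'$ produces the infinite sequences, the orthogonality hypotheses \eqref{20130104:eq2b} being automatically inherited at each stage.

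The main obstacle is not any single deep step but the careful bookkeeping: one must keep each reflection identity strictly within its range of validity while telescoping, and one must notice that \eqref{20130104:eq2b} is self-propagating under extension (larger span, smaller complement), which is exactly what allows part (a) to be reinvoked at every inductive step in (b). The only conceptual subtlety is the $2(v_m|w_n)=0$ collapse, which is harmless here because $G$ is a vector space over a field of characteristic zero.
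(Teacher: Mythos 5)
Your proof is correct and follows essentially the same route as the paper's: the same two adjunction/skew-adjointness reflection identities for $(v_m|w_n)$, the same collapse $2(v_j|w_j)=0$ at the (anti-)diagonal, and the same one-step-at-a-time extension in part (b) using part (a) together with the fact that enlarging the spans only shrinks the orthogonal complements. The only cosmetic difference is that you telescope along anti-diagonals where the paper inducts on $n-m$ and then invokes $(v_m|w_n)=-(v_n|w_m)$ for $n<m$, and you make explicit the $2$-torsion-freeness of $G$ that the paper uses tacitly.
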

\begin{proof}
By assumption, for every $n=0,\dots,N$ there exist $u_n,u_n^\prime\in U$ such that
$$
v_{n-1}=Bu_n
\,\,,\,\,\,\,
w_n=Au_n
\,\,,\,\,\,\,
v_n=Du_n^\prime
\,\,,\,\,\,\,
w_n=Cu_n^\prime\,.
$$
Hence, by definition of adjoint operators and assumption \eqref{20130104:eq3}, we have,
for every $m,n$,
$$
\begin{array}{l}
(v_m|w_n)=(Du_m^\prime|Cu_n^\prime)
=(u_m^\prime|D^*Cu_n^\prime)=-(u_m^\prime|C^*Du_n^\prime) \\
=-(C^*Du_n^\prime|u_m^\prime)=-(Du_n^\prime|Cu_m^\prime)=-(v_n|w_m)\,,
\end{array}
$$
and similarly
$$
\begin{array}{l}
(v_m|w_n)=(Bu_{m+1}|Au_n)
=(u_{m+1}|B^*Au_n)=-(u_{m+1}|A^*Bu_n) \\
=-(A^*Bu_n|u_{m+1})=-(Bu_n|Au_{m+1})=-(v_{n-1}|w_{m+1})
\,.
\end{array}
$$
Hence,
\begin{equation}
\label{20130104:eq4}
(v_m|w_n)=-(v_n|w_m)=-(v_{n-1}|w_{m+1})\,.
\end{equation}
Letting $m=n$ in equation \eqref{20130104:eq4} we get $(v_n|w_n)=0$ for every $n=0,\dots,N$,
while letting $m=n-1$ in \eqref{20130104:eq4} we get $(v_{n-1}|w_n)=0$ for every $n=0,\dots,N$.
Equations \eqref{20130104:eq4} imply $(v_m|w_n)=(v_{m+1}|w_{n-1})$,
and therefore, by induction on $n-m$, we get that $(v_m|w_n)=0$ for $n\geq m$.
On the other hand, by the first identity in equation \eqref{20130104:eq4} it follows that $(v_m|w_n)=0$ also for $n<m$.
This proves part (a).

By part (a), we have that $(v_N|w_n)=0$ for every $n=0,\dots,N$,
and therefore by the second orthogonality condition \eqref{20130104:eq2b}
we get that $v_N=Bu_{N+1}$ for some $u_{N+1}\in U$.
We then let $w_{N+1}=Au_{N+1}$ and we get, by construction, that $v_N\ass{(A,B)}w_{N+1}$.
By the same argument as in the proof of part (a), we have that $(v_n|w_{N+1})=0$ 
for every $n=-1,\dots,N$,
and therefore by the first orthogonality condition \eqref{20130104:eq2b}
we get that $w_{N+1}=Cu_{N+1}^\prime$ for some $u_{N+1}^\prime\in U$.
We then let $v_{N+1}=Du_{N+1}^\prime$ and we get, by construction, that $v_{N+1}\ass{(C,D)}w_{N+1}$.
Hence, we prolonged the original finite sequences $\{v_n\}_{n=-1}^N$ and $\{w_n\}_{n=0}^N$
by one step.
The claim follows by induction.
\end{proof}

\subsection{Hamiltonian functionals and vector fields, and Poisson bracket}
\label{sec:7.1}

This section serves as a motivation to the Lenard-Magri scheme of integrability,
discussed in the following sections.
%
%
We assume that the algebra of differential functions $\mc V$ is a domain,
and we denote by $\mc K$ its field of fractions.

Let $H\in\Mat_{\ell\times\ell}\mc V(\partial)$ be a non-local Poisson structure over $\mc V$.
Recall that, since $H$ has rational entries, it admits fractional decomposition
$H=AB^{-1}$, where 
$A,B\in\Mat_{\ell\times\ell}\mc V[\partial]$ 
and $B$ is non-degenerate
(cf. Definitions \ref{def:non-deg} and \ref{def:minimal-fraction}).

\begin{definition}\label{20120124:def}
Elements $\tint h\in\mc K/\partial\mc K$ and $P\in\mc K^\ell$ are $H$-\emph{associated},
and we denote this by
$\tint h\ass{H} P$, if
\begin{equation}\label{20121222:eq1}
\frac{\delta h}{\delta u}=B(\partial)F
\,\,,\,\,\,\,
P=A(\partial)F\,,
\end{equation}
for some fractional decomposition $H=AB^{-1}$,
with $A,B\in\Mat_{\ell\times\ell}\mc V[\partial]$ 
and $B$ non-degenerate,
and some element $F\in\mc K^{\oplus\ell}$.
In this case, we say that $\tint f$ is \emph{Hamiltonian functional} for $H$,
and $P$ is a \emph{Hamiltonian vector field} for $H$.
We denote by $\mc F(H)\subset\mc K/\partial\mc K$ the subset of all Hamiltonian functionals for $H$,
and by $\mc H(H)\subset\mc K^\ell$ the subset of all Hamiltonian vector fields.
\end{definition}
\begin{remark}\label{20120201:rem3}
Note that in the definition \eqref{20120124:def} we can fix a minimal fractional decomposition
$H=A_1B_1^{-1}$ for $H$,
with $A_1,B_1\in\Mat_{\ell\times\ell}\mc V[\partial]$ 
and $B_1$ non-degenerate of minimal possible order.
Indeed, 
by Proposition \ref{prop:minimal-fraction}(b)
if $H=AB^{-1}$ is any other fractional decomposition,
then there exists $D\in\Mat_{\ell\times\ell}\mc K[\partial]$ such that $A=A_1D,\,B=B_1D$.
Therefore, if equations \eqref{20121222:eq1} hold for some $F\in\mc K^{\oplus\ell}$,
then we have $\frac{\delta h}{\delta u}=B_1(\partial)F_1,\,P=A_1(\partial)F_1$,
where $F_1=D(\partial)F\in\mc K^{\oplus\ell}$.
It follows that $\mc F(H)$ and $\mc H(H)$ are vector spaces over $\mb F$.
In fact,
they are given by the following formulas
$$
\mc F(H)=\Big(\frac{\delta}{\delta u}\Big)^{-1}\Big(B_1\big(\mc K^\ell\big)\Big)
\subset\mc K/\partial\mc K
\,,\,\,\,
\mc H(H)=A_1\Big(B_1^{-1}\Big(\frac{\delta}{\delta u}\mc K/\partial\mc K\Big)\Big)\subset\mc K^\ell\,.
$$
\end{remark}

\begin{remark}\label{20130105:rem}
Consider Definition \ref{20130104:def} with $U=V=\mc K^{\oplus\ell}$ and $W=\mc K^\ell$.
Comparing this with Definition \ref{20120124:def},
we have that $\tint h\ass{H}P$ if and only if
$\frac{\delta h}{\delta u}\ass{(A,B)}P$
for some fractional decomposition $H=AB^{-1}$.
\end{remark}

In the local case, when $H\in\Mat_{\ell\times\ell}\mc V[\partial]$ 
is a (local) Poisson structure over $\mc V$,
then
$\tint f\in\mc F(H)=\mc K/\partial\mc K$ 
and $P\in\mc H(H)=H(\partial)\Big(\im\frac{\delta}{\delta u}\Big)\,\subset\mc K^\ell$ 
are associated if and only if
$P=H(\partial)\frac{\delta\tint f}{\delta u}$.

\begin{lemma}\label{20120907:lem1}
\begin{enumerate}[(a)]
\item
If $\tint f\ass{H}P$ and $\tint g\ass{H}Q$, then $\tint(a f+b g)\ass{H}(aP+bQ)$ for every 
$a,b\in\mc C$ (the subfield of constants in $\mc K$).
In particular, $\mc F(H)$ and $\mc H(H)$ are vector spaces over $\mc C$.
\item
If $\tint f\ass{H}P$, then
$\Big\{\tint g\in\mc F(H)\,\Big|\,\tint g\ass{H}P\Big\}=\tint f+\mc F_0(H)$,
where 
\begin{equation}\label{20120908:eq1}
\mc F_0(H)=\Big\{\tint g\in\mc F(H)\,\Big|\,\tint g\ass{H}0\Big\}\,.
\end{equation}
\item
If $\tint f\ass{H}P$, then
$\Big\{Q\in\mc H(H)\,\Big|\,\tint f\ass{H}Q\Big\}=P+\mc H_0(H)$,
where 
\begin{equation}\label{20120908:eq2}
\mc H_0(H)=\Big\{Q\in\mc H(H)\,\Big|\,0\ass{H}Q\Big\}\,.
\end{equation}
\end{enumerate}
\end{lemma}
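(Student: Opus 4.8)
The plan is to prove part (a) directly and then obtain parts (b) and (c) as formal corollaries by the standard affine-coset argument.

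For part (a), the first move I would make is to express both association relations through a single fractional decomposition. By Remark \ref{20120201:rem3}, I fix the minimal decomposition $H=A_1B_1^{-1}$, so that $\tint f\ass{H}P$ means $\frac{\delta f}{\delta u}=B_1(\partial)F$, $P=A_1(\partial)F$ for some $F\in\mc K^{\oplus\ell}$, and likewise $\tint g\ass{H}Q$ gives $\frac{\delta g}{\delta u}=B_1(\partial)G$, $Q=A_1(\partial)G$. Since the definition of association a priori allows a different decomposition for each element, this standardization step — which is exactly the content of Remark \ref{20120201:rem3} — is the one place requiring care. I would then use that $\mc K/\partial\mc K$ is a vector space over the constants $\mc C$ (because $\partial(ch)=c\,\partial h$ for $c\in\mc C$) and that $\frac{\delta}{\delta u}$ is $\mc C$-linear, together with the fact that any $a\in\mc C$ has $\partial a=0$ and therefore commutes with the differential operators $A_1(\partial),B_1(\partial)$. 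Setting $\tilde F=aF+bG$, this yields $\frac{\delta(af+bg)}{\delta u}=B_1(\partial)\tilde F$ and $aP+bQ=A_1(\partial)\tilde F$, i.e. $\tint(af+bg)\ass{H}(aP+bQ)$; closure under such combinations is precisely the assertion that $\mc F(H)$ and $\mc H(H)$ are $\mc C$-vector spaces.

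For part (b), I would first note that $\mc F_0(H)$ is a $\mc C$-subspace of $\mc F(H)$ by part (a) applied with the associated vector fields taken to be $0$. The inclusion $\tint f+\mc F_0(H)\subseteq\{\tint g\ass{H}P\}$ then comes from adding a relation $\tint g_0\ass{H}0$ to $\tint f\ass{H}P$ via part (a), and the reverse inclusion from subtracting $\tint f\ass{H}P$ from $\tint g\ass{H}P$ (part (a) with $a=1,b=-1$) to get $\tint(g-f)\ass{H}0$, so that $\tint g-\tint f\in\mc F_0(H)$. Part (c) is the mirror-image argument with the roles of Hamiltonian functionals and Hamiltonian vector fields exchanged: $\mc H_0(H)$ is a $\mc C$-subspace by part (a), one inclusion is obtained by adding a relation $\tint 0\ass{H}Q_0$, and the other by subtracting two relations $\tint f\ass{H}P$ and $\tint f\ass{H}Q$ to obtain $\tint 0\ass{H}(Q-P)$.

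The main, and really the only, obstacle is the reduction in part (a) to a common fractional decomposition, which Remark \ref{20120201:rem3} supplies; once both relations are realized through the fixed pair $(A_1,B_1)$, everything reduces to the $\mc C$-linearity of the variational derivative and the commutation of constants with differential operators. The coset arguments in (b) and (c) are then purely formal.
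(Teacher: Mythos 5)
Your proof is correct and follows exactly the route the paper intends: the paper's own proof of this lemma is literally ``Obvious, using Remark \ref{20120201:rem3}'', and your argument is precisely the expansion of that remark — reduce both association relations to the fixed minimal fractional decomposition $H=A_1B_1^{-1}$, use $\mc C$-linearity of $\frac{\delta}{\delta u}$ and the fact that constants commute with $A_1(\partial),B_1(\partial)$, then deduce (b) and (c) by the coset argument.
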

\begin{proof}
Obvious, using Remark \ref{20120201:rem3}.
\end{proof}

\begin{lemma}\label{20120124:lem}
\begin{enumerate}[(a)]
\item
The space $\mc K^\ell$ is a Lie algebra with bracket 
\eqref{20120126:eq1},
and $\mc H(H)\subset\mc K^\ell$ is its subalgebra.
\item
We have a representation $\phi$ of the Lie algebra $\mc K^\ell$ 
on the space $\mc K/\partial\mc K$ given by
$$
\phi(P)\big(\tint h\big)=\int P\cdot\frac{\delta h}{\delta u}\,,
$$
and the subspace $\mc F(H)\subset\mc K/\partial\mc K$
is preserved by the action of the Lie subalgebra $\mc H(H)\subset\mc K^\ell$.
\item
If $\tint h\ass{H}P$ and $\tint h\ass{H}Q$ for some $\tint h\in\mc F(H)$,
then the action of $P,Q\in\mc H(H)$on $\mc F(H)$ is the same:
$$
\int P\cdot\frac{\delta g}{\delta u}
=\int Q\cdot\frac{\delta g}{\delta u}
\,\,\text{ for all } \tint g\in\mc F(H)\,.
$$
\end{enumerate}
\end{lemma}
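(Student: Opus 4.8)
The plan is to handle the three parts with three different tools: evolutionary vector fields for the Lie-algebra and representation statements, the Dirac-structure correspondence of Theorem \ref{20111020:thm} for the closedness of $\mc H(H)$ and the invariance of $\mc F(H)$, and a short skewadjointness computation for part (c). For part (a) I would first realize $\mc K^\ell$ as the Lie algebra of evolutionary vector fields: to $P\in\mc K^\ell$ assign the derivation $X_P=\sum_{i\in I,n\in\mb Z_+}(\partial^nP_i)\frac{\partial}{\partial u_i^{(n)}}$, which commutes with $\partial$, and note that $P\mapsto X_P$ is injective since $X_P(u_j)=P_j$. Evaluating a commutator on the generators gives $[X_P,X_Q](u_j)=X_P(Q_j)-X_Q(P_j)=(D_Q(\partial)P)_j-(D_P(\partial)Q)_j=[P,Q]_j$, so $[X_P,X_Q]=X_{[P,Q]}$; since commutators of derivations satisfy the Jacobi identity and $P\mapsto X_P$ is an injective bracket-preserving map, the bracket \eqref{20120126:eq1} makes $\mc K^\ell$ a Lie algebra. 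For the subalgebra claim I would use that, $H$ being a non-local Poisson structure, $\mc L_{A,B}(\mc K)$ is a Dirac structure by Theorem \ref{20111020:thm}. Given $P,Q\in\mc H(H)$, write $P=A(\partial)X$, $Q=A(\partial)Y$ with $B(\partial)X=\frac{\delta f}{\delta u}$, $B(\partial)Y=\frac{\delta g}{\delta u}$ exact; then $\frac{\delta f}{\delta u}\oplus P$ and $\frac{\delta g}{\delta u}\oplus Q$ lie in the Lie algebra $\mf g$ of Remark \ref{20130123:rem} (exact forms being closed), and by formula \eqref{20120127:eq1} their Courant--Dorfman product is $\frac{\delta}{\delta u}\big(\int P\cdot\frac{\delta g}{\delta u}\big)\oplus[P,Q]$, which lies in $\mf h$. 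Reading off the second component gives $[P,Q]\in\mc H(H)$.

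Part (b) reuses these computations. Integration by parts against the variational derivative shows $\phi(P)(\tint h)=\int X_P(h)=X_P(\tint h)$, so $\phi(P)$ is the Lie derivative along $X_P$; it is well defined on $\mc K/\partial\mc K$ because $X_P$ commutes with $\partial$, and the homomorphism property $\phi([P,Q])=[\phi(P),\phi(Q)]$ is immediate from $[X_P,X_Q]=X_{[P,Q]}$ established in part (a). For invariance of $\mc F(H)$ under $\mc H(H)$ I would read off the \emph{first} component of the same Courant--Dorfman product: it equals $\frac{\delta}{\delta u}\big(\int P\cdot\frac{\delta g}{\delta u}\big)=\frac{\delta}{\delta u}\big(\phi(P)(\tint g)\big)$, and being the first component of an element of $\mc L_{A,B}(\mc K)=\{B(\partial)Z\oplus A(\partial)Z\}$ it lies in $\im B$. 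Hence $\phi(P)(\tint g)\in\mc F(H)$.

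For part (c) I would fix one minimal fractional decomposition $H=AB^{-1}$, which is legitimate by Remark \ref{20120201:rem3}. Then $\tint h\ass{H}P$ and $\tint h\ass{H}Q$ read $\frac{\delta h}{\delta u}=B(\partial)F_1=B(\partial)F_2$ with $P=A(\partial)F_1$, $Q=A(\partial)F_2$, so $F_0:=F_1-F_2\in\ker B$ and $P-Q=A(\partial)F_0$. Writing $\frac{\delta g}{\delta u}=B(\partial)G$ for $\tint g\in\mc F(H)$, and using the skewadjointness relation $A^*B+B^*A=0$ (Proposition \ref{20120103:propa}) together with $B(\partial)F_0=0$, one gets
\begin{equation*}
\int(P-Q)\cdot\frac{\delta g}{\delta u}
=\int F_0\cdot A^*(\partial)B(\partial)G
=-\int F_0\cdot B^*(\partial)A(\partial)G
=-\int\big(B(\partial)F_0\big)\cdot A(\partial)G=0\,.
\end{equation*}

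The main obstacle will be the bookkeeping in parts (a) and (b): one must keep careful track that exact variational derivatives are in particular closed, so that the relevant pairs genuinely belong to the Lie algebra $\mf g$ of Remark \ref{20130123:rem}, and that the scalar $\int P\cdot\frac{\delta g}{\delta u}$ is precisely what occupies the first slot of the simplified Courant--Dorfman product \eqref{20120127:eq1}. Once these identifications are in place, both the subalgebra property of $\mc H(H)$ and the $\mc H(H)$-invariance of $\mc F(H)$ fall out of a single application of the Dirac-structure axiom, and the only genuinely computational step, part (c), reduces to the skewadjointness of $H$.
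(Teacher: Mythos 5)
Your proof is correct and takes essentially the same route as the paper: the paper's proof of this lemma is a one-line reduction to \cite[Lem.4.7-8]{BDSK09} via the fact that $\mc L_{A,B}(\mc K)$ is a Dirac structure for a minimal fractional decomposition, and your argument carries out exactly that reduction explicitly (the Courant--Dorfman product of closed elements, via Theorem \ref{20111020:thm} and Remark \ref{20130123:rem}, for parts (a) and (b), and the skewadjointness relation $A^*B+B^*A=0$ for part (c)). The only difference is that you supply in full the details the paper delegates to the citation.
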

\begin{proof}
It follows immediately from \cite[Lem.4.7-8]{BDSK09},
using the fact that $\mc L_{A,B}(\mc K)$ is a Dirac structure 
if $H=AB^{-1}$ is a minimal fractional decomposition for $H$.
\end{proof}

Thanks to Lemma \ref{20120124:lem},
we have a well-defined map 
$\{\cdot\,,\,\cdot\}_H:\,\mc F(H)\times\mc F(H)\to\mc F(H)$
given by
\begin{equation}\label{20120124:eq4}
\{\tint f,\tint g\}_H
=
\int P\cdot\frac{\delta g}{\delta u}
\quad
\Big(
=\int \frac{\delta g}{\delta u}\cdot A(\partial) B^{-1}(\partial) \frac{\delta f}{\delta u}
\,\,\Big)\,,
\end{equation}
where $P\in\mc H(H)$ is such that $\tint f\ass{H}P$.
\begin{proposition}\label{20120124:prop}
\begin{enumerate}[(a)]
\item
The bracket \eqref{20120124:eq4} is a Lie algebra bracket on the space 
of Hamiltonian functionals $\mc F(H)$.
\item
The Lie algebra action of $\mc H(H)$ on $\mc F(H)$ is by derivations
of the Lie bracket \eqref{20120124:eq4}.
\item
The subspace 
$$
\mc A(H)=\Big\{(\tint f,P)\in\mc F(H)\times\mc H(H)\,\Big|\,\tint f\ass{H}P\Big\}\,
$$
is a subalgebra of the direct product of Lie algebras $\mc F(H)\times\mc H(H)$.
\end{enumerate}
\end{proposition}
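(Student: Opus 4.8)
The plan is to prove the three parts in order, deriving each from the corresponding structure already established for Dirac structures in \cite{BDSK09} together with the dictionary between non-local Poisson structures and Dirac structures provided by Theorem \ref{20111020:thm}. The key observation is that, fixing a minimal fractional decomposition $H=AB^{-1}$, the subspace $\mc L_{A,B}(\mc K)$ is a Dirac structure over $\mc K$, and the Lie algebra $\mc F(H)$ of Hamiltonian functionals together with the action of $\mc H(H)$ is precisely the image, under the variational derivative, of the algebraic structure carried by $\mc L_{A,B}(\mc K)$ via the Courant-Dorfman product. Thus the whole proposition should reduce to transporting the relevant identities from the Dirac-structure setting.

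First I would prove part (a). By Lemma \ref{20120124:lem}(c), the bracket \eqref{20120124:eq4} is well defined, since $\int P\cdot\frac{\delta g}{\delta u}$ does not depend on the choice of $P\in\mc H(H)$ with $\tint f\ass{H}P$; and by Lemma \ref{20120124:lem}(b) the result lies in $\mc F(H)$. Skewsymmetry is immediate from the skewadjointness of $H$: writing $\{\tint f,\tint g\}_H=\int\frac{\delta g}{\delta u}\cdot A B^{-1}\frac{\delta f}{\delta u}$ and using that $AB^{-1}$ is skewadjoint, one gets $\{\tint f,\tint g\}_H=-\{\tint g,\tint f\}_H$ after integration by parts (recall $(F|P)=\tint F\cdot P$ is the nondegenerate pairing, and that $\ker\frac{\delta}{\delta u}$ consists of quasiconstants modulo $\partial$). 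For the Jacobi identity I would invoke the isomorphism with the Dirac structure $\mc L_{A,B}(\mc K)$: by Remark \ref{20130123:rem}, the subspace $\mf g=\{F\oplus P\in\mc L\mid F\text{ closed}\}$ is a Lie algebra under the Courant-Dorfman product, and the bracket it induces on closed forms via \eqref{20120127:eq1} matches, after applying $\frac{\delta}{\delta u}$, the bracket \eqref{20120124:eq4}. Hence the Jacobi identity for $\{\cdot\,,\,\cdot\}_H$ follows from the left Jacobi identity for the Courant-Dorfman product, which holds by \cite[Sec.4.2]{BDSK09}.

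Next, part (b). The Lie algebra action of $\mc H(H)$ on $\mc F(H)$ is $\phi(P)(\tint g)=\int P\cdot\frac{\delta g}{\delta u}$, shown in Lemma \ref{20120124:lem} to preserve $\mc F(H)$ and to be a representation. To see it acts by derivations, I would compute $\phi(P)\{\tint f,\tint g\}_H$ for $P\in\mc H(H)$ associated to some $\tint e\in\mc F(H)$ and compare with $\{\phi(P)\tint f,\tint g\}_H+\{\tint f,\phi(P)\tint g\}_H$. This is again the statement, transported through the variational derivative, that the adjoint action of $\mf g$ on itself is by derivations of the Courant-Dorfman bracket on $\mf h$, which is part of the general Dirac-structure formalism of \cite[Lem.4.7-8]{BDSK09}. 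I would cite those lemmas directly rather than recompute.

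Finally, part (c) is essentially bookkeeping once (a) and (b) are in place: one must check that if $(\tint f,P),(\tint g,Q)\in\mc A(H)$, i.e. $\tint f\ass{H}P$ and $\tint g\ass{H}Q$, then the pair $\big(\{\tint f,\tint g\}_H,[P,Q]\big)$ again lies in $\mc A(H)$, that is, $\{\tint f,\tint g\}_H\ass{H}[P,Q]$. Here $[P,Q]$ is the Lie bracket \eqref{20120126:eq1} on $\mc K^\ell$, under which $\mc H(H)$ is closed by Lemma \ref{20120124:lem}(a). The association relation $\{\tint f,\tint g\}_H\ass{H}[P,Q]$ is precisely the compatibility of the Lie brackets on $\mf g$ and $\mf h$ with the Courant-Dorfman structure, and so follows from the same Dirac-structure results. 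The main obstacle I anticipate is purely one of careful translation: making sure that the Courant-Dorfman identities of \cite{BDSK09}, originally stated over $\mc V$, apply verbatim over the field of fractions $\mc K$ (which they do, since $\mc L_{A,B}(\mc K)$ is genuinely a Dirac structure over $\mc K$ by Theorem \ref{20111020:thm} and Proposition \ref{prop:max-isotrop}), and that the variational-derivative map intertwines the two brackets correctly. Once this identification is fixed, every assertion reduces to a cited lemma, so no substantial new computation is required.
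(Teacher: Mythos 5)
Your proposal is correct and follows essentially the same route as the paper, which proves the proposition in one line by citing \cite[Prop.4.9, Rem.4.6]{BDSK09} together with the fact that $\mc L_{A,B}(\mc K)$ is a Dirac structure for a minimal fractional decomposition $H=AB^{-1}$. Your write-up simply spells out in more detail the same reduction to the Courant--Dorfman formalism over $\mc K$.
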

\begin{proof}
It follows immediately from \cite[Prop.4.9, Rem.4.6]{BDSK09},
using the fact that $\mc L_{A,B}(\mc K)$ is a Dirac structure.
\end{proof}


\subsection{Hamiltonian equations and integrability}
\label{sec:7.1b}

Let $\mc V$ be an algebra of differential functions.
We have a non-degenerate pairing 
$(\cdot\,|\,\cdot):\,\mc V^\ell\times\mc V^{\oplus\ell}\to\mc V/\partial\mc V$
given by
\begin{equation}\label{20130112:eq1}
(P|\xi)=\tint P\cdot\xi\,.
\end{equation}
(See e.g. \cite{BDSK09} for a proof of non-degeneracy of this form.)

Let $H\in\Mat_{\ell\times\ell}\mc V(\partial)$ be a non-local Poisson structure.
If $H=AB^{-1}$ is a fractional decomposition of $H$, 
with $A,B\in\Mat_{\ell\times\ell}\mc V[\partial]$ and $B$ non-degenerate,
according to Definition \ref{20130104:def} with $U=V=\mc V^{\oplus\ell}$ and $W=\mc V^\ell$,
we have that $\xi\in\mc V^{\oplus\ell}$ and $P\in\mc V^\ell$ are $(A,B)$-associated,
\begin{equation}\label{20130112:eq2}
\xi\ass{(A,B)}P\,,
\end{equation}
if there exists $F\in\mc K^{\oplus\ell}$ such that
$\xi=BF,\,P=AF$.

Let $\tint h\in\mc V/\partial\mc V$ and $P\in\mc V^\ell$
be such that $\frac{\delta h}{\delta u}\ass{(A,B)}P$
for some fractional decomposition $H=AB^{-1}$.
The corresponding \emph{Hamiltonian equation} is, by definition,
the following evolution equation on the variables $u=\big(u_i\big)_{i\in I}$:
\begin{equation}\label{20120124:eq5}
\frac{du}{dt}
=P\,.
\end{equation}

By the chain rule, any element $f\in\mc V$ evolves according to the equation
$$
\frac{df}{dt}=\sum_{i\in I}\sum_{n\in\mb Z_+}(\partial^nP_i)\frac{\partial f}{\partial u_i^{(n)}}\,,
$$
and, integrating by parts,
a local functional $\tint f\in\mc V/\partial\mc V$
evolves according to
$$
\frac{d\tint f}{dt}=\int P\cdot\frac{\delta f}{\delta u}
\quad\bigg(=\big(P\big|\frac{\delta f}{\delta u}\big)\bigg)\,.
$$

\begin{definition}\label{20130104:def2}
The Hamiltonian equation \eqref{20120124:eq5} is said to be \emph{integrable}
if there exist sequences $\{\xi_n\}_{n\in\mb Z_+}\subset\mc V^{\oplus\ell}$ 
and $\{P_n\}_{n\in\mb Z_+}\subset\mc V^\ell$
such that:
\begin{enumerate}[(i)]
\item
elements $\xi_n$'s and $P_n$'s span infinite dimensional 
(over the subalgebra $\mc C\subset\mc V$ of constants)
subspaces 
of $\mc V^{\oplus\ell}$ and $\mc V^\ell$ respectively;
\item
for every $n\in\mb Z_+$
we have the association relation $\xi_n\ass{(A,B)}P_n$,
for some fractional decomposition $H=AB^{-1}$;
\item
the elements $\xi_n$'s are closed,
i.e. they have self-adjoint Frechet derivatives: $D_{\xi_n}(\partial)=D_{\xi_n}^*(\partial)$;
\item
the elements $P_n$'s commute with respect the the Lie bracket \eqref{20120126:eq1}: 
$[P_m,P_n]=0$ for all $m,n\in\mb Z_+$;
\item
$(P_m\,|\,\xi_n)=0$ for all $m,n\in\mb Z_+$.
\end{enumerate}
In this case, we have an \emph{integrable hierarchy} of Hamiltonian equations
$$
\frac{du}{dt_n} = P_n\,,\,\,n\in\mb Z_+\,.
$$
\end{definition}

\begin{remark}\label{20130104:rem2}
Recall from \cite[Prop.1.5]{BDSK09}
that if $\xi\in\mc V^{\oplus\ell}$ is closed,
then it is exact in any normal algebra of differential function extension $\tilde{\mc V}$
of $\mc V$: $\xi=\frac{\delta h}{\delta u}$ for some $\tint h\in\tilde{\mc V}/\partial\tilde{\mc V}$.
If $\mc V$ is a domain,
then by Lemma \ref{20130112:lem}
it can be extended to a normal algebra of differential functions $\tilde{\mc V}$
which is still a domain,
and we can consider its field of fractions $\tilde{\mc K}$.
Then we have $\xi_n=\frac{\delta h_n}{\delta u}$,
where $\{\tint h_n\}_{n\in\mb Z_+}\subset\mc F(H)\subset\tilde{\mc K}/\partial\tilde{\mc K}$ 
form an infinite sequence of Hamiltonian functionals 
in involution: $\{\tint h_m,\tint h_n\}_H=0$ for every $m,n\in\mb Z_+$
(cf. Section \ref{sec:7.1}).
\end{remark}

In analogy to Liouville integrability of finite dimensional Hamiltonian systems,
we should require in addition
some completeness property 
of the span $\Xi\subset\mc V^{\oplus\ell}$ 
of the variational derivatives of the conserved densities $\xi_n=\frac{\delta h_n}{\delta u},\,n\in\mb Z_+$,
and of the span $\Pi$ of the generalized symmetries $P_n,\,n\in\mb Z_+$.
A natural condition, analogous to Liouville integrability, 
in the general setup of non-local Poisson structures, is the following.
\begin{definition}\label{def:compl-int}
A \emph{completely integrable system} associated 
to the non-local Poisson structure $H=AB^{-1}$, in its minimal fractional decomposition,
is a pair of subspaces 
$\Xi=B(U)\subset\mc V^{\oplus\ell}$
and
$\Pi=A(U)\subset\mc V^\ell$,
for some subspace $U\subset\mc V^{\oplus\ell}$,
such that
\begin{enumerate}[(i)]
\item
$\Xi$ consists of closed elements in $\mc V^{\oplus\ell}$:
$D_\xi(\partial)=D^*_\xi(\partial)$ for all $\xi\in\Xi$;
\item
$\Pi$ is an abelian subalgebra of $\mc V^\ell$ 
with respect to the Lie bracket \eqref{20120126:eq1};
\item
$\Pi^\perp=\Xi$ and $\Xi^\perp=\Pi$ with respect to the pairing \eqref{20130112:eq1}.
\end{enumerate}
In this case, for every $P\in\Pi$ we get a completely integrable Hamiltonian equation
$\frac{du}{dt}=P$,
and all local functionals $\tint f\in\mc V/\partial\mc V$ such that $\frac{\delta f}{\delta u}\in\Xi$
are its integrals of motion in involution.
\end{definition}
\begin{remark}\label{20130129:rem2}
In the local case we 
let $\Xi$ be the span of  
$\xi_n=\frac{\delta h_n}{\delta u},\,n\in\mb Z_+$,
where $h_n$ are the conserved densities,
and we let $\Pi=H(\Xi)$.
Then the above condition $\Pi^\perp=\Xi$
is equivalent to the condition that $\Xi$
is a maximal isotropic subspace of $\Omega_1=\mc V^{\oplus\ell}$
with respect to the skewsymmetric bilinear form $\Omega_1\times\Omega_1\to\mc V/\partial\mc V$
given by $\langle\xi|\eta\rangle=(H\xi|\eta)$.
Indeed, $\xi\in\Omega_1$ satisfies 
$\langle\xi|\xi_n\rangle=-(\xi|H\xi_n)=0$ for all $n$
if and only if $\xi\perp H(\Xi)=\Pi$.
%
In this case, the $\tint h_n$'s are automatically in involution
and $\Pi$ consists of commuting higher symmetries.
\end{remark}
\begin{remark}
We can generalize Definition \ref{def:compl-int} of complete integrability
to the case of an arbitrary Dirac structure $\mc L\subset\mc V^{\oplus\ell}\oplus\mc V^\ell$
as a subspace $\Lambda\subset\mc L$ 
such that $\Xi=\pi_1(\Lambda)\subset\mc V^{\oplus\ell}$ 
and $\Pi=\pi_2(\Lambda)\subset\mc V^\ell$,
the projections of $\Lambda\subset\mc V^{\oplus\ell}\oplus\mc V^\ell$ in the first 
and second components respectively,
satisfy conditions (i)--(iii) above.
These conditions are equivalent to require that $\pi_1(\Lambda)$ consists of closed elements,
that $\Lambda_0=\pi_1(\Lambda)\oplus\pi_2(\Lambda)$
is a maximal isotropic subspace of $\mc V^{\oplus\ell}\oplus\mc V^\ell$
with respect to the symmetric bilinear form $\langle\cdot\,|\,\cdot\rangle$ 
defined in \eqref{20111020:eq3},
and the Courant-Dorfman product $\circ$ defined in \eqref{20111020:eq5} 
restricted to $\Lambda_0$ is zero.
In other words, $\Lambda_0$ is a Dirac structure with zero Courant-Dorfman product.
\end{remark}
\begin{example}
It is not hard to check, using arguments similar to those in \cite{BDSK09},
that the KdV equation is completely integrable
in the sense of Definition \ref{def:compl-int}.
\end{example}

\subsection{The Lenard-Magri scheme of integrability}
\label{sec:7.2}

\begin{theorem}\label{20130123:thm}
Let $\mc V$ be an algebra of differential functions,
and let $H=AB^{-1}$ and $K=CD^{-1}$ be rational skewadjoint pseudodifferential operators,
with $A,B,C,D\in\Mat_{\ell\times\ell}\mc V[\partial]$ and $B,D$ non-degenerate.
Let 
$\{\xi_n\}_{n=-1}^N\subset\mc V^{\oplus\ell}$,
$\{P_n\}_{n=0}^N\subset\mc V^\ell$
be finite sequences such that
\begin{equation}\label{20130105:eq1}
\xi_{n-1}\ass{(A,B)}P_n\ass{(C,D)}\xi_n
\,,
\end{equation}
holds for every $n=0,\dots,N$.
Then:
\begin{enumerate}[(a)]
\item
We have
\begin{equation}\label{20130131:eq1}
(P_n|\xi_m)=0\,\,,\,\,\,\, m\geq-1,\,n\geq0\,.
\end{equation}
\item
Assume that $\mc V$ is a domain and 
$H$ and $K$ are compatible non-local Poisson structures with $K$ non-degenerate,
and assume that $\xi_{-1}$ and $\xi_0$ are closed,
i.e. their Frechet derivatives are selfadjoint.
Then the elements $\xi_n,\,n\geq1$, are closed as well,
and we have 
\begin{equation}\label{20130123:eq5}
[P_m,P_n]\in\ker B^*\cap\ker D^*\,\,,\,\,\,\, m,n\geq0\,.
\end{equation}
\item
Assume that the matrices $A,B,C,D$ have non-degenerate leading coefficients,
and that 
\begin{equation}\label{20130123:eq1}
\begin{array}{c}
\displaystyle{
\dord(P_n)\,>\,\max\Big\{\dord(A)-|H|+|K|,\dord(B)+|K|,
} \\
\displaystyle{
\dord(C),\dord(D)+|K|\Big\}\,,
}
\end{array}
\end{equation}
for some $n\geq0$.
Then
$$
\dord(\xi_n)=\dord(P_n)-|K|
\,\,,\,\,\,\,
\dord(P_{n+1})=\dord(P_n)+|H|-|K|\,.
$$
In particular, if $|H|\geq|K|$, then 
$$
\dord(P_{j})=\dord(P_n)+(j-n)(|H|-|K|)=\dord(\xi_{j})+|K|
\,,
$$
for every $j\geq n$.
\item
Assume that the following orthogonality conditions hold:
\begin{equation}\label{20130104:eq2}
\Big(\Span{}_{\mc C}\{\xi_m\}_{m=-1}^N\Big)^\perp\subset\im C
\,\,,\,\,\,\,
\Big(\Span{}_{\mc C}\{P_n\}_{n=0}^N\Big)^\perp\subset\im B\,,
\end{equation}
where the orthogonal complements are with respect to the pairing
defined in  \eqref{20130112:eq1}.
Then we can extend the given finite sequences to infinite sequences 
$\{\xi_m\}_{m=-1}^\infty\subset \mc V^{\oplus\ell}$,
$\{P_n\}_{n=0}^\infty\subset \mc V^\ell$ 
such that the association relations \eqref{20130105:eq1} 
hold for every $n\in\mb Z_+$.
\end{enumerate}
\end{theorem}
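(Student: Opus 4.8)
The plan is to prove the four parts of Theorem \ref{20130123:thm} in order, leveraging the linear-algebra Lemma \ref{20130104:lem} as the abstract engine and feeding it the analytic inputs supplied by the non-local Poisson structure machinery developed in Sections \ref{sec:5} and \ref{sec:7}. The key observation is that the association relation \eqref{20130105:eq1} is exactly the relation $v_{n-1}\ass{(A,B)}w_n\ass{(C,D)}v_n$ of Definition \ref{20130104:def}, with $U=V=\mc V^{\oplus\ell}$, $W=\mc V^\ell$, the symmetric form on $U$ being the variational pairing, and the pairing $V\times W\to\mc V/\partial\mc V$ being \eqref{20130112:eq1}. Since $H=AB^{-1}$ and $K=CD^{-1}$ are skewadjoint, Proposition \ref{20120103:propa} gives precisely the orthogonality hypotheses \eqref{20130104:eq3}, namely $A^*B+B^*A=0$ and $C^*D+D^*C=0$. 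Thus part (a) is immediate: it is the conclusion of Lemma \ref{20130104:lem}(a) translated back through the pairing $(P|\xi)=\tint P\cdot\xi$, so I would simply cite that lemma. Likewise part (d) follows directly from Lemma \ref{20130104:lem}(b), once I note that the orthogonality conditions \eqref{20130104:eq2} are verbatim the conditions \eqref{20130104:eq2b}; the only subtlety is that Lemma \ref{20130104:lem} produces abstract sequences, and here the maps $A,B,C,D$ are the genuine differential operators, so the extended $\xi_n,P_n$ automatically live in $\mc V^{\oplus\ell}$ and $\mc V^\ell$.

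For part (b), which I expect to be the main obstacle, the plan is to run an induction on $n$ showing simultaneously that $\xi_n$ is closed and that $[P_m,P_n]$ lies in $\ker B^*\cap\ker D^*$. The closedness of $\xi_n$ for $n\geq1$ is precisely the content of Theorem \ref{mtst-nonloc}: taking $F_0,F_1,F_2$ to be successive $\xi$'s in the chain \eqref{20130105:eq1}, the compatibility of $H$ and $K$ together with the non-degeneracy of $K$ yields, via part (b) of that theorem, that $F_2$ is closed. I would apply this inductively, shifting the window along the chain, using that $\xi_{-1}$ and $\xi_0$ are closed by hypothesis as the base case. The commutator statement \eqref{20130123:eq5} is then extracted from Theorem \ref{mtst-nonloc}(a): the identity \eqref{20120405:eq2b} says $\tint Y\cdot C^*(D_{P_m}-D_{P_m}^*)\dots=0$, which after unwinding the Frechet-derivative definition of $[P_m,P_n]$ and using skewadjointness shows that $[P_m,P_n]$ pairs trivially against everything in $\im B$ and $\im D$; by non-degeneracy of the pairing this forces $[P_m,P_n]\in\ker B^*\cap\ker D^*$. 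The delicate point is bookkeeping: aligning the indices so that the hypotheses of Theorem \ref{mtst-nonloc} (the three identities \eqref{20130123:eq3} and the closedness of $F_0,F_1$) match the data of the chain at each inductive step.

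For part (c), the plan is a direct differential-order computation using Lemma \ref{20120910:lem1}. From $\xi_{n-1}\ass{(A,B)}P_n$ we have $P_n=A F$ and $\xi_{n-1}=BF$ for some $F\in\mc K^{\oplus\ell}$, and from $P_n\ass{(C,D)}\xi_n$ we have $\xi_n=DG$, $P_n=CG$. The order hypothesis \eqref{20130123:eq1} is engineered so that the $\dord(P_n)>\dord(C)$ branch of Lemma \ref{20120910:lem1}(c) applies to $P_n=CG$, giving $\dord(P_n)=\dord(G)+|C|=\dord(G)+|K|$; then applying part (b) of that lemma to $\xi_n=DG$ yields $\dord(\xi_n)=\dord(G)+|D|=\dord(G)+|K|$ as well, so $\dord(\xi_n)=\dord(P_n)-|K|$ after matching the orders $|C|=|D|=|K|$ coming from the minimal decomposition of $K$. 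A parallel computation with $A,B$ applied to the next link $\xi_n\ass{(A,B)}P_{n+1}$ gives $\dord(P_{n+1})=\dord(\xi_n)+|H|=\dord(P_n)+|H|-|K|$. The final displayed chain of equalities follows by iterating, the hypothesis $|H|\geq|K|$ guaranteeing that the differential order only grows, so the order estimate \eqref{20130123:eq1} is preserved along the hierarchy and the induction never stalls. Throughout I would suppress the routine verification that the order inequalities propagate, flagging only that the non-degeneracy of the leading coefficients of $A,B,C,D$ is what makes the sharp order equalities in Lemma \ref{20120910:lem1} available at each step.
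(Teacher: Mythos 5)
Your treatment of parts (a) and (d) coincides with the paper's: both are direct specializations of Lemma \ref{20130104:lem}, with the skewadjointness of $H=AB^{-1}$ and $K=CD^{-1}$ supplying the hypotheses $A^*B+B^*A=0$, $C^*D+D^*C=0$. Your induction for the closedness of the $\xi_n$'s in part (b) is also the paper's argument (Lemma \ref{20130123:lem1}, which is exactly the sliding-window application of Theorem \ref{mtst-nonloc}(b)), and part (c) follows the same differential-order bookkeeping as Lemma \ref{20130123:lem3} --- though note that your identity ``$|C|=|D|=|K|$'' should read $|K|=|C|-|D|$; the conclusion $\dord(\xi_n)=\dord(P_n)-|K|$ you state is nevertheless correct.

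The genuine gap is in your derivation of the commutator statement \eqref{20130123:eq5}. You propose to extract it from Theorem \ref{mtst-nonloc}(a), i.e.\ from equation \eqref{20120405:eq2b}. But that identity controls $D_{F_2}(\partial)-D_{F_2}^*(\partial)$ for the \emph{one-form} $F_2$ (a $\xi$-type element); it is the statement that $F_2$ is closed, tested against $C(\partial)X$ and $C(\partial)Y$. It carries no information about the Lie bracket $[P_m,P_n]=D_{P_n}(\partial)P_m-D_{P_m}(\partial)P_n$, which involves the Frechet derivatives of the \emph{vector fields}, and there is no way to ``unwind'' one into the other. The paper's actual argument (Lemma \ref{20130123:lem2}) is different in kind: pass to the minimal fractional decompositions $H=A_1B_1^{-1}$, $K=C_1D_1^{-1}$ over $\mc K$, invoke Theorem \ref{20111020:thm} to see that $\mc L_{A_1,B_1}(\mc K)$ and $\mc L_{C_1,D_1}(\mc K)$ are Dirac structures, hence closed under the Courant--Dorfman product; since all $\xi_n$ are closed, formula \eqref{20120127:eq1} gives
\[
(\xi_{m-1}\oplus P_m)\circ(\xi_{n-1}\oplus P_n)
=\frac{\delta}{\delta u}\Big(\tint P_m\cdot\xi_{n-1}\Big)\oplus[P_m,P_n]\,,
\]
and part (a) kills the first component, so $0\oplus[P_m,P_n]$ lies in both Dirac structures. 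Writing $[P_m,P_n]=A_1F_1$ with $B_1F_1=0$ and using $B_1^*A_1=-A_1^*B_1$ yields $[P_m,P_n]\in\ker B_1^*$, and similarly for $D_1^*$; finally $\ker B_1^*\cap\mc V^\ell\subset\ker B^*$ because $B$ is a right multiple of $B_1$. Without the Courant--Dorfman closure of the Dirac structures your route does not reach \eqref{20130123:eq5}; indeed the paper's Remark \ref{20130129:rem1} emphasizes that even this machinery only yields membership in $\ker B^*\cap\ker D^*$ rather than genuine commutativity, so the statement is not obtainable by soft means.
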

\begin{corollary}\label{20130123:cor}
Let $\mc V$ be an algebra of differential functions, which is a domain,
let $H=AB^{-1}$ and $K=CD^{-1}$ be compatible non-local Poisson structures,
where $A,B,C,D$ are $\ell\times\ell$ matrix differential operators with non-degenerate
leading coefficients and such that $|H|>|K|$.
Let  $\{\xi_n\}_{n=-1}^N\subset\mc V^{\oplus\ell}$,
$\{P_n\}_{n=0}^N\subset\mc V^\ell$
be finite sequences such that
$\xi_{-1}$ and $\xi_0$ are closed,
conditions \eqref{20130105:eq1} hold for every $0\leq n\leq N$,
condition \eqref{20130123:eq1} holds for some $0\leq n\leq N$,
and the orthogonality conditions \eqref{20130104:eq2} hold.
Then
the given finite sequences can be extended to infinite sequences
$\{\xi_n\}_{n=-1}^\infty\subset\mc V^{\oplus\ell}$,
$\{P_n\}_{n=0}^\infty\subset\mc V^\ell$,
such that 
the differential orders of the $\xi_n$'s and the $P_n$'s tend to infinity
as $n\to\infty$,
all $\xi_n$'s are closed,
and equations \eqref{20130105:eq1}, \eqref{20130131:eq1} and \eqref{20130123:eq5} hold.
Consequently,
\begin{equation}\label{20130123:eq2}
\frac{du}{dt_n}=P_n
\end{equation}
is an integrable bi-Hamiltonian equation for every $n\in\mb Z_+$.
If, moreover, $\ker B^*\cap\ker D^*=0$,
all equations \eqref{20130123:eq2} form a (compatible) 
integrable hierarchy of bi-Hamiltonian equations.
\end{corollary}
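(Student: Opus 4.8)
The plan is to obtain the desired infinite sequences and then read off integrability directly from Definition \ref{20130104:def2}, using the four parts of Theorem \ref{20130123:thm} as the engine; the corollary is essentially a repackaging of that theorem, so most of the work lies in checking that its hypotheses are met and in keeping track of indices. First I would verify the standing assumptions of Theorem \ref{20130123:thm}. Since $H$ and $K$ are skewadjoint and $B,D$ are non-degenerate, their fractional decompositions satisfy $A^*B+B^*A=0$ and $C^*D+D^*C=0$ by Proposition \ref{20120103:propa}, i.e. the algebraic identities \eqref{20130104:eq3} used throughout. Moreover $C$ and $D$ have non-degenerate leading coefficients, hence non-zero Dieudonn\'e determinant, so $K=CD^{-1}$ is non-degenerate (Definition \ref{def:non-deg}); this is exactly the extra hypothesis required by part (b).

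Next I would apply Theorem \ref{20130123:thm}(d). The orthogonality conditions \eqref{20130104:eq2} are assumed for the initial finite sequences, and the recursive extension of Lemma \ref{20130104:lem}(b) only ever uses the orthogonality of the original spans: at each step the complement of the enlarged span is contained in the complement of the original span (orthogonal complement is order-reversing), so \eqref{20130104:eq2} propagates automatically. This yields infinite sequences $\{\xi_n\}_{n=-1}^\infty$, $\{P_n\}_{n=0}^\infty$ with $\xi_{n-1}\ass{(A,B)}P_n\ass{(C,D)}\xi_n$ for all $n$, i.e. \eqref{20130105:eq1}. With the relations in hand, parts (a) and (b) apply to every finite truncation: (a) gives the orthogonality \eqref{20130131:eq1}, and (b) --- using that $\mc V$ is a domain, that $H,K$ are compatible non-local Poisson structures with $K$ non-degenerate, and that $\xi_{-1},\xi_0$ are closed --- gives that every $\xi_n$ is closed and that \eqref{20130123:eq5} holds, $[P_m,P_n]\in\ker B^*\cap\ker D^*$.

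Then I would invoke part (c) to control growth. Condition \eqref{20130123:eq1} holds at some $0\le n\le N$, and since $|H|>|K|$ we have $|H|-|K|>0$; thus the recursion $\dord(P_{j+1})=\dord(P_j)+|H|-|K|$ of part (c) propagates from that $n$ onward, so $\dord(P_j)\to\infty$ and $\dord(\xi_j)=\dord(P_j)-|K|\to\infty$. Because multiplication by a nonzero constant preserves differential order and a non-trivial $\mc C$-linear combination has differential order equal to the largest order of its terms, a sequence whose differential orders tend to infinity is $\mc C$-linearly independent; hence the spans of $\{\xi_n\}$ and $\{P_n\}$ are infinite-dimensional over $\mc C$.

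Finally I would check the conditions of Definition \ref{20130104:def2}: (i) the infinite-dimensional spans hold by the order growth just established; (ii) the association relations hold by construction, and reindexing $\xi_{n-1}\ass{(A,B)}P_n$ exhibits each $\frac{du}{dt_n}=P_n$ as Hamiltonian for $H$ with density $\xi_{n-1}$ and, via $P_n\ass{(C,D)}\xi_n$, for $K$ with density $\xi_n$, whence bi-Hamiltonian; (iii) closedness is part (b); and (v) is \eqref{20130131:eq1}, which in particular puts the corresponding Hamiltonian functionals in involution. The one genuinely delicate point --- and the main obstacle --- is condition (iv), the honest commutativity $[P_m,P_n]=0$: part (b) delivers only \eqref{20130123:eq5}, membership in $\ker B^*\cap\ker D^*$. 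This is precisely why the corollary splits into two claims: the hierarchy \eqref{20130123:eq2} consists of genuinely commuting bi-Hamiltonian flows exactly under the supplementary hypothesis $\ker B^*\cap\ker D^*=0$, which turns \eqref{20130123:eq5} into $[P_m,P_n]=0$ and completes condition (iv).
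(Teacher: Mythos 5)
Your reduction to Theorem \ref{20130123:thm} is sound as far as it goes: the verification of its hypotheses, the propagation of the orthogonality conditions through Lemma \ref{20130104:lem}(b), the use of part (c) together with $|H|>|K|$ to force $\dord(P_j)\to\infty$ and hence $\mc C$-linear independence, and the derivation of \eqref{20130105:eq1}, \eqref{20130131:eq1}, \eqref{20130123:eq5} and the closedness of the $\xi_n$'s all match the intended argument, which is indeed essentially a summary of parts (a)--(d) of that theorem.

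There is, however, a genuine gap in your treatment of the first conclusion. The corollary asserts that $\frac{du}{dt_n}=P_n$ is an \emph{integrable} bi-Hamiltonian equation for every $n$ \emph{without} assuming $\ker B^*\cap\ker D^*=0$; you instead declare condition (iv) of Definition \ref{20130104:def2} out of reach unless that kernel vanishes, which amounts to proving only the final sentence of the statement. Integrability of the single equation $\frac{du}{dt_n}=P_n$ requires exhibiting \emph{some} infinite sequences satisfying (i)--(v) of Definition \ref{20130104:def2} with $P_n$ among them, in particular an infinite-dimensional commuting family containing $P_n$, and this can be extracted from \eqref{20130123:eq5} alone. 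The point you are missing is that $B^*$ and $D^*$ are non-degenerate matrix differential operators, so $\ker B^*\cap\ker D^*$ is finite-dimensional over $\mc C$; hence $U=\Span_{\mc C}\{P_m\}_{m\in\mb Z_+}$ is an infinite-dimensional subspace of the Lie algebra $\mc V^\ell$ with $[U,U]$ finite-dimensional, and Lemma \ref{20120906:lem1} (iterated centralizers, each of finite codimension) shows that every element of $U$, in particular each $P_n$, lies in an infinite-dimensional abelian subalgebra of $U$. Passing to that subalgebra, together with the corresponding $\mc C$-linear combinations of the $\xi_m$'s (whose span remains infinite-dimensional by the order growth), supplies the data required by Definition \ref{20130104:def2} and closes the gap.
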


\begin{proof}[Proof of Theorem \ref{20130123:thm}]
Parts (a) and (d) are special cases of Lemma \ref{20130104:lem}(a) and (b) respectively,
since the assumption that $H$ and $K$ are skewadjoint
is equivalent to equations \eqref{20130104:eq3}.
Part (b) follows Lemmas \ref{20130123:lem1} and \ref{20130123:lem2} below.
Finally, part (c) follows form Lemma \ref{20130123:lem3} below,
with $\xi=\xi_n$, $P=P_n$ and $Q=P_{n+1}$.
\end{proof}
\begin{lemma}\label{20130123:lem1}
Let $H=AB^{-1}$ and $K=CD^{-1}$ be compatible non-local Poisson structures,
with $K$ non-degenerate,
over the algebra of differential functions $\mc V$, which is a domain.
Let $\xi_{0},\xi_1$ be closed elements of $\mc V^{\oplus\ell}$, $\xi_2\in\mc V^{\oplus\ell}$,
and $P_1,P_2\in\mc V^{\ell}$ be such that
\begin{equation}\label{20130123:eq4}
\xi_{0}\ass{(A,B)}P_1
\ass{(C,D)}\xi_1
\ass{(A,B)}P_2
\ass{(C,D)}
\xi_{2}\,.
\end{equation}
Then $\xi_2$ is closed.
\end{lemma}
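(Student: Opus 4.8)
The statement is, after unwinding the notation, the precise content of Theorem \ref{mtst-nonloc}(b); the plan is therefore to extract from the chain \eqref{20130123:eq4} the elements that realize the four association relations and to check that they satisfy the hypotheses of that theorem verbatim. By Definition \ref{20130104:def}, the relations in \eqref{20130123:eq4} mean that there exist $Z,Z',W,W'\in\mc K^{\oplus\ell}$ with
\begin{equation*}
\begin{array}{llll}
\xi_0 = B(\partial)Z\,, & P_1 = A(\partial)Z\,, & P_1 = C(\partial)Z'\,, & \xi_1 = D(\partial)Z'\,,\\
\xi_1 = B(\partial)W\,, & P_2 = A(\partial)W\,, & P_2 = C(\partial)W'\,, & \xi_2 = D(\partial)W'\,.
\end{array}
\end{equation*}
Setting $F_0=\xi_0$, $F_1=\xi_1$, $F_2=\xi_2$, these equalities give exactly $F_0=B(\partial)Z$, $F_1=D(\partial)Z'=B(\partial)W$ and $F_2=D(\partial)W'$, together with the three compatibility conditions $D(\partial)Z'=B(\partial)W$, $A(\partial)Z=C(\partial)Z'$ and $A(\partial)W=C(\partial)W'$ of \eqref{20130123:eq3}; the last two follow because $P_1=A(\partial)Z=C(\partial)Z'$ and $P_2=A(\partial)W=C(\partial)W'$. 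Moreover $F_0=\xi_0$ and $F_1=\xi_1$ are closed by hypothesis, so all the assumptions of Theorem \ref{mtst-nonloc} are in place.

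Two technical points have to be addressed before applying the theorem. First, the realizing elements $Z,Z',W,W'$ a priori lie in $\mc K^{\oplus\ell}$ rather than in $\mc V^{\oplus\ell}$, so I would apply Theorem \ref{mtst-nonloc} over the field of fractions $\mc K$: by Proposition \ref{20111219:prop} the structures $H$ and $K$ extend uniquely to compatible non-local Poisson structures over $\mc K$, which is itself a field of differential functions and a domain. Second, Theorem \ref{mtst-nonloc} is phrased for minimal fractional decompositions; if the given $H=AB^{-1}$, $K=CD^{-1}$ are not minimal, I would first pass to minimal ones $H=A_1B_1^{-1}$, $K=C_1D_1^{-1}$. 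By Proposition \ref{prop:minimal-fraction}(b) one has $A=A_1E$, $B=B_1E$ and $C=C_1E'$, $D=D_1E'$ for non-degenerate $E,E'\in\Mat_{\ell\times\ell}\mc K[\partial]$, whence $\xi=B(\partial)F=B_1(\partial)(E(\partial)F)$ and $P=A(\partial)F=A_1(\partial)(E(\partial)F)$; thus every association relation in \eqref{20130123:eq4} is preserved upon replacing its realizing element $F$ by $E(\partial)F$ (respectively by $E'(\partial)F$ for the $(C,D)$-relations). Hence the hypotheses of Theorem \ref{mtst-nonloc} hold for the minimal decompositions, while the elements $\xi_0,\xi_1,\xi_2$ and their closedness are untouched.

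With these identifications in place, Theorem \ref{mtst-nonloc}(b), which uses precisely the non-degeneracy of $K$, yields that $F_2=\xi_2$ is closed as an element of $\mc K^{\oplus\ell}$. Since $\xi_2\in\mc V^{\oplus\ell}$, its Frechet derivative $D_{\xi_2}(\partial)$ is the same differential operator whether computed over $\mc V$ or over $\mc K$, so closedness over $\mc K$ is closedness over $\mc V$, and the lemma follows. I do not expect any genuinely hard step here: the analytic heart of the argument is already contained in Theorem \ref{mtst-nonloc} (and ultimately in the compatible-Dirac-structure estimate of Theorem \ref{mtst}), and the only real work is the bookkeeping — matching the chain \eqref{20130123:eq4} to the conditions \eqref{20130123:eq3} and correctly handling the passage to minimal decompositions over $\mc K$.
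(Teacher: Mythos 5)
Your proposal is correct and follows essentially the same route as the paper: unwind the chain \eqref{20130123:eq4} into the data required by Theorem \ref{mtst-nonloc}(b), pass to minimal fractional decompositions over the field of fractions $\mc K$ via Proposition \ref{prop:minimal-fraction}(b) (replacing each realizing element $F$ by $E(\partial)F$, exactly as the paper does with $F_1=PF$), and observe that closedness of $\xi_2$ over $\mc K$ gives closedness over $\mc V$. The only cosmetic difference is that you explicitly cite Proposition \ref{20111219:prop} for extending the structures to $\mc K$, which the paper leaves implicit.
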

\begin{proof}
If $H=AB^{-1}$ and $K=CD^{-1}$ are minimal fractional decompositions,
then the statement follows from Theorem \ref{mtst-nonloc}(b).
Indeed, conditions \eqref{20130123:eq4}
imply the existence of $Z,Z',W,W'\in\mc V^{\oplus\ell}$ 
such that $\xi_0=B(\partial)Z$, $\xi_1=D(\partial)Z^\prime=B(\partial)W$, 
$\xi_2=D(\partial)W^\prime$,
and solving equations \eqref{20130123:eq3}.

In general, the fractional decompositions $H=AB^{-1}$ and $K=CD^{-1}$ are 
not necessarily minimal.
Let $\mc K$ be the field of fractions of $\mc V$.
By Proposition \ref{prop:minimal-fraction}
we have
$A=A_1P$, $B=B_1P$, $C=C_1Q$, $D=D_1Q$,
with $A_1,B_1,C_1,D_1,P,Q\in\Mat_{\ell\times\ell}\mc K[\partial]$,
where 
$H=A_1B_1^{-1}$ and $K=C_1D_1^{-1}$ are minimal fractional decompositions,
and $P$, $Q$ are non-degenerate.
Obviously, by the definition \eqref{20130112:eq2} of $(A,B)$-association,
if $\xi\ass{(A,B)}P$ holds over $\mc V$,
in the sense that $\xi=BF,\,P=AF$ for some $F\in\mc V^{\oplus\ell}$,
then $\xi\ass{(A_1,B_1)}P$ holds over $\mc K$,
indeed $\xi=B_1F_1,\,P=A_1F_1$, where $F_1=PF\in\mc K^{\oplus\ell}$.
Hence, conditions \eqref{20130123:eq4} hold (over $\mc K$)
with $A,B,C,D$ replaced by $A_1,B_1,C_1,D_1$.
Then, by Theorem \ref{mtst-nonloc}(b) we get that $\xi_2$ is closed over $\mc K$,
hence over $\mc V$.
\end{proof}
\begin{lemma}\label{20130123:lem2}
Let $H=AB^{-1}$ and $K=CD^{-1}$ be compatible non-local Poisson structures,
with $K$ non-degenerate,
over the algebra of differential functions $\mc V$, which is a domain.
Let 
$\{\xi_n\}_{n=-1}^N$ be closed elements of $\mc V^{\oplus\ell}$,
and $\{P_n\}_{n=0}^N$ be elements of $\mc V^\ell$
satisfying conditions \eqref{20130105:eq1}.
Then 
$$
[P_m,P_n]\in\ker B^*\cap\ker D^*\,\,,\,\,\,\, m,n\geq0\,.
$$
\end{lemma}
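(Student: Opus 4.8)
The plan is to prove the two inclusions $[P_m,P_n]\in\ker B^*$ and $[P_m,P_n]\in\ker D^*$ separately, each by exhibiting an explicit preimage under $A$ (resp. $C$) lying in $\ker B$ (resp. $\ker D$). First I would reduce the kernel conditions to such membership statements using skewadjointness. Since $H=AB^{-1}$ and $K=CD^{-1}$ are skewadjoint, equation \eqref{20120107:eq1} gives $A^*B+B^*A=0$ and $C^*D+D^*C=0$, so that $B^*A=-A^*B$ and $D^*C=-C^*D$. Consequently $A(\ker B)\subset\ker B^*$ and $C(\ker D)\subset\ker D^*$, because $B^*(A(\partial)X)=-A^*(B(\partial)X)=0$ for $X\in\ker B$, and symmetrically for $C,D$. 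Hence it is enough to show $[P_m,P_n]\in A(\ker B)$ and $[P_m,P_n]\in C(\ker D)$.

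The main step will use the Dirac structures together with closure under the Courant--Dorfman product. Assume first that $H=AB^{-1}$ and $K=CD^{-1}$ are minimal fractional decompositions, so that $\mc L_{A,B}(\mc K)$ and $\mc L_{C,D}(\mc K)$ are Dirac structures by Theorem \ref{20111020:thm}. By the association relations \eqref{20130105:eq1} there are $F_m,G_m\in\mc K^{\oplus\ell}$ with $\xi_{m-1}=B(\partial)F_m$, $P_m=A(\partial)F_m$ and $P_m=C(\partial)G_m$, $\xi_m=D(\partial)G_m$; thus $\xi_{m-1}\oplus P_m\in\mc L_{A,B}(\mc K)$ and $\xi_m\oplus P_m\in\mc L_{C,D}(\mc K)$, and likewise for $n$. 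Since all the $\xi_k$ are closed by hypothesis, formula \eqref{20120127:eq1} applies and gives
$$
(\xi_{m-1}\oplus P_m)\circ(\xi_{n-1}\oplus P_n)
=\frac{\delta}{\delta u}\Big(\tint P_m\cdot\xi_{n-1}\Big)\oplus[P_m,P_n]\,.
$$
Here part (a) is decisive: by \eqref{20130131:eq1} we have $\tint P_m\cdot\xi_{n-1}=(P_m|\xi_{n-1})=0$, so the right-hand side is $0\oplus[P_m,P_n]$. As $\mc L_{A,B}(\mc K)$ is closed under the Courant--Dorfman product, $0\oplus[P_m,P_n]\in\mc L_{A,B}(\mc K)$, which precisely means $[P_m,P_n]=A(\partial)X$ with $B(\partial)X=0$, i.e. $[P_m,P_n]\in A(\ker B)$. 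Repeating the argument with $\mc L_{C,D}(\mc K)$, and using $(P_m|\xi_n)=0$ (again part (a)), yields $[P_m,P_n]\in C(\ker D)$. Together with the first paragraph this gives $[P_m,P_n]\in\ker B^*\cap\ker D^*$ over $\mc K$, hence over $\mc V$ since $[P_m,P_n]\in\mc V^\ell$ and $B^*,D^*$ have coefficients in $\mc V$.

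To remove the minimality assumption I would argue exactly as in the proof of Lemma \ref{20130123:lem1}: by Proposition \ref{prop:minimal-fraction} write $A=A_1P$, $B=B_1P$, $C=C_1Q$, $D=D_1Q$ with $H=A_1B_1^{-1}$, $K=C_1D_1^{-1}$ minimal and $P,Q$ non-degenerate. The association relations \eqref{20130105:eq1} then persist over $\mc K$ with $A,B,C,D$ replaced by $A_1,B_1,C_1,D_1$, so the argument above gives $[P_m,P_n]\in\ker B_1^*\cap\ker D_1^*$. Finally, from $B^*=P^*B_1^*$ and $D^*=Q^*D_1^*$ one gets $\ker B_1^*\subset\ker B^*$ and $\ker D_1^*\subset\ker D^*$, which concludes the proof.

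I expect the only genuinely delicate point to be the vanishing of the $\frac{\delta}{\delta u}$-term in the Courant--Dorfman product, which is exactly where the orthogonality statement of part (a) is needed; without it one would only obtain $B^*[P_m,P_n]=-A^*\frac{\delta}{\delta u}(\tint P_m\cdot\xi_{n-1})$ rather than $0$. Everything else is a routine reduction to the minimal decomposition, where the Dirac-structure closure guaranteed by Theorem \ref{20111020:thm} is available. Note that, somewhat remarkably, compatibility of $H$ and $K$ is not used here directly; it enters only through the closedness of the $\xi_k$, which is taken as a hypothesis of the lemma.
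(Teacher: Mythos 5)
Your proposal is correct and follows essentially the same route as the paper: reduce to minimal fractional decompositions over the field of fractions, use Theorem \ref{20111020:thm} to get Dirac structures, apply the Courant--Dorfman product formula \eqref{20120127:eq1} together with the orthogonality from part (a) to conclude $0\oplus[P_m,P_n]\in\mc L_{A_1,B_1}(\mc K)\cap\mc L_{C_1,D_1}(\mc K)$, and then pass from $A_1(\ker B_1)\subset\ker B_1^*$ (via skewadjointness) to $\ker B^*\cap\ker D^*$ using that $B,D$ are right multiples of $B_1,D_1$. Your closing observation that compatibility of $H$ and $K$ is not used beyond each being a Poisson structure is also consistent with the paper's argument.
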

\begin{proof}
Let $\mc K$ be the field of fractions of $\mc V$,
and let 
$H=A_1B_1^{-1}$, $K=C_1D_1^{-1}$ be their minimal fractional decomposition over $\mc K$.
As observed in the proof of Lemma \ref{20130123:lem1},
all association relations \eqref{20130105:eq1} hold,
over $\mc K$, after replacing $A,B,C,D$ with $A_1,B_1,C_1,D_1$ respectively.
By Theorem \ref{20111020:thm},
$\mc L_{A_1,B_1}(\mc K)$ and $\mc L_{C_1,D_1}(\mc K)$ are Dirac structures 
in $\mc K^{\oplus\ell}\oplus\mc K^\ell$,
in particular they are closed with respect to the Courant-Dorfman product \eqref{20111020:eq5}.
By the definition \eqref{20120109:eq1} of the Dirac structures 
$\mc L_{A_1,B_1}(\mc K)$ and $\mc L_{C_1,D_1}(\mc K)$,
we have
$\xi_{n-1}\oplus P_n\in\mc L_{A_1,B_1}(\mc K)$ and $\xi_n\oplus P_n\in\mc L_{C_1,D_1}(\mc K)$
for every $n\geq0$.
By the assumption that all the $\xi_n$'s are closed,
we can use formula \eqref{20120127:eq1} to deduce that
$\frac{\delta}{\delta u} (P_m\,|\,\xi_{n-1})\oplus[P_m,P_n]
\in\mc L_{A_1,B_1}(\mc K)$
and
$\frac{\delta}{\delta u} (P_m\,|\,\xi_n)\oplus[P_m,P_n]
\in\mc L_{A_1,B_1}(\mc K)$.
Hence, by Theorem \ref{20130123:thm}(a) 
we conclude that 
$0\oplus[P_m,P_n]\in\mc L_{A_1,B_1}(\mc K)\cap\mc L_{C_1,D_1}(\mc K)$.
Namely, there exist $F_1\in\ker B_1\subset\mc K^\ell$ and $G_1\in\ker D_1\subset\mc K^\ell$
such that $[P_m,P_n]=A_1(\partial)F_1=C_1(\partial)G\in A_1(\ker B_1)\cap C_1(\ker D_1)$.
By skewadjointness of $H$ and $K$,
we have $B_1^*A_1=-A_1^*B_1$ and $D_1^*C_1=-C_1^*D_1$, 
which immediately implies
$A_1(\ker B_1)\subset\ker B_1^*$ and $C_1(\ker D_1)\subset\ker D_1^*$.
Therefore, 
$[P_m,P_n]\in\ker B_1^*\cap\ker D_1^*\subset\mc K^\ell$.
On the other hand, since $B$ and $D$ are right multiples of $B_1$ and $D_1$ respectively,
we have $\ker B_1^*\cap\mc V^\ell\subset\ker B^*$ and $\ker D_1^*\cap\mc V^\ell\subset\ker D^*$.
Thererore,
$[P_m,P_n]\in\ker B^*\cap\ker D^*\subset\mc V^\ell$, as we wanted.
\end{proof}
\begin{lemma}\label{20130123:lem3}
Let $\mc V$ be an algebra of differential functions,
let $A,B,C,D\in\Mat_{\ell\times\ell}\mc V[\partial]$
be matrices with non-degenerate leading coefficients.
Denote by 
$H=AB^{-1}$ and $K=CD^{-1}$ the corresponding rational matrix 
pseudodifferential operators.
Let $P,Q\in\mc V^\ell$ and $\xi\in\mc V^{\oplus\ell}$ satisfy the following 
association relations (cf. Definition \ref{20130104:def}) 
\begin{equation}\label{20120910:eq3}
P\ass{(C,D)}\xi\ass{(A,B)}Q\,,
\end{equation}
and assume that
\begin{equation}\label{20120911:eq1}
\begin{array}{c}
\displaystyle{
\dord(P)\,>\,\max\Big\{\dord(A)-|H|+|K|,\dord(B)+|K|,
} \\
\displaystyle{
\dord(C),\dord(D)+|K|\big\}.
}
\end{array}
\end{equation}
(Here we use the notation introduced in \eqref{20120910:eq1} and \eqref{20120910:eq2}.)
Then
$$
\dord\xi=\dord(P)-|K|
\,\,\,\,\text{ and }\,\,
\dord(Q)=\dord(P)+|H|-|K|\,.
$$
\end{lemma}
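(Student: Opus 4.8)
The plan is to propagate the differential order step by step along the chain implicit in the two association relations. First I would unpack them: by Definition~\ref{20130104:def} (with the convention \eqref{20130112:eq2}) there exist witnesses $F,G\in\mc K^{\oplus\ell}$ with $\xi=DG$, $P=CG$, $\xi=BF$ and $Q=AF$. Since $\mc K$ is again an algebra of differential functions, and the leading coefficients of $A,B,C,D$, being non-degenerate over $\mc V$, are invertible over the field $\mc K$, Lemma~\ref{20120910:lem1} applies to each of the four operators $A,B,C,D$. Throughout I would use the two order identities $|K|=|C|-|D|$ and $|H|=|A|-|B|$, immediate from $K=CD^{-1}$ and $H=AB^{-1}$.

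The heart of the argument is an alternation between parts (b) and (c) of Lemma~\ref{20120910:lem1}. Whenever the \emph{product} is already known to have large differential order I would apply part (c) to recover the order of the \emph{factor}; whenever the order of a factor has just been pinned down I would apply part (b) to push it forward to the product. Carrying this out: $P=CG$ with $\dord(P)>\dord(C)$ gives $\dord(G)=\dord(P)-|C|$ by (c); then $\xi=DG$ gives $\dord(\xi)=\dord(G)+|D|=\dord(P)-|K|$ by (b), which is the first assertion; then $\xi=BF$ with $\dord(\xi)>\dord(B)$ gives $\dord(F)=\dord(\xi)-|B|=\dord(P)-|K|-|B|$ by (c); and finally $Q=AF$ gives $\dord(Q)=\dord(F)+|A|=\dord(P)+|H|-|K|$ by (b), the second assertion.

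The one place that demands care is verifying, at each of these four applications, the threshold inequality that licenses the chosen part of Lemma~\ref{20120910:lem1}, and here the fourfold maximum in the hypothesis is exactly what is needed: the term $\dord(C)$ validates the first step; $\dord(P)>\dord(D)+|K|$ rewrites as $\dord(G)+|D|>\dord(D)$ for the second; $\dord(P)>\dord(B)+|K|$ rewrites as $\dord(\xi)>\dord(B)$ for the third; and $\dord(P)>\dord(A)-|H|+|K|$ rewrites as $\dord(F)+|A|>\dord(A)$ for the fourth. Thus the only real obstacle is the bookkeeping of translating each threshold from the variables $|A|,|B|,|C|,|D|$ into the quantities $|H|,|K|$ appearing in the statement; once the intermediate equalities $\dord(G)=\dord(P)-|C|$ and $\dord(F)=\dord(P)-|K|-|B|$ are carried exactly, the inequalities line up and no further difficulty arises, the non-degeneracy of the leading coefficients being precisely what underlies the two uses of part (c).
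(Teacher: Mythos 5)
Your proposal is correct and follows essentially the same route as the paper's own proof: unpack the two association relations into witnesses $F,G$, then alternate parts (c) and (b) of Lemma \ref{20120910:lem1} along the chain $P=CG$, $\xi=DG$, $\xi=BF$, $Q=AF$, checking at each step that the relevant term of the fourfold maximum supplies the needed threshold. Your extra remark about passing to $\mc K$ to justify applying the lemma to the witnesses is a harmless (and slightly more careful) elaboration of what the paper does implicitly.
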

\begin{proof}
By definition, the relations \eqref{20120910:eq3} amount to the existence of elements
$F,G\in\mc V^\ell$ such that
$$
CG=P
\,\,,\,\,\,\,
DG=\xi
\,\,,\,\,\,\,
BF=\xi
\,\,,\,\,\,\,
AF=Q
\,.
$$
Since $C$ has non-degenerate leading coefficient 
and $\dord(P)=\dord(CG)>\dord(C)$, 
we get by Lemma \ref{20120910:lem1}(c) that 
$$
\dord(G)=\dord(CG)-|C|=\dord(P)-|C|\,.
$$
Next, since by assumption $D$ has non-degenerate leading coefficient 
and $\dord(G)+|D|=\dord(P)-|C|+|D|=\dord(P)-|K|>\dord(D)$, 
we get by Lemma \ref{20120910:lem1}(b) that 
$$
\begin{array}{c}
\displaystyle{
\dord(\xi)=\dord(DG)=\dord(G)+|D|=\dord(P)-|C|+|D|
} \\
\displaystyle{
=\dord(P)-|K|\,.
}
\end{array}
$$
Similarly, since, by assumption, $B$ has non-degenerate leading coefficient 
and $\dord(BF)=\dord(\xi)=\dord(P)-|K|>\dord(B)$,
we get by Lemma \ref{20120910:lem1}(c) that 
$$
\dord(F)=\dord(BF)-|B|=\dord(\xi)-|B|=\dord(P)-|K|-|B|\,.
$$
Finally, since, by assumption, $A$ has non-degenerate leading coefficient 
and $\dord(F)+|A|=\dord(P)-|K|=|B|+|A|=\dord(P)-|K|+|H|>\dord(A)$, 
we get by Lemma \ref{20120910:lem1}(b) that 
$$
\begin{array}{l}
\displaystyle{
\dord(Q)=\dord(AF)=\dord(F)+|A|=\dord(P)-|K|-|B|+|A|
}\\
\displaystyle{
=\dord(P)-|K|+|H|\,.
}
\end{array}
$$
\end{proof}
\begin{proof}[Proof of Corollary \ref{20130123:cor}]
The statement of the corollary is basically a summary of parts (a)--(d) of Theorem \ref{20130123:thm},
except that we need to explain why,
by the condition \eqref{20130123:eq5},
it follows that each $P_n$ lies in an infinite dimensional abelian subalgebra 
contained in $\Span\{P_n\}_{n=0}^\infty$.
This follows from the observation that
$\ker(B^*)\cap\ker(D^*)$ is finite dimensional over $\mc C$,
and the following result.
\begin{lemma}\label{20120906:lem1}
Let $U$ be an infinite dimensional subspace of a Lie algebra
such that $[U,U]$ is finite dimensional.
Then any element of $U$ is contained in an infinite dimensional abelian subalgebra of $U$.
\end{lemma}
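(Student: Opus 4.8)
The plan is to build an infinite commuting sequence inside $U$ by a greedy construction, exploiting the finiteness of $[U,U]$ to guarantee that successive centralizers remain infinite dimensional. Set $d=\dim[U,U]<\infty$. The key observation I would isolate first is a uniform codimension estimate: for any $x\in U$ and any subspace $U'\subseteq U$, the relative centralizer $\{y\in U'\,|\,[x,y]=0\}$ has codimension at most $d$ in $U'$. Indeed, the map $\ad_x\colon U'\to[U,U]$, $y\mapsto[x,y]$, is linear with image contained in the $d$-dimensional space $[U,U]$, so its kernel, which is precisely the relative centralizer, has codimension $\leq d$. In particular, if $U'$ is infinite dimensional then so is this centralizer.

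Fix $x\in U$; we may assume $x\neq0$, the case $x=0$ being trivial. I would then construct, by induction, a nested chain of infinite dimensional subspaces $U=U_0\supseteq U_1\supseteq\dots$ together with elements $x_n\in U_n$, as follows. Put $x_0=x$ and $U_0=U$. Given an infinite dimensional $U_n$ and an element $x_n\in U_n$, set $U_{n+1}=\{y\in U_n\,|\,[x_n,y]=0\}$; by the codimension estimate applied with $U'=U_n$, this is again infinite dimensional, so I may choose $x_{n+1}\in U_{n+1}$ not lying in the finite dimensional subspace $\Span\{x_0,\dots,x_n\}$. This keeps the whole sequence $\{x_n\}$ linearly independent by induction.

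Finally I would verify that $W=\Span\{x_n\,|\,n\in\mb Z_+\}$ has the required properties. It is infinite dimensional since the $x_n$ are linearly independent, and it contains $x=x_0$. For $m<n$ we have $x_n\in U_n\subseteq U_{m+1}=\{y\in U_m\,|\,[x_m,y]=0\}$, hence $[x_m,x_n]=0$; thus the generators pairwise commute and $W$ is an abelian subalgebra contained in $U$. There is no genuine obstacle here beyond the bookkeeping: the one point to keep in mind is to take each new centralizer \emph{inside} the previously constructed subspace $U_n$ rather than inside all of $U$, since it is this nesting that forces pairwise commutativity, while the uniform codimension bound $d$ ensures that infinite dimensionality is never lost along the chain.
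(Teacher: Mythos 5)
Your proof is correct and follows essentially the same route as the paper: iterated centralizers, each of finite codimension (bounded by $\dim[U,U]$) in the previous one, hence infinite dimensional, from which one extracts an infinite linearly independent commuting family whose span is the desired abelian subalgebra. The only difference is cosmetic — you make the uniform codimension bound $d$ explicit and spell out the nesting argument that the paper leaves implicit.
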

\begin{proof}
Let $a_1$ be a non-zero element of $U$.
The centralizer $C_1$ of $a_1$ in $U$
is the kernel of the map $\ad a:\,U\to[U,U]$,
hence, it has finite codimension in $U$.
Next, let $a_2$ be an element of $C_1$ linearly independent of $a_1$,
and let $C_2$ be its centralizer in $C_1$.
By the same argument, $C_2$ has finite codimension in $C_1$.
In this fashion we construct an infinite sequence of linearly independent
commuting elements of $U$.
\end{proof}
\end{proof}
\begin{remark}\label{20130129:rem3}
Suppose that the sequences $\{\xi_n\}_{n=-1}^\infty\subset\mc V^{\oplus\ell}$
and $\{P_n\}_{n=0}^\infty$ satisfy relations \eqref{20130105:eq1} for each $n\in\mb Z_+$
with respect to the compatible non-local Poisson structures $H=AB^{-1}$
and $K=CD^{-1}$,
and assume that their spans 
$\Xi=\Span\{\xi_n\}_{n\geq-1}$ and $\Pi=\Span\{P_n\}$
define a completely integrable system in the sense of Definition \ref{def:compl-int}.
Then the orthogonality conditions \eqref{20130104:eq2} automatically hold for some $N$
(possibly infinite).
Indeed, by the relations \eqref{20130105:eq1} we have, in particular,
that $\Xi\subset\im(B)\cap\im(D)$ and $\Pi\subset\im(A)\cap\im(C)$.
Therefore conditions \eqref{20130104:eq2}
follow by axiom (iii) in Definition \ref{def:compl-int}.
\end{remark}
\begin{remark}\label{20130129:rem1}
Unfortunately we are unable to prove a stronger form of equation \eqref{20130123:eq5},
namely that the generalized symmetries $P_n$ obtained by the Lenard-Magri scheme
commute.
The usual ``proof'' of this fact using the recursion operator
(see e.g. \cite[p.64]{Bla98} or \cite[p.317]{Olv93}) is not rigorous.
In fact, we have a counterexample in Section \ref{secb:4.2}.
\end{remark}

The recurrence relations \eqref{20130105:eq1}
are usually represented by the following diagram,
called the Lenard-Magri scheme:
\begin{equation}\label{maxi2}
\UseTips
\xymatrix{
& P_{0} \ar@{<->}[dl]_{(A,B)} \ar@{<->}[dr]^{(C,D)} & & &
P_1 \ar@{<->}[dll]_{(A,B)} \ar@{<->}[dr]^{(C,D)} & & &
P_{2} \ar@{<->}[dll]_{(A,B)} \ar@{<->}[dr]^{(C,D)} & \\
\xi_{-1} & & \xi_{0} & & & \xi_1 & & & \dots 
}
\end{equation}
%
Explicitly, diagram \eqref{maxi2} holds if there exists a sequence 
$\{F_n\}_{n\geq-1}$ in $\mc V^{\oplus\ell}$ such that the following equations hold
($n\in\mb Z_+$):
\begin{equation}\label{20120315:eq1}
B(\partial)F_{-1}=\xi_{-1}
\,,\,\,
C(\partial)F_{2n}=A(\partial)F_{2n-1}=P_n
\,,\,\,
B(\partial)F_{2n+1}=D(\partial)F_{2n}=\xi_n
\,.
\end{equation}

\subsection{Notation, terminology and assumptions}
\label{sec:7.5}

In the following sections we apply the machinery developed so far
in explicit examples.

As stated in Theorem \ref{20130123:thm}, if the algebra $\mc V$ is a domain,
the coordinates of all $\xi_n$'s and $P_n$'s solving the recurrence
equation \eqref{20130105:eq1} lie in $\mc V$.
However, for notational purposes
it is convenient to go to the field of fractions of $\mc V$,
so we will assume that $\mc V$ is a field.
Under this assumption, the notation $\tint h\ass{H}P$ introduced in Definition \ref{20120124:def}
is consistent with the notation $\xi\ass{(A,B)}P$ in \eqref{20130112:eq2}.
Namely, the following conditions are equivalent:
\begin{enumerate}[(i)]
\item
$\tint h\ass{H}P$,
\item
$\frac{\delta h}{\delta u}\ass{(A,B)}P$
for some fractional decomposition $H=AB^{-1}$,
\item
$\frac{\delta h}{\delta u}\ass{(A,B)}P$
for the minimal fractional decomposition $H=AB^{-1}$.
\end{enumerate}
Hence, in the rest of the paper we will use the more suggestive notation $\tint h\ass{H}P$.

Furthermore, in all examples we will begin the Lenard-Magri scheme with $\xi_{-1}=0$.
In this case, by Lemma \ref{20130123:lem1},
all the elements $\xi_n$ are closed,
provided that $\xi_0$ is closed.

Recall also that if all $\xi_n$'s are closed,
they are exact in some algebra of differential function extension 
$\tilde{\mc V}$ of $\mc V$:
$\xi_n=\frac{\delta h_n}{\delta u}$,
for some $h_n\in\tilde{\mc V}$.
In this case, and using the more suggestive notation above,
the diagram \eqref{maxi2} takes the form
\begin{equation}\label{maxi}
\tint 0\ass{H}P_0\ass{K}\tint h_0\ass{H}P_1\ass{K}\tint h_1\ass{H}
\dots
\end{equation}
and it is equivalent to the existence of
$\{F_n\}_{n\geq-1}$ in $\mc V^{\oplus\ell}$ such that the following equations hold
($n\in\mb Z_+$):
\begin{equation}\label{20120315:eq1b}
B(\partial)F_{-1}=0
\,,\,\,
C(\partial)F_{2n}=A(\partial)F_{2n-1}=P_n
\,,\,\,
B(\partial)F_{2n+1}=D(\partial)F_{2n}=\frac{\delta h_n}{\delta u}
\,.
\end{equation}
\begin{remark}\label{20130130:rem1}
In general, if $\mc V$ is an arbitrary algebra of differential functions,
$\{\xi_n\}_{n=-1}^N\subset\mc V^{\oplus\ell}$,
$\{P_n\}_{n=0}^N\subset\mc V^\ell$,
and $A,B,C,D\in\Mat_{\ell\times\ell}\mc V[\partial]$,
then,
the whole infinite sequences $\{\xi_n\}_{n\geq-1}$
and $\{P_n\}_{n\geq0}$, constructed using Theorem \ref{20130123:thm},
have coordinates in $\mc V$.
\end{remark}

\subsubsection*{S-type vs C-type Lenard-Magri schemes}

Consider a Lenard-Magri scheme as in \eqref{maxi}.
We say that it is \emph{finite} if it can be extended indefinitely,
but in any such infinite extension 
the linear span of $\{\tint h_n\}_{n\in\mb Z_+}$ or of $\{P_n\}_{n\in\mb Z_+}$
is finite dimensional.
We say that the Lenard-Magri scheme \eqref{maxi}
is \emph{blocked} if it cannot be extended indefinitely,
namely, for some $n$, 
there is no $\tint h_n$ such that $P_n\ass{K}\tint h_n$,
or there is no $P_{n+1}$ such that $\tint h_n\ass{H}P_{n+1}$.

For an integrable Lenard-Magri scheme \eqref{maxi},
we say that it is of \emph{S-type} if the differential orders of the elements $P_n$ grow to infinity,
and it is of \emph{C-type} if the differential orders of the $P_n$'s are bounded.
It is easy to see that for an integrable Lenard-Magri scheme of S-type 
the order of the pseudodifferential $H$ should be greater than the order 
of the pseudodifferential $K$.
Indeed, since we have $P_n\ass{K}\tint h_n\ass{H}P_{n+1}$,
if $P_n$ has differential order large enough,
then, by Lemma \ref{20130123:lem3}, 
$\dord(P_{n+1})=\dord(P_n)+\ord(H)-\ord(K)$.

\begin{remark}
This terminology in inspired by the terminology of Calogero,
who calls an integrable hierarchy of ``S-type'' if the differential orders 
of the canonical conserved densities are unbounded,
and of ``C-type'' otherwise (see \cite{MSS90,MS12}).
Note that, though these two terminologies are close, they do not coincide. For example,
the linear equation $\frac{du}{dt}=u'''$ is C-integrable in Calogero's terminology,
but the corresponding Lenard-Magri scheme, considered for example in \cite{BDSK09},
is integrable of S-type.
\end{remark}

\section{Liouville type integrable systems}
\label{secb:3}

In this section $\mc V$ is a field of differential functions in $u$,
and we assume that $\mc V$ contains all the functions that we encounter
in our computations.
As before, we denote by $\mc C\subset\mc V$ the subfield of constants,
and by $\mc F\subset\mc V$ the subfield of quasiconstants.
We shall denote by $x$ an element of $\mc F$ such that $\partial x=1$.

Recall from  Example \ref{20110922:ex1} that we have the following triple
of compatible non-local Poisson structures:
$$
L_1=\partial\,,\,\, 
L_2=\partial^{-1}\,,\,\, 
L_3=u'\partial^{-1}\circ u'\,. 
$$
Given two non-local Poisson structures $H$ and $K$ of the form
\begin{equation}\label{20121006:eq2}
H=a_1L_1+a_2L_2+a_3L_3
\,\,,\,\,\,\,
K=b_1L_1+b_2L_2+b_3L_3\,,
\end{equation}
with $a_i,b_i\in\mc C,\,i=1,2,3$,
we want to discuss the integrability of the corresponding Lenard-Magri scheme.

\subsection{Preliminary computations}
\label{secb:3.1}

First, we find a minimal fractional decomposition for the operators $H$ and $K$.
\begin{lemma}\label{lem:frac}
For $X=x_1L_1+x_2L_2+x_3L_3$, with $x_1,x_2,x_3\in\mc C$, we have
\begin{equation}\label{frac-liouv}
X=\Big[x_1\partial^2\circ\frac1{u''}\partial+\frac{x_2+x_3(u')^2}{u''}\partial-x_3u'\Big]
\Big[\partial\circ\frac1{u''}\partial\Big]^{-1}\,.
\end{equation}
The above fractional decomposition is minimal only for $x_2x_3\neq0$.
For $x_2\neq0,\,x_3=0$, 
the minimal fractional decomposition is 
\begin{equation}\label{frac-liouv1}
X=(x_1\partial^2+x_2)\partial^{-1}\,,
\end{equation}
for $x_2=0,\,x_3\neq0$ it is 
\begin{equation}\label{frac-liouv2}
X=\Big(x_1\partial\circ\frac1{u'}\partial+x_3u'\Big)
\Big(\frac1{u'}\partial\Big)^{-1}\,,
\end{equation}
and for $x_2=x_3=0$ it is 
$X=x_1\partial$.
\end{lemma}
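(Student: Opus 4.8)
The plan is to establish the main decomposition \eqref{frac-liouv} by a direct composition computation, and then to settle minimality in all four cases by the kernel criterion of Proposition \ref{prop:minimal-fraction}(a). Writing $B=\partial\circ\frac1{u''}\partial$, which is non-degenerate of order $2$, it suffices to check that $X\circ B$ equals the claimed numerator
\[
A=x_1\partial^2\circ\frac1{u''}\partial+\frac{x_2+x_3(u')^2}{u''}\partial-x_3u' .
\]
The contributions of $L_1=\partial$ and $L_2=\partial^{-1}$ are immediate: $x_1\partial\circ B=x_1\partial^2\circ\frac1{u''}\partial$ and $x_2\partial^{-1}\circ B=x_2\frac1{u''}\partial$, matching the first term of $A$ and part of the middle term.

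The delicate point, and what I expect to be the main obstacle, is the term coming from the Sokolov operator $L_3=u'\partial^{-1}\circ u'$, where the negative power of $\partial$ must cancel so that $X\circ B$ is genuinely a differential operator. The clean way to see this is the operator identity
\[
u'\circ\partial\circ\frac1{u''}=\partial\circ\frac{u'}{u''}-1 ,
\]
which follows from the product rule $(\frac{u'}{u''})'=1+u'(\frac1{u''})'$. Composing with $\partial^{-1}$ on the left then gives $\partial^{-1}\circ u'\circ\partial\circ\frac1{u''}\partial=\frac{u'}{u''}\partial-1$, whence $x_3\,u'\partial^{-1}\circ u'\circ B=x_3\big(\frac{(u')^2}{u''}\partial-u'\big)$. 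Summing the three contributions reproduces exactly $A$, proving \eqref{frac-liouv}.

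For minimality I would use that $X=AB^{-1}$ is minimal if and only if $\ker A\cap\ker B=0$ in any differential field extension. Solving $Bf=0$ gives $\frac1{u''}f'=\text{const}$, so $\ker B=\langle 1,u'\rangle$. A one-line computation yields $A(1)=-x_3u'$ and $A(u')=x_2$, so for constants $c,d$ one has $A(c+du')=-cx_3u'+dx_2$, which vanishes if and only if $cx_3=dx_2=0$. Hence $\ker A\cap\ker B=0$ precisely when $x_2x_3\neq0$, giving minimality of \eqref{frac-liouv} in that range.

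In the degenerate cases the same two values $A(1),A(u')$ pinpoint the nontrivial common kernel element, and hence the common right factor that must be cancelled: for $x_3=0$ it is $1$ (right factor $\partial$), and for $x_2=0$ it is $u'$ (right factor $\partial\circ\frac1{u'}$). Rather than factor abstractly, I would verify the reduced decompositions directly. For $x_3=0$ one checks $(x_1\partial^2+x_2)\circ\partial^{-1}=x_1\partial+x_2\partial^{-1}=X$, and confirms minimality of \eqref{frac-liouv1} from $\ker\partial=\langle1\rangle$ together with $(x_1\partial^2+x_2)(1)=x_2\neq0$. For $x_2=0$ one uses $u'\circ\frac1{u'}\partial=\partial$ to get $(x_1\partial\circ\frac1{u'}\partial+x_3u')\circ(\frac1{u'}\partial)^{-1}=X$, and confirms minimality of \eqref{frac-liouv2} from $\ker(\frac1{u'}\partial)=\langle1\rangle$ with $(x_1\partial\circ\frac1{u'}\partial+x_3u')(1)=x_3u'\neq0$. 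The case $x_2=x_3=0$, where $X=x_1\partial$ with trivial denominator, is immediate. The only genuine difficulties are the cancellation of negative powers in the $L_3$ term (handled by the boxed identity) and correctly identifying the common right factor in the two degenerate cases.
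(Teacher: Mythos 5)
Your proof is correct, and it is precisely the ``straightforward'' verification that the paper omits (its proof of this lemma is the single word \emph{Straightforward}): direct composition $X\circ B=A$ using the identity $u'\circ\partial\circ\frac1{u''}=\partial\circ\frac{u'}{u''}-1$, followed by the kernel criterion of Proposition \ref{prop:minimal-fraction}(a) with $\ker B=\Span_{\mc C}\{1,u'\}$ and $A(1)=-x_3u'$, $A(u')=x_2$. One tiny caveat: your parenthetical identifications of the common right factors to be cancelled ($\partial$ for $x_3=0$, $\partial\circ\frac1{u'}$ for $x_2=0$) are not the actual right factors (those are $\frac1{u''}\partial$ and $\frac{u'}{u''}\partial-1$ respectively), but since you explicitly verify the reduced decompositions directly rather than by factoring, this does not affect the argument.
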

\begin{proof}
Straightforward.
\end{proof}

Later we will need the following simple facts concerning the numerators 
of the fractional decompositions for $X$.
\begin{lemma}\label{20120909:lem1}
\begin{enumerate}[(a)]
\item
For $x_1,x_2,x_3\in\mc C$, $x_1\neq0$, consider the equation
\begin{equation}\label{20120909:eq1}
\Big(x_1\partial^2\circ\frac1{u''}\partial+\frac{x_2+x_3(u')^2}{u''}\partial-x_3u'\Big)F=f\,,
\end{equation}
in $F\in\mc V$ and $f\in\mc F$.
If $x_3\neq0$,
then all the solutions of equation \eqref{20120909:eq1} are
$$
F=\alpha u'
\,\,,\,\,\,\,
f=x_2\alpha
\,\,\text{ for some }\,\alpha\in\mc C\,,
$$
while if $x_3=0$, 
then all the solutions of equation \eqref{20120909:eq1} are
$$
F=\alpha u'+\beta (xu'-u)+\gamma
\,\,,\,\,\,\,
f=x_2(\alpha+\beta x)
\,\,\text{ for some }\,\alpha,\beta,\gamma\in\mc C\,.
$$
\item
For $x_1,x_2\in\mc C$, $x_1\neq0$, 
an element $F\in\mc V$ satisfies
\begin{equation}\label{20120909:eq2}
(x_1\partial^2+x_2)F\in\mc F
\end{equation}
if and only if $F\in\mc F$.
\item
For $x_1,x_3\in\mc C$, $x_1\neq0$, 
an element $F\in\mc V$ satisfies
\begin{equation}\label{20120909:eq3}
\Big(x_1\partial\circ\frac1{u'}\partial+x_3u'\Big)F\in\mc V_1
\end{equation}
if and only if $F\in\mc V_0$ and $F'=\frac{\partial F}{\partial u}u'$.
\end{enumerate}
\end{lemma}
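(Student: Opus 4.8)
The plan is to treat all three parts by the same two-step strategy: first bound the differential order of the unknown $F$ using the differential-order calculus of Lemma~\ref{20120910:lem1}, and then analyze the equation explicitly inside a low piece of the filtration \eqref{20130111:eq1} by extracting the coefficients of the top-order derivatives of $u$ and requiring them to vanish (since the right-hand side is forced to have differential order $-\infty$, or $\leq 1$ in part (c)).

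For part (a), I would first record that the numerator operator $A(\partial)$ in \eqref{20120909:eq1} has order $3$, differential order $4$, and non-degenerate leading (third-order) coefficient $x_1/u''$. Hence, if $\dord(F)\geq2$ then $\dord(F)+|A|=\dord(F)+3>4=\dord(A)$, so by Lemma~\ref{20120910:lem1}(b) we get $\dord(A(\partial)F)=\dord(F)+3\geq5$, contradicting $A(\partial)F=f\in\mc F$. Thus $\dord(F)\leq1$ and I may write $F=F(x,u,u')$. The substance of part (a) is then the computation within $\mc V_1$: writing $p=u'$ and comparing coefficients of $u^{(4)}$ in $A(\partial)F$ (only the term $x_1\partial^2(F'/u'')$ can reach differential order $4$) forces $-x_1(F_x+pF_u)/(u'')^2=0$, i.e. $F_x+u'F_u=0$. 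This is the crucial simplification, because it collapses $F'/u''=F_{u'}$, so that $A(\partial)F=x_1\partial^2 F_{u'}+(x_2+x_3(u')^2)F_{u'}-x_3u'F$. Comparing coefficients of $u'''$ then gives $x_1F_{u'u'}=0$, hence $F_{u'u'}=0$; together with $F_x+u'F_u=0$ this is a first-order linear system whose solution is $F=k(xu'-u)+mu'+n$ with $k,m,n\in\mc C$. Substituting back, a short computation yields $A(\partial)F=x_2(kx+m)+x_3u'(ku-n)$, so the requirement $A(\partial)F\in\mc F$ forces $k=n=0$ when $x_3\neq0$ (giving $F=\alpha u'$, $f=x_2\alpha$) and imposes no further condition when $x_3=0$ (giving the three-parameter family). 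That these are indeed solutions is immediate from the same formulas, e.g. $A(\partial)u'=x_2$ always, and $A(\partial)(xu'-u)=x_2x$ when $x_3=0$.

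Parts (b) and (c) follow the same scheme but are lighter. For (b), the operator $x_1\partial^2+x_2$ has order $2$ and constant coefficients, hence differential order $-\infty$ and non-degenerate leading coefficient $x_1\neq0$; if $\dord(F)\geq0$ then Lemma~\ref{20120910:lem1}(b) gives $\dord((x_1\partial^2+x_2)F)=\dord(F)+2\geq2$, contradicting membership in $\mc F$, so $F\in\mc F$, while the converse uses only that $\partial\mc F\subseteq\mc F$. For (c), writing $D=x_1\partial\circ\frac1{u'}\partial+x_3u'$, I would note that $D$ has order $2$, differential order $2$, and non-degenerate leading coefficient $x_1/u'$; if $\dord(F)\geq1$ then Lemma~\ref{20120910:lem1}(b) gives $\dord(DF)\geq3$, contradicting $DF\in\mc V_1$, so $F\in\mc V_0$. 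Writing $F=F(x,u)$ and expanding $DF$, the only term of differential order $2$ is $-x_1F_xu''/(u')^2$, so $DF\in\mc V_1$ forces $F_x=0$, i.e. $F'=F_uu'$; the converse is the direct computation showing $DF=x_1F_{uu}u'+x_3u'F\in\mc V_1$.

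The main obstacle is the explicit $\mc V_1$-computation in part (a): one must carefully expand $\partial^2(F'/u'')$, isolate the coefficients of $u^{(4)}$ and of $u'''$, and solve the resulting overdetermined system $F_x+u'F_u=0$, $F_{u'u'}=0$. Everything else is either a differential-order count via Lemma~\ref{20120910:lem1} or a routine substitution; in particular, the passage from $F_x+u'F_u=0$ to the form $F=\Phi(u-xu',u')$ (equivalently, solving the combined system for $F$ affine in $u'$) is the one place where I would argue by the method of characteristics, treating $x,u,u'$ as independent coordinates on $\mc V_1$ and using that a relation in $\mc V_1$ holds iff it holds identically in these coordinates.
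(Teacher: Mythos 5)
Your proof is correct and follows the same overall strategy as the paper's: bound $\dord(F)$ using the leading coefficient of the operator, then determine $F$ explicitly inside $\mc V_1$ (resp.\ $\mc F$, $\mc V_0$) by extracting the coefficients of the top derivatives of $u$. Parts (b) and (c) coincide with the paper's argument essentially verbatim; the paper applies $\frac{\partial}{\partial u^{(n+2)}}$ by hand where you cite Lemma \ref{20120910:lem1}, but that lemma's proof is the same computation. The one genuine divergence is the endgame of part (a): the paper deduces from the reduced equation that $\frac{\partial F}{\partial u'}$ is a quasiconstant $\varphi$, writes $F=\varphi u'+f_0$, and invokes $\partial\mc V\cap\mc V_0=\partial\mc F$ (equation \eqref{20120907:eq1}) to force $\varphi''=0$; you instead extract $\frac{\partial^2 F}{\partial u'\partial u'}=0$ from the $u'''$-coefficient and solve it jointly with $\frac{\partial F}{\partial x}+u'\frac{\partial F}{\partial u}=0$. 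Both routes reach the same three-parameter family, and your substitution check $A(\partial)\big(k(xu'-u)+mu'+n\big)=x_2(kx+m)+x_3u'(ku-n)$ is right. One caution: the heuristic that ``a relation in $\mc V_1$ holds iff it holds identically in the independent coordinates $x,u,u'$'' is not literally an axiom of an abstract algebra of differential functions, so the characteristics step should be rephrased algebraically --- write $F=Gu'+H$ with $G=\frac{\partial F}{\partial u'}$ and $H=F-Gu'$ both in $\mc V_0$, kill the coefficients of powers of $u'$ by repeated application of $\frac{\partial}{\partial u'}$, and observe that an element annihilated by $\frac{\partial}{\partial x}$ and by all $\frac{\partial}{\partial u^{(n)}}$ lies in $\mc C$. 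All of this follows from the commuting-derivation axioms, so the gap is purely presentational.
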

\begin{proof}
If $n\geq2$ and $F\in\mc V$ solves equation \eqref{20120909:eq1}
and has differential order less than or equal to $n$, then,
using \eqref{eq:0.4}, we have
$$
0=\frac{\partial}{\partial u^{(n+3)}}LHS\eqref{20120909:eq1}
=x_1\frac1{u''}\frac{\partial F}{\partial u^{(n)}}\,,
$$
which implies that $\frac{\partial F}{\partial u^{(n)}}=0$.
Hence $F$ must have differential order at most $1$.
Then we have
$$
0=\frac{\partial}{\partial u^{(4)}}LHS\eqref{20120909:eq1}
=x_1\Big(\frac1{u''}\frac{\partial F}{\partial u'}-\frac1{(u'')^2}F'\Big)\,,
$$
so that $F'=\frac{\partial F}{\partial u'}u''$.
But then equation \eqref{20120909:eq1} becomes
\begin{equation}\label{20120730:eq1}
x_1\Big(\frac{\partial F}{\partial u'}\Big)^{''}+(x_2+x_3(u')^2)\frac{\partial F}{\partial u'}-x_3u'F=f\,.
\end{equation}
If $\frac{\partial F}{\partial u'}$ has differential order $n\geq0$,
then applying $\frac{\partial}{\partial u^{(n+2)}}$ to both sides of equation \eqref{20120730:eq1}
we get $\frac{\partial^2 F}{\partial u^{(n)}\partial u'}=0$.
Hence, $\frac{\partial F}{\partial u'}=\varphi\in\mc F$.
In other words, $F=\varphi u'+f_0$, where $f_0\in\mc V_0$ has differential 
order less than or equal to $0$.
But then the condition $F'=\frac{\partial F}{\partial u'}u''$ becomes
$\varphi' u'+f_0'=0$.
This implies, using \eqref{20120907:eq1},
that $(f_0+\varphi'u)'=\varphi''u\in\partial\mc V\cap\mc V_0=\partial\mc F$.
So, necessarily, $\varphi''=0$.
Hence, $\varphi=\alpha+\beta x$
and $f_0=-\beta u+\gamma$, for some constants $\alpha,\beta,\gamma\in\mc C$.
Putting these results together, we have
$$
F=(\alpha+\beta x) u'-\beta u+\gamma\,,
$$
and plugging back into equation \eqref{20120730:eq1} we get
$$
x_2(\alpha+\beta x)+x_3\beta uu'-x_3\gamma u'=f\,.
$$
Since, by assumption, $f\in\mc F$,
we obtain $\beta=\gamma=0$ if $x_3\neq 0$,
completing the proof of part (a).

For part (b) we just observe that, if $F\in\mc V_n$ for some $n\geq0$
satisfies condition \eqref{20120909:eq2}, then
$$
0=\frac{\partial}{\partial u^{(n+2)}}(x_1F''+x_2F)
=x_1\frac{\partial F}{\partial u^{(n)}}\,.
$$
Hence, $F$ must be a quasiconstant.

Similarly, 
if $F\in\mc V_n$ for some $n\geq1$
satisfies condition \eqref{20120909:eq3}, then
$$
0=\frac{\partial}{\partial u^{(n+2)}}
\Big(x_1\partial\frac{F'}{u'}+x_3u'F\Big)
=\frac{x_1}{u'}\frac{\partial F}{\partial u^{(n)}}\,.
$$
Hence, $F$ must lie in $\mc V_0$.
Furthermore, 
$$
0=\frac{\partial}{\partial u''}
\Big(x_1\partial\frac{F'}{u'}+x_3u'F\Big)
=x_1\Big(-\frac{F'}{(u')^2}+\frac1{u'}
\frac{\partial F}{\partial u}\Big)\,.
$$
Hence, $F$ must be such that $F'=\frac{\partial F}{\partial u}u'$.
\end{proof}

Next, we compute the spaces $\mc F_0(X)$ and $\mc H_0(X)$
defined in \eqref{20120908:eq1} and \eqref{20120908:eq2}.
Here and further we use the following notation: given two constants $x_i,x_j\in\mc C$
such that $x_i\neq0$, we let
\begin{equation}\label{notation}
x_{ij}=\sqrt{-\frac{x_j}{x_i}}\,\in\mc C\,.
\end{equation}
(We assume that the field $\mc C$ contains all such elements.)
\begin{lemma}\label{20120908:lem1}
For $X=x_1L_1+x_2L_2+x_3L_3$, we have:
\begin{enumerate}[(a)]
\item
\begin{enumerate}[]
\item
$\mc F_0(X)=\ker\big(\frac\delta{\delta u}\big)$
if $x_1x_2x_3\neq0$;
\item
$\mc F_0(X)=\mc C\tint e^{x_{12}x}u
+\mc C\tint e^{-x_{12}x}u+\ker\big(\frac{\delta}{\delta u}\big)$
if $x_1x_2\neq0,x_3=0$;
\item
$\mc F_0(X)=\mc C\tint e^{x_{13}u}
+\mc C\tint e^{-x_{13}u}+\ker\big(\frac{\delta}{\delta u}\big)$
if $x_1x_3\neq0,x_2=0$;
\item
$\mc F_0(X)=\mc C\tint u+\ker\big(\frac\delta{\delta u}\big)$
if $x_1\neq0$ and $x_2=x_3=0$;
\item
$\mc F_0(X)=\mc C\tint\sqrt{x_2+x_3(u')^2}+\ker\big(\frac\delta{\delta u}\big)$,
if $x_1=0,x_2x_3\neq0$;
\item
$\mc F_0(X)=\ker\big(\frac\delta{\delta u}\big)$ if $x_1=0$ and $x_2=0$ or $x_3=0$.
\end{enumerate}
\item
\begin{enumerate}[]
\item
$\mc H_0(X)=\mc C\oplus \mc Cu'$ if $x_2x_3\neq0$;
\item
$\mc H_0(X)=\mc C$ if $x_2\neq0,x_3=0$;
\item
$\mc H_0(X)=\mc Cu'$ if $x_2=0,x_3\neq0$;
\item
$\mc H_0(X)=0$ if $x_2=x_3=0$.
\end{enumerate}
\end{enumerate}
\end{lemma}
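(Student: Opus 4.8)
The plan is to reduce both computations to the kernels of the numerator and denominator of the minimal fractional decomposition $X=AB^{-1}$ provided by Lemma \ref{lem:frac}. Unwinding Definition \ref{20120124:def} together with Remark \ref{20120201:rem3}, the relation $\tint g\ass{X}0$ means that $\frac{\delta g}{\delta u}=B(\partial)F$ and $A(\partial)F=0$ for some $F$, while $0\ass{X}Q$ means that $B(\partial)F=0$ and $Q=A(\partial)F$ for some $F$. Hence $\mc H_0(X)=A(\ker B)$, and $\mc F_0(X)$ is the set of $\tint g$ with $\frac{\delta g}{\delta u}\in B(\ker A)$; in particular $\ker\big(\frac{\delta}{\delta u}\big)\subset\mc F_0(X)$ always, corresponding to $F=0$. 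So the whole lemma reduces to computing $\ker A$ and $\ker B$ for each of the four decompositions in Lemma \ref{lem:frac}, applying $B$ or $A$ to them, and, in the case of $\mc F_0$, exhibiting a primitive under $\frac{\delta}{\delta u}$.

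Part (b) is the easier half. First I would compute $\ker B$ directly: for $B=\partial\circ\frac1{u''}\partial$ one gets $\ker B=\mc Cu'\oplus\mc C$, for $B=\partial$ and for $B=\frac1{u'}\partial$ one gets $\ker B=\mc C$, and for $B=\id$ one gets $\ker B=0$. Then I would apply $A$, using the elementary identities $A(u')=x_2$ and $A(1)=-x_3u'$ for the generic numerator (case $x_2x_3\neq0$), and the analogous one-line evaluations $A(1)=x_2$ (case $x_3=0$) and $A(1)=x_3u'$ (case $x_2=0$). The point worth recording is that the leading term $x_1\partial^2\circ\frac1{u''}\partial$ (resp.\ $x_1\partial^2$, $x_1\partial\circ\frac1{u'}\partial$) always annihilates $\ker B$, so the answer is independent of $x_1$, as the statement of part (b) reflects. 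This yields $\mc H_0(X)=\mc C\oplus\mc Cu'$, $\mc C$, $\mc Cu'$, $0$ in the four cases.

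Part (a) is the substantive half, and the computation of $\ker A$ is where the work lies. When $x_1\neq0$ I would invoke Lemma \ref{20120909:lem1}: for $x_1x_2x_3\neq0$, part (a) of that lemma applied with $f=0$ forces $F=\alpha u'$ with $x_2\alpha=0$, hence $\ker A=0$ and $\mc F_0(X)=\ker\big(\frac{\delta}{\delta u}\big)$; for $x_3=0$, part (b) reduces the problem to $F\in\mc F$ and then to the constant-coefficient ODE $x_1F''+x_2F=0$ in the variable $x$, giving $F\in\mc Ce^{x_{12}x}+\mc Ce^{-x_{12}x}$; for $x_2=0$, part (c) forces $F=\phi(u)$ and then the ODE $x_1\phi''+x_3\phi=0$ in $u$, giving $F\in\mc Ce^{x_{13}u}+\mc Ce^{-x_{13}u}$; and for $x_2=x_3=0$ one has trivially $\ker A=\mc C$. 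When $x_1=0$, Lemma \ref{20120909:lem1} does not apply and $A$ is first order, so I would integrate the linear ODE $A(\partial)F=0$ directly: for $x_2x_3\neq0$ the logarithmic derivative $F'/F$ equals $\frac12$ the logarithmic derivative of $x_2+x_3(u')^2$, giving $F\in\mc C\sqrt{x_2+x_3(u')^2}$, while if $x_2=0$ or $x_3=0$ then $A$ is multiplication by a nonzero element and $\ker A=0$. In each nontrivial case I would then compute $B(\partial)F$ and identify it as $\frac{\delta g}{\delta u}$ for the explicit primitive claimed in the statement: $\tint g=\tint(\partial F)u$ gives the $\tint e^{\pm x_{12}x}u$; $\tint g=\tint\phi(u)$ gives the $\tint e^{\pm x_{13}u}$; $\tint g=\tint u$ gives case (iv); and $\tint g=\tint\sqrt{x_2+x_3(u')^2}$ gives case (v), since a direct computation shows that $\frac{\delta}{\delta u}\tint\sqrt{x_2+x_3(u')^2}=-\frac{x_2x_3u''}{(x_2+x_3(u')^2)^{3/2}}$ is exactly $B(\partial)F$ up to a constant. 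Each such $B(\partial)F$ is then automatically closed, being a variational derivative by construction.

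The main obstacle is pinning down the \emph{full} kernel of $A$ rather than merely exhibiting some solutions: one must rule out spurious solutions of higher differential order, which is precisely the content of the differential-order estimates in Lemma \ref{20120909:lem1} for $x_1\neq0$, and which for $x_1=0$ requires the direct first-order integration together with the recognition of the correct integrating factor. A secondary point, implicit throughout, is the standing assumption that $\mc V$ is large enough to contain the transcendental functions $e^{\pm x_{12}x}$, $e^{\pm x_{13}u}$ and $\sqrt{x_2+x_3(u')^2}$ that arise, so that the exhibited primitives genuinely lie in $\mc V/\partial\mc V$ (or in a suitable extension).
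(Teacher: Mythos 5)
Your proposal is correct and follows essentially the same route as the paper's own proof: both reduce $\mc H_0(X)$ to $Y(\ker Z)$ and $\mc F_0(X)$ to $\big(\tfrac{\delta}{\delta u}\big)^{-1}\big(Z(\ker Y)\big)$ for the minimal fractional decomposition $X=YZ^{-1}$ of Lemma \ref{lem:frac}, invoke Lemma \ref{20120909:lem1} to pin down $\ker Y$ when $x_1\neq0$, integrate the first-order equation directly when $x_1=0$, and exhibit the same explicit primitives. The only differences are cosmetic (you state the kernel reduction as a general principle up front, where the paper writes out the defining equations case by case).
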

\begin{proof}
First, let us find all elements $P\in\mc H_0(X)$. 
By Remark \ref{20120201:rem3},
if $X=YZ^{-1}$ is a minimal fractional decomposition,
we need to solve the following equations in $F,P\in \mc V$:
\begin{equation}\label{20120908:eq3}
ZF=0\,\,,\,\,\,\,P=YF\,.
\end{equation}
By Lemma \ref{lem:frac},
if $x_2=x_3=0$, then $Y=x_1\partial$ and $Z=1$, so the only solution 
of \eqref{20120908:eq3} is given by $F=0$, $P=0$.
If $x_2\neq0,\,x_3=0$, then $Y=x_1\partial^2+x_2$ and $Z=\partial$, so 
we get $F\in\mc C$ and $P\in\mc C$.
Similarly, if $x_2=0,\,x_3\neq0$, then $Y=x_1\partial\circ\frac1{u'}\partial+x_3u'$ 
and $Z=\frac1{u'}\partial$, so we get $F\in\mc C$ and $P\in\mc Cu'$.
Finally, if $x_2\neq0,\,x_3\neq0$, then 
$Y=x_1\partial^2\circ\frac1{u''}\partial+\frac{x_2+x_3(u')^2}{u''}\partial-x_3u'$
and $Z=\partial\circ\frac1{u''}\partial$.
Hence, the solutions of \eqref{20120908:eq3} are 
$F=\alpha+\beta u'\in\mc C\oplus\mc Cu'$,
and $P=YF=x_2\beta-x_3\alpha u'\in\mc Cu'$.
This proves part  (b).

Next, we find all elements $\tint f\in\mc F_0(X)$,
namely all solutions of the following equations in $F\in \mc V$ and $\tint f\in\mc V/\partial\mc V$:
\begin{equation}\label{20120908:eq4}
YF=0\,\,,\,\,\,\,\frac{\delta f}{\delta u}=ZF\,.
\end{equation}
If $x_1=0,x_2\neq0,x_3=0$, we have $Y=x_2$ is invertible,
and similarly, if $x_1=0,x_2=0,x_3\neq0$, we have $Y=x_3u'$ is invertible too.
In both these cases we thus have $F=0$, and hence 
$\tint f\in\ker\big(\frac\delta{\delta u}\big)$.
If $x_1=0,x_2\neq0,x_3\neq0$, then 
$Y=\frac{x_2+x_3(u')^2}{u''}\partial-x_3u'$ and $Z=\partial\circ\frac1{u''}\partial$.
The equation $YF=0$ has a one-dimensional (over $\mc C$) space of solution, spanned
by $F=\sqrt{x_2+x_3(u')^2}$.
Hence, all elements $\tint f\in\mc F_0(X)$ are obtained solving the equation
$$
\frac{\delta f}{\delta u}=\alpha\partial\circ\frac1{u''}\partial\sqrt{x_2+x_3(u')^2}
=\alpha\Big(\frac{x_3u'}{\sqrt{x_2+x_3(u')^2}}\Big)'\,,
$$
for $\alpha\in\mc C$. 
Its solutions are of the form
$\tint f=-\alpha\sqrt{x_2+x_3(u')^2}+k$,
where $k\in\ker\big(\frac\delta{\delta u}\big)$.
Next, 
if $x_1\neq0,x_2=x_3=0$, then $Y=x_1\partial$ and $Z=1$, so the equations \eqref{20120908:eq4}
give $F\in\mc C$ and $\tint f\in\mc C\tint u+\ker\big(\frac{\delta}{\delta u}\big)$.
If $x_1\neq0,x_2\neq0,\,x_3=0$, then $Y=x_1\partial^2+x_2$ and $Z=\partial$.
In this case, the first equation in \eqref{20120908:eq4} reads
$$
x_1F''+x_2F=0\,.
$$
By Lemma \ref{20120909:lem1}(b), it must be $F\in\mc F$, 
and it is easy to see that 
the space of solutions is two-dimensional over $\mc C$, 
consisting of elements of the form
$$
F=\alpha_+e^{x_{12}x}+\alpha_-e^{-x_{12}x}\,,
$$
with $\alpha_\pm\in\mc C$.
Then, the second equation in \eqref{20120908:eq4} gives
$$
\frac{\delta f}{\delta u}
=\alpha_+x_{12}e^{x_{12}x}
-\alpha_-x_{12}e^{-x_{12}x}\,,
$$
so that 
$\tint f=
\alpha_+x_{12} \tint e^{x_{12}x}u
-\alpha_-x_{12} \tint e^{-x_{12}x}u+k$,
where $k\in\ker\big(\frac{\delta}{\delta u}\big)$.
Similarly, we consider the case $x_1\neq0,x_2=0,x_3\neq0$.
In this case $Y=x_1\partial\circ\frac1{u'}\partial+x_3u'$ and $Z=\frac1{u'}\partial$.
The first equation in \eqref{20120908:eq4} reads
$$
x_1\Big(\frac{F'}{u'}\Big)'+x_3u'F=0\,.
$$
By Lemma \ref{20120909:lem1}(c),
we must have $F\in\mc V_0$ such that $F'=\frac{\partial F}{\partial u}$.
It is easy to see that
the space of solutions is two-dimensional over $\mc C$, and
it consists of elements of the form
$$
F=\alpha_+e^{x_{13}u}+\alpha_-e^{-x_{13}u}\,,
$$
with $\alpha_\pm\in\mc C$.
Then, the second equation in \eqref{20120908:eq4} gives
$$
\frac{\delta f}{\delta u}
=\alpha_+x_{13}e^{x_{13}u}
-\alpha_-x_{13}e^{-x_{13}u}\,,
$$
and its solutions for $\tint f$ are of the form
$\tint f=
\alpha_+\tint e^{x_{13}u}
+\alpha_-\tint e^{-x_{13}u}+k$,
for $k\in\ker\big(\frac{\delta}{\delta u}\big)$.
Finally, we are left to consider the case when
$x_1\neq0,x_2\neq0,\,x_3\neq0$.
In this case
$Y=x_1\partial^2\circ\frac1{u''}\partial+\frac{x_2+x_3(u')^2}{u''}\partial-x_3u'$
and $Z=\partial\circ\frac1{u''}\partial$.
The first equation in \eqref{20120908:eq4} reads
$$
x_1\Big(\frac{F'}{u''}\Big)''+(x_2+x_3(u')^2)\frac{F'}{u''}-x_3u'F=0\,.
$$
By Lemma \ref{20120909:lem1}(a),
the only solution of this equation is $F=0$.
But then the second equation in \eqref{20120908:eq4} 
gives $\tint f\in\ker\big(\frac{\delta}{\delta u}\big)$.
\end{proof}

In the statement of Lemma \ref{20120908:lem1} and further on in this section,
we assume that $\mc V$ contains all the elements which appear in the statement,
namely
$e^{x_{12}x}$, $e^{x_{13}u}$,
and $\sqrt{x_2+x_3(u')^2}$.

Next, for each element $\tint f\in\mc F_0(X)$, we want to find an element $P\in\mc H(X)$
which is $X$-associated to it,
and for each element $P\in\mc H_0(X)$, we want to find an element $\tint f\in\mc H(X)$
which is $X$-associated to it.
Recall, by Lemma \ref{20120907:lem1},
that if $\tint f\ass{X}P$,
then all elements in $\mc H(X)$ which are $X$-associated to $\tint f$
are obtained adding to $P$ an arbitrary element of $\mc H_0(X)$,
and all elements in $\mc F(X)$ which are $X$-associated to $P$
are obtained adding to $\tint f$ an arbitrary element of $\mc F_0(X)$.
\begin{lemma}\label{20120908:lem2}
Let $X=x_1L_1+x_2L_2+x_3L_3$, and let $a_2,a_3,\gamma\in\mc C\backslash\{0\}$. 
We have:
\begin{enumerate}[(i)]
\item
\begin{enumerate}[]
\item
$\nexists P\in\mc H(X)$ such that 
$\tint e^{\gamma x}u\ass{X}P$, if $x_3\neq0$;
\item
$\tint e^{\gamma x}u\ass{X}\frac1\gamma (x_1\gamma^2+x_2)e^{\gamma x}$, if $x_3=0$.
\end{enumerate}
\item
\begin{enumerate}[]
\item
$\nexists P\in\mc H(X)$ such that 
$\tint e^{\gamma u}\ass{X}P$, if $x_2\neq0$;
\item
$\tint e^{\gamma u}\ass{X}(x_1\gamma^2+x_3)e^{\gamma u}u'$, if $x_2=0$.
\end{enumerate}
\item
$\tint u\ass{X}(x_2x+x_3uu')$.
\item
$\tint\sqrt{a_2+a_3(u')^2}\ass{X}
-\Big(x_1\partial^2\circ\frac1{u''}\partial+\frac{x_2+x_3(u')^2}{u''}\partial-x_3u'\Big)\sqrt{a_2+a_3(u')^2}$.
\item
\begin{enumerate}[]
\item 
$\tint 0\ass{X}1$, if $x_2\neq0$;
\item
$\nexists \tint f\in\mc F(X)$ such that 
$\tint f\ass{X}1$, if $x_2=0,x_1x_3\neq0$;
\item
$\tint\frac1{2x_3u'}\ass{X}1$, if $x_1=0,x_2=0,x_3\neq0$;
\item
$\tint\frac{xu}{x_1}\ass{X}1$, if $x_1\neq0,x_2=0,x_3=0$.
\end{enumerate}
\item
\begin{enumerate}[]
\item 
$\tint 0\ass{X}u'$, if $x_3\neq0$;
\item
$\nexists \tint f\in\mc F(X)$ such that 
$\tint f\ass{X}u'$, if $x_3=0,x_1x_2\neq0$;
\item
$\tint\frac{-(u')^2}{2x_2}\ass{X}u'$, if $x_1=0,x_2\neq0,x_3=0$;
\item
$\tint\frac{u^2}{2x_1}\ass{X}u'$, if $x_1\neq0,x_2=0,x_3=0$.
\end{enumerate}
\end{enumerate}
\end{lemma}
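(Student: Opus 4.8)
The plan is to verify each association directly from Definition~\ref{20120124:def}, using the minimal fractional decompositions $X=YZ^{-1}$ recorded in Lemma~\ref{lem:frac}. Since we work over a field of differential functions, the equivalences of Section~\ref{sec:7.5} reduce the relation $\tint f\ass{X}P$ to the existence of $F\in\mc V$ with $\frac{\delta f}{\delta u}=Z(\partial)F$ and $P=Y(\partial)F$; equivalently, since association is independent of the choice of fractional decomposition (Remark~\ref{20120201:rem3}), to $P=X(\partial)\frac{\delta f}{\delta u}$ once the inverse powers of $\partial$ in $X$ are resolved into genuine antiderivatives in $\mc V$. First I would record the two antiderivatives used throughout, $\partial^{-1}e^{\gamma x}=\frac1\gamma e^{\gamma x}$ and $\partial^{-1}\cdot 1=x$ (valid as $\partial x=1$), together with the kernel computation $Y(\alpha+\beta u')=x_2\beta-x_3\alpha u'$ already performed in the proof of Lemma~\ref{20120908:lem1}(b).

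For the existence statements (i)--(iv), and the non-degenerate subcases of (v)--(vi), I would exhibit $F$ explicitly. In (i) with $x_3=0$ and (ii) with $x_2=0$ one computes $\frac{\delta f}{\delta u}$ (namely $e^{\gamma x}$, resp. $\gamma e^{\gamma u}$) and applies $X(\partial)$, resolving the single $\partial^{-1}$ by the antiderivatives above; the $L_3$-term in (ii) contributes $u'\partial^{-1}(\gamma u'e^{\gamma u})=u'\partial^{-1}\partial e^{\gamma u}=u'e^{\gamma u}$. Statement (iii) is the plain computation $X(\partial)\cdot 1=x_2 x+x_3 uu'$. For (iv) I would use the decomposition \eqref{frac-liouv} (legitimate for all $x_i$ even when non-minimal) and take $F=-\sqrt{a_2+a_3(u')^2}$, checking $Z(\partial)F=\frac{\delta f}{\delta u}$ directly, so that $P=Y(\partial)F$ is the claimed expression. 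In (v) and (vi), the subcases with $P\in\mc H_0(X)$ are settled by producing $F\in\ker Z$ with $Y(\partial)F=P$ through $Y(\alpha+\beta u')=x_2\beta-x_3\alpha u'$, yielding $\tint 0\ass{X}P$; the subcases with $x_1=0$ (or $x_2=x_3=0$) make $X$ invertible, so one applies $X^{-1}(\partial)$ to $P$ and integrates the resulting exact variational derivative.

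The substantive content, and the step I expect to be the main obstacle, lies in the non-existence assertions. For (i) with $x_3\neq0$ and (ii) with $x_2\neq0$ I would show $\frac{\delta f}{\delta u}\notin\im Z$: solving $Z(\partial)F=\frac{\delta f}{\delta u}$ forces $\partial F$ to equal, up to an exact term, a multiple of $u e^{\gamma x}$ or of $e^{\gamma u}$, neither of which is a total derivative since $\frac{\delta}{\delta u}(u e^{\gamma x})=e^{\gamma x}\neq0$ and $\frac{\delta}{\delta u}(e^{\gamma u})=\gamma e^{\gamma u}\neq0$ while $\frac{\delta}{\delta u}\partial=0$. For (v) with $x_2=0,\,x_1x_3\neq0$ and (vi) with $x_3=0,\,x_1x_2\neq0$ I would instead show that $Y(\partial)F=P$ (with $P=1$, resp. $P=u'$) has no solution at all: Lemma~\ref{20120909:lem1}(c) (resp. a differential-order argument via \eqref{eq:0.4} as in Lemma~\ref{20120909:lem1}(b)) constrains any putative $F$ to low differential order, after which $Y(\partial)F$ is forced to be of the form $u'\cdot(\text{function of }u)$ (resp. a quasiconstant), which cannot equal the constant $1$ (resp. $u'$).

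The genuine difficulty is thus matching each of the many sub-cases to the correct decomposition of Lemma~\ref{lem:frac} and keeping the non-exactness and non-solvability arguments uniform; once the obstruction is detected by a nonvanishing variational derivative or a differential-order count, the remaining computations are routine.
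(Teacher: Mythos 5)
Your proposal is correct and follows essentially the same route as the paper: case-by-case use of the fractional decompositions of Lemma \ref{lem:frac}, explicit witnesses $F$ for the existence claims, and non-exactness or differential-order obstructions (via Lemma \ref{20120909:lem1} and \eqref{20120907:eq1}) for the non-existence claims. One caution: reducing $\tint f\ass{X}P$ to ``$P=X(\partial)\frac{\delta f}{\delta u}$ with antiderivatives taken in $\mc V$'' is not a literal equivalence of Definition \ref{20120124:def}, so in part (iii) you should still record the witness $F=xu'-u$ for the decomposition \eqref{frac-liouv}, exactly as you do in part (iv).
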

\begin{proof}
The condition $\tint e^{\gamma x}u\ass{X}P$
means that,
for some fractional decomposition $X=YZ^{-1}$,
there exists $F\in\mc V$ such that $P=YF$ and $ZF=e^{\gamma x}$.
Let us consider first the case $x_3\neq0$.
In this case, by Lemma \ref{lem:frac} a minimal fractional decomposition for $X$ is
\eqref{frac-liouv} if $x_2\neq0$, and \eqref{frac-liouv2} if $x_2=0$.
In the former case
$Z=\partial\circ\frac1{u''}\partial$, hence the equation 
$ZF=e^{\gamma x}$ reads
$\partial\frac{F'}{u''}=e^{\gamma x}$,
namely
$$
F'=\frac1\gamma e^{\gamma x}u''+\alpha u''\,,
$$
for some $\alpha\in\mc C$.
This equation has no solutions since, integrating by parts, we get
$\tint \Big(\frac1\gamma e^{\gamma x}u''+\alpha u''\Big)=\gamma\tint e^{\gamma x}u$,
and this is not zero by \eqref{20120907:eq1}.
Similarly, in the case $x_2=0$ we have 
$Z=\frac1{u'}\partial$, hence the equation 
$ZF=e^{\gamma x}$ reads
$$
F'=e^{\gamma x}u'\,,
$$
which has no solutions since, integrating by parts,
$\tint e^{\gamma x}u'=-\gamma\tint e^{\gamma x}u\neq0$.
To conclude the proof of part (i), we consider the case $x_3=0$.
By Lemma \ref{lem:frac} a fractional decomposition for $X$ is
$X=YZ^{-1}$ given by \eqref{frac-liouv1}.
Hence, a solution $F\in\mc V$ to the equation $ZF=e^{\gamma x}$ is $F=\frac1\gamma e^{\gamma x}$,
and in this case we have $P=YF=(x_1\partial^2)\frac1\gamma e^{\gamma x}
=\big(x_1\gamma+\frac{x_2}{\gamma}\big)e^{\gamma x}$.

Next, let us prove part (ii).
The condition $\tint e^{\gamma u}\ass{X}P$
is equivalent to the existence of $F\in\mc V$ such that $P=YF$ and $ZF=\gamma e^{\gamma u}$,
where $X=YZ^{-1}$.
Let us consider first the case $x_2\neq0$.
In this case, by Lemma \ref{lem:frac} a minimal fractional decomposition $X=YZ^{-1}$ for $X$ 
has $Z=\partial\circ\frac1{u''}\partial$ if $x_3\neq0$, and $Z=\partial$ if $x_3=0$.
In both cases the equation $ZF=\gamma e^{\gamma u}$
would imply $\gamma e^{\gamma u}\in\partial\mc V$, which is not the case by \eqref{20120907:eq1}.
In the case $x_2=0$,
a fractional decomposition for $X$ is
$X=YZ^{-1}$ given by \eqref{frac-liouv2}.
Hence, a solution $F\in\mc V$ to the equation $ZF=\gamma e^{\gamma u}$ 
is $F=e^{\gamma u}$,
and in this case we have 
$P=YF=\big(x_1\partial\circ\frac1{u'}\partial+x_3u'\big)e^{\gamma u}
=(x_1\gamma^2+x_3)e^{\gamma u}u'$.

For part (iii) it suffices to check, 
using the fractional decomposition \eqref{frac-liouv}, 
that $F=xu'-u\in\mc V$ is a solution of the equations
$$
\begin{array}{l}
\vphantom{\Big(}
ZF=\partial\frac{F'}{u''}=\frac{\delta}{\delta u}\tint u\,, \\
YF=\Big(x_1\partial^2\circ\frac1{u''}\partial+\frac{x_2+x_3(u')^2}{u''}\partial-x_3u'\Big)F=x_2x+x_3uu'\,.
\end{array}
$$

Similarly, for part (iv),
letting $F=-\sqrt{a_2+a_3(u')^2}\in\mc V$, we have
$$
\begin{array}{l}
\vphantom{\Big(}
ZF=\partial\frac{F'}{u''}=\frac{\delta}{\delta u}\tint \sqrt{a_2+a_3(u')^2}\,, \\
YF=-\Big(x_1\partial^2\circ\frac1{u''}\partial
+\frac{x_2+x_3(u')^2}{u''}\partial-x_3u'\Big)\sqrt{a_2+a_3(u')^2}\,.
\end{array}
$$

Next, let us prove part (v).
For $x_2\neq0$, 
consider the fractional decomposition $X=YZ^{-1}$ given by \eqref{frac-liouv}.
It is easy to check that, letting $F=\frac{u'}{x_2}$, we have
$$
\begin{array}{l}
ZF=\partial\circ\frac1{u''}\partial\frac{u'}{x_2}=0\,,\\
YF=\Big(x_1\partial^2\circ\frac1{u''}\partial
+\frac{x_2+x_3(u')^2}{u''}\partial-x_3u'\Big)\frac{u'}{x_2}=1\,.
\end{array}
$$
Hence, $\tint 0\ass{X}1$, as we wanted.
If $x_1\neq0,x_2=0,x_3\neq0$, a minimal fractional decomposition for $X$ is
$X=YZ^{-1}$ given by \eqref{frac-liouv2}.
Therefore the relation $\tint f\ass{X}1$ is equivalent
to the existence of $F\in\mc V$ such that
\begin{equation}\label{20120909:eq4}
ZF=\frac{F'}{u'}=\frac{\delta f}{\delta u}
\,\,,\,\,\,\,
YF=\Big(x_1\partial\circ\frac1{u'}\partial+x_3u'\Big)F=1\,.
\end{equation}
By Lemma \ref{20120909:lem1}(c),
the second equation in \eqref{20120909:eq4} implies that $F\in\mc V_0$ 
is such that $F'=\frac{\partial F}{\partial u}u'$.
In this case, the second equation in \eqref{20120909:eq4} reads
$$
x_1\partial\frac{\partial F}{\partial u}+x_3Fu'=1\,,
$$
which, by the commutation relation \eqref{eq:0.4},  is equivalent to 
$$
\Big(x_1\frac{\partial^2 F}{\partial u^2}+x_3F\Big)u'=1\,.
$$
But obviously, the above equation is never satisfied.
If $x_1=0,x_2=0,x_3\neq0$,
it is easy to check that $F=\frac1{x_3u'}$ solves
$$
YF=x_3u'\frac1{x_3u'}=1
\,\,,\,\,\,\,
ZF=\frac{1}{u'}\partial\frac1{x_3u'}=\frac{\delta}{\delta u}\tint\frac1{2x_3u'}\,,
$$
proving that $\tint\frac1{2x_3u'}\ass{X}1$.
Finally, for $x_1\neq0,x_2=0,x_3=0$,
a minimal fractional decomposition $X=YZ^{-1}$ is given by $Y=x_1\partial$ and $Z=1$.
In this case, it is immediate to check that 
$F=\frac{x}{x_1}$ solves
$$
YF=x_1\partial\frac{x}{x_1}=1
\,\,,\,\,\,\,
ZF=\frac{x}{x_1}=\frac{\delta}{\delta u}\tint\frac{xu}{x_1}\,,
$$
proving that $\tint\frac{xu}{x_1}\ass{X}1$.

We are left to prove part (vi).
For $x_3\neq0$, 
consider the fractional decomposition $X=YZ^{-1}$ given by \eqref{frac-liouv}.
It is easy to check that, letting $F=\frac{-1}{x_3}$, we have
$$
\begin{array}{l}
ZF=\partial\circ\frac1{u''}\partial\frac{-1}{x_3}=0\,,\\
YF=\Big(x_1\partial^2\circ\frac1{u''}\partial
+\frac{x_2+x_3(u')^2}{u''}\partial-x_3u'\Big)\frac{-1}{x_3}=u'\,.
\end{array}
$$
Hence, $\tint 0\ass{X}u'$, as we wanted.
If $x_1\neq0,x_2\neq0,x_3=0$, the minimal fractional decomposition for $X$ is
$X=YZ^{-1}$ given by \eqref{frac-liouv1}.
Therefore the relation $\tint f\ass{X}u'$ is equivalent
to the existence of $F\in\mc V$ such that
\begin{equation}\label{20120909:eq5}
ZF=F'=\frac{\delta f}{\delta u}
\,\,,\,\,\,\,
YF=(x_1\partial^2+x_2)F=u'\,.
\end{equation}
If $F\in\mc V_n$, for $n\geq0$, we get, applying $\frac{\partial}{\partial u^{(n+2)}}$
to both sides of the second equation in \eqref{20120909:eq5}, 
that $\frac{\partial F}{\partial u^{(n)}}=0$.
Hence, it must be $F\in\mc F$. But in this case, the second equation in \eqref{20120909:eq5}
has clearly no solutions.
If $x_1=0,x_2\neq0,x_3=0$,
it is easy to check that $F=\frac{u'}{x_2}$ solves
$$
YF=x_2\frac{u'}{x_2}=u'
\,\,,\,\,\,\,
ZF=\partial\frac{u'}{x_2}=\frac{\delta}{\delta u}\tint\frac{-(u')^2}{2x_2}\,,
$$
proving that $\tint\frac{-(u')^2}{2x_2}\ass{X}u'$.
Finally, if $x_1\neq0,x_2=0,x_3=0$,
a minimal fractional decomposition $X=YZ^{-1}$ is given by $Y=x_1\partial$ and $Z=1$.
In this case, it is immediate to check that 
$F=\frac{u}{x_1}$ solves
$$
YF=x_1\partial\frac{u}{x_1}=u'
\,\,,\,\,\,\,
ZF=\frac{u}{x_1}=\frac{\delta}{\delta u}\tint\frac{u^2}{2x_1}\,,
$$
proving that $\tint\frac{u^2}{2x_1}\ass{X}u'$.
\end{proof}

In order the check orthogonality conditions \eqref{20130104:eq2}
for Liouville type integrable systems we will use the following results.
\begin{lemma}\label{20120909:lem2}
\begin{enumerate}[(a)]
\item
$(\mc C1)^\perp=\im(\partial)$.
\item
$(\mc Cu')^\perp=\im(\frac1{u'}\partial)$.
\item
$\big(\Span_{\mc C}\{1,u'\}\big)^\perp=\im(\partial\circ \frac1{u''}\partial)$.
\item
For $b_2,b_3\in\mc C\backslash\{0\}$, we have
$$
\big(\mc C\frac{\delta}{\delta u}\tint\sqrt{b_2+b_3(u')^2}\big)^\perp
=\im\Big(\frac{b_2+b_3(u')^2}{u''}\partial-b_3u'\Big)\,.
$$
\end{enumerate}
\end{lemma}
\begin{proof}
Parts (a) and (b) are immediate.
Let us prove part (c).
It is immediate to check, integrating by parts,
that $\tint (\alpha+\beta u')\partial\frac{f'}{u''}=0$ for every $\alpha,\beta\in\mc C$.
Hence, $\im(\partial\circ \frac1{u''}\partial)\subset\big(\Span_{\mc C}\{1,u'\}\big)^\perp$.
On the other hand, if
$f\in\big(\Span_{\mc C}\{1,u'\}\big)^\perp$,
it must be 
$$
f=\partial g=\frac1{u'}\partial h
\,\,\text{ for some }\, g,h\in\mc V\,.
$$
But then
$\partial h=u'\partial g=\partial(u'g)-u''g$,
which implies
$g=\frac1{u''}\partial(u'g-h)$.
Hence, 
$$
f=\partial \frac1{u''}\partial(u'g-h)\in\im(\partial\circ \frac1{u''}\partial)\,.
$$

We are left to prove part (d).
We have
$$
-\frac1{b_3}\frac{\delta}{\delta u}\tint\sqrt{b_2+b_3(u')^2}
=\partial\frac{u'}{\sqrt{b_2+b_3(u')^2}}\,.
$$
The inclusion
$\im\Big(\frac{b_2+b_3(u')^2}{u''}\partial-b_3u'\Big)\subset
\Big(\mc C\partial\frac{u'}{\sqrt{b_2+b_3(u')^2}}\Big)^\perp$
follows by integration by parts, and the following straightforward identity
$$
\Big(\partial\circ\frac{b_2+b_3(u')^2}{u''}+b_3u'\Big)
\partial\frac{u'}{\sqrt{b_2+b_3(u')^2}}=0\,.
$$
We are left to prove the opposite inclusion.
If $f\in\Big(\mc C\partial\frac{u'}{\sqrt{b_2+b_3(u')^2}}\Big)^\perp$,
we have
$$
f=\frac{\partial g}{\partial\frac{u'}{\sqrt{b_2+b_3(u')^2}}}
=\frac{(b_2+b_3(u')^2)^{\frac32}}{b_2u''}\partial g
\,,
$$
for some $g\in\mc V$.
Letting $g=\frac{h}{\sqrt{b_2+b_3(u')^2}}$, we then get:
$$
\begin{array}{l}
\displaystyle{
f=\frac{(b_2+b_3(u')^2)^{\frac32}}{b_2u''}
\partial\frac{h}{\sqrt{b_2+b_3(u')^2}}
}\\
\displaystyle{
=\frac{(b_2+b_3(u')^2)^{\frac32}}{b_2u''}
\Big(
-\frac{b_3u'u''}{(b_2+b_3(u')^2)^{\frac32}}h
+\frac{1}{\sqrt{b_2+b_3(u')^2}}h'
\Big)
}\\
\displaystyle{
=
\Big(\frac{b_2+b_3(u')^2}{u''}\partial-b_3u'\Big)\frac{h}{b_2}\,.
}
\end{array}
$$
\end{proof}

\subsection{Integrability of the Lenard-Magri scheme: $b_1=0$}
\label{secb:3.3}

In this and the next two Sections we consider the case when $b_1=0$,
for which we get integrable Lenard-Magri schemes of S-type,
in the terminology introduced in Section \ref{sec:7.5}, in the case $a_1\neq0$ 
(described in Section \ref{secb:3.4} below),
and of C-type in the case $a_1=0$ (described in Section \ref{secb:3.5} below).
In Sections \ref{secb:3.6} we will consider the remaining case,
when $b_1\neq0$, for which we again get some integrable Lenard-Magri schemes of C-type.

According to Theorem \ref{20130123:thm} (and Remark \ref{20130104:rem2}),
in order to apply successfully the Lenard-Magri scheme of integrability,
we need to find finite sequences $\{P_n\}_{n=0}^N$, $\{\tint h_n\}_{n=0}^N$,
satisfying the relations \eqref{20130105:eq1} for $\xi_n=\frac{\delta h_n}{\delta u}$,
or equivalently the relations \eqref{maxi},
and the orthogonality conditions \eqref{20130104:eq2}.
For $b_1=0$ we display below such sequences
separately in all possibilities for the coefficients $a_2,a_3,b_2,b_3$ being zero or non-zero,
and $a_1$ arbitrary.
(Note that, since we are assuming $b_1=0$, we don't need to consider the case $b_2=b_3=0$.)
\begin{enumerate}[(i)]
\item$b_2b_3\neq0,a_2a_3\neq0$:
$\tint0\ass{H}1\ass{K}\tint 0\ass{H}u'\ass{K}\tint\sqrt{b_2+b_3(u')^2}$.
\item$b_2b_3\neq0,a_2\neq0,a_3=0$:
$\tint0\ass{H}1\ass{K}\tint\sqrt{b_2+b_3(u')^2}$.
\item$b_2b_3\neq0,a_2=0,a_3\neq0$:
$\tint0\ass{H}u'\ass{K}\tint\sqrt{b_2+b_3(u')^2}$.
\item$b_2b_3\neq0,a_2=a_3=0$:
$\tint0\ass{H}0\ass{K}\tint\sqrt{b_2+b_3(u')^2}$.
\item$b_2\neq0,b_3=0,a_2a_3\neq0$:
$\tint0\ass{H}1\ass{K}\tint 0\ass{H}u'\ass{K}\tint \frac{-(u')^2}{2b_2}$.
\item$b_2\neq0,b_3=0,a_2\neq0,a_3=0$:
$\tint0\ass{H}1\ass{K}\tint 0$.
\item$b_2\neq0,b_3=0,a_2=0,a_3\neq0$:
$\tint0\ass{H}u'\ass{K}\tint \frac{-(u')^2}{2b_2}$.
\item$b_2\neq0,b_3=0,a_2=a_3=0$:
$\tint0\ass{H}0\ass{K}\tint 0$.
\item$b_2=0,b_3\neq0,a_2a_3\neq0$:
$\tint0\ass{H}u'\ass{K}\tint 0\ass{H}1\ass{K}\tint \frac1{2b_3u'}$.
\item$b_2=0,b_3\neq0,a_2\neq0,a_3=0$:
$\tint0\ass{H}1\ass{K}\tint \frac1{2b_3u'}$.
\item$b_2=0,b_3\neq0,a_2=0,a_3\neq0$:
$\tint0\ass{H}u'\ass{K}\tint 0$.
\item$b_2=0,b_3\neq0,a_2=a_3=0$:
$\tint0\ass{H}0\ass{K}\tint 0$.
\end{enumerate}
All the above $H$- and $K$-association relations hold due 
to Lemmas \ref{20120908:lem1} and \ref{20120908:lem2}.
Moreover, 
using Lemmas \ref{lem:frac} and \ref{20120909:lem2}
we check that both orthogonality conditions \eqref{20130104:eq2} hold.
Hence, by Theorem \ref{20130123:thm} and Remark \ref{20130104:rem2}
all the above sequences can be continued indefinitely,
possibly going to a normal extension $\tilde{\mc V}$ of $\mc V$,
to an infinite sequence
$$
\tint 0\ass{H}P_0\ass{K}\tint h_0\ass{H}P_1\ass{K}\tint h_1\ass{H}\dots\,.
$$
Note that, by Lemma \ref{20120907:lem1}, 
at each step the subsequent term is unique 
up to a linear combinations of the previous steps.

Next, we want to discuss integrability of the corresponding hierarchies of Hamiltonian equations
$\frac{du}{dt_n}=P_n,\,n\in\mb Z_+$.
Namely, according to Definition \ref{20130104:def2},
we need to see when the vector spaces $\Span_{\mc C}\{\tint h_n\}\subset\mc V/\partial\mc V$
and $\Span_{\mc C}\{P_n\}\subset\mc V$ are infinite dimensional.

First, we consider the cases (vi), (viii), (xi) and (xii), where we show that
integrability does not occur (regardless of $a_1$ being zero or non-zero)
since the Lenard-Magri scheme repeats itself.
In case (vi), by Lemmas \ref{20120908:lem1} and \ref{20120908:lem2}
we have
$\mc H_0(H)=\mc C$ and, for every $\alpha\in\mc C$, 
$\{\tint f\in\mc F(K)\,|\,\tint f\ass{K}\alpha\}=\mc F_0(K)=\ker\big(\frac{\delta}{\delta u}\big)$.
Hence, any infinite sequence extending the given finite one will have
$\tint h_n\in\ker\big(\frac{\delta}{\delta u}\big)$ and $P_n\in\mc C$, for every $n\in\mb Z_+$.
Similarly, in case (xi) we have
$\mc H_0(H)=\mc Cu'$ and, for every $\alpha u'\in\mc Cu'$, 
$\{\tint f\in\mc F(K)\,|\,\tint f\ass{K}\alpha u'\}=\mc F_0(K)=\ker\big(\frac{\delta}{\delta u}\big)$.
Hence, any infinite sequence extending the given finite one will have
$\tint h_n\in\ker\big(\frac{\delta}{\delta u}\big)$ and $P_n\in\mc C u'$, for every $n\in\mb Z_+$.
In cases (viii) and (xii) we have 
$\mc H_0(H)=0$ and $\mc F_0(K)=\ker\big(\frac{\delta}{\delta u}\big)$.
Hence, 
$\tint h_n\in\ker\big(\frac{\delta}{\delta u}\big)$ and $P_n=0$, for every $n\in\mb Z_+$.
In conclusion, in all these cases $\Span_{\mc C}\{P_n\}$ is finite dimensional,
and integrability does not occur.

For the remaining 8 cases, 
we will prove in Section \ref{secb:3.4} that 
when $a_1\neq0$ we get some integrable Lenard-Magri scheme of S-type,
and we will prove in Section \ref{secb:3.5} that when $a_1=0$ 
we get some integrable Lenard-Magri scheme of C-type
(in the terminology of Section \ref{sec:7.5}).

\subsection{Integrable Lenard-Magri schemes of S-type in the case $b_1=0$ and $a_1\neq0$}
\label{secb:3.4}

\subsubsection*{Cases $(i), (ii), (iii), (iv)$}

In all the sequences (i)-(iv), after one or two steps, we arrive at (after shifting indices)
$\tint h_{-1}=\tint\sqrt{b_2+b_3(u')^2}$.
The next term in the sequence, which we denote $P_0$,
is obtained by solving the following equations for $F$ and $P_0$ in $\mc V$:
$$
\begin{array}{l}
\displaystyle{
BF=\partial\circ\frac1{u''}\partial F=\frac\delta{\delta u}\tint\tint\sqrt{b_2+b_3(u')^2}
\,,} \\
\displaystyle{
P_0=AF
=\Big(a_1\partial^2\circ\frac1{u''}\partial+\frac{a_2+a_3(u')^2}{u''}\partial-a_3u'\Big)F\,.
}
\end{array}
$$
It is easy to check that a solution is given by $F=-\sqrt{b_2+b_3(u')^2}$,
and
\begin{equation}\label{20120910:eq4}
\begin{array}{l}
\displaystyle{
P_0
=-\Big(\frac{a_1b_3u'}{\sqrt{b_2+b_3(u')^2}}\Big)^{\prime\prime}
+\frac{(a_3b_2-a_2b_3)}{\sqrt{b_2+b_3(u')^2}}
}\\
\displaystyle{
=
-\frac{a_1b_2b_3u'''}{(b_2+b_3(u')^2)^{\frac32}}
+3\frac{a_1b_2b_3^2u'(u'')^2}{(b_2+b_3(u')^2)^{\frac52}}
+\frac{(a_3b_2-a_2b_3)}{\sqrt{b_2+b_3(u')^2}}
\,.
}
\end{array}
\end{equation}

The above computation works regardless whether $a_1$ is zero or not.
But to prove that the Lenard-Magri scheme is integrable of S-type
we need to assume $a_1\neq0$,
in which case $\dord(P_0)=3$ is greater than
$\max\{\dord(A)-|H|+|K|,\,\dord(B)+|K|,\,\dord(C),\,\dord(D)+|K|\}$,
which is less than or equal to 2 (for all the cases (i)-(iv)).
Therefore,
by Corollary \ref{20130123:cor},
each Hamiltonian PDE $\frac{du}{dt}=P_n,\,n\in\mb Z_+$,
is integrable, associated to an integrable Lenard-Magri scheme of S-type.
(Note that, since $\ker(B^*)\cap\ker(D^*)=\mc C\oplus\mc Cu'\neq0$,
we cannot conclude that $[P_m,P_n]=0$ for every $m,n\in\mb Z_+$,
and therefore that the sequence of equations $\frac{du}{dt_n}=P_n,\,n\in\mb Z_+$,
form a compatible hierarchy.)

After rescaling $x$ and $t$ appropriately 
in the equation $\frac{du}{dt}=P_0$,
we conclude that the following bi-Hamiltonian equation
is integrable, associated to an integrable Lenard-Magri scheme of S-type:
\begin{equation}\label{beauty1}
\frac{du}{dt}
=
\frac{u'''}{(1+(u')^2)^{\frac32}}
-3\frac{u'(u'')^2}{(1+(u')^2)^{\frac52}}
+\frac{\alpha}{(1+(u')^2)^{\frac12}}
\,\,,\,\,\,\,\alpha\in\mc C\,.
\end{equation}
This is an equation of the form \cite[eq.(41.5)]{MSS90} with $a_3=(1+(u')^2)^{\frac12}$.
This particular $a_3$ does not appear in their list,
but as V. Sokolov informed us,
a simple point transformation reduces \eqref{beauty1} to an equation from their list.

\subsubsection*{Cases $(v), (vii)$}

In the sequences (v) and (vii), after one or two steps, we arrive at (after shifting indices)
$\tint h_{-1}=\tint\frac{-(u')^2}{2b_2}$.
The next term in the sequence, which we denote $P_0$,
is obtained by solving the following equations for $F$ and $P_0$ in $\mc V$:
$$
\begin{array}{l}
\displaystyle{
BF=\partial\circ\frac1{u''}\partial F=\frac\delta{\delta u}\tint\tint\frac{-(u')^2}{2b_2}
\,,} \\
\displaystyle{
P_0=AF
=\Big(a_1\partial^2\circ\frac1{u''}\partial+\frac{a_2+a_3(u')^2}{u''}\partial-a_3u'\Big)F\,.
}
\end{array}
$$
It is easy to check that a solution is given by $F=\frac{(u')^2}{2b_2}$,
and
\begin{equation}\label{20120910:eq4b}
P_0=\frac{a_1}{b_2}u'''+\frac{a_2}{b_2}u'+\frac{a_3}{2b_2}(u')^3\,.
\end{equation}

As before, 
if $a_1\neq0$, then $\dord(P_0)=3$ is greater than
$\max\{\dord(A)-|H|+|K|,\,\dord(B)+|K|,\,\dord(C),\,\dord(D)+|K|\}$,
which is at most 2.
%
Therefore,
by Corollary \ref{20130123:cor}
each Hamiltonian PDE $\frac{du}{dt}=P_n,\,n\in\mb Z_+$,
is integrable, associated to an integrable Lenard-Magri scheme of S-type.

After rescaling $x$ and $t$ appropriately 
in the equation $\frac{du}{dt}=P_0$,
we conclude that the following bi-Hamiltonian equation
is integrable, associated to an integrable Lenard-Magri scheme of S-type:
\begin{equation}\label{beauty2}
\frac{du}{dt}
=
u'''+\epsilon u'+\alpha(u')^3\,,
\end{equation}
where $\epsilon$ is 1 (in case (v)) or 0 (in case (vii)) and $\alpha\in\mc C$.
By a Galilean transformation we can make $\epsilon=0$.
The resulting equation is called the potential modified KdV equation
(equation (4.11) in the list of \cite{MSS90}).

\subsubsection*{Cases $(ix), (x)$}

In all the sequences (ix) and (x), after one or two steps, we arrive at (after shifting indices)
$\tint h_{-1}=\tint\frac{1}{2b_3u'}$.
The next term in the sequence, which we denote $P_0$,
is obtained by solving the following equations for $F$ and $P_0$ in $\mc V$:
$$
\begin{array}{l}
\displaystyle{
BF=\partial\circ\frac1{u''}\partial F=\frac\delta{\delta u}\tint\frac{1}{2b_3u'}
\,,} \\
\displaystyle{
P_0=AF
=\Big(a_1\partial^2\circ\frac1{u''}\partial+\frac{a_2+a_3(u')^2}{u''}\partial-a_3u'\Big)F\,.
}
\end{array}
$$
It is easy to check that a solution is given by $F=\frac{-1}{2b_3u'}$,
and
\begin{equation}\label{20120910:eq4c}
P_0=-\frac{a_1}{b_3}\frac{u'''}{(u')^3}+\frac{3a_1}{b_3}\frac{(u'')^2}{(u')^4}
+\frac{a_2}{2b_3}\frac1{(u')^2}+\frac{a_3}{b_3}
\,.
\end{equation}

If $a_1\neq0$, we have $\dord(P_0)=3$, which is greater than
$\max\{\dord(A)-|H|+|K|,\,\dord(B)+|K|,\,\dord(C),\,\dord(D)+|K|\}$,
which is at most 2.
%
Therefore,
by Corollary \ref{20130123:cor},
each Hamiltonian PDE $\frac{du}{dt}=P_n,\,n\in\mb Z_+$,
is integrable, associated to an integrable Lenard-Magri scheme of S-type.

After rescaling $x$ and $t$ appropriately 
in the equation $\frac{du}{dt}=P_0$,
we conclude that the following bi-Hamiltonian equation 
is integrable, associated to an integrable Lenard-Magri scheme of S-type:
\begin{equation}\label{beauty3}
\frac{du}{dt}
=\frac{u'''}{(u')^3}-3\frac{(u'')^2}{(u')^4}
+\frac1{(u')^2}+\alpha
\,\,,\,\,\,\,
\alpha\in\mc C\,.
\end{equation}
As explained in \cite{MSS90}, by a point transformation one can reduce this equation
to an equation of the form (4.1.4) in their list.

\begin{remark}\label{20120911:rem1}
Note that equation $\frac{du}{dt}=P_0$ with $P_0$ given by \eqref{20120910:eq4c}
is transformed, by the hodograph transformation $x\mapsto u,\,u\mapsto-x$, 
to the equation with $P_0$ given by \eqref{20120910:eq4b}, after exchanging $a_2$ and $b_2$ 
with $a_3$ and $b_3$ respectively.
Equivalently, equation \eqref{beauty3} can be transformed to equation \eqref{beauty2}
up to rescaling of $x$ and $t$.
\end{remark}

\subsection{Integrable Lenard-Magri schemes of C-type with $a_1=b_1=0$}
\label{secb:3.5}

\subsubsection*{Cases $(i), (ii), (iii), (iv)$}

As pointed out above,
in all the sequences (i)-(iv) we arrive,
after one or two steps (and after shifting indices), at 
$\tint h_{-1}=\tint\sqrt{b_2+b_3(u')^2}$.
In the case $a_1=0$ we can actually find an explicit solution for the sequences
$\{\tint h_n\}_{n\in\mb Z_+}$ and $\{P_n\}_{n\in\mb Z_+}$
satisfying the recursive formulas 
\begin{equation}\label{20120912:eq2}
P_n\ass{K}\tint h_n\ass{H}P_{n+1}\,\,,\,\,\,\,\,n\in\mb Z_+\,.
\end{equation}
It is given by ($n\geq0$):
\begin{equation}\label{20120912:eq1}
\begin{array}{l}
\displaystyle{
P_n=
\sum_{k=0}^n\binom{n}{k}\frac{(2n-1-2k)!!}{(2n-2k)!!} \frac{\Delta^{n+1-k}a_3^k}{b_3^{n}}
\frac{-u'}{(b_2+b_3(u')^2)^{\frac12+n-k}}\,,
} \\
\displaystyle{
h_n=
\sum_{k=0}^n\binom{n}{k}\frac{(2n-1-2k)!!}{(2n-2k+2)!!} \frac{\Delta^{n+1-k}a_3^k}{b_3^{n+1}}
\frac{-u'}{(b_2+b_3(u')^2)^{\frac12+n-k}}\,,
}
\end{array}
\end{equation}
where $\Delta=a_2b_3-a_3b_2$, which is non-zero unless the operators $H$ and $K$ are proportional.
Here and further we let $(-1)!!=1$.

First, note that for $n=0$, the above expression for $P_0$ is the same 
as the one in \eqref{20120910:eq4} with $a_1=0$. 
Hence $\tint h_{-1}\ass{H}P_0$.
Next, we check that indeed the sequences $\{\tint h_n\}_{n\in\mb Z_+}$ and $\{P_n\}_{n\in\mb Z_+}$
solve the recursive relations \eqref{20120912:eq2}.
For this, we fix the fractional decompositions $H=AB^{-1}$ and $K=CD^{-1}$ given by 
$$
A=\frac{a_2+a_3(u')^2}{u''}\partial-a_3u'
\,\,,\,\,\,\,
C=\frac{b_2+b_3(u')^2}{u''}\partial-b_3u'
\,\,,
B=D=\partial\circ\frac1{u''}\partial\,.
$$
Since $B=D$, the relations \eqref{20120912:eq2} hold
if there exists an element $F_n$ such that
\begin{equation}\label{20120912:eq3}
CF_n=P_n
\,\,,\,\,\,\,
BF_n=\frac{\delta h_n}{\delta u}
\,\,,\,\,\,\,
P_{n+1}=AF_n\,.
\end{equation}
A solution $F_n$ to equations \eqref{20120912:eq3} is given by $F_n=-h_n$.
Since $h_n$ depends only on $u'$,
we have 
$$
B(-h_n)=-\partial\circ\frac1{u''}\partial h_n
=(-\partial)\frac{\partial h_n}{\partial u'}=\frac{\delta h_n}{\delta u}\,,
$$
hence $F_n=-h_n$ satisfies the second equation in \eqref{20120912:eq3}.
The first and third equations in \eqref{20120912:eq3} can be easily checked
using the following straightforward identities ($m\in\mb Z_+$):
$$
\begin{array}{l}
\displaystyle{
C\frac{1}{(b_2+b_3(u')^2)^{\frac m2}}=\frac{-b_3 m u'}{(b_2+b_3(u')^2)^{\frac m2}}
\,, } \\
\displaystyle{
A\frac{1}{(b_2+b_3(u')^2)^{\frac m2}}
=m\Delta\frac{-u'}{(b_2+b_3(u')^2)^{\frac m2 +1}}
+(m+1)a_3\frac{-u'}{(b_2+b_3(u')^2)^{\frac m2}}
\,. }
\end{array}
$$

If $\Delta\neq0$, all the elements $P_n$ are linearly independent,
therefore, by Theorem \ref{20130123:thm} and Lemma \ref{20120906:lem1},
each Hamiltonian PDE $\frac{du}{dt}=P_n,\,n\in\mb Z_+$,
is integrable, associated to an integrable Lenard-Magri scheme of C-type.

\subsubsection*{Cases $(v), (vii)$}

In the sequences (v) and (vii) we arrive,
after one or two steps (and after shifting indices), at 
$\tint h_{-1}=\tint\frac{-(u')^2}{2b_2}$.
In the case $a_1=0$ we can actually find an explicit solution for the sequences
$\{\tint h_n\}_{n\in\mb Z_+}$ and $\{P_n\}_{n\in\mb Z_+}$
satisfying the recursive formulas \eqref{20120912:eq2}.
It is given by ($n\in\mb Z_+$):
\begin{equation}\label{20120912:eq1b}
\begin{array}{l}
\displaystyle{
P_{n-1}=
\sum_{k=0}^{n}\binom{n}{k}\frac{(2k-1)!!}{(2k)!!} \frac{a_2^{n-k}a_3^k}{b_2^{n}}
(u')^{2k+1}\,,
} \\
\displaystyle{
h_{n-1}=
-\sum_{k=0}^{n}\binom{n}{k}\frac{(2k-1)!!}{(2k+2)!!} \frac{a_2^{n-k}a_3^k}{b_2^{n}}
(u')^{2k+2}\,.
}
\end{array}
\end{equation}

First, note that the above expression for $P_0$ is the same 
as the one in \eqref{20120910:eq4b} with $a_1=0$. 
Hence $\tint h_{-1}\ass{H}P_0$.
Next, we check that indeed the sequences $\{\tint h_n\}_{n\in\mb Z_+}$ and $\{P_n\}_{n\in\mb Z_+}$
solve the recursive relations \eqref{20120912:eq2}.
For this, we fix the fractional decompositions $H=AB^{-1}$ and $K=CD^{-1}$ given by 
$$
A=\frac{a_2+a_3(u')^2}{u''}\partial-a_3u'
\,\,,\,\,\,\,
B=\partial\circ\frac1{u''}\partial
\,\,,\,\,\,\,
C=b_2
\,\,,\,\,\,\,
D=\partial\,.
$$

The relations \eqref{20120912:eq2} hold
if there exist elements $F_n,G_n\in\mc V$ such that
\begin{equation}\label{20120912:eq3b}
CF_n=P_n
\,\,,\,\,\,\,
DF_n=\frac{\delta h_n}{\delta u}
\,\,,\,\,\,\,
BG_n=\frac{\delta h_n}{\delta u}
\,\,,\,\,\,\,
P_{n+1}=AG_n\,.
\end{equation}
Solutions $F_n,G_n$ to equations \eqref{20120912:eq3b} are given by 
$F_n=\frac1{b_2}P_n$ and $G_n=-h_n$.
The first and third equations in \eqref{20120912:eq3b} are immediate.
The third equation follows from the immediate identity 
$\frac{P_n}{b_2}=-\frac{\partial h_n}{\partial u'}$.
Finally, the fourth identity in \eqref{20120912:eq3b} is easily checked
using the Tartaglia-Pascal triangle.

Clearly, if $a_3\neq0$, all the elements $P_n$ are linearly independent,
therefore, by Theorem \ref{20130123:thm} and Lemma \ref{20120906:lem1},
each Hamiltonian PDE $\frac{du}{dt}=P_n,\,n\in\mb Z_+$,
is integrable, associated to an integrable Lenard-Magri scheme of C-type.

\subsubsection*{Cases $(ix), (x)$}

In the sequences (ix) and (x) we arrive,
after one or two steps (and shift of indices), at 
$\tint h_{-1}=\tint\frac{1}{2b_3u'}$.
If $a_1=0$ we can find an explicit solution for the sequences
$\{\tint h_n\}_{n\in\mb Z_+}$ and $\{P_n\}_{n\in\mb Z_+}$
satisfying the recursive formulas \eqref{20120912:eq2}.
It is given by ($n\in\mb Z_+$):
\begin{equation}\label{20120912:eq1c}
\begin{array}{l}
\displaystyle{
P_{n-1}=
\sum_{k=0}^{n}\binom{n}{k}\frac{(2k-1)!!}{(2k)!!} \frac{a_3^{n-k}a_2^k}{b_3^{n}}\frac1{(u')^{2k}}
\,, } \\
\displaystyle{
h_{n-1}=
\sum_{k=0}^{n}\binom{n}{k}\frac{(2k-1)!!}{(2k+2)!!} \frac{a_3^{n-k}a_2^k}{b_3^{n+1}}
\frac1{(u')^{2k+1}}
\,.}
\end{array}
\end{equation}

First, note that the above expression for $P_0$ is the same 
as the one in \eqref{20120910:eq4c} with $a_1=0$. 
Hence $\tint h_{-1}\ass{H}P_0$.
Next, we check that indeed the sequences $\{\tint h_n\}_{n\in\mb Z_+}$ and $\{P_n\}_{n\in\mb Z_+}$
solve the recursive relations \eqref{20120912:eq2}.
For this, we fix the fractional decompositions $H=AB^{-1}$ and $K=CD^{-1}$ given by 
$$
A=\frac{a_2+a_3(u')^2}{u''}\partial-a_3u'
\,\,,\,\,\,\,
B=\partial\circ\frac1{u''}\partial
\,\,,\,\,\,\,
C=b_3 u'
\,\,,\,\,\,\,
D=\frac1{u'}\partial\,.
$$
Since equations \eqref{20120912:eq3b} hold with
$F_n=\frac1{b_3u'}P_n$ and $G_n=-h_n$
(a fact that can be easily checked directly),
it follows that the recursive relations \eqref{20120912:eq2} hold.

Clearly, if $a_2\neq0$, all the elements $P_n$ are linearly independent,
therefore, by Theorem \ref{20130123:thm} and Lemma \ref{20120906:lem1},
each Hamiltonian PDE $\frac{du}{dt}=P_n,\,n\in\mb Z_+$,
is integrable, associated to an integrable Lenard-Magri scheme of C-type.

\subsection{Integrable Lenard-Magri scheme of C-type with $b_1\neq0$}
\label{secb:3.6}

As we did in the previous sections, we study here the integrability of the Lenard-Magri scheme
when $b_1\neq0$.
We will consider separately the various cases, depending on the parameters $b_2,b_3,a_2,a_3$
being zero or non-zero.

\subsubsection*{Case 1: $b_2b_3\neq0$}

Let us consider first the case when $b_2$ and $b_3$ are both non-zero.
If $\tint h_0\in\mc V/\partial\mc V$ and $P_0\in\mc V$ satisfy
the relations $\tint 0\ass{H}P_0\ass{K}\tint h_0$,
then, by Lemmas \ref{20120908:lem1} and \ref{20120908:lem2},
we necessarily have $P_0\in\mc C\oplus\mc Cu'$
and $\frac{\delta h_0}{\delta u}=0$.
Hence, any infinite sequence extending the given finite one will have
$\tint h_n\in\ker\big(\frac{\delta}{\delta u}\big)$ and $P_n\in\mc C\oplus\mc Cu'$, 
for every $n\in\mb Z_+$.
In other words, 
the Lenard-Magri scheme repeats itself
and integrability does not occur.

\subsubsection*{Case 2: $b_2\neq0,b_3=0,a_3=0$}

In the case when $b_1b_2\neq0$, $b_3=0$ and $a_3=0$,
we can find explicitly all possible solutions for the sequences
$\{\tint h_n\}_{n\in\mb Z_+}$ and $\{P_n\}_{n\in\mb Z_+}$
satisfying the Lenard-Magri recursive relations \eqref{maxi}.

In order to describe such solutions, we need to introduce some polynomials.
We let $p_n(x;A,\epsilon),\,q_n(x;A,\epsilon)$, $n\in\mb Z_+$,
be the sequences of polynomials, depending on the 
$2\times2$ matrix 
$A=\Big(\begin{array}{ll} a_1 & a_2 \\ b_1 & b_2 \end{array}\Big)$,
and on the sequence of constant parameters $\epsilon=(\epsilon_0,\epsilon_1,\dots)$,
defined by the following recursive relations:
$p_0(x;A,\epsilon)=0$, and
\begin{equation}\label{20121003:eq1}
\begin{array}{l}
\displaystyle{
p_{n+1}(x;A,\epsilon)
=\frac{a_1}{b_1}p_n(x;A,\epsilon)
+\frac{a_2b_1-a_1b_2}{b_1^2}q_n(x;A,\epsilon)
} \\
\displaystyle{
\Big(\frac{d^2}{dx^2}+2b_{12}\frac{d}{dx}\Big)q_n(x;A,\epsilon)=p_n(x;A,\epsilon)
\,.}
\end{array}
\end{equation}
Here and further, as before, we use the notation \eqref{notation} with $x_i$ and $x_j$ 
replaced by $b_i$ and $b_j$.
It is easy to see that, if $p(x)$ is a polynomial of degree $n$,
then a solution $q(x)$ of the differential equation
$q''(x)+2b_{12}q'(x)=p(x)$
is a polynomial of degree $n+1$,
defined uniquely up to an additive constant $\epsilon_0$.
Hence, at each step in the recursion \eqref{20121003:eq1},
the resulting polynomial $p_{n+1}(x)$ depends on the previous step $p_n(x)$
and on the choice of a constant parameter $\epsilon_{n+1}$.

With the above notation, all sequences 
$\{\tint h_n\}_{n\in\mb Z_+}$, $\{P_n\}_{n\in\mb Z_+}$,
satisfying the Lenard-Magri recursive relations \eqref{maxi},
are as follows:
\begin{equation}\label{20121003:eq2}
\begin{array}{l}
\displaystyle{
\vphantom{\Bigg(}
P_n=
p_n(x;A,\epsilon^+)e^{b_{12}x}
+p_n(-x;A,\epsilon^-)e^{-b_{12}x}+a_2\delta_n\,,
} \\
\displaystyle{
\vphantom{\Bigg(}
h_n=
\frac1{b_1} 
\Big(q_n^\prime(x;A,\epsilon^+)+b_{12}q_n(x;A,\epsilon^+)\Big)
e^{b_{12}x} u
} \\
\displaystyle{
\vphantom{\Bigg(}
\,\,\,\,\,\,\,\,\,
-\frac1{b_1} 
\Big(q_n^\prime(-x;A,\epsilon^-)+b_{12}q_n(-x;A,\epsilon^-)\Big)
e^{-b_{12}x} u
\,,
}
\end{array}
\end{equation}
where $\epsilon^\pm=(\epsilon^\pm_0,\epsilon^\pm_1,\dots)$
and $\delta=(\delta_0,\delta_1,\dots)$
are arbitrary sequences of constant parameters.

It is not hard to check that 
the sequences $\{\tint h_n\}_{n\in\mb Z_+}$ and $\{P_n\}_{n\in\mb Z_+}$
indeed solve the recursive relations \eqref{maxi},
and any solution of the recursive relations \eqref{maxi} is obtained in this way.
To conclude, we observe that,
since $\Delta=a_2b_1-a_1b_2$ is non-zero
(unless the operators $H$ and $K$ are proportional),
all the elements $P_n$ are linearly independent,
therefore, by Theorem \ref{20130123:thm} and Lemma \ref{20120906:lem1},
each Hamiltonian PDE $\frac{du}{dt}=P_n,\,n\in\mb Z_+$,
is integrable, associated to an integrable Lenard-Magri scheme of C-type.

\subsubsection*{Case 3: $b_2\neq0,b_3=0,a_3\neq0$}

In the case when $b_1b_2\neq0$, $b_3=0$ and $a_3\neq0$,
we have by Lemma \ref{20120908:lem1}(b) that 
$\mc H_0(H)=a_2\mc C\oplus\mc Cu'$.
On the other hand, by Lemma \ref{20120908:lem2}(vi) there is no
element $\tint f\in\mc F(K)$ such that $\tint f\ass{K}u'$.
Similarly, 
by Lemma \ref{20120908:lem1}(a) we have 
$\mc F_0(K)=\mc C\tint e^{b_{12}x}u
+\mc C\tint e^{-b_{12}x}u
+\ker\big(\frac\delta{\delta u}\big)$.
On the other hand, by Lemma \ref{20120908:lem2}(i) there is no
element $P\in\mc F(H)$ such that $\tint e^{\pm b_{12}x}u\ass{H}P$.

In conclusion,
the Lenard-Magri recursion scheme, in this case,
cannot be applied,
since the following finite sequences cannot be extended
to infinite sequences satisfying \eqref{maxi}:
$$
\tint 0\ass{H} u'\ass{K}\not\exists\tint f
\,\,,\,\,\,\,
\tint 0\ass{H} 0\ass{K}\tint e^{\pm b_{12}x}u
\ass{H}\not\exists P
\,.
$$

\subsubsection*{Case 4: $b_2=0,b_3\neq0,a_2=0$}

In the case when $b_1b_3\neq0$, $b_2=0$ and $a_2=0$,
we can find explicitly all possible solutions for the sequences
$\{\tint h_n\}_{n\in\mb Z_+}$ and $\{P_n\}_{n\in\mb Z_+}$
satisfying the Lenard-Magri recursive relations \eqref{maxi}.
They are as follows:
\begin{equation}\label{20121003:eq3}
\begin{array}{l}
\displaystyle{
\vphantom{\Bigg(}
P_n=
p_n(u;A,\epsilon^+)e^{b_{13}u}
+p_n(-u;A,\epsilon^-)e^{-b_{13}u}+a_3\delta_nu'\,,
} \\
\displaystyle{
\vphantom{\Bigg(}
\frac{\delta h_n}{\delta u}=
\frac1{b_1} 
\Big(q_n^\prime(u;A,\epsilon^+)+b_{13}q_n(u;A,\epsilon^+)\Big)
e^{b_{13}u}
} \\
\displaystyle{
\vphantom{\Bigg(}
\,\,\,\,\,\,\,\,\,
-\frac1{b_1} 
\Big(q_n^\prime(-u;A,\epsilon^-)+b_{13}q_n(-u;A,\epsilon^-)\Big)
e^{-b_{12}u}
\,,
}
\end{array}
\end{equation}
where 
$p_n(u;A,\epsilon^+)$ and $q_n(u;A,\epsilon^+)$
are the polynomials defined in \eqref{20121003:eq1},
depending on the 
matrix $A=\Big(\begin{array}{ll} a_1 & a_3 \\ b_1 & b_3 \end{array}\Big)$,
and on the sequences of constant parameters
$\epsilon^\pm=(\epsilon^\pm_0,\epsilon^\pm_1,\dots)$
and $\delta=(\delta_0,\delta_1,\dots)$.

It is not hard to check, as in case 1, that 
the sequences $\{\tint h_n\}_{n\in\mb Z_+}$ and $\{P_n\}_{n\in\mb Z_+}$
solve the recursive relations \eqref{maxi},
and any solution of the recursive relations \eqref{maxi} is obtained in this way.
We also observe that,
since $\Delta=a_3b_1-a_1b_3$ is non-zero
(unless the operators $H$ and $K$ are proportional),
all the elements $P_n$ are linearly independent,
therefore, by Theorem \ref{20130123:thm} and Lemma \ref{20120906:lem1},
each Hamiltonian PDE $\frac{du}{dt}=P_n,\,n\in\mb Z_+$,
is integrable, associated to an integrable Lenard-Magri scheme of C-type.

\subsubsection*{Case 5: $b_2=0,b_3\neq0,a_2\neq0$}

In the case when $b_1b_3\neq0$, $b_2=0$ and $a_3=0$,
we have by Lemma \ref{20120908:lem1}(b) that 
$\mc H_0(H)=\mc C\oplus a_3\mc Cu'$.
On the other hand, by Lemma \ref{20120908:lem2}(v) there is no
element $\tint f\in\mc F(K)$ such that $\tint f\ass{K}1$.
Similarly, 
by Lemma \ref{20120908:lem1}(a) we have 
$\mc F_0(K)=\mc C\tint e^{b_{13}u}
+\mc C\tint e^{-b_{13}u}
+\ker\big(\frac\delta{\delta u}\big)$.
On the other hand, by Lemma \ref{20120908:lem2}(ii) there is no
element $P\in\mc F(H)$ such that $\tint e^{\pm b_{13}x}u\ass{H}P$.

In conclusion,
the Lenard-Magri recursion scheme, in this case,
cannot be applied,
since the following finite sequences cannot be extended
to infinite sequences satisfying the relations \eqref{maxi}:
$$
\tint 0\ass{H} 1\ass{K}\not\exists\tint f
\,\,,\,\,\,\,
\tint 0\ass{H} 0\ass{K}\tint e^{\pm b_{13}u}
\ass{H}\not\exists P
\,.
$$

\subsubsection*{Case 6: $b_2=b_3=0$}

In the case when $b_1\neq0$, which we set equal to $1$, and $b_2=b_3=0$,
we have different possibilities according to the constants $a_2$ and $a_3$ being zero or not.

If $a_2a_3\neq0$,
the Lenard-Magri recursion scheme cannot be applied.
Indeed, by Lemma \ref{20120908:lem1} we have 
$\mc H_0(H)=\mc C\oplus\mc Cu'$
and
$\mc F_0(K)=\mc C\tint u\oplus\ker\big(\frac{\delta}{\delta u}\big)$,
and, whichever way we start the finite sequences
$\{\tint h_n\}_{n=0}^N$, $\{P_n\}_{n=0}^N$ as in \eqref{maxi},
there is no way to extend them to non-trivial infinite sequences:
$$
\begin{array}{l}
\displaystyle{
\vphantom{\bigg(}
\tint 0\ass{H} 1\ass{K}\tint xu\ass{H}\not\exists P_1
\,,} \\
\displaystyle{
\vphantom{\bigg(}
\tint 0\ass{H} u'\ass{K}\tint\frac12 u^2\ass{H}\not\exists P_1
\,,} \\
\displaystyle{
\vphantom{\bigg(}
\tint 0\ass{H} 0\ass{K}\tint u\ass{H} a_2x+a_3uu' \ass{K} \not\exists \tint h_1
\,.} 
\end{array}
$$

Next, we consider the cases when exactly one of the elements $a_2$ and $a_3$ is zero.
Recall the sequence of polynomials $p_n(x;A,\epsilon)$ defined by the
recursive equations \eqref{20121003:eq1}.
In the case $b_1=1,b_2=0$, such equations reduce to
$p_0(x;a_1,a_2,\epsilon)=0$ and
\begin{equation}\label{20121003:eq1b}
p_{n+1}(x;a_1,a_2,\epsilon)
=\Big(\frac{a_1}{b_1}+\frac{a_2}{b_1}\Big(\frac{d}{dx}\Big)^{-2}\Big)p_n(x;a_1,a_2,\epsilon)
\,.
\end{equation}
Here $\Big(\frac{d}{dx}\Big)^{-2}$ means integrating twice with respect to $x$,
which is defined uniquely up to adding a linear term $\epsilon_{2n}+\epsilon_{2n+1}x$.
In particular, at each step the degree increases by two.

In the case $a_2\neq0,a_3=0$, 
it is not hard to prove that all the sequences 
$\{\tint h_n\}_{n\in\mb Z_+}$, $\{P_n\}_{n\in\mb Z_+}$,
satisfying the Lenard-Magri recursive relations \eqref{maxi},
are as follows:
\begin{equation}\label{20121003:eq2b}
P_n=p_n(x;a_1,a_2,\epsilon)+\delta_n
\,\,,\,\,\,\,
\tint h_n=\tint \Big(\frac{d}{dx}\Big)^{-1}p_n(x;a_1,a_2,\epsilon)u\,,
\end{equation}
where $\epsilon=(\epsilon_0,\epsilon_1,\dots)$ and $(\delta_0,\delta_1,\dots)$
are arbitrary sequence of constant parameters.
Since, obviously,
all the elements $P_n$ are linearly independent,
we conclude that each Hamiltonian PDE $\frac{du}{dt}=P_n,\,n\in\mb Z_+$,
is integrable, associated to an integrable Lenard-Magri scheme of C-type.

Similarly, in the case $a_2=0,a_3\neq0$, all the sequences 
$\{\tint h_n\}_{n\in\mb Z_+}$, $\{P_n\}_{n\in\mb Z_+}$,
satisfying the Lenard-Magri recursive relations \eqref{maxi},
are as follows:
\begin{equation}\label{20121003:eq2c}
P_n=p_n(u;a_1,a_2,\epsilon)u'+\delta_nu'
\,\,,\,\,\,\,
\tint h_n=\tint \Big(\frac{d}{du}\Big)^{-2}p_n(u;a_1,a_2,\epsilon)\,,
\end{equation}
where $\epsilon=(\epsilon_0,\epsilon_1,\dots)$ and $(\delta_0,\delta_1,\dots)$
are arbitrary sequences of constant parameters.
Again, we conclude that each Hamiltonian PDE $\frac{du}{dt}=P_n,\,n\in\mb Z_+$,
is integrable, associated to an integrable Lenard-Magri scheme of C-type.

\subsection{Summary}
\label{secb:3.7}

Let us summarize the results from the previous sections 
by listing all the possibilities for the pairs $H$ and $K$ as in \eqref{20121006:eq2},
and specifying,
using the terminology of Section \ref{sec:7.5},
whether the corresponding Lenard-Magri sequence \eqref{maxi}
is \emph{integrable of S-type},
i.e. the orders of the elements $P_n$'s and $\frac{\delta h_n}{\delta u}$'s tend to infinity
($b_1=0,a_1\neq0$),
whether it is \emph{integrable of C$_1$-type},
i.e. the $\mc C$-span of the elements $P_n$'s and $\tint h_n$'s is infinite dimensional
and the orders of $H$ and $K$ are both equal to $-1$ ($b_1=a_1=0$),
whether it is \emph{integrable of C$_2$-type},
i.e. the $\mc C$-span of the elements $P_n$'s and $\tint h_n$'s is infinite dimensional
and $H$ has order less than or equal to $K$ and $K$ has order 1 ($b_1\neq0$),
whether it is of \emph{finite type},
i.e. the $\mc C$-span of the elements $P_n$'s or $\tint h_n$'s is necessarily finite dimensional,
or whether it is \emph{blocked},
i.e. there are choices of $\tint h_n$ or $P_n$ for which the scheme
cannot be continued.
This is the list of all possibilities:
\begin{itemize}
\item
\emph{integrable of S-type}:
\begin{enumerate}[(a)]
\item
$b_1=0$, $(b_2,b_3)\neq(0,0)$, $a_1a_2a_3\neq0$;
\item
$b_1=0$, $a_1\neq0$, 
and either $b_2\neq0$, $a_2=0$ and $(b_3,a_3)\neq(0,0)$,
or $b_3\neq0$, $a_3=0$ and $(b_2,a_2)\neq(0,0)$.
\end{enumerate}
\item
\emph{integrable of C$_1$-type}:
\begin{enumerate}[]
\item
$b_1=a_1=0$, 
and either $b_2a_3\neq0$ and $(b_3,a_2)$ arbitrary,
or $b_3a_2\neq0$ and $(b_2,a_3)$ arbitrary.
\end{enumerate}
\item
\emph{integrable of C$_2$-type}:
\begin{enumerate}[(a)]
\item
$b_1a_1\neq0$, 
and either $b_2=a_2=0$ and $(b_3,a_3)\neq(0,0)$,
or $b_3=a_3=0$ and $(b_2,a_2)\neq(0,0)$;
\item
$b_1\neq0$, $a_1=0$,
and either $b_2=a_2=0$ and $a_3\neq0$ (with $b_3$ arbitrary),
or $b_3=a_3=0$ and $a_2\neq0$ (with $b_2$ arbitrary).
\end{enumerate}
\item
\emph{finite type}:
\begin{enumerate}[(a)]
\item
$b_1b_2b_3\neq0$, $a_1=0$, $(a_2,a_3)\neq(0,0)$;
\item
$b_1=0$, $a_1\neq0$,
and either $b_2=a_2=0$ and $b_3\neq0$ (with $a_3$ arbitrary),
or $b_3=a_3=0$ and $b_2\neq0$ (with $a_2$ arbitrary).
\item[(c1)]
$b_1=a_1=0$,
and either $b_2=a_2=0$ and $b_3a_3\neq0$,
or $b_2a_2\neq0$ and $b_3=a_3=0$;
\item[(c2)]
$b_1b_2b_3\neq0, a_1a_2a_3\neq0$;
\item[(d)]
$b_1b_2b_3\neq0, a_1\neq0, a_2a_3=0$.
\end{enumerate}
\item
\emph{blocked}:
\begin{enumerate}[(a)]
\item
$b_1\neq0$, $a_1=0$, 
and either $b_2=0$, $a_2\neq0$ and $(b_3,a_3)\neq(0,0)$,
or $b_3=0$, $a_3\neq0$ and $(b_2,a_2)\neq(0,0)$;
\item
$b_1\neq0, b_2b_3=0, a_1a_2a_3\neq0$;
\item
$b_1a_1\neq0$, 
and either $b_2a_3\neq0$ and $b_3=a_2=0$,
or $b_2=a_3=0$ and $b_3a_2\neq0$.
\end{enumerate}
\end{itemize}

\subsection{Going to the left}
\label{secb:3.8}

Suppose we have an integrable Lenard-Magri sequence \eqref{maxi}
(of S, C$_1$ or $C_2$-type).
A natural question is whether this sequence can be continued to the left:
\begin{equation}\label{20121006:eq1}
\dots\ass{H}P_{-1}\ass{K}\tint0\ass{H}P_0\ass{K}\tint h_0\ass{H} P_1\ass{K}\tint h_1\ass{H}
P_2\ass{K}\dots\,.
\end{equation}
In this way we get some additional equations
compatible with the given hierarchy $\frac{du}{dt_n}=P_n,\,n\in\mb Z_+$,
and additional integrals of motion $\tint h_n,\,n=-1,-2,\dots$,
in involution with the given $\tint h_n$'s, with $n\geq0$.

Clearly, trying to extend the Lenard-Magri scheme \eqref{20121006:eq1} to the left
amounts to switching the roles of the non-local Poisson structures $H$ and $K$,
and to constructing the ``dual'' Lenard-Magri sequence
\begin{equation}\label{20121006:eq3}
\tint0\ass{K}P_{-1}\ass{H}\tint h_{-1}\ass{K} P_{-2}\ass{H}\tint h_{-2}\ass{K}
P_{-3}\ass{H}\dots\,.
\end{equation}
So, we need to study, for each possible choice of the parameters $a_i,b_i,i=1,2,3$,
what type of Lenard-Magri scheme we get when we switch all the 
coefficients $a_i$'s with the $b_i$'s.

By looking at the list of all possibilities in the previous section,
after switching the roles of $H$ and $K$ we have the following:
\begin{itemize}
\item
integrable of S-type (a) $\stackrel{H\leftrightarrow K}{\longleftrightarrow}$ finite type (a);
\item
integrable of S-type (b) $\stackrel{H\leftrightarrow K}{\longleftrightarrow}$ blocked (a);
\item
integrable of C$_1$-type $\stackrel{H\leftrightarrow K}{\longleftrightarrow}$ integrable of C$_1$-type;
\item
integrable of C$_2$-type (a) $\stackrel{H\leftrightarrow K}{\longleftrightarrow}$ 
integrable of C$_2$-type (a);
\item
integrable of C$_2$-type (b) $\stackrel{H\leftrightarrow K}{\longleftrightarrow}$ 
finite-type (b);
\item
finite-type (c) $\stackrel{H\leftrightarrow K}{\longleftrightarrow}$ finite-type (c);
\item
finite-type (d) $\stackrel{H\leftrightarrow K}{\longleftrightarrow}$ blocked (b);
\item
blocked (c) $\stackrel{H\leftrightarrow K}{\longleftrightarrow}$ blocked (c).
\end{itemize}


We are only interested in the integrable (S or C-type) Lenard-Magri schemes.
We see from the above list that, after exchanging the roles of $H$ and $K$,
three things can happen.
The ``dual'' Lenard-Magri scheme \eqref{20121006:eq3} can be of \emph{finite}-type
(this happens in the cases S(a) and C$_2$(b)).
In this situation continuing the Lenard-Magri scheme to the left 
we never get any new interesting integrals of motion or equations.


The second possibility is that the ``dual'' Lenard-Magri scheme \eqref{20121006:eq3} 
is of \emph{integrable}-type
(this happens in the cases C$_1$ and C$_2$(a)).
In this situation we can continue the Lenard-Magri scheme to the left indefinitely.
In other words, in each of these cases we can merge two integrable systems,
``dual'' to each other, to get one integrable system with twice as many integrals of motion
and equations.


The most interesting situation is when the ``dual'' Lenard-Magri scheme \eqref{20121006:eq3} 
is \emph{blocked} (which happens in the case S(b)).
In this case,
if the sequence \eqref{20121006:eq3} is blocked at $P_{-k},\,k\geq0$,
we obtain an integrable PDE which is not of evolutionary type.

\subsection{Non-evolutionary integrable equations}
\label{secb:3.9}

According to the previous discussion, 
we need to consider the case of integrable Lenard-Magri scheme
of S-type (b), which means the following 5 cases:
\begin{enumerate}
\item
$b_1=0, b_2\neq0, b_3\neq0, a_1\neq0, a_2\neq0, a_3=0$; 
\item
$b_1=0, b_2\neq0, b_3\neq0, a_1\neq0, a_2=0, a_3\neq0$; 
\item
$b_1=0, b_2\neq0, b_3\neq0, a_1\neq0, a_2=0, a_3=0$; 
\item
$b_1=0, b_2\neq0, b_3=0, a_1\neq0, a_2=0, a_3\neq0$; 
\item
$b_1=0, b_2=0, b_3\neq0, a_1\neq0, a_2\neq0, a_3=0$; 
\end{enumerate}

\subsubsection*{Case 1: $b_1=0, b_2\neq0, b_3\neq0, a_1\neq0, a_2\neq0, a_3=0$} 

This case gives, to the right, the Lenard-Magri scheme listed as case (ii) in Section \ref{secb:3.4},
while, after exchanging the roles of $H$ and $K$ we get, to the left,
the ``blocked'' Lenard-Magri scheme listed as case 3 in Section \ref{secb:3.6}.
Hence, overall, we get the following scheme:
$$
\begin{array}{l}
\vphantom{\Bigg(}
\displaystyle{
\not\exists P\ass{K}
\tint e^{\pm a_{12}x}u
\ass{H}0 \ass{K}\tint 0
\ass{H}1\ass{K}\tint\sqrt{b_2+b_3(u')^2}\ass{H}
} \\
\displaystyle{
\ass{H}
-\frac{a_1b_2b_3u'''}{(b_2+b_3(u')^2)^{\frac32}}
+3\frac{a_1b_2b_3^2u'(u'')^2}{(b_2+b_3(u')^2)^{\frac52}}
-\frac{a_2b_3}{\sqrt{b_2+b_3(u')^2}}
\ass{K}\dots
\,.
}
\end{array}
$$
Here and further $\pm$ means that we take arbitrary linear combination of the above expressions
with $+$ and with $-$.
Trying to solve naively for $P$ in the above scheme,
we get the following expression
$$
P=
\pm \frac{b_2}{a_{12}} e^{\pm a_{12}x}
+b_3u' \partial^{-1} \big(e^{\pm a_{12}x}u'\big)
\,.
$$
The meaning of the above expression for $P$
is that the following partial differential equation
is a member of the integrable hierarchy associated to the Lenard-Magri scheme of S-type (ii):
\begin{equation}\label{20121020:eq3}
\Big(\frac{u_t}{u_x}\Big)_x
=
\pm\frac{b_2}{a_{12}}
\Big(\frac1{u_x}e^{\pm a_{12}x}\Big)_x
+b_3 e^{\pm a_{12}x} u_x
\,.
\end{equation}

\subsubsection*{Case 2: $b_1=0, b_2\neq0, b_3\neq0, a_1\neq0, a_2=0, a_3\neq0$} 

This case gives, to the right, the Lenard-Magri scheme listed as case (iii) in Section \ref{secb:3.4},
while, after exchanging the roles of $H$ and $K$ we get, to the left,
the ``blocked'' Lenard-Magri scheme listed as case 5 in Section \ref{secb:3.6}.
Hence, overall, we get the following scheme:
$$
\begin{array}{l}
\vphantom{\Bigg(}
\displaystyle{
\not\exists P\ass{K}
\tint e^{\pm a_{13}u}
\ass{H}0 \ass{K}\tint 0
\ass{H}u'\ass{K}\tint\sqrt{b_2+b_3(u')^2}\ass{H}
} \\
\displaystyle{
\ass{H}
-\frac{a_1b_2b_3u'''}{(b_2+b_3(u')^2)^{\frac32}}
+3\frac{a_1b_2b_3^2u'(u'')^2}{(b_2+b_3(u')^2)^{\frac52}}
+\frac{a_3b_2}{\sqrt{b_2+b_3(u')^2}}
\ass{K}\dots
\,.
}
\end{array}
$$
Trying to solve naively for $P$ we get
$$
P=
\pm a_{13}b_2\partial^{-1}e^{\pm a_{13}u}
+b_3u'e^{\pm a_{13}u}
\,.
$$
This means that the following hyperbolic partial differential equation
is a member of the integrable hierarchy associated to the Lenard-Magri scheme of S-type (iii):
\begin{equation}\label{20121020:eq4}
u_{tx}=
\pm a_{13}b_2 e^{\pm a_{13}u}\pm\frac{b_3}{a_{13}} \big(e^{\pm a_{13}u}\big)_{xx}
\,.
\end{equation}

\subsubsection*{Case 3: $b_1=0, b_2\neq0, b_3\neq0, a_1\neq0, a_2=0, a_3=0$} 

This case gives, to the right, the Lenard-Magri scheme listed as case (iv) in Section \ref{secb:3.4},
while, after exchanging the roles of $H$ and $K$ we get, to the left,
the ``blocked'' Lenard-Magri scheme listed as case 6 in Section \ref{secb:3.6}.
Hence, overall, we get, depending on how we choose to continue the scheme to the left,
the following two possibilities (or any their linear combination):
$$
\begin{array}{l}
\vphantom{\Bigg(}
\displaystyle{
\not\exists P
\ass{K}
\frac{1}{a_1}\tint xu
\ass{H}
1
\ass{K}\tint 0\ass{H}0
\ass{K}\tint\sqrt{b_2+b_3(u')^2}\ass{H}
} \\
\displaystyle{
\ass{H}
-\frac{a_1b_2b_3u'''}{(b_2+b_3(u')^2)^{\frac32}}
+3\frac{a_1b_2b_3^2u'(u'')^2}{(b_2+b_3(u')^2)^{\frac52}}
\ass{K}\dots
\,,
}
\end{array}
$$
or
$$
\begin{array}{l}
\vphantom{\Bigg(}
\displaystyle{
\not\exists P
\ass{K}
\frac{1}{a_1}\tint \frac12 u^2
\ass{H}
u'
\ass{K}\tint 0\ass{H}0
\ass{K}\tint\sqrt{b_2+b_3(u')^2}\ass{H}
} \\
\displaystyle{
\ass{H}
-\frac{a_1b_2b_3u'''}{(b_2+b_3(u')^2)^{\frac32}}
+3\frac{a_1b_2b_3^2u'(u'')^2}{(b_2+b_3(u')^2)^{\frac52}}
\ass{K}\dots
\,.
}
\end{array}
$$
Trying to solve naively for $P$ we get, in the first case
$$
P=
\frac{b_2}{2a_1}x^2+\frac{b_3}{a1}xuu'-\frac{b_3}{a_1}u'\partial^{-1}u
\,,
$$
which corresponds to the following integrable non-evolutionary partial differential equation:
\begin{equation}\label{20121020:eq1}
\Big(\frac{u_t}{u_x}\Big)_x
=
\frac{b_2}{2a_1}\Big(\frac{x^2}{u_x}\Big)_x+\frac{b_3}{a_1}xu_x
\,.
\end{equation}
In the second case we get
$$
P=
\frac{b_2}{a_1}\partial^{-1}u+\frac{b_3}{2a_1}u^2u'
\,,
$$
which corresponds to the following integrable hyperbolic partial differential equation:
\begin{equation}\label{20121020:eq2}
u_{tx}
=
\frac{b_2}{a_1}u+\frac{b_3}{6a_1}(u^3)_{xx}
\,.
\end{equation}
In conclusion, both equations \eqref{20121020:eq1} and \eqref{20121020:eq2}
are members of the integrable hierarchy associated to the Lenard-Magri scheme of S-type (iv).

\subsubsection*{Case 4: $b_1=0, b_2\neq0, b_3=0, a_1\neq0, a_2=0, a_3\neq0$} 

This case gives, to the right, the Lenard-Magri scheme listed as case (vii) in Section \ref{secb:3.4},
while, after exchanging the roles of $H$ and $K$ we get, to the left,
the ``blocked'' Lenard-Magri scheme listed as case 5 in Section \ref{secb:3.6}.
Hence, overall, we get the following scheme:
$$
\begin{array}{l}
\vphantom{\Bigg(}
\displaystyle{
\not\exists P\ass{K}
\tint e^{\pm a_{13}u}
\ass{H}0 \ass{K}\tint 0
\ass{H}u'\ass{K}\tint\frac{-(u')^2}{2b_2}\ass{H}
} \\
\displaystyle{
\ass{H}
\frac{a_1}{b_2}u'''+\frac{a_3}{2b_2}(u')^3
\ass{K}\dots
\,.
}
\end{array}
$$
Trying to solve naively for $P$ we get
$$
P=
\pm a_{13}b_2\partial^{-1}e^{\pm a_{13}u}
\,.
$$
This means that the following hyperbolic partial differential equation
is a member of the integrable hierarchy associated to the Lenard-magri scheme of S-type (vii):
\begin{equation}\label{20121020:eq6}
u_{tx}=
\pm a_{13}b_2 e^{\pm a_{13}u}
\,.
\end{equation}
As expected, equation \eqref{20121020:eq6} is obtained by \eqref{20121020:eq4}
letting $b_3=0$.

\subsubsection*{Case 5: $b_1=0, b_2=0, b_3\neq0, a_1\neq0, a_2\neq0, a_3=0$} 

This case gives, to the right, the Lenard-Magri scheme listed as case (x) in Section \ref{secb:3.4},
while, after exchanging the roles of $H$ and $K$ we get, to the left,
the ``blocked'' Lenard-Magri scheme listed as case 3 in Section \ref{secb:3.6}.
Hence, overall, we get the following scheme:
$$
\begin{array}{l}
\vphantom{\Bigg(}
\displaystyle{
\not\exists P\ass{K}
\tint e^{\pm a_{12}x}u
\ass{H}0 \ass{K}\tint 0
\ass{H}1\ass{K}\tint\frac1{2b_3u'}\ass{H}
} \\
\displaystyle{
\ass{H}
-\frac{a_1}{b_3}\frac{u'''}{(u')^3}+\frac{3a_1}{b_3}\frac{(u'')^2}{(u')^4}
+\frac{a_2}{2b_3}\frac1{(u')^2}
\ass{K}\dots
\,.
}
\end{array}
$$
Trying to solve naively for $P$ in the above scheme,
we get the following expression
$$
P=
b_3u' \partial^{-1} \big(e^{\pm a_{12}x}u'\big)
\,,
$$
and the associated non-evolutionary partial differential equation is
\begin{equation}\label{20121020:eq5}
\Big(\frac{u_t}{u_x}\Big)_x
=
b_3 e^{\pm a_{12}x} u_x
\,.
\end{equation}
In conclusion, equation \eqref{20121020:eq5}
is a member of the integrable hierarchy associated to the Lenard-Magri scheme of S-type (x).
Note that this equation is obtained letting $b_2=0$ in equation \eqref{20121020:eq3}.

\subsubsection*{Conclusion} 

After rescaling the variables $u$, $x$ and $t$, or replacing $x$ by $x+$ const., or $u$ by $u+$ const.,
in equations \eqref{20121020:eq3}-\eqref{20121020:eq5},
we conclude that the following are all the integrable non-evolutionary partial differential equations
which are members of some integrable hierarchy of bi-Hamiltonian equations, 
with $H$ and $K$ as in \eqref{20121006:eq2}:
\begin{eqnarray}
&& u_{tx}
=
e^{u}-\alpha e^{-u}
+\epsilon(e^{u}-\alpha e^{-u})_{xx}
\,,\label{20121020:eq8}\\
&& \Big(\frac{u_t}{u_x}\Big)_x
=
(e^{x}-\alpha e^{-x}) u_x
+\epsilon \Big(\frac{e^{x}-\alpha e^{-x}}{u_x}\Big)_x
\,, \label{20121020:eq7}\\
&& u_{tx}
=
u+(u^3)_{xx}
\,, \label{20121020:eq10}\\
&& \Big(\frac{u_t}{u_x}\Big)_x
=
\Big(\frac{x^2}{u_x}\Big)_x+xu_x
\,,\label{20121020:eq9}
\end{eqnarray}
where $\alpha$ and $\epsilon$ are $0$ or $1$.

Recall that the case when $\epsilon=0$ equation \eqref{20121020:eq8}
is the Liouville equation when $\alpha=0$,
and the sinh-Gordon equation when $\alpha=1$, cf. \cite{Dor93}.
Equation \eqref{20121020:eq7} (respectively \eqref{20121020:eq9}) can be obtained from equation \eqref{20121020:eq8} (resp. \eqref{20121020:eq10})
by the hodograph transformation $u\mapsto x$, $x\mapsto -u$.
Equation \eqref{20121020:eq10} is called the ``short pulse equation'' \cite{SW02},
and its integrability was proved in \cite{SS04}.
Equations \eqref{20121020:eq8} with $\epsilon=1$
was studied in \cite{Fok95}.

\section{KN type integrable systems}
\label{secb:4}

In this section $\mc V$ is a field of differential functions in $u$,
and, as usual, we assume that $\mc V$ contains all the functions that we encounter
in our computations.

Recall from Example \ref{20110922:ex2} that the following
is a pair of compatible non-local Poisson structures:
$$
L_1=
u'\partial^{-1}\circ u' 
\,\,\text{ (Sokolov) }
\,\,,\,\,\,\,
L_2=
\partial^{-1}\circ u'\partial^{-1}\circ u'\partial^{-1}
\,\,\text{ (Dorfman) }
\,.
$$
We consider two non-local Poisson structures $H$ and $K$
which are linear combinations of $L_1$ and $L_2$:
$H=a_1L_1+a_2L_2$ and $K=b_1L_1+b_2L_2$.
As we have seen in the example of Liouville type integrable systems, 
discussed in Section \ref{secb:3}, 
integrable hierarchies associated to Lenard-Magri schemes of C type are usually 
not very interesting (cf. Sections \ref{secb:3.5} and \ref{secb:3.6}).
Hence, in this section, we will only consider integrable Lenard-Magri schemes of S-type
(in the terminology of Section \ref{sec:7.5}),
which is possible only when the order of the pseudodifferential operator $H$
is greater than the order of $K$,
namely when $a_1\neq0$ and $b_1=0$.
Therefore, we consider the following compatible pair of non-local structures:
\begin{equation}\label{20121006:eq2kn}
H=u'\partial^{-1}\circ u' + a\partial^{-1}\circ u'\partial^{-1}\circ u'\partial^{-1}
\,\,,\,\,\,\,
K=\partial^{-1}\circ u'\partial^{-1}\circ u'\partial^{-1}\,,
\end{equation}
with $a\in\mc C$.
We want to discuss the integrability of the corresponding Lenard-Magri scheme.

\subsection{Preliminary computations}
\label{secb:4.1}

Note that $K$ is the inverse of a differential operator, hence its minimal fractional decomposition is
$K=1D^{-1}$, where
\begin{equation}\label{frac-D}
D=\partial\circ\frac1{u'}\partial\circ\frac1{u'}\partial\,.
\end{equation}
We next find a minimal fractional decomposition for $H$.
It is given by the following
\begin{lemma}\label{lem:frac-kn}
For every $a\in\mc C$, we have $H=AB^{-1}$, where
\begin{equation}\label{frac-kn}
\begin{array}{l}
\displaystyle{
A=
\bigg(
\partial^2-2\frac{u''}{u'}\partial+\Big(\frac{u''}{u'}\Big)^\prime+a
\bigg)\circ
\frac{1}{D(u')}
\partial-u'
\,,}\\
\displaystyle{
B=
\partial\circ\frac1{u'}\partial\circ\frac1{u'}\partial\circ
\frac{1}{D(u')}
\partial
\,.} 
\end{array}
\end{equation}
Here and further, we have, recalling \eqref{frac-D},
\begin{equation}\label{20121015:eq3}
D(u')=\bigg(\frac1{u'}\Big(\frac{u''}{u'}\Big)^\prime\bigg)^\prime\,.
\end{equation}
The above fractional decomposition is minimal only for $a\neq0$.
For $a=0$, the minimal fractional decomposition for $H$ is 
$H=1S^{-1}$, where
\begin{equation}\label{frac-kn1}
S=\frac1{u'}\partial\circ\frac1{u'}\,.
\end{equation}
\end{lemma}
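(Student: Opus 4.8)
The plan is to establish the factorization $H=AB^{-1}$ (valid for every $a$) first, and then to treat minimality separately in the cases $a\neq0$ and $a=0$ via the criterion of Proposition \ref{prop:minimal-fraction}. For the factorization I would verify the equivalent identity $A=H\circ B$ in $\mc K((\partial^{-1}))$. Writing $B=D\circ\frac1{D(u')}\partial$, with $D=\partial\circ\frac1{u'}\partial\circ\frac1{u'}\partial$ as in \eqref{frac-D}, and using $K=L_2=D^{-1}$, one gets $H\circ D=L_1D+aK\circ D=L_1D+a$, hence $H\circ B=(L_1D+a)\circ\frac1{D(u')}\partial$. Everything thus reduces to the single, $a$-independent identity
\[
L_1D=\partial^2-2\frac{u''}{u'}\partial+\Big(\frac{u''}{u'}\Big)'-u'\partial^{-1}\circ D(u')\,.
\]
Its nonlocal part is pinned down by skew-adjointness: $D$ is an odd-order composition of $\partial$'s and multiplication operators, so $D^*=-D$, and writing $L_1D=u'\partial^{-1}\circ(u'D)$ the purely nonlocal term of $\partial^{-1}\circ(u'D)$ is $\partial^{-1}\circ g$ with $g=(u'D)^*(1)=-D(u')$. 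The local (order $\leq2$) part is matched by expanding $u'D$ as a third-order differential operator and performing one integration by parts. Substituting back, the surviving term $-u'\partial^{-1}\circ D(u')\circ\frac1{D(u')}\partial$ collapses to $-u'$ and reproduces $A$ exactly as in \eqref{frac-kn}.

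For the case $a=0$ one has $H=L_1$, and a direct check gives $L_1\circ S=1$ for $S=\frac1{u'}\partial\circ\frac1{u'}$, so $L_1=1\cdot S^{-1}$ with numerator $1$; since $\ker 1=0$, this decomposition is minimal by Proposition \ref{prop:minimal-fraction}(a), which is \eqref{frac-kn1}. As $|S|=1<4=|B|$, the decomposition \eqref{frac-kn} is non-minimal for $a=0$. Applying Proposition \ref{prop:minimal-fraction}(b) to the two decompositions $1\cdot S^{-1}$ and $A_0B^{-1}$, where $A_0$ denotes $A$ at $a=0$, yields $A_0=G$ and $B=S\circ G$ for some $G\in\mc K[\partial]$, whence $\ker A_0\subseteq\ker B$. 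This inclusion is what I will exploit below.

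For minimality when $a\neq0$, by Proposition \ref{prop:minimal-fraction}(a) it suffices to show $\ker A\cap\ker B=0$ in every differential field extension. From $B=D\circ\frac1{D(u')}\partial$ one has $\ker\big(\frac1{D(u')}\partial\big)=\mc C$ and $\ker D=\Span_{\mc C}\{1,u,u^2\}$ (these functions lie in $\ker D$ and exhaust it since $|D|=3$), so $f\in\ker B$ precisely when $g:=\frac1{D(u')}\partial f\in\ker D$. Writing $A=P\circ\frac1{D(u')}\partial-u'$ with $P=P_0+a$, $P_0=\partial^2-2\frac{u''}{u'}\partial+(\frac{u''}{u'})'$, the equation $Af=0$ reads $u'f=Pg$, which combined with $\partial f=D(u')g$ gives
\[
\partial\Big(\frac{P_0g}{u'}\Big)+a\,\partial\Big(\frac{g}{u'}\Big)=D(u')\,g\,.
\]
The identities $\partial(P_0g/u')=D(u')g$ hold for each $g\in\{1,u,u^2\}$ — either by direct computation, or conceptually because the $a=0$ inclusion $\ker A_0\subseteq\ker B$ (together with the fact that $f\mapsto\frac1{D(u')}\partial f$ maps $\ker A_0$ isomorphically onto $\ker D$) forces them. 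Hence the displayed relation collapses to $a\,\partial(g/u')=0$; as $a\neq0$ this gives $g=cu'$ for some $c\in\mc C$, incompatible with $g\in\Span_{\mc C}\{1,u,u^2\}$ unless $c=0$ (apply $\frac{\partial}{\partial u'}$). Therefore $g=0$, so $f=Pg/u'=0$, and the decomposition is minimal.

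The step I expect to be most delicate is precisely this last one: the factorization is a mechanical pseudodifferential computation, but ruling out a common right factor for $a\neq0$ hinges on showing that the $P_0$-part of the scalar relation vanishes identically on $\ker D$. I plan to obtain this for free from the $a=0$ minimal decomposition rather than by a separate hand calculation, and then close the argument by the differential-order obstruction $cu'\notin\Span_{\mc C}\{1,u,u^2\}$.
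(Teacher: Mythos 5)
Your factorization argument is, at its core, the same computation as the paper's: the paper verifies $AB^{-1}=L_1+aL_2$ by expanding $A_0B^{-1}$ (where $A_0$ is $A$ at $a=0$) and reducing to the single pseudodifferential identity \eqref{20121025:eq1}, checked by repeated use of $\partial^{-1}\circ f=f\partial^{-1}-\partial^{-1}\circ f'\partial^{-1}$; you verify the equivalent identity $A=H\circ B$, which reduces to $L_1D=P_0-u'\partial^{-1}\circ D(u')$ --- the same identity in different packaging. Your way of pinning down the nonlocal residue via $(u'D)^*(1)=-D(u')$ is a slicker route to the same place and explains why $D(u')$ appears. The genuine difference is that you actually prove the minimality assertions, which the paper's proof does not address at all (it only establishes the factorization). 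Your $a=0$ argument ($\ker 1=0$ plus the order comparison $|S|=1<4=|B|$) is correct, and your reduction of the $a\neq0$ case to the three identities $\partial(P_0g/u')=D(u')g$ for $g\in\{1,u,u^2\}$ is sound.

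Two points to tighten, neither fatal. First, the ``conceptual'' derivation of those three identities from the $a=0$ decomposition needs the map $f\mapsto\frac1{D(u')}\partial f$ to be \emph{onto} $\ker D$ from $\ker A_0$, i.e.\ $\dim_{\mc C}\ker A_0=3$; this does not follow from $\ker A_0\subseteq\ker B$ alone, so the safe route is the direct check you offer as an alternative. Note that for $g=1$ the identity is literally the definition \eqref{20121015:eq3} of $D(u')$, and the elements $P_0(g)/u'$ for $g=1,u,u^2$ are exactly $f_2,f_3,f_4$ of Lemma \ref{20121025:lem1}, so this verification overlaps with work the paper needs anyway. Second, minimality must be checked in an arbitrary differential field extension (Proposition \ref{prop:minimal-fraction}(a)), where $\frac{\partial}{\partial u'}$ is not available; to rule out $cu'=c_0+c_1u+c_2u^2$ with constant coefficients, differentiate repeatedly instead (dividing by the invertible element $u'$ as needed, one arrives at $c\,D(u')=0$ with $D(u')\neq0$, hence $c=0$).
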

\begin{proof}
We need to prove that $AB^{-1}=S^{-1}+aD^{-1}$.
By looking at the coefficient of $a$ in $AB^{-1}$, we get
$$
\frac{1}{D(u')}
\partial
\Bigg(
\partial\circ\frac1{u'}\partial\circ\frac1{u'}\partial\circ
\frac{1}{D(u')}
\partial
\Bigg)^{-1}
=
\partial^{-1}
\circ u'
\partial^{-1}
\circ u'
\partial^{-1}
=D^{-1}\,.
$$
Letting $a=0$ in $AB^{-1}$, we have
$$
\begin{array}{l}
\displaystyle{
\Bigg(
\bigg(
\partial^2-2\frac{u''}{u'}\partial+\Big(\frac{u''}{u'}\Big)^\prime
\bigg)\circ
\frac{1}{D(u')}
\partial-u'
\Bigg)
\Bigg(
\partial\circ\frac1{u'}\partial\circ\frac1{u'}\partial\circ
\frac{1}{D(u')}
\partial
\Bigg)^{-1}
} \\
\displaystyle{
=\bigg(
\partial^2-2\frac{u''}{u'}\partial+\Big(\frac{u''}{u'}\Big)^\prime
-u'\partial^{-1}\circ D(u')
\bigg)\circ
\partial^{-1}u'\partial^{-1}\circ u'\partial^{-1}
} \\
\displaystyle{
=
\partial\circ u'\partial^{-1}\circ u'\partial^{-1}
-2u''\partial^{-1}\circ u'\partial^{-1}
+\Big(\frac{u''}{u'}\Big)^\prime\partial^{-1}u'\partial^{-1}\circ u'\partial^{-1}
} \\
\displaystyle{
-u'\partial^{-1}\circ D(u')\partial^{-1}u'\partial^{-1}\circ u'\partial^{-1}
=
u'\partial^{-1}\circ u'
+\bigg(
u'\partial^{-1}\circ \frac{u''}{u'}
} \\
\displaystyle{
-u''\partial^{-1}
+\Big(\frac{u''}{u'}\Big)^\prime\partial^{-1}u'\partial^{-1}
-u'\partial^{-1}\circ D(u')\partial^{-1}u'\partial^{-1}
\bigg)\circ u'\partial^{-1}
\,.}
\end{array}
$$
In the last identity we used the Leibniz rule for $\partial$:
$\partial\circ f=f\partial+f'$.
To conclude the proof, we need to check that the expression in parenthesis
in the RHS is zero:
\begin{equation}\label{20121025:eq1}
u'\partial^{-1}\circ \frac{u''}{u'}
-u''\partial^{-1}
+\Big(\frac{u''}{u'}\Big)^\prime\partial^{-1}u'\partial^{-1}
-u'\partial^{-1}\circ D(u')\partial^{-1}u'\partial^{-1}
=0\,.
\end{equation}
This identity is obtained applying repeatedly the commutation relation ($f\in\mc V$),
\begin{equation}\label{20121025:eq2}
\partial^{-1}\circ f=f\partial^{-1}-\partial^{-1}\circ f'\partial^{-1}\,,
\end{equation}
which is a consequence of the Leibniz rule for $\partial$,
and using the expression \eqref{20121015:eq3} for $D(u')$.
\end{proof}

In order to apply successfully the Lenard-Magri scheme of integrability we need to compute
the kernel of the operator $B$.
\begin{lemma}\label{20121025:lem1}
The kernel of the operator $B$ in \eqref{frac-kn}
is a 4-dimensional vector space over $\mc C$,
spanned by
$$
\begin{array}{l}
\displaystyle{
f_1=1
\,\,,\,\,\,\,
f_2=\frac1{u'}\Big(\frac{u''}{u'}\Big)^\prime
\,\,,\,\,\,\,
f_3=\frac{u}{u'}\Big(\frac{u''}{u'}\Big)^\prime-\frac{u''}{u'}
\,,} \\
\displaystyle{
f_4=
\frac{u^2}{u'}\Big(\frac{u''}{u'}\Big)^\prime
-2u\frac{u''}{u'}+2u'
\,.}
\end{array}
$$
\end{lemma}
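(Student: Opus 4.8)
The plan is to exploit the fact that the operator $B$ in \eqref{frac-kn} is a fourth-order differential operator which factors as the composition
$$
B=\partial\circ\frac1{u'}\partial\circ\frac1{u'}\partial\circ\frac1{D(u')}\partial
$$
of four first-order operators, alternating $\partial$ with multiplication operators. First I would record the cheap half of the statement, the upper bound on the dimension: since the leading coefficient of $B$ is $\frac{1}{(u')^2D(u')}\neq0$, the operator $B$ has order exactly $4$, and therefore its kernel in any differential field extension of $\mc V$ (in particular in $\mc V$ itself) has dimension at most $4$ over the subfield of constants $\mc C$, by the usual Wronskian argument for linear differential operators. It then remains only to produce four linearly independent elements of $\ker B$, and the asserted $f_1,f_2,f_3,f_4$ are the candidates.

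Next I would solve $Bf=0$ by successive integration, peeling off the first-order factors from the outside in. The equation $Bf=0$ is equivalent to $\frac1{u'}\partial\bigl(\frac1{u'}\partial\bigl(\frac1{D(u')}f'\bigr)\bigr)=c_0$ for some $c_0\in\mc C$; integrating once more gives $\frac1{u'}\partial\bigl(\frac1{D(u')}f'\bigr)=c_0u+c_1$, and a further integration gives $\frac1{D(u')}f'=\frac{c_0}{2}u^2+c_1u+c_2$, with $c_1,c_2\in\mc C$. Hence every solution satisfies
$$
f'=\Bigl(\tfrac{c_0}{2}u^2+c_1u+c_2\Bigr)D(u')\,,
$$
and $f$ is then determined up to an additive constant $c_3\in\mc C$. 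This exhibits the four free integration constants and matches the expected four-dimensionality, but it reduces the problem to the genuinely substantive point: whether the required primitives actually lie in $\mc V$.

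That last point is where the work is, and it is what the explicit basis $f_1,\dots,f_4$ resolves. I would verify by direct computation the exactness identities $f_1'=0$, $f_2'=D(u')$ (which is immediate from the definition \eqref{20121015:eq3} of $D(u')$, since $f_2=\frac1{u'}\bigl(\frac{u''}{u'}\bigr)'$), $f_3'=u\,D(u')$, and $f_4'=u^2\,D(u')$; writing $f_3=uf_2-\frac{u''}{u'}$ and $f_4=u^2f_2-2u\frac{u''}{u'}+2u'$ and using $u'f_2=\bigl(\frac{u''}{u'}\bigr)'$ makes these a short telescoping calculation. Thus $f_1,\dots,f_4$ supply the primitives of $\tfrac12 u^2D(u'),\,uD(u'),\,D(u'),\,0$ respectively, so each lies in $\ker B$, and \emph{the main obstacle is precisely recognizing that $uD(u')$ and $u^2D(u')$ are total derivatives in $\mc V$} rather than merely formal integrals. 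Finally, for linear independence over $\mc C$, I would differentiate a relation $\sum_i\alpha_if_i=0$ to obtain $(\alpha_2+\alpha_3u+\alpha_4u^2)D(u')=0$; since $D(u')\neq0$ in $\mc V$ and $1,u,u^2$ are linearly independent over $\mc C$, this forces $\alpha_2=\alpha_3=\alpha_4=0$ and then $\alpha_1=0$. Combined with the order bound, this shows $\ker B=\Span_{\mc C}\{f_1,f_2,f_3,f_4\}$ is exactly $4$-dimensional, as claimed.
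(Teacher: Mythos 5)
Your proof is correct and follows essentially the same route as the paper's (which simply observes that the $f_i$ are checked to lie in $\ker B$ and that an order-$4$ operator has kernel of dimension at most $4$); your extra steps — the explicit integration of $Bf=0$ and the linear-independence check via differentiation — merely fill in details the paper leaves implicit. The only blemish is a harmless slip where you list the derivatives ``$\tfrac12 u^2D(u'),\,uD(u'),\,D(u'),\,0$'' for ``$f_1,\dots,f_4$ respectively'': the order is reversed and the factor $\tfrac12$ is spurious, since in fact $f_1'=0$, $f_2'=D(u')$, $f_3'=uD(u')$, $f_4'=u^2D(u')$ as you yourself compute earlier.
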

\begin{proof}
It is immediate to check that all the elements $f_i$ are indeed in the kernel of $B$.
On the other hand, since $B$ has order 4, its kernel has dimension at most 4.
\end{proof}

\subsection{Applying the Lemard-Magri scheme for $a\neq0$}
\label{secb:4.2}

According to the Lenard-Magri scheme of integrability,
starting with $\tint h_{-1}=\tint0$,
we need to find sequences $\{\tint h_n\}_{n=0}^N$
and $\{P_n\}_{n=0}^N$
solving the recursion relations \eqref{maxi}.


Since $C=1$, in order to find solutions of the scheme \eqref{maxi} up to $N=3$,
we need to find elements $F_n,h_n,P_n\in\mc V,\,n=0,\dots,3$,
such that
$$
BF_{n}=\frac{\delta h_{n-1}}{\delta u}
\,\,,\,\,\,\,
P_n=AF_{n}
\,\,,\,\,\,\,
\frac{\delta h_n}{\delta u}=DP_n
\,,
$$
for all $n=0,1,2,3$ (we let, as usual, $\tint h_{-1}=\tint 0$).
Recalling the expressions \eqref{frac-D} and \eqref{frac-kn} of $A,B,D$,
and using Lemma \ref{20121025:lem1},
it is a straightforward but lengthy calculation to find solutions:
$$
\begin{array}{lll}
\displaystyle{
F_0=\frac{f_2}a
=\frac1{au'}\Big(\frac{u''}{u'}\Big)^\prime
\,\,,}&
P_0=1
\,\,,&
\tint h_0=\tint0
\,, \\
\displaystyle{
F_1=\frac{f_3}a 
= \frac{u}{au'}\Big(\frac{u''}{u'}\Big)^\prime-\frac{u''}{u'}
\,\,,}&
P_1=u
\,\,,&
\tint h_1=\tint0
\,, \\
\displaystyle{
F_2=\frac{f_4}a 
=\frac{u^2}{au'}\Big(\frac{u''}{u'}\Big)^\prime-2\frac{uu''}{au'}+\frac{2}{a}u'
\,\,,}&
P_2=u^2
\,\,,&
\tint h_2=\tint0
\,, \\
\displaystyle{
F_3=-f_1=-1
\,\,,}&
P_3=u'
\,\,,&
\displaystyle{
\tint h_3=\frac12\int\Big(\frac{u''}{u'}\Big)^2
\,.} 
\end{array}
$$
Hence, we get the following Lenard-Magri scheme
\begin{equation}\label{20121102:eq1}
\begin{array}{l}
\displaystyle{
\tint 0\ass{H}1\ass{K}\tint 0
\ass{H} u\ass{K}\tint 0
\ass{H}u^2\ass{K}\tint 0\ass{H}
} \\
\displaystyle{
\ass{H}u'\ass{K}\frac12\tint\Big(\frac{u''}{u'}\Big)^2
\ass{K}\dots\,.
}
\end{array}
\end{equation}


We next prove that the scheme \eqref{20121102:eq1} can be extended indefinitely,
possibly going to a normal extension $\tilde{\mc V}$ of $\mc V$.
According to Theorem \ref{20130123:thm} and Remark \ref{20130104:rem2},
this is the case, provided that the orthogonality conditions \eqref{20130104:eq2} hold.
Since $C=1$, the first condition in \eqref{20130104:eq2} is trivial.
As for the second orthogonality condition,
let $\varphi\in\big(\Span_{\mc C}\{P_0,P_1,P_2,P_3\}\big)^\perp$.
Since $\varphi\perp P_0$, we have that $\varphi=\varphi_1^\prime$, for some $\varphi_1\in\mc V$.
Since $\varphi\perp P_1$, we have that $\varphi_1=\frac{\varphi_2^\prime}{u'}$, 
for some $\varphi_2\in\mc V$.
Since $\varphi\perp P_2$, we have that $\varphi_2=\frac{\varphi_3^\prime}{u'}$, 
for some $\varphi_3\in\mc V$.
And, finally,
since $\varphi\perp P_3$, we have that $\varphi_3=\frac{\varphi_4^\prime}{D(u')}$, 
for some $\varphi_4\in\mc V$.
In conclusion, $\varphi=B\varphi_4$, proving the second orthogonality condition \eqref{20130104:eq2}.


We compute explicitly the next element $P_4$ in the Lenard-Magri scheme,
which gives the first non-trivial equation of the corresponding bi-Hamiltonian hierarchy.
For this, we need to solve, for $F_4,P_4\in\mc V$, the following equations
$$
BF_{4}=\frac{\delta h_{3}}{\delta u}
=D(u')
\,\,,\,\,\,\,
P_4=AF_{4}
\,.
$$
The general solution is:
$$
F_{4}=
\Big(\frac{u''}{u'}\Big)^\prime-\frac12\Big(\frac{u''}{u'}\Big)^2
+(a-\alpha_1)f_1+\frac{\alpha_2}{a}f_2
+\frac{\alpha_3}{a} f_3
+\frac{\alpha_4}{a} f_4
\,,
$$
where $\alpha_i,\,i=1,\dots,4$, are arbitrary constants.
Hence, the first non-trivial integrable equation in the hierarchy has the form:
\begin{equation}\label{20121102:eq2}
\frac{du}{dt}=P_4=
u'''-\frac32\frac{(u'')^2}{u'}+\alpha_1 u'
+\alpha_2+\alpha_3u+\alpha_4u^2\,.
\end{equation}


In order to prove that equation \eqref{20121102:eq2} is indeed integrable,
we are left to prove that the sequences $\{\tint h_n\}_{n\in\mb Z_+}$
and $\{P_n\}_{n\in\mb Z_+}$ are linearly independent.
For this, we use Lemma \ref{20130123:lem3}.

Since $\dord(D(u'))=4$,
we have $\dord(A)=6$, $\dord(B)=7$, $\dord(C)=-\infty$ and $\dord(D)=3$.
Moreover, $|H|=-1$ and $|K|=-3$.
Hence, the RHS of inequality \eqref{20120911:eq1} is $4$.

Next, we compute the differential order of the next element $P_5$ in the Lenard-Magri scheme.
It is obtained by solving, for $\xi_4=\frac{\delta h_4}{\delta u}, F_5, P_5\in\mc V$,
the following equations:
\begin{equation}\label{20121102:eq3}
\xi_4=DP_4
\,\,,\,\,\,\,
BF_5=\xi_4
\,\,,\,\,\,\,
P_5=AF_5\,.
\end{equation}
From the first equation in \eqref{20121102:eq3} we get
$$
\xi_4=\partial\frac1{u'}\partial\frac1{u'}\partial(u'''+\rho)
\,,
$$
where $\rho\in\mc V$ has $\dord(\rho)=2$.
Hence, the second equation in \eqref{20121102:eq3} gives
$$
\frac1{D(u')}\partial F_5=u'''+\rho_1\,,
$$
where $\dord(\rho_1-\rho)=0$.
In particular, $F_5$ has differential order less than or equal to $3$.
It follows by the third equation in \eqref{20121102:eq3} that
$$
P_5=
\bigg(
\partial^2-2\frac{u''}{u'}\partial+\Big(\frac{u''}{u'}\Big)^\prime+a
\bigg)(u'''+\rho_1)
-u'F_5\,.
$$
Hence, $\frac{\partial P_5}{\partial u^{(5)}}=1$, and $\frac{\partial P_5}{\partial u^{(n)}}=0$ 
for every $n>5$.
In particular, $\dord(P_5)=5$.

According to Lemma \ref{20130123:lem3},
since we have $\dord(P_5)=5>4$,
we obtain:
$\dord\big(\frac{\delta h_n}{\delta u}\big)=2n-2$,
and $\dord(P_n)=2n-5$,
for every $n\geq3$.
In particular,
all the elements $\{\tint h_n\}_{n\in\mb Z_+}$
and $\{P_n\}_{n\in\mb Z_+}$ are linearly independent.
As a consequence, every equation of the hierarchy $\frac{du}{dt_n}=P_n,\,n\in\mb Z_+$,
including equation \eqref{20121102:eq2}, is integrable of S-type.

Note that, since the kernels of $B^*$ and $D^*$ have non-zero intersections,
we cannot conclude that $[P_m,P_n]$ is zero for every $m,n\in\mb Z_+$.
In fact, we have $[P_0,P_1]=P_0,\,[P_0,P_2]=2P_1,\,[P_1,P_2]=P_2$,
and $\ker(B^*)\cap\ker(D^*)=\Span\{P_0,P_1,P_2\}$
(which is isomorphic to sl$_2$),
in complete agreement with our Theorem \ref{20130123:thm}
(and in disagreement, for example, with \cite[Thm.5.36]{Olv93} and \cite[Thm.3.12]{Bla98}).

When all constants $\alpha_i$ are equal to zero,
equation \eqref{20121102:eq2} is usually called the Schwarz KdV equation,
see e.g. \cite{MS12}
(in \cite{Dor93} it is called the Krichever-Novikov (KN) equation,
since it is a degeneration of the KN equation).
As explained in \cite{MS12}, equation \eqref{20121102:eq2} can be reduced to
the Schwarz KdV equation by some point transformation.

\begin{remark}\label{20130130:rem3}
By Remark \ref{20130130:rem1}, all $\xi_n$'s and $P_n$'s constructed in this section
have coordinates in $\mc V=\mb F[u,{u'}^{\pm1},u'',u''',\dots]$.
By Example \ref{20130112:ex2},
this algebra is contained in a normal extension $\tilde{\mc V}=\mc V[\log u']$,
and all conserved densities $h_n$'s can be chosen in $\tilde{\mc V}$.
\end{remark}

\subsection{The case $a=0$}
\label{secb:4.3}

In the case when $a=0$ all the computations are much easier.
Since $A=C=1$, 
the recursive conditions $\tint h_{n-1}\ass{H}P_n\ass{K}\tint h_n,\,n\in\mb Z_+$,
are equivalent to the equations
$$
BP_n=\frac{\delta h_{n-1}}{\delta u}
\,\,,\,\,\,\,
\frac{\delta h_n}{\delta u}=DP_n
\,.
$$
It is easy to find the first few steps of the Lenard-Magri scheme:
\begin{equation}\label{20121102:eq1b}
\tint 0\ass{H}P_0=u'\ass{K}\tint h_0=\frac12\tint\Big(\frac{u''}{u'}\Big)^2
\ass{H}P_1=u'''-\frac{3}{2}\frac{(u'')^2}{u'}+\alpha_1 u'
\ass{K}\dots
\end{equation}
for arbitrary $\alpha_1\in\mc C$.


As before, the scheme \eqref{20121102:eq1b} can be extended indefinitely.
Indeed, since $C=1$, the first orthogonality condition in \eqref{20130104:eq2} is trivial,
while the second one holds since $P_0^\perp=\im B$.


Moreover, in this case $\dord(A)=\dord(C)=-\infty$, $\dord(B)=2$, and $\dord(D)=3$,
so the RHS of inequality \eqref{20120911:eq1} is $0$.
Since $\dord(P_0)=1>0$,
we can apply Lemma \ref{20130123:lem3}
to deduce that all the elements $\{\tint h_n\}_{n\in\mb Z_+}$
and $\{P_n\}_{n\in\mb Z_+}$ are linearly independent.

In conclusion, 
every equation of the hierarchy $\frac{du}{dt_n}=P_n,\,n\in\mb Z_+$,
is integrable of S-type.
Note that the first non-trivial equation is $\frac{du}{dt}=P_1$,
which is the same as equation \eqref{20121102:eq2} with $\alpha_2=\alpha_3=\alpha_4=0$.
Note also that, since $\ker B^*\cap\ker D^*=0$ in this case,
we have $[P_m,P_n]=0$ for all $m,n\in\mb Z_+$.

\subsection{One step back}
\label{secb:4.4}

As we did in the example of Liouville type integrable systems,
we can ask whether the Lenard-Magri schemes \eqref{20121102:eq1} and \eqref{20121102:eq1b}
can be continued to the left.
This amounts to finding $P_{n}\in\mc V$ and $\tint h_{n}\in\mc V/\partial\mc V$, with $n\leq-1$,
such that
\begin{equation}\label{20121103:eq1}
\dots\ass{K}\tint h_{-2}\ass{H}P_{-1}\ass{K}\tint0
\end{equation}


We consider separately the cases $a\neq0$ and $a=0$.
When $a\neq0$, the conditions \eqref{20121103:eq1}
give the following equations
for $P_{-1}$, $F$ and $\xi_{-2}=\frac{\delta h_{-2}}{\delta u}$:
\begin{equation}\label{20121103:eq2}
DP_{-1}=0
\,\,,\,\,\,\,
AF=P_{-1}
\,\,,\,\,\,\,
\xi_{-2}=BF\,,
\end{equation}
where $A$, $B$, and $D$ are as in \eqref{frac-D} and \eqref{frac-kn}.
All solutions $P_{-1}$ of the first equation in \eqref{20121103:eq2} are
$$
P_{-1}=c_0+c_1u+c_2u^2\,,
$$
with $c_0,c_1,c_2\in\mc C$.
Next, we want to find all solutions $F$ of the second equation in \eqref{20121103:eq2}.
Applying $\frac{\partial}{\partial u^{(n)}}$, with $n\geq4$, to both sides of the equation $AF=P_{-1}$
we immediately get that $\dord(F)\leq3$ and $\partial F=fD(u')$, with $\dord(f)\leq1$.
Hence, the second equation in \eqref{20121103:eq2} can be rewritten as
the following system of equations,
\begin{equation}\label{20121103:eq3}
\begin{array}{l}
\displaystyle{
\partial^2f-2\frac{u''}{u'}\partial f+\Big(\frac{u''}{u'}\Big)^\prime f+af-u'F
=c_0+c_1u+c_2u^2\,,
} \\
\displaystyle{
\partial F=fD(u')
\,,}
\end{array}
\end{equation}
for $F,f\in\mc V$ with $\dord(F)\leq3$ and $\dord(f)\leq1$.
Applying $\frac{\partial}{\partial u^{(3)}}$ to both sides of the first equation in \eqref{20121103:eq3}
and $\frac{\partial}{\partial u^{(4)}}$ to both sides of the second equation in \eqref{20121103:eq3},
we get
$$
\frac{\partial F}{\partial u'''}=\frac{f}{(u')^2}
\,\,,\,\,\,\,
\frac{\partial f}{\partial u'}=0\,.
$$
Hence, $\dord(f)\leq0$.
Next, 
applying $\frac{\partial}{\partial u^{(2)}}$ to the first equation in \eqref{20121103:eq3}
and $\frac{\partial}{\partial u^{(3)}}$ to the second equation in \eqref{20121103:eq3},
we get
$$
\frac{\partial F}{\partial u''}=-2\frac{u''}{(u')^3}f-\frac{1}{(u')^2}\partial f
\,\,,\,\,\,\,
\partial f=\frac{\partial f}{\partial u}u'\,.
$$
Hence, $f$ is a function of $u$ only.
Using the above result, 
we can rewrite the second equation in \eqref{20121103:eq3},
after integrating by parts twice, as
$$
\partial F=
\partial\Big(
\frac{f}{u'}\Big(\frac{u''}{u'}\Big)^\prime
-\frac{\partial f}{\partial u}\frac{u''}{u'}
+\frac{\partial^2 f}{\partial u^2} u'
\Big)
-\frac{\partial^3 f}{\partial u^3}(u')^2\,.
$$
In particular, it must be $\frac{\partial^3 f}{\partial u^3}(u')^2\in\partial\mc V$,
which is possible only if $\frac{\partial^3 f}{\partial u^3}(u')^2=0$,
see \cite{BDSK09}.
In conclusion, $f$ must be a quadratic polynomial in $u$ with constant coefficients,
and 
$F=\frac{f}{u'}\Big(\frac{u''}{u'}\Big)^\prime
-\frac{\partial f}{\partial u}\frac{u''}{u'}
+\frac{\partial^2 f}{\partial u^2} u'+$ const.
Plugging these results back into equation \eqref{20121103:eq3}
we finally get that
$$
\frac{\partial F}{D(u')}=f=\frac{c_0}a+\frac{c_1}au+\frac{c_2}au^2\,.
$$
Hence, the third equation in \eqref{20121103:eq2} gives $\xi_{-2}=0$.
In conclusion, in this case
the ``dual'' Lenard-Magri sequence, obtained by exchanging the roles of $H$ and $K$,
is of \emph{finite} type,
namely it repeats itself with $\tint h_n\in\ker\big(\frac\delta{\delta u}\big)$
and $P_n\in\ker(D)$ for every $n\leq-1$,
and we don't get any new interesting integrals of motion or equations.


Next, we consider the case $a=0$.
In this case, relations \eqref{20121103:eq1}
give the following equations
for $P_{-1}$, $P_{-2}$, and $\tint h_{-2}$:
\begin{equation}\label{20121103:eq4}
DP_{-1}=0
\,\,,\,\,\,\,
DP_{-2}=\frac{\delta h_{-2}}{\delta u}=SP_{-1}\,,
\end{equation}
where $S$, and $D$ are as in \eqref{frac-D} and \eqref{frac-kn1}.
As before, 
$P_{-1}=c_0+c_1u+c_2u^2$, with $c_0,c_1,c_2\in\mc C$.
Hence, the second equation in \eqref{20121103:eq4} reads
\begin{equation}\label{20121103:eq5}
\bigg(\frac1{u'}
\Big(\frac{\partial P_{-2}}{u'}\Big)^\prime\bigg)^\prime
=
\frac1{u'}\Big(
\frac{c_0+c_1u+c_2u^2}{u'}
\Big)^\prime
\,.
\end{equation}
For every $n\in\mb Z_+$, we have the identity
$$
\frac1{u'}\Big(
\frac{u^n}{u'}
\Big)^\prime
=\frac12\partial\frac{u^n}{(u')^2}
+\frac n2\frac{u^{n-1}}{u'}\,.
$$
It follows that the RHS of \eqref{20121103:eq5} cannot be a total derivative
unless $c_1=c_2=0$ (cf. \cite{BDSK09}).
Moreover, if $c_1=c_2=0$ equation \eqref{20121103:eq5} reduces to
$$
\Big(\frac{\partial P_{-2}}{u'}\Big)^\prime
=
\frac{c_0}{2u'}+\text{ const.}u'
\,,
$$
which, for the same reason as before, has no solutions unless $c_0=0$.
In conclusion,
for every non-zero $P_{-1}\in\ker D$,
the ``dual'' Lenard-Magri scheme
is \emph{blocked} at $P_{-2}$.
In this case, as we saw in Section \ref{secb:3.9},
we obtain integrable PDE's which are not of evolutionary type.

In particular, for $(c_1,c_2)\neq(0,0)$ we get the following non-evolutionary
integrable PDE:
\begin{equation}\label{20121103:eq6}
\bigg(\frac1{u_x}
\Big(\frac{u_{tx}}{u_x}\Big)_x\bigg)_x
=
\frac1{u_x}\Big(
\frac{c_0+c_1u+c_2u^2}{u_x}
\Big)_x
\,,
\end{equation}
while for $c_0=1$ and $c_1=c_2=0$, we obtain the following integrable equation
\begin{equation}\label{20121103:eq7}
\Big(\frac{u_{tx}}{u_x}\Big)_x
=
\frac1{2u_x}+\gamma u_x
\,,
\end{equation}
where $\gamma$ is a constant.
Note that, if we apply the differential substitution $v=\log u'$ to equation \eqref{20121020:eq8} 
with $\epsilon=0$, we get equation \eqref{20121103:eq7}.

\section{NLS type integrable systems}
\label{secb:5}

Recall from Example \ref{20110922:ex5} that the following
is a triple of compatible non-local Poisson structures in two differential variables $u,v$:
$$
L_1=\partial\id
\,\,,\,\,\,\, 
L_2=\left(\begin{array}{cc} 0 & -1 \\ 1 & 0 \end{array}\right)
\,\,,\,\,\,\,
L_3=\left(\begin{array}{cc} 
v\partial^{-1}\circ v & -v\partial^{-1}\circ u \\
-u\partial^{-1}\circ v & u\partial^{-1}\circ u
\end{array}\right)
\,.
$$
We want to consider two non-local Poisson structures $H$ and $K$
which are linear combinations of them:
$H=a_1L_1+a_2L_2+a_3L_3$ and $K=b_1L_1+b_2L_2+b_3L_3$,
where $a_i$'s and $b_i$'s are constants.
As in the previous section, we are only interested in integrable Lenard-Magri schemes of S-type.
In particular, we assume
that the order of the pseudodifferential operator $H$
is greater that the order of $K$,
and so we consider only the case when $b_1=0$.
Note that when $b_2=0$, we get $K=b_3L_3$,
which is a degenerate pseudodifferential operator.
In this case, we cannot apply Theorem \ref{20130123:thm}
and we do not know how to prove integrability.
Hence, we assume that $b_2=1$.
And, since we want that the order of $H$ is greater that the order of $K$,
we assume also that $a_1=1$.

In conclusion, we consider the following compatible pair of non-local Hamiltonian structures:
\begin{equation}\label{20121109:eq1}
\begin{array}{l}
\displaystyle{
H=
\partial\id
+a_2\left(\begin{array}{cc} 0 & -1 \\ 1 & 0 \end{array}\right)
+a_3\left(\begin{array}{cc} 
v\partial^{-1}\circ v & -v\partial^{-1}\circ u \\
-u\partial^{-1}\circ v & u\partial^{-1}\circ u
\end{array}\right)
\,}\\
\displaystyle{
K=
\left(\begin{array}{cc} 0 & -1 \\ 1 & 0 \end{array}\right)
+b_3\left(\begin{array}{cc} 
v\partial^{-1}\circ v & -v\partial^{-1}\circ u \\
-u\partial^{-1}\circ v & u\partial^{-1}\circ u
\end{array}\right)
\,.}
\end{array}
\end{equation}

Note that if $a_3=b_3=0$, the above pair is such that $H$ is ``local'' differential operator,
and $K$ is invertible.
In this case, the Lenard-Magri recursion relations give
$\tint h_{-1}=0$ and $H\frac{\delta h_{n-1}}{\delta u}=K\frac{\delta h_n}{\delta u}$ for every $n\geq0$,
hence $\frac{\delta h_n}{\delta u}=0$ for every $n$.
Therefore, in this case,
the corresponding Lenard-Magri scheme is of finite type, and we don't get any integrable system.
Hence, we assume that $(a_3,b_3)\neq(0,0)$.

Next, we need to find minimal fractional decompositions for $H$ and $K$.
This is given by the following
\begin{lemma}\label{20121108:lem1}
We have the following minimal fractional decomposition for the operator $L_3$:
$$
\left(\begin{array}{cc} 
v\partial^{-1}\circ v & -v\partial^{-1}\circ u \\
-u\partial^{-1}\circ v & u\partial^{-1}\circ u
\end{array}\right)
=
\left(\begin{array}{cc} 
0 & -uv \\
0 & u^2
\end{array}\right)
\left(\begin{array}{cc} 
1 & 0 \\
\frac{v}{u} & \frac1u\partial\circ u
\end{array}\right)^{-1}
$$
The rational matrix pseudodifferential operator $H$ admits the fractional decomposition 
$H=AB^{-1}$ given by
\begin{equation}\label{20121109:eq2}
A=
\left(\begin{array}{cc} 
\partial-a_2\frac{v}{u} & -a_2\frac{1}{u}\partial\circ u-a_3uv \\
\partial\circ\frac{v}{u}+a_2 & \partial\circ\frac1u\partial\circ u+a_3u^2
\end{array}\right)
\,\,,\,\,\,\,
B=
\left(\begin{array}{cc} 
1 & 0 \\
\frac{v}{u} & \frac1u\partial\circ u
\end{array}\right)
\,,
\end{equation}
which is minimal for $a_3\neq0$,
while, for $a_3=0$, $H$ is a matrix differential operator.
The rational matrix pseudodifferential operator $K$ admits the fractional decomposition 
$K=CB^{-1}$
with $B$ as in \eqref{20121109:eq2} and
\begin{equation}\label{20121109:eq3}
C=
\left(\begin{array}{cc} 
-\frac{v}{u} & -\frac{1}{u}\partial\circ u-b_3uv \\
1 & b_3u^2
\end{array}\right)
\,.
\end{equation}
This decomposition is minimal for $b_3\neq0$,
while, for $b_3=0$, $K=L_2$ is an invertible matrix.
\end{lemma}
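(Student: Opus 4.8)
The plan is to verify the three asserted fractional decompositions by direct computation and then to establish minimality by means of the kernel criterion of Proposition \ref{prop:minimal-fraction}(a). Throughout I abbreviate the common denominator by $B=\left(\begin{smallmatrix} 1 & 0 \\ v/u & \frac1u\partial\circ u \end{smallmatrix}\right)$ and the numerator of $L_3$ by $N=\left(\begin{smallmatrix} 0 & -uv \\ 0 & u^2 \end{smallmatrix}\right)$.

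First I would verify the decomposition of $L_3$. Since $B$ is lower triangular with unit $(1,1)$-entry and $(2,2)$-entry $\frac1u\partial\circ u$, whose inverse is $\frac1u\partial^{-1}\circ u$, a short triangular computation gives $B^{-1}=\left(\begin{smallmatrix} 1 & 0 \\ -\frac1u\partial^{-1}\circ v & \frac1u\partial^{-1}\circ u \end{smallmatrix}\right)$. Multiplying $N$ on the right by $B^{-1}$ then reproduces $L_3$ entry by entry, using only $u^2\cdot\frac1u\partial^{-1}\circ u=u\partial^{-1}\circ u$ and the analogous identities. The key structural observation for the remaining two decompositions is that $\partial\id+a_2L_2$ and $L_2$ are genuine matrix differential operators, while $L_3=NB^{-1}$. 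Writing $\partial\id+a_2L_2=(\partial\id+a_2L_2)B\,B^{-1}$ we obtain at once $H=AB^{-1}$ with $A=(\partial\id+a_2L_2)B+a_3N$, and likewise $K=CB^{-1}$ with $C=L_2B+b_3N$. A direct multiplication of the two $2\times2$ operator matrices $(\partial\id+a_2L_2)B$ and $L_2B$, followed by adding $a_3N$ (resp.\ $b_3N$), then confirms that these expressions agree with the stated $A$ in \eqref{20121109:eq2} and $C$ in \eqref{20121109:eq3}.

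For minimality I would invoke Proposition \ref{prop:minimal-fraction}(a), so the task reduces to computing intersections of kernels. Solving $Bf=0$ forces $f_1=0$ and $\partial(uf_2)=0$, whence $\ker B=\mc C\cdot(0,1/u)^{\mathrm T}$ is one-dimensional. Applying the second columns of $A$ and $C$ to $(0,c/u)^{\mathrm T}$, and using $\partial c=0$, yields the clean identities $A\,(0,c/u)^{\mathrm T}=a_3c\,(-v,u)^{\mathrm T}$ and $C\,(0,c/u)^{\mathrm T}=b_3c\,(-v,u)^{\mathrm T}$. These vanish only for $c=0$ exactly when $a_3\neq0$ (resp.\ $b_3\neq0$), so $\ker A\cap\ker B=0$ and $\ker C\cap\ker B=0$ in those cases, giving minimality. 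For the $L_3$-decomposition one notes instead that $\ker N=\{(f_1,0)^{\mathrm T}\}$, whose intersection with $\mc C\cdot(0,1/u)^{\mathrm T}$ is zero, so that decomposition is minimal unconditionally. The degenerate cases $a_3=0$ and $b_3=0$ are disposed of separately: then $H=\partial\id+a_2L_2$ is itself a matrix differential operator, and $K=L_2$ is an invertible constant matrix, so no nontrivial denominator is needed.

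I do not expect any conceptual obstacle here; the whole argument is a bookkeeping exercise. The only point requiring genuine care is the non-commutative operator algebra — correctly inverting $\frac1u\partial\circ u$, and correctly expanding the compositions in $(\partial\id+a_2L_2)B$ and $L_2B$ while keeping track of where $\partial$ acts. Once $\ker B=\mc C\cdot(0,1/u)^{\mathrm T}$ is identified, the minimality half of the proof is essentially immediate.
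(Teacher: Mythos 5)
Your proposal is correct: the identity $B^{-1}=\left(\begin{smallmatrix} 1 & 0 \\ -\frac1u\partial^{-1}\circ v & \frac1u\partial^{-1}\circ u \end{smallmatrix}\right)$, the factorizations $A=(\partial\id+a_2L_2)B+a_3N$ and $C=L_2B+b_3N$, the computation $\ker B=\mc C\,(0,1/u)^{\mathrm T}$, and the evaluations $A(0,c/u)^{\mathrm T}=a_3c(-v,u)^{\mathrm T}$, $C(0,c/u)^{\mathrm T}=b_3c(-v,u)^{\mathrm T}$ all check out, and combined with Proposition \ref{prop:minimal-fraction}(a) they give exactly the stated minimality conditions. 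The paper's own proof is simply the word ``Straightforward,'' so your argument is the natural direct verification the authors had in mind, carried out in full.
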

\begin{proof}
Straightforward.
\end{proof}

As usual, in order to apply the Lenard-Magri scheme it is convenient to find the kernels 
of the operators $B$ and $C$:
$$
\ker B=\mc C \left(\begin{array}{c} 0 \\ \frac1u \end{array}\right)
\,\,,\,\,\,\,
\ker C=\mc C \left(\begin{array}{c} -b_3u \\ \frac1u \end{array}\right)\,.
$$

We next compute the first few steps in the Lenard-Magri scheme.
We have the following $H$ and $K$-associations:
$\tint 0\ass{H}P_0\ass{K}\tint h_0\ass{H}P_1\ass{K}\tint h_1\ass{H}P_2$,
where ($\alpha\in\mc C$)
\begin{equation}\label{20121109:eq4}
\begin{array}{l}
\displaystyle{
P_0=\alpha a_3\left(\begin{array}{c} -v \\ u \end{array}\right)
\,\,,\,\,\,\,
\tint h_0=\frac{1}{2}\tint (u^2+v^2)
\,,} \\
\displaystyle{
P_1= \left(\begin{array}{c} u'-a_2v \\ v'+a_2u \end{array}\right)
\,\,,\,\,\,\,
\tint h_1=
\int\Big(
uv'+\frac{a_2}{2}(u^2+v^2)+\frac{b_3}{8}(u^2+v^2)^2
\Big)
\,,} \\
\displaystyle{
P_2=
\left(\begin{array}{c} 
v''+2a_2u'-a_2^2v+\frac{b_3}{2}\big(u(u^2+v^2)\big)^\prime +\frac{a_3-a_2b_3}{2}v(u^2+v^2) \\ 
-u''+2a_2v'+a_2^2u+\frac{b_3}{2}\big(v(u^2+v^2)\big)^\prime -\frac{a_3-a_2b_3}{2}u(u^2+v^2)
\end{array}\right)
\,.}
\end{array}
\end{equation}
Indeed, we have
$P_0=AF_0$, $BF_0=0$, for $F_0=\alpha\left(\begin{array}{c} 0 \\ \frac1u \end{array}\right)$.
We have
$P_0=CF_1$, $\frac{\delta h_0}{\delta u}=BF_1$, 
for $F_1=\left(\begin{array}{c} u \\ \frac{\beta}u \end{array}\right)$,
where $\alpha,\beta\in\mc C$ are chosen so that $\alpha a_3-\beta b_3=1$
(we can always do so, since, by assumption, $(a_3,b_3)\neq(0,0)$).
We have
$P_1=AF_2$, $\frac{\delta h_0}{\delta u}=BF_2$, 
for $F_2=\left(\begin{array}{c} u \\ 0 \end{array}\right)$.
We have
$P_1=CF_3$, $\frac{\delta h_1}{\delta u}=BF_3$, 
for $F_3=\left(\begin{array}{c} 
v'+a_2u+\frac{b_3}{2}u(u^2+v^2) \\ 
-\frac12\frac{u^2+v^2}{u} 
\end{array}\right)$.
And, finally, we have
$P_2=AF_4$, $\frac{\delta h_1}{\delta u}=BF_4$, 
for $F_4=F_3$.

Next, we check that the orthogonality conditions \eqref{20130104:eq2} hold for $N=0$.
We have
$F=\left(\begin{array}{c} f \\ g \end{array}\right)\in\frac{\delta h_0}{\delta u}^\perp$
if and only if $\tint (uf+vg)=0$,
namely if $f=-\frac{v}{u} g+\frac{h'}{u}$, for some $h\in\mc V$.
But in this case
$$
F=\left(\begin{array}{c} -\frac{v}{u} g+\frac{h'}{u} \\ g \end{array}\right)
=C\left(\begin{array}{c} g+b_3uh \\ -\frac{h}{u} \end{array}\right)\in\im C\,,
$$
proving the first orthogonality condition \eqref{20130104:eq2}.
As for the first orthogonality condition, if $a_3=0$ there is nothing to prove since $H$ is a matrix 
differential operator (i.e. the denominator is $\id$ in its minimal fractional decomposition).
If $a_3\neq0$, we can choose $\alpha=\frac1{a_3}$, and we have
$F=\left(\begin{array}{c} f \\ g \end{array}\right)\in P_0^\perp$
if and only if $\tint (-vf+ug)=0$,
namely if $g=\frac{v}{u} f+\frac{h'}{u}$, for some $h\in\mc V$.
But in this case
$$
F=\left(\begin{array}{c} f \\ \frac{v}{u} f+\frac{h'}{u} \end{array}\right)
=B\left(\begin{array}{c} f \\ \frac{h}{u} \end{array}\right)\in\im B\,,
$$
proving the second orthogonality condition \eqref{20130104:eq2}.
Therefore, by Theorem \ref{20130123:thm} and Remark \ref{20130104:rem2},
we deduce that the elements \eqref{20121109:eq4}
can be extended, possibly going to a normal extension $\tilde{\mc V}$ of $\mc V$,
to infinite sequences $\{\tint h_n\}_{n\in\mb Z_+}$, $\{P_n\}_{n\in\mb Z_+}$,
such that 
$\tint h_{n-1}\ass{H}P_n\ass{K}\tint h_n$.

Finally, 
we have $|H|=1$, $|K|=0$, $\dord(A)=2$, $\dord(B)=\dord(C)=\dord(D)=1$, and $\dord(P_2)=2$.
Hence, the inequality \eqref{20120911:eq1} holds.
Therefore, by Lemma \ref{20130123:lem3}
we have $\dord(P_n)=\dord(\frac{\delta h_n}{\delta u})=n$ for every $n\in\mb Z_+$.
In particular, all the elements $\tint h_n$'s and $P_n$'s are linearly independent.

In conclusion, each equation of the hierarchy $\frac{du}{dt_n}=P_n$
is integrable, and the local functionals $\tint h_n$'s are their integrals of motion.
The first ``non-trivial'' equation of this hierarchy is for $n=2$.
Letting $a_2=0$, $a_3=2\alpha$, and $b_3=2\beta$, it takes the form
\begin{equation}\label{20121109:eq5}
\left\{\begin{array}{l}
\displaystyle{
\frac{du}{dt}=
v'' +\alpha v(u^2+v^2) +\beta \big(u(u^2+v^2)\big)^\prime
} \\ 
\displaystyle{
\frac{dv}{dt}=
-u'' -\alpha u(u^2+v^2) +\beta \big(v(u^2+v^2)\big)^\prime
}
\end{array}\right.
\end{equation}
If we view $u$ and $v$ as real valued functions, and we consider the complex valued function
$\psi=u+iv$, the system \eqref{20121109:eq5} can be written as the following PDE:
$$
i\frac{d\psi}{dt}=\psi''+\alpha\psi|\psi|^2+i\beta(\psi|\psi|^2)^\prime\,,
$$
which, for $\beta=0$, is the well-known Non-Linear Schroedinger equation
(see e.g. \cite{TF86,Dor93,BDSK09}).
The case $\beta\neq0$ has been studied by many authors as well, 
see \cite{KN78,CLL79,WKI79,CC87}.

It is not difficult to show that, when going back,
the Lenard-Magri scheme is ``blocked'' at $\tint h_{-2}$ when $a_3=0$,
and it is of finite type when $a_3\neq0$.
Hence, we don't get any non-evolutionary PDE in this case.

\begin{remark}\label{20130130:rem4}
By Remark \ref{20130130:rem1}, all $\xi_n$'s and $P_n$'s constructed above
have coordinates in $\mc V_u=\mb F[u^{\pm1},v,u',v',u'',v'',\dots]$.
Moreover, using a different fractional decomposition,
this time over $\mc V_v=\mb F[u,v^{\pm1},u',v',u'',v'',\dots]$,
we can show that all coordinates of the $\xi_n$'s and $P_n$'s lie in $\mc V_v$,
hence they actually lie in the algebra of differential polynomials $\mc V=\mb F[u,v,u',v',u'',v'',\dots]$.
This is a normal algebra of differential functions, therefore all
conserved densities $h_n$'s can be chosen in $\mc V$.
\end{remark}





\end{document}